\documentclass{article}

\usepackage[preprint]{icml2026}
\usepackage{hyperref}
\hypersetup{
    colorlinks=true,
    linkcolor={red!50!black},
    citecolor={blue!50!black},
    urlcolor={blue!80!black}
}

\usepackage{microtype}
\usepackage{graphicx}
\usepackage{subcaption}
\usepackage{booktabs}
\usepackage{diagbox}
\usepackage{comment}
\usepackage{bm}
\usepackage{amsmath}
\usepackage{amssymb}
\usepackage{mathtools}
\usepackage{amsthm}

\ifdefined\hypersetup
  \hypersetup{
    colorlinks,
    linkcolor={red!50!black},
    citecolor={blue!50!black},
    urlcolor={blue!80!black}
  }
\fi

\definecolor{mygreen}{RGB}{0,212,0}

\usepackage[capitalize,noabbrev]{cleveref}

\theoremstyle{plain}
\newtheorem{theorem}{Theorem}[section]
\newtheorem{proposition}[theorem]{Proposition}
\newtheorem{lemma}[theorem]{Lemma}
\newtheorem{corollary}[theorem]{Corollary}
\theoremstyle{definition}
\newtheorem{definition}[theorem]{Definition}

\theoremstyle{remark}

\usepackage[textsize=tiny]{todonotes}

\icmltitlerunning{Mind the Gap: Mixtures of Gaussians in Approximate Differential Privacy}
\begin{document}

\twocolumn[
  \icmltitle{Mind the Gap: Mixtures of Gaussians in Approximate Differential Privacy}




  \begin{icmlauthorlist}
    \icmlauthor{Huikang Liu}{shanghai}
    \icmlauthor{Aras Selvi}{ucl}
    \icmlauthor{Wolfram Wiesemann}{imperial}
  \end{icmlauthorlist}

  \icmlaffiliation{shanghai}{Shanghai Jiao Tong University}
\icmlaffiliation{ucl}{UCL School of Management}
\icmlaffiliation{imperial}{Imperial Business School}

  \icmlcorrespondingauthor{Huikang Liu}{hkl1u@sjtu.edu.cn}
  \icmlcorrespondingauthor{Aras Selvi}{a.selvi@ucl.ac.uk}
  \icmlcorrespondingauthor{Wolfram Wiesemann}{ww@imperial.ac.uk}

  \icmlkeywords{Machine Learning, ICML}

  \vskip 0.3in
]



\printAffiliationsAndNotice{}  

\begin{abstract}
  We design a class of additive noise mechanisms that satisfy $(\varepsilon, \delta)$-differential privacy (DP) for scalar, real-valued query functions with known sensitivities, with a particular focus on moderate and low-privacy regimes. These mechanisms, which we call \textit{mixture mechanisms}, are constructed by mixing multiple Gaussian distributions that share the same variance but differ in their means and mixture weights. The resulting distributions can be interpreted as convex combinations of a zero-mean Gaussian (as used in the analytic Gaussian mechanism,~\citealt{analytic_Gaussian}) and additional Gaussians whose means depend on the sensitivity of the query function. We derive tight conditions on the variances required for $(\varepsilon, \delta)$-DP and provide efficient algorithms to compute them. Compared to the analytic Gaussian mechanism, our mechanisms yield substantially lower expected noise amplitudes ($l_1$-loss) and variances ($l_2$-loss for zero-mean distributions). In the low-privacy regime that motivates our design, our mechanisms approach optimality, mitigating nearly all of the optimality gap of the analytic Gaussian mechanism.
\end{abstract}

\section{Introduction}\label{sec:intro}
Differential privacy (DP, \citealt{dwork2006calibrating,dwork2014algorithmic}) is a formal framework that enables the release of statistics of datasets while providing a quantifiable privacy guarantee to the individuals represented. By definition, DP guarantees hold under any privacy attack, including reconstruction~\citep{dinur2003revealing,dwork2017exposed} and de-identification attacks~\citep{sweeney1997weaving,pro2014privacy}. As a result, differentially private (or privacy-preserving) variants of machine learning (ML) algorithms offer protection against a broad range of threats, including membership inference attacks tailored to specific ML tasks and platforms offering ML as a service~\citep{shokri2017membership,hu2022membership}. We refer the reader to the surveys~\citep{ji2014differential,gong2020survey} for DP in ML and~\citep{demelius2025recent} for DP in deep learning. Much of the literature focuses on asymptotically high-privacy regimes, while many real-world deployments operate in moderate or low-privacy, often in single-shot settings.

While the definition of DP has inspired a broad range of related notions of computational privacy~\citep{desfontaines2019sok}, we focus on the most widely adopted formulation, known as $(\varepsilon, \delta)$-DP, or approximate differential privacy~\citep{dwork2006our}. Approximate DP is a property of a randomized algorithm $\mathcal{A}$, which maps databases $D \in \mathcal{D}$ to random outputs $\omega \in \Omega$. Specifically, for any $\varepsilon, \delta \geq 0$, an algorithm $\mathcal{A}$ satisfies $(\varepsilon, \delta)$-DP if we have
\begin{align*}
    \mathbb{P}[\mathcal{A}(D) \in A] &\leq e^\varepsilon \cdot \mathbb{P}[\mathcal{A}(D') \in A] + \delta, \\
    & \qquad \qquad \forall (D, D') \in \mathcal{N}, \; \forall A \in \mathcal{F},
\end{align*}
where $(\Omega, \mathcal{F}, \mathbb{P})$ is a probability space and $\mathcal{N} \subset \mathcal{D} \times \mathcal{D}$ is the set of neighboring databases that differ in a single individual. While the definition treats all values of $(\varepsilon,\delta)$ uniformly, the quantitative implications of DP depend strongly on the privacy regime, and mechanisms that are optimal in high-privacy regimes need not perform well when $\varepsilon$ is moderate or large. Due to the strong privacy guarantees promised to individuals~\citep{Tu13}, various Bayesian~\citep[\S1.6]{vadhan2017complexity} and information-theoretic~\citep{cuff2016differential} interpretations, and a wide range of properties such as composition~\citep[Thm 3.16]{dwork2014algorithmic} and post-processing~\citep[Prop 2.1]{dwork2014algorithmic}, DP found manifold applications in statistics and ML \citep{chaudhuri2009logit, friedman2010data, chaudhuri2011differentially, abadi2016deep, cai2019econometrics}, large language models~\citep{harder2020interpretable,mckenna2025scaling} and optimization \citep{mangasarian2011privacy, hsu2014privately, han2016differentially, hsu2016jointly}. It also emerged as the \textit{de facto} standard in industry~\citep{desfontainesblog20211001}.

In this work, we study \textit{additive noise mechanisms} for approximate differential privacy, motivated by moderate and low-privacy regimes where reducing expected noise is critical for utility. Specifically, we perturb a scalar, real-valued query function $q: \mathcal{D} \mapsto \mathbb{R}$ by returning $\mathcal{A}(D) = q(D) + \tilde{X}$ for a carefully designed random variable $\tilde{X}$, to share an approximate answer to the query $q(D)$ while preserving privacy via an additive noise $\tilde{X}$. Since $\tilde{X}$ is not a function of $D$, such noise mechanisms are data independent (see~\citealt{nissim2007smooth,mcsherry2007mechanism} for examples of data dependent mechanisms). Typically, the noise $\tilde{X}$ must be a function of the privacy parameters $\varepsilon$ and $\delta$ as well as the sensitivity $\Delta$ of the query $q$ where $\Delta = \sup \{\lvert q(D) - q(D')\rvert : (D, D') \in \mathcal{N}\}$. Arguably, the most common mechanism in this context is the Gaussian mechanism~\citep[App A]{dwork2014algorithmic}, where $\tilde{X}$ follows a zero-mean Gaussian distribution with standard deviation $\sigma = \sqrt{2 \log(1.25/\delta) (\Delta / \varepsilon)^2}$. While analytically convenient, this construction can introduce substantial excess noise in moderate and low-privacy regimes.

Differential privacy suffers from a \textit{privacy-accuracy tradeoff}~\citep{alvim2011differential}, where stronger privacy guarantees tend to deteriorate the utility gained from data (\textit{i.e.}, $\mathcal{A}(D)$ may significantly deviate from $q(D)$). This can already be observed in the definition of the Gaussian mechanism, where the standard deviation of the noise increases as $\varepsilon$ and $\delta$ decrease. To this end, an active area of research aims to find optimal distributions that minimize a pre-specified loss, such as the standard deviation, amplitude $\mathbb{E}[\lvert \tilde{X} \rvert]$, and power $\mathbb{E}[\tilde{X}^2]$. The optimal noise distributions are characterized for the cases with either $\varepsilon = 0$~\citep{geng2019optimal} or $\delta = 0$~\citep{soria2013optimal,geng2014optimal}, corresponding to extreme or asymptotic privacy regimes. For privacy regimes with $\varepsilon,\delta > 0$, and in particular for moderate or low values of privacy, the design of optimal mechanisms is less understood. While there exist \mbox{(near-)optimal} mechanisms for specific tasks, such as returning integer-valued queries~\citep{geng2015optimal} or hypothesis testing for binary data~\citep{TULAP,awan2023canonical}, it is shown that widely-adopted additive noise distributions can be suboptimal in the general setting where only the sensitivity of the query is known. Indeed,~\citet{analytic_Gaussian} argue that the Gaussian mechanism may be far from optimal in terms of its standard deviation $\sigma$ and derive the optimal Gaussian distribution, called the \textit{analytic Gaussian mechanism}, by numerically tuning $\sigma$ for a given sensitivity $\Delta$ and privacy regime $(\varepsilon, \delta)$ with a bisection search method. However, Gaussian distributions are themselves suboptimal as demonstrated by~\citet{truncated_Laplace}: a truncated Laplace mechanism, sampling noise from a Laplace distribution whose tails are truncated to a bounded interval, can significantly improve the noise amplitude and noise power attained by the analytic Gaussian mechanism in all privacy regimes.

Despite having stronger alternatives in terms of expected losses attained, Gaussian mechanisms may still be preferred in large-scale practice, especially in moderate-privacy regimes where composition, tail behavior, and simplicity outweigh asymptotic optimality, with applications varying from the redistricting data analysis for the 2020 US Census~\citep{uscensusbureau_DAS2020} to Google's most recent large language model trained with DP~{\color{black}\citep{sinha2025vaultgemma}}. The wide adoption of Gaussian mechanisms is due to several reasons, some of which we list. Unlike truncated distributions, the Gaussian distribution has an unbounded support and thus does not suffer from distinguishing events (events with $\mathbb{P}[\mathcal{A}(D) \in A] = 0$; \citealt[Remark 1.3]{dwork2016concentrated}). Moreover, despite having unbounded supports, Gaussian distributions benefit from the empirical rule where \textit{almost all} of the noise sampled is within $[-3\sigma, 3\sigma]$~\citep{desfontainesblog20210927}. More importantly, they enjoy tight compositions (how $\varepsilon$ and $\delta$ accumulate over time, \citealt{bun2016concentrated,dong2022gaussian}) for applications with iterative releases of multiple $(\varepsilon,\delta)$-DP statistics. We show that mixtures of Gaussians may significantly improve the expected loss attained by a single Gaussian, particularly in moderate and low-privacy regimes, while retaining the desirable properties of Gaussian mechanisms. Our results suggest that widely deployed systems could benefit from lower-noise mechanisms without abandoning the Gaussian simplicity.

\begin{figure*}[!t]
\hbox{
    \includegraphics[scale = 0.26]{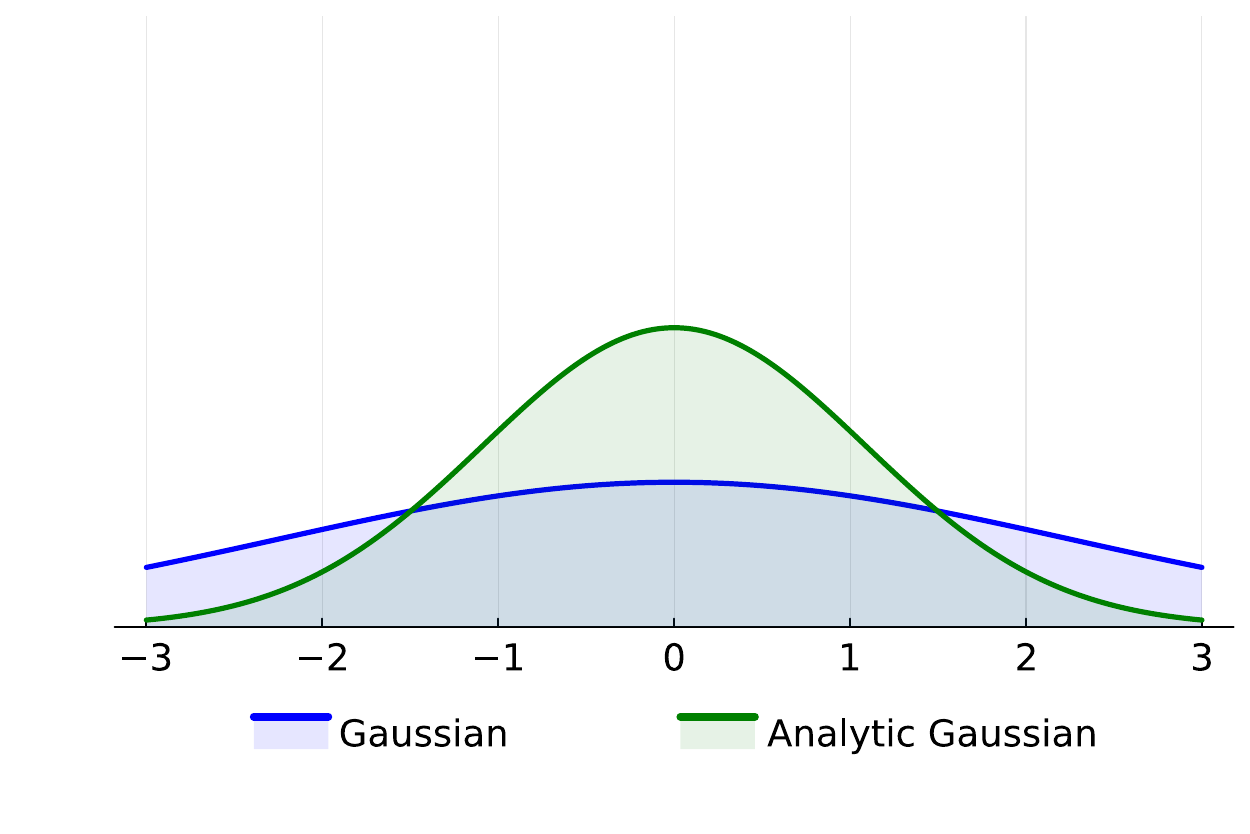}
    \includegraphics[scale = 0.26]{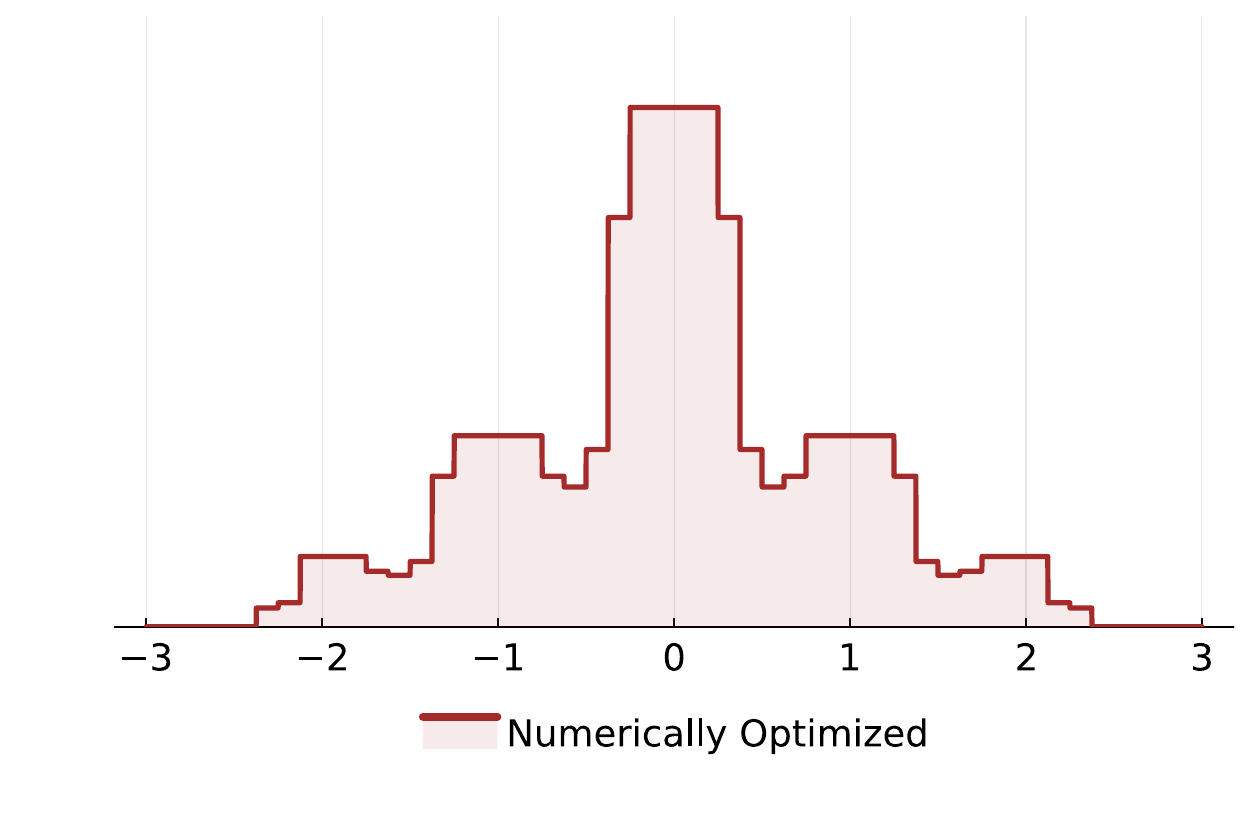}
    \includegraphics[scale = 0.26]{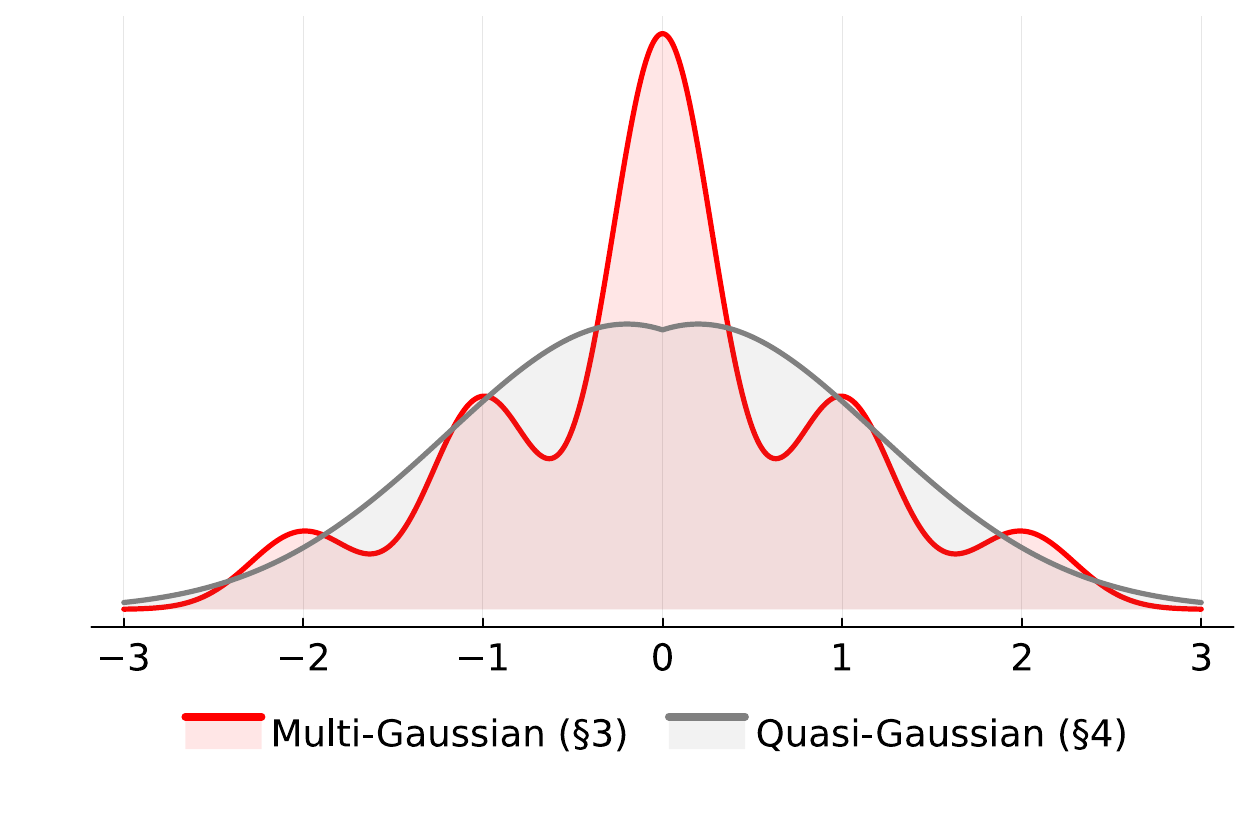}}
    \caption{Different noise distributions that guarantee $(1, 0.1)$-DP for a query with $\Delta = 1$.}
    \label{fig:comparison}
\end{figure*}

The paper unfolds as follows. Section~\ref{sec:weaknesses} provides intuition for how the current state of the art in approximate DP motivates the use of Gaussian mixtures, and discusses their strengths and limitations. Section~\ref{sec:mG} introduces a mechanism that samples additive noise from a mixture of multiple Gaussian distributions with identical variances but varying means and mixture weights. It characterizes the parameters required for this distribution to satisfy $(\varepsilon, \delta)$-DP and develops an efficient algorithm to compute them. Since this algorithm has hyperparameters (\textit{e.g.}, the number of distributions to mix), we also develop a hyperparameter-free alternative which is faster to optimize. To this end, Section~\ref{sec:qG} introduces a mechanism that samples additive noise from a mixture of a zero-mean Gaussian and a quasi-Gaussian distribution. Section~\ref{sec:numericals} presents numerical experiments comparing the expected losses of our mechanisms with those of the analytic Gaussian mechanism as well as other families such as the truncated Laplace mechanism~\citep{truncated_Laplace} and the Tulap mechanism~\citep{TULAP,awan2023canonical}. We conclude the paper in Section~\ref{sec:conclusions}. We borrow the standard notation in DP, which is elaborated on in our appendices. All proofs are relegated to appendices and source codes are made available\footnote{\color{black}\url{https://github.com/selvi-aras/MindTheGap}}.

\section{Motivation, Promise, and Limitations}\label{sec:weaknesses}
Recently,~\citet{selvi2025differential} proposed a numerical optimization framework to design optimal noise distributions that satisfy $(\varepsilon, \delta)$-DP for given $\varepsilon, \delta > 0$. Their method employs a cutting-plane-based optimization method. Such a numerical optimization framework is not free of drawbacks, since the optimal distributions lack closed-form representations, require iterative numerical optimization with no guarantees on the number of iterations, are significantly slower to tune than alternative mechanisms, and have bounded supports. Yet, they offer some key insights, visualized in the center of Figure~\ref{fig:comparison}: \textit{(i)} the aforementioned privacy mechanisms (analytic Gaussian, truncated Laplace) can be significantly suboptimal in terms of expected losses, with suboptimalities reaching up to $700\%$ in the moderate and low-privacy regimes that motivate this work; \textit{(ii)} optimal distributions are not monotonic and enforcing monotonicity strictly increases expected losses; \textit{(iii)} optimal distributions appear to be multimodal with the density peaking at every $\Delta$-length interval and the density ratio within two consecutive peaks being approximately equal to $e^\varepsilon$. Building on these insights, we develop new privacy mechanisms based on mixtures of Gaussians, where each Gaussian is shifted by $\Delta$ from the others and scaled down by a factor of $e^{-\varepsilon}$. The right-hand side of Figure~\ref{fig:comparison} shows an example of such distributions.

Our findings position themselves within a growing direction in DP research that seeks mechanisms tailored to non-asymptotic, moderate-privacy regimes, rather than relying exclusively on high-privacy limits. Recent work~\citep{rinberg2025beyond} shows that generalized Gaussians do not improve upon the standard Gaussian. We emphasize that this result applies only to \textit{unimodal} generalizations. In contrast, our affirmative result demonstrates that mixtures of Gaussians do improve upon Gaussians, suggesting that new generalizations could be pursued \mbox{through multimodal extensions.}

A central motivation of this work is performance in moderate and low-privacy regimes, where $\varepsilon$ is not asymptotically small and Gaussian mechanisms exhibit their largest optimality gaps. Our results indicate that mixtures of Gaussians substantially improve upon the (analytic) Gaussian mechanisms across \textit{all} privacy regimes {\color{black}(\textit{cf.}~Table~\ref{tab:multi_vs_analytic_l1})}, closing up to $99\%$ of the optimality gap associated with a single Gaussian. Moreover, recall that there exist non-Gaussian mechanisms that are asymptotically optimal as $\varepsilon, \delta \downarrow 0$, including the truncated Laplace mechanism. While these mechanisms attain near-optimal expected losses in high-privacy regimes, they have two caveats: \textit{(i)} they are less broadly applicable than the Gaussian mechanism, as they are designed to minimize noise variance and do not compose as Gaussians do (they lack Gaussian tails), and \textit{(ii)} their optimality analyses are restricted to asymptotic privacy regimes. In response, we show that \textit{(i)} although, similarly to these asymptotically optimal non-Gaussian mechanisms, we reduce the expected loss of the Gaussian mechanism by roughly half, our mechanism enjoys composition properties comparable to those of Gaussian mechanisms, due to its Gaussian tails; and \textit{(ii)} despite non-Gaussian mechanisms being asymptotically optimal in high-privacy regimes, we significantly improve upon them in regimes with $\varepsilon \geq 1$ {\color{black}(\textit{cf.}~Figure~\ref{fig:best})}, which is remarkable given that we are constrained to using mixtures of Gaussians. We note, however, that for $\varepsilon < 1$, we cannot improve upon the non-Gaussian objective, as this regime reflects fundamental limitations of Gaussian-based mechanisms. While this improvement depends on $\varepsilon$ it is notably independent of $\delta$. In practice, $\delta$ is required to be cryptographically small in differential privacy applications, whereas real-world deployments often operate at relatively large values of $\varepsilon$. For example, in industrial and governmental deployments, privacy budgets often exceed the high-privacy regimes commonly emphasized in the theoretical literature. In the consumer technology sector, Apple's local differential privacy configuration sets $\varepsilon = 4$ for emoji suggestions and allows for daily budgets of $\varepsilon = 16$ for Safari autoplay intent detection~\citep{apple_dp_overview}, while independent analyses of Apple's implementation have shown that cumulative privacy loss across features can be larger (\textit{e.g.}, up to $\sim 14$ under some accounting assumptions)~\citep{tang2017privacy}. Similarly, LinkedIn's labor market insights are reported to use a privacy budget on the order of $\varepsilon = 14.4$ to maintain analytical fidelity~\citep{desfontainesblog20211001}. In the public sector, the U.S.~Census Bureau approved a global privacy-loss budget of $\varepsilon = 19.61$ for the seminal 2020 Redistricting Data (P.L.~94-171) release to balance accuracy and confidentiality in the released tabulations~\citep{abowd20222020}. {\color{black}
Open-source deep learning resources also report privacy budgets outside the high-privacy asymptotic regime, including $\varepsilon = 50$ in the Opacus DP-SGD tutorial~\citep{opacus} and $\varepsilon=2$ for Google's VaultGemma LLM~\citep{sinha2025vaultgemma}.} These deployments highlight a critical gap: mechanisms optimized for high-privacy regimes can incur substantial excess noise when operated at such moderate or low-privacy levels.

Finally, our approach is limited to one-dimensional queries. However, the relevant literature is similarly restricted to one dimension. For instance, the numerical-optimization-based mechanism of \citet{selvi2025differential}, which shows that unimodal or monotonic mechanisms need not be optimal, is one-dimensional. More generally, it is not known whether optimal mechanisms in higher dimensions can be unimodal. Similarly, the benchmark mechanisms considered in the literature, including the truncated Laplace and Tulap mechanisms, are all one-dimensional. Moreover, mechanisms that are close to optimal in higher dimensions may be suboptimal in one dimension. For example, the Flipped Huber mechanism of \citet{muthukrishnan2023grafting}, which modifies the Laplace mechanism in higher dimensions by combining Laplace-like bodies with Gaussian tails, is shown to be close to optimal in multiple dimensions, yet does not improve upon the truncated Laplace mechanism in one dimension. This suggests that progress in this setting requires one-dimensional certificates of optimality, rather than relying solely on high-dimensional or high-privacy asymptotics. In our conclusions, we briefly discuss bottlenecks and future directions for extending our results to more general distributions and higher dimensions.

\section{Mixtures of Gaussians}\label{sec:mG}
We construct a distribution as a mixture of \(2K+1\) (\(K \in \mathbb{N}\)) Gaussian distributions with identical variances but different means and mixture weights. This distribution will be used for the additive noise \(\tilde{X}\) in an \((\varepsilon,\delta)\)-DP mechanism.
\begin{definition}[multi-Gaussian mixture]\label{def:mG}
    For privacy parameter $\varepsilon > 0$, query sensitivity $\Delta > 0$, scale parameter $\sigma > 0$, and modality parameter $K \in \mathbb{N}$, the \emph{multi-Gaussian mixture distribution} is defined by the pdf
    \begin{align}
    f_{\mathrm{m}}(x; \sigma, K) & = \dfrac{1}{c_K} \sum_{k=-K}^K e^{-\lvert k \rvert \varepsilon} \phi(x; k\Delta, \sigma)\label{eq:mG},
    \end{align}
    where $\phi(x; \mu, \sigma)= \exp(-\frac{(x - \mu )^2}{2 \sigma^2})$ is the unnormalized pdf of a Gaussian with mean $\mu$ and variance $\sigma^2$, and $c_K = \sqrt{2\pi} \sigma \sum_{k=-K}^K e^{-\lvert k \rvert \varepsilon}$ is the normalization constant.
\end{definition}
{\color{black}Because this mechanism is specified through a closed-form density, it is easy to work with and sample from once $\sigma$ is fixed.} Indeed, Appendix~\ref{app:B_well-defined} reviews that Definition~\ref{def:mG} specifies a well-defined distribution and plots it, derives its cdf, and discusses how to sample from it. Appendix~\ref{app:B_sl1_l2} derives the expectation of $\lvert \tilde{X} \rvert$ (noise amplitude) and $\tilde{X}^2$ (noise power) in closed form where $\tilde{X} \sim f_\mathrm{m}(\cdot; \sigma, K)$. Next, we show that this distribution can be used to obtain a feasible additive noise mechanism. That is, for a query function $q: \mathcal{D} \mapsto \mathbb{R}$ with sensitivity $\Delta$, and for privacy parameters $\varepsilon, \delta$, there exists $\sigma > 0$ so that the randomized algorithm $\mathcal{A}(D) := q(D) + \tilde{X}, \ D \in \mathcal{D}$, satisfies $(\varepsilon, \delta)$-DP if $\tilde{X} \sim f_\mathrm{m}(\cdot \; ; \sigma)$. We remark that for $K = 0$, Definition~\ref{def:mG} coincides with the Gaussian distribution, and we know that there are feasible $\sigma > 0$ values providing $(\varepsilon, \delta)$-DP guarantees. {\color{black} On the other hand, for $K \geq 1$, the multi-Gaussian mixture mechanism exhibits a specific mixture structure with mixture weights chosen to be proportional to $e^{-\lvert k \rvert \varepsilon}$ and the Gaussian components centered at $k\Delta$.} {\color{black}
Appendix~\ref{app:discretize-then-smooth} provides intuition for such construction in~\eqref{eq:mG} for the case $K=1$.} {\color{black}We next establish the formal privacy guarantee of the multi-Gaussian mixture mechanism for any $K\in\mathbb N$.}
\begin{theorem}[multi-Gaussian mechanism]\label{thm:mG}
    For privacy parameters $\varepsilon > 0$ and $\delta \in (0,1)$, query sensitivity $\Delta > 0$, and the modality parameter $K \in \mathbb{N}$, consider an additive noise mechanism whose noise follows a multi-Gaussian mixture distribution with density $f_{\mathrm{m}}(x; \sigma,K)$. For a given $\eta \in (0,1)$ and any $\beta \leq \sqrt{2\pi}\eta\sigma\delta$, this mechanism satisfies $(\varepsilon, \delta)$-DP if we have:
    \begin{align}\label{eq:mG_condition}
     \mspace{-10mu}\int_{-\infty}^{\infty} \min\{e^\varepsilon f_{\mathrm{m}}(x; \sigma, K) - f_{\mathrm{m}}(x + \varphi; \sigma, K), 0\}\mathrm{d}x \nonumber\\
     \quad + (1 - \eta) \delta  \geq 0\; \qquad \qquad  \forall \varphi \in \{0, \beta, 2\beta, \ldots,\Delta\}. 
    \end{align}
\end{theorem}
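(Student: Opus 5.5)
Our plan is to reduce $(\varepsilon,\delta)$-DP of the additive mechanism to a one-parameter family of shift conditions, and then to discretize the shift parameter with a Lipschitz-type error bound that the slack $\eta\delta$ absorbs.

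\textbf{Step 1: reduction to shifts.} For an additive noise mechanism with density $f=f_{\mathrm m}(\cdot;\sigma,K)$, neighbouring databases give outputs $q(D)+\tilde X$ and $q(D')+\tilde X$. Their means differ by some $\varphi=q(D')-q(D)\in[-\Delta,\Delta]$. By the standard privacy-profile characterisation, $(\varepsilon,\delta)$-DP holds if and only if
\begin{align*}
\int_{-\infty}^{\infty}\max\{f(x+\varphi)-e^\varepsilon f(x),0\}\,\mathrm dx\le\delta \qquad \forall \varphi\in[-\Delta,\Delta].
\end{align*}
The worst event is $A=\{x: f(x+\varphi)>e^\varepsilon f(x)\}$. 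This integral equals $-\int\min\{e^\varepsilon f(x)-f(x+\varphi),0\}\,\mathrm dx$. Because $f$ is symmetric (the weights $e^{-|k|\varepsilon}$ and means $k\Delta$ are symmetric in $k$), the substitution $x\mapsto -x-\varphi$ shows that $\varphi$ and $-\varphi$ give the same value. It therefore suffices to treat $\varphi\in[0,\Delta]$. Define $g(\varphi):=\int\min\{e^\varepsilon f(x)-f(x+\varphi),0\}\,\mathrm dx$. The goal becomes $g(\varphi)+\delta\ge 0$ on $[0,\Delta]$.

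\textbf{Step 2: Lipschitz continuity of $g$.} Since $u\mapsto\min\{u,0\}$ is 1-Lipschitz, for any $\varphi,\varphi'$ we have
\begin{align*}
|g(\varphi)-g(\varphi')|\le\int|f(x+\varphi)-f(x+\varphi')|\,\mathrm dx=\mathrm{TV}\text{-type }L_1\text{ distance}.
\end{align*}
We bound this by $|\varphi-\varphi'|\int|f'(x)|\,\mathrm dx$. Since $f$ is a convex combination of Gaussian densities with weights $w_k$ summing to one, and each unit-mass Gaussian satisfies $\int|\phi'|=2\cdot\frac{1}{\sqrt{2\pi}\sigma}$, the triangle inequality gives $\int|f'|\le \frac{2}{\sqrt{2\pi}\sigma}$. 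A sharper route notes that the $L_1$ distance between two shifted unit-variance Gaussians is $2(2\Phi(h/2\sigma)-1)\le \frac{2h}{\sqrt{2\pi}\sigma}$, and applies it componentwise via convexity. Either way, $|g(\varphi)-g(\varphi')|\le \frac{2|\varphi-\varphi'|}{\sqrt{2\pi}\sigma}$.

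\textbf{Step 3: discretization.} Take any $\varphi\in[0,\Delta]$ and choose the nearest grid point $\varphi'\in\{0,\beta,2\beta,\dots,\Delta\}$, so that $|\varphi-\varphi'|\le\beta/2$. (The final point of the grid is $\Delta$, so we implicitly take $\Delta/\beta$ integral or include $\Delta$ as an endpoint; the gap is still at most $\beta$, hence at most $\beta/2$ to the nearest point.) Then
\begin{align*}
g(\varphi)\ge g(\varphi')-\frac{\beta}{\sqrt{2\pi}\sigma}\ge -(1-\eta)\delta-\eta\delta=-\delta,
\end{align*}
using \eqref{eq:mG_condition} and $\beta\le\sqrt{2\pi}\eta\sigma\delta$. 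This completes the argument.

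The main obstacle is getting the Lipschitz constant tight enough to match $\beta\le\sqrt{2\pi}\eta\sigma\delta$. We need $|g(\varphi)-g(\varphi')|\le|\varphi-\varphi'|\cdot\frac{2}{\sqrt{2\pi}\sigma}$ combined with the half-grid distance $\beta/2$. If the half-spacing argument is unavailable, for instance because the grid endpoint is awkward, we would instead sharpen the bound. The key observation is that only the $f(x+\varphi)$ term moves. Moreover, $\min\{\cdot,0\}$ only sees the positive part, so a one-sided estimate applies: $\int (f(x+\varphi')-f(x+\varphi))_+\,\mathrm dx=\tfrac12\int|f(x+\varphi)-f(x+\varphi')|\,\mathrm dx$, because both are densities. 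This halves the constant and permits a full gap of $\beta$.
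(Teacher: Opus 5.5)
Your proof is correct and is essentially the paper's argument: the same reduction (via the worst-case event and the symmetry of $f_{\mathrm m}$) to shifts $\varphi\in[0,\Delta]$, the same shift bound $\int|f_{\mathrm m}(x+\varphi)-f_{\mathrm m}(x+\varphi')|\,\mathrm dx\le\sqrt{2/\pi}\,|\varphi-\varphi'|/\sigma$ from $\int|f_{\mathrm m}'|$ and the triangle inequality over the mixture components, and the same nearest-grid-point step with $|\varphi-\varphi'|\le\beta/2$. Your version is cleaner in two respects: you keep $e^\varepsilon f_{\mathrm m}(x)$ fixed and perturb only the shifted term, so the grid-point quantity is exactly the one controlled by \eqref{eq:mG_condition} (the paper's write-up first lowers $e^\varepsilon$ to $1$, after which its term (i) is no longer that quantity); and your closing one-sided estimate $\int(f_{\mathrm m}(x+\varphi)-f_{\mathrm m}(x+\varphi'))^+\,\mathrm dx=\tfrac12\int|f_{\mathrm m}(x+\varphi)-f_{\mathrm m}(x+\varphi')|\,\mathrm dx$ is valid and would allow the grid spacing to double.
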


Theorem~\ref{thm:mG} provides a sufficient condition for $(\varepsilon, \delta)$-DP by ensuring a stronger notion that replaces $\delta$ with $(1 - \eta)\delta$, and in return avoids satisfying the definition over an uncountable set of neighbors by restricting it only on a discretized grid for $\varphi$. As $\eta$ approaches zero, this condition converges to the standard definition of $(\varepsilon, \delta)$-DP. 

{\color{black}In the remainder of this section, we develop an algorithm to compute a tight value of $\sigma$ satisfying the DP condition in Theorem~\ref{thm:mG}. In the numerical experiments, we then demonstrate that the multi-Gaussian mixture mechanism attains significantly smaller expected losses than the analytic Gaussian mechanism. However, since the sufficient DP condition~\eqref{eq:mG_condition} depends on the parameter $\eta \in(0,1)$, these numerical gains do not themselves constitute a theoretical guarantee. We thus first show analytically that the multi-Gaussian mechanism is guaranteed to strictly improve upon the analytic Gaussian mechanism for sufficiently large $\varepsilon$.
\begin{proposition}\label{prop:mg_is_better_strictly}
    For any $\delta \in (0, 1/2)$, there exists $\varepsilon_0 > 0$ such that we have $
        L^{\mathrm{MG}}_2(\varepsilon, \delta) < L^{\mathrm{AG}}_2(\varepsilon, \delta)$ for all $\varepsilon \geq \varepsilon_0$, where $L^{\mathrm{MG}}_2$ and $L^{\mathrm{AG}}_2$ are the expected $l_2$-losses (noise powers) of the multi-Gaussian mixture and analytic Gaussian mechanisms, respectively.
\end{proposition}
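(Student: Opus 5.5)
Throughout, the multi-Gaussian mechanism is taken with $K=1$ and with an exact (not grid-discretized) privacy condition. Since $L^{\mathrm{MG}}_2$ is an infimum over admissible $(\sigma,K)$, any feasible $K=1$ construction upper-bounds it, and the limit $\eta\downarrow 0$ in Theorem~\ref{thm:mG} recovers the exact condition. The plan is asymptotic in $\varepsilon$. For fixed $\delta<1/2$, I first find the asymptotics of the analytic Gaussian variance, then exhibit a $K=1$ mixture whose noise power is asymptotically smaller.

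\textbf{Step 1: analytic Gaussian.} Its calibration is $\Phi(\tfrac{\Delta}{2\sigma}-\tfrac{\varepsilon\sigma}{\Delta}) - e^\varepsilon\Phi(-\tfrac{\Delta}{2\sigma}-\tfrac{\varepsilon\sigma}{\Delta})\le\delta$. As $\varepsilon\to\infty$, the optimal $\sigma$ tends to $0$, and the second term is negligible. Writing $\sigma=\Delta/(c\sqrt{\varepsilon})$, the first term becomes approximately $\Phi(c\sqrt\varepsilon/2 - \sqrt\varepsilon/c)$, and requiring this to be at most $\delta<1/2$ forces $c/2-1/c\le o(1)$, i.e.\ $c\to\sqrt2$. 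Hence
\[
L^{\mathrm{AG}}_2=\sigma_{\mathrm{AG}}^2=\frac{\Delta^2}{2\varepsilon}(1+o(1)).
\]
I would make this rigorous with explicit sandwich bounds on the root of the calibration equation.

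\textbf{Step 2: mixture privacy for large $\varepsilon$.} For $K=1$, the noise power is
\[
\sigma^2+\frac{2e^{-\varepsilon}\Delta^2}{1+2e^{-\varepsilon}}=\sigma^2+O(e^{-\varepsilon}),
\]
so it suffices to show that some $\sigma_{\mathrm{MG}}\le \gamma\,\sigma_{\mathrm{AG}}$ with a fixed $\gamma<1$ is feasible. The privacy loss is then $\int (f_{\mathrm m}(x+\varphi)-e^\varepsilon f_{\mathrm m}(x))_+\,dx$ for $\varphi\in[0,\Delta]$. The intuition is that each mode of the shifted density is covered by a neighbouring mode of $e^\varepsilon f_{\mathrm m}$ with matching weight. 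For the worst shift $\varphi=\Delta$, the component $e^{-|k|\varepsilon}\phi(\cdot;k\Delta)$ shifted by $\Delta$ is dominated by $e^{\varepsilon}e^{-|k-1|\varepsilon}\phi(\cdot;(k-1)\Delta)\ge e^{-|k|\varepsilon}\phi(\cdot;(k-1)\Delta)$, so only the uncovered boundary component of weight $e^{-\varepsilon}/(1+2e^{-\varepsilon})$ leaks. For intermediate $\varphi$, the shifted central Gaussian sits between modes $0$ and $\Delta$. I would bound its leakage by comparing against $e^\varepsilon$ times the mixture $\phi(x;0)+e^{-\varepsilon}\phi(x;\Delta)$. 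On the left part it is covered by the zero-mean component, as in a Gaussian with shift $\varphi$. On the right part it is covered by the $e^{-\varepsilon}$-weighted mode at $\Delta$ with effective shift $\Delta-\varphi$ and effective privacy parameter $0$, which is suppressed by $e^{-(\Delta-\varphi)^2/(8\sigma^2)}$-type tails. The resulting leakage is about $\Phi(\tfrac{\varphi}{2\sigma}-\tfrac{\varepsilon\sigma}{\varphi})$ restricted to the region $x<\varphi/2$ or so, and is small when $\varphi\lesssim\Delta/2$ relative to $\sigma\sqrt{\varepsilon}$. The effective shift that must be protected is then roughly $\Delta/2$ rather than $\Delta$. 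This suggests $\sigma_{\mathrm{MG}}^2\approx\Delta^2/(8\varepsilon)$, or at least something strictly below $\Delta^2/(2\varepsilon)$.

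\textbf{Main obstacle.} The hard step is making the intermediate-shift bound uniform over $\varphi\in[0,\Delta]$. I would split $x$ at the midpoint $\varphi/2+\Delta/2$ or a similar cut. On each side I would lower-bound $e^\varepsilon f_{\mathrm m}$ by a single dominant component and reduce the leakage to explicit Gaussian tail expressions. The goal is to show that with $\sigma=\Delta/(c\sqrt\varepsilon)$ for some fixed $c>\sqrt2$ (say $c=2$), the total leakage tends to $0<\delta$ uniformly in $\varphi$. Choosing $c$ and the cut so that every tail exponent grows linearly in $\varepsilon$ yields feasibility for $\varepsilon\ge\varepsilon_0$. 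Comparing with Step 1 then gives
\[
L^{\mathrm{MG}}_2\le \frac{\Delta^2}{c^2\varepsilon}+O(e^{-\varepsilon}) < \frac{\Delta^2}{2\varepsilon}(1+o(1)) = L^{\mathrm{AG}}_2
\]
for all sufficiently large $\varepsilon$.
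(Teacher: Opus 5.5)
There is a genuine gap, and it sits exactly at the step you flag as the main obstacle. For every fixed $c>\sqrt2$, the $K=1$ mixture with $\sigma=\Delta/(c\sqrt{\varepsilon})$ is \emph{not} $(\varepsilon,\delta)$-DP once $\varepsilon$ is large, so no leading-order gain over $\Delta^2/(2\varepsilon)$ is available. The binding shifts are not $\varphi\approx\Delta/2$ but $\varphi=\Delta-\tau$ with $\sigma\ll\tau\ll\Delta$.

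To see this, take $\varphi=\Delta-4\sigma$, which is admissible for large $\varepsilon$. In your loss $\int(f_{\mathrm m}(x+\varphi)-e^\varepsilon f_{\mathrm m}(x))_+\,\mathrm{d}x$, integrate only over $I=[-\Delta+2\sigma,-\Delta+6\sigma]$.
\begin{itemize}
\item The central mode of $f_{\mathrm m}(\cdot+\varphi)$ is $q\,\mathcal N(\cdot;-\varphi,\sigma)$ with $q=e^\varepsilon/(e^\varepsilon+2)\to1$. It puts mass $q(2\Phi(2)-1)\approx0.95\,q$ on $I$.
\item The nearby mode of $e^\varepsilon f_{\mathrm m}$ at $-\Delta$ has weight $e^\varepsilon\cdot e^{-\varepsilon}q=q$, so it has no multiplicative slack at all. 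It puts at most $q\Phi(-2)\approx0.02\,q$ on $I$.
\item The only boosted mode, $e^\varepsilon q\,\mathcal N(\cdot;0,\sigma)$, puts at most $e^{\varepsilon}\Phi(6-c\sqrt{\varepsilon})\le\exp(\varepsilon(1-c^2/2)+6c\sqrt{\varepsilon})\to0$ on $I$.
\item The remaining mode is negligible there.
\end{itemize}
Hence the loss is at least $0.9$ for large $\varepsilon$, far above $\delta$, and the same computation works for every $K$.

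The error is in your ``effective privacy parameter $0$'' step. When a mode can only be matched by an equal-weight mode shifted by $\Delta-\varphi$, the uncovered mass is the total variation $2\Phi((\Delta-\varphi)/(2\sigma))-1$. This is close to $1$ once $\Delta-\varphi\gg\sigma$. It is the overlap $2\Phi(-(\Delta-\varphi)/(2\sigma))$, not the leakage, that has Gaussian-tail size. Coverage must therefore come from $e^\varepsilon\mathcal N(\cdot;0,\sigma)$ at distance almost $\Delta$, which forces $\sigma^2\ge\frac{\Delta^2}{2\varepsilon}(1-o(1))$. That is the same leading term as the analytic Gaussian.

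Consequently the two losses agree to first order and must be separated at second order. Your Step 1, $L^{\mathrm{AG}}_2=\frac{\Delta^2}{2\varepsilon}(1+o(1))$, cannot do this. Two ingredients are needed, and the paper proves both.

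(i) Exact feasibility of the $K=1$ mixture at precisely $\sigma=\Delta/\sqrt{2\varepsilon}$ whenever $p_\varepsilon:=1/(e^\varepsilon+2)\le\delta$. Write $g_\mu=\mathcal N(\cdot;\mu,\sigma)$ and $\theta=\varphi/\Delta$. This choice of $\sigma$ gives the identity $g_0=e^{\varepsilon\theta(1-\theta)}g_{\varphi-\Delta}^{\theta}g_{\varphi}^{1-\theta}$, and an analogue holds for $g_\Delta$. Weighted AM--GM then yields the pointwise domination $f_{\mathrm m}(x)\le e^\varepsilon f_{\mathrm m}(x-\varphi)+p_\varepsilon g_{-\Delta}(x)$. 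Hence $L^{\mathrm{MG}}_2\le\frac{\Delta^2}{2\varepsilon}+2\Delta^2e^{-\varepsilon}$.

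(ii) The second-order expansion $\sigma_{\mathrm{AG}}=\frac{\Delta}{\sqrt{2\varepsilon}}+\frac{\Delta|\Phi^{-1}(\delta)|}{2\varepsilon}+O(\varepsilon^{-3/2})$. It gives $L^{\mathrm{AG}}_2-\frac{\Delta^2}{2\varepsilon}\asymp\varepsilon^{-3/2}$, which dominates the $e^{-\varepsilon}$ excess. Your own substitution already leads there. The calibration gives $\sqrt{\varepsilon}\,(c/2-1/c)\to\Phi^{-1}(\delta)$. Since $c\mapsto c/2-1/c$ has slope $1$ at $\sqrt2$, this yields $c=\sqrt2-|\Phi^{-1}(\delta)|/\sqrt{\varepsilon}+o(\varepsilon^{-1/2})$.

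This second-order term is also where $\delta<1/2$ is actually used: it makes the term strictly positive.
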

}

The next result shows that condition~\eqref{eq:mG_condition} is monotonic in $\sigma$, hence we can search for minimum $\sigma$ satisfying this condition via bisection search.
\begin{lemma}\label{lemma:mG_monotonicity}
    For a given $\eta \in (0,1)$, if $\sigma > 0$ satisfies~\eqref{eq:mG_condition}, then any $\sigma' \geq \sigma$ must also satisfy~\eqref{eq:mG_condition}. Furthermore, by construction, $\sigma = \sigma_\mathrm{g}$ satisfies~\eqref{eq:mG_condition} where $\sigma_\mathrm{g}$ denotes the standard deviation required by the (standard or analytic) Gaussian mechanism preserving $(\varepsilon, (1-\eta)\delta)$-DP.
\end{lemma}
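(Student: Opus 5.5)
The plan is to read condition~\eqref{eq:mG_condition} as a family of hockey-stick divergence bounds and then use two standard facts: post-processing (data processing) and joint convexity. For a shift $\varphi$, write $P_\varphi^\sigma$ for the law of $\tilde X+\varphi$ and $P_0^\sigma$ for the law of $\tilde X$, where $\tilde X\sim f_{\mathrm m}(\cdot;\sigma,K)$. The integrand satisfies $\min\{e^\varepsilon f-f(\cdot+\varphi),0\}=-\max\{f(\cdot+\varphi)-e^\varepsilon f,0\}$. Hence the integral in~\eqref{eq:mG_condition} equals $-E_{e^\varepsilon}(P^\sigma_{-\varphi}\,\|\,P^\sigma_0)$, with
\[
E_{e^\varepsilon}(P\|Q)=\sup_A\{P(A)-e^\varepsilon Q(A)\}.
\]
So~\eqref{eq:mG_condition} says exactly that $E_{e^\varepsilon}(P^\sigma_{-\varphi}\|P^\sigma_0)\le(1-\eta)\delta$ for every grid point $\varphi$.

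\textbf{Monotonicity in $\sigma$.} Let $\sigma'\ge\sigma$ and $\tau=\sqrt{\sigma'^2-\sigma^2}$. Every component of~\eqref{eq:mG} is a Gaussian with common variance, and the mixture weights do not depend on $\sigma$. Therefore
\[
f_{\mathrm m}(\cdot;\sigma',K)=f_{\mathrm m}(\cdot;\sigma,K)*\mathcal N(0,\tau^2),
\]
that is, $\tilde X'\overset{d}{=}\tilde X+Z$ with $Z\sim\mathcal N(0,\tau^2)$ independent of $\tilde X$. The map $y\mapsto y+Z$ is a Markov kernel that does not depend on $\varphi$. Applying it to both $P^\sigma_{-\varphi}$ and $P^\sigma_0$ yields $P^{\sigma'}_{-\varphi}$ and $P^{\sigma'}_0$. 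By the data processing inequality for hockey-stick divergences (equivalently, post-processing for DP), the divergence can only decrease. Hence every inequality of~\eqref{eq:mG_condition} that holds at $\sigma$ also holds at $\sigma'$.

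One bookkeeping point needs care: the grid step $\beta\le\sqrt{2\pi}\eta\sigma\delta$ depends on $\sigma$. Since this upper bound is increasing in $\sigma$, any $\beta$ admissible at $\sigma$ remains admissible at $\sigma'$. We may therefore keep the same grid. In fact the argument gives the stronger statement that the divergence is nonincreasing in $\sigma$ for every $\varphi\in[0,\Delta]$, not only for grid points. I expect this grid bookkeeping to be the only delicate point, and it is resolved as described.

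\textbf{The Gaussian scale $\sigma_{\mathrm g}$ is feasible.} Set $\sigma=\sigma_{\mathrm g}$. Write $w_k=e^{-|k|\varepsilon}/\sum_j e^{-|j|\varepsilon}$, so that $f_{\mathrm m}(\cdot;\sigma_{\mathrm g},K)=\sum_k w_k\,\mathcal N(k\Delta,\sigma_{\mathrm g}^2)$. Shifting by $\varphi$ shifts each component by the same amount while keeping the weights fixed. The hockey-stick divergence is jointly convex in its pair of arguments, being a supremum of linear functionals. This gives
\[
E_{e^\varepsilon}(P_{-\varphi}\|P_0)\le\sum_k w_k\,E_{e^\varepsilon}\bigl(\mathcal N(k\Delta-\varphi,\sigma_{\mathrm g}^2)\,\|\,\mathcal N(k\Delta,\sigma_{\mathrm g}^2)\bigr)=E_{e^\varepsilon}\bigl(\mathcal N(-\varphi,\sigma_{\mathrm g}^2)\,\|\,\mathcal N(0,\sigma_{\mathrm g}^2)\bigr).
\]
The last equality uses translation invariance, so every term in the sum is the same.

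By definition of $\sigma_{\mathrm g}$, the Gaussian mechanism is $(\varepsilon,(1-\eta)\delta)$-DP for every shift $|\varphi|\le\Delta$. This holds for the standard calibration by its DP guarantee, and for the analytic one by its tight characterization. Hence the right-hand side is at most $(1-\eta)\delta$, and~\eqref{eq:mG_condition} holds at $\sigma_{\mathrm g}$, for every grid point and indeed for every $\varphi\in[0,\Delta]$.

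Equivalently, this second step is again post-processing: $\tilde X=G+Y$, where $G\sim\mathcal N(0,\sigma_{\mathrm g}^2)$ and $Y$ is an independent discrete variable on $\{k\Delta\}$ with weights $w_k$.
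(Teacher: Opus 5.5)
Your proof is correct and follows the paper's argument. For monotonicity, the paper writes $\ell(\cdot;\sigma')$ as the Gaussian convolution of $\ell(\cdot;\sigma)$ and applies Jensen's inequality to the concave map $t\mapsto\min\{t,0\}$, then Fubini; this is a hands-on proof of the data-processing inequality you invoke, and the paper treats the grid step $\beta$ exactly as you do. For $\sigma=\sigma_{\mathrm g}$, the paper likewise reduces to a single zero-mean Gaussian via closure of approximate DP under mixtures and shifts, which is your joint-convexity-plus-translation-invariance step.
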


The proof of Lemma~\ref{lemma:mG_monotonicity} establishes a more general result that if a multi-Gaussian mixture mechanism with parameter $\sigma$ satisfies $(\varepsilon, \delta)$-DP, so does one with $\sigma' \geq \sigma$. This proves an analogous result to that of the Gaussian mechanisms: larger values of $\sigma$ correspond to stronger DP guarantees.

We now present Algorithm~\ref{alg:mG} that computes a value for $\sigma > 0$ to ensure that the multi-Gaussian mixture distribution with a given modality parameter $K \in \mathbb{N}$ and discretization parameter $\eta\in (0,1)$ satisfies $(\varepsilon, \delta)$-DP. Note that this does not guarantee that we find the minimum possible such $\sigma$, since the condition~\eqref{eq:mG_condition} for $\sigma$ satisfying $(\varepsilon, \delta)$-DP (\textit{cf.}~Theorem~\ref{thm:mG}) is itself a sufficient condition. However, upon fixing the hyperparameters $(K, \eta)$, we find the \textit{smallest} $\sigma$ satisfying~\eqref{eq:mG_condition}, hence we find the optimal $\sigma$ within the scope of our analysis. In the presentation of Algorithm~\ref{alg:mG} we use $\texttt{ANALYTIC\_GAUSSIAN}(\varepsilon, \delta, \Delta)$ to denote the computation of the standard deviation $\sigma_\mathrm{g}$ required by the analytic Gaussian mechanism and $\texttt{BISECTION}(l,r,\psi(\sigma;\eta) = 0)$ to denote the bisection search algorithm to find the root of a monotonic function $\psi(\cdot; \eta)$ within the region $(l, r)$. Algorithm~\ref{alg:compute_psi}, which is called within Algorithm~\ref{alg:mG}, numerically computes the integral on the left-hand side of~\eqref{eq:mG_condition} multiple times. In our numerical experiments, we use the \texttt{QuadGK} package of the Julia programming language and propose some algorithmic improvements for numerical efficiency.

\begin{theorem}\label{thm:runtime_mG}
For any $K \in \mathbb{N}$ and $\eta \in (0,1)$, Algorithm~\ref{alg:mG} returns the minimum $\sigma$ satisfying the DP condition in Theorem~\ref{thm:mG} in time {\color{black}$\mathcal{O}\big(\frac{K^2}{\eta\delta} \left(\log (1+\frac{1}{\varepsilon}) + \log (1+\log \frac{1}{\delta})\right)\big)$}.
\end{theorem}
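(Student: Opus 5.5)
Correctness and running time are argued separately. For correctness, Lemma~\ref{lemma:mG_monotonicity} makes the feasible set of~\eqref{eq:mG_condition} an up-closed interval $[\sigma^\star,\infty)$ that contains $\sigma_\mathrm{g}$. Let $\psi(\sigma;\eta)$ be the minimum over the grid $\varphi\in\{0,\beta,\dots,\Delta\}$ of the left-hand side of~\eqref{eq:mG_condition}. This function is continuous in $\sigma$, because the integrand is a continuous and dominated function of $(x,\sigma)$. Feasibility is equivalent to $\psi\ge 0$, so the bisection on $(l,r)$ with $r=\sigma_\mathrm{g}$ converges to the minimal feasible $\sigma^\star$. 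For the lower end $l$, I would use an explicit value known to be infeasible, such as $l=0^+$ or a simple bound derived from $\varphi=\Delta$. There is one subtlety: the grid spacing $\beta\le\sqrt{2\pi}\eta\sigma\delta$ depends on $\sigma$. I would fix $\beta$ using the current bracket's lower endpoint, or observe that refining the grid only strengthens the condition, so that monotonicity is preserved throughout the search.

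For the running time, the plan multiplies three counts: bisection iterations, grid points per iteration, and the cost of one integral. The grid has $\Delta/\beta+1=\mathcal{O}(\Delta/(\eta\sigma\delta))$ points. I would show that every $\sigma$ visited satisfies $\sigma\gtrsim\Delta$ up to constants, or else rescale so that $\Delta/\sigma=\mathcal{O}(1)$. This rescaling is legitimate because the infeasibility threshold gives $\sigma^\star\ge c\,\Delta$ for $(\varepsilon,\delta)$ in the relevant range, and we may start with $l=c\Delta$. The grid then has $\mathcal{O}(1/(\eta\delta))$ points. For each point, the integrand $\min\{e^\varepsilon f_\mathrm{m}(x)-f_\mathrm{m}(x+\varphi),0\}$ involves $2K+1$ Gaussian terms in each of two densities. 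Its sign changes occur at $\mathcal{O}(K)$ crossing points, since the difference is a signed sum of $\mathcal{O}(K)$ Gaussians with a common variance, and such a sum has $\mathcal{O}(K)$ zeros. Between consecutive crossings the integral is a closed-form sum of $\mathcal{O}(K)$ Gaussian cdf differences. This gives $\mathcal{O}(K^2)$ work per grid point.

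The number of bisection iterations is $\log((r-l)/\text{tol})$. I would bound the bracket width $r-l\le\sigma_\mathrm{g}$ using the analytic Gaussian bound $\sigma_\mathrm{g}\lesssim \Delta\sqrt{2\log(1.25/\delta)}/\varepsilon+\Delta$. Taking logarithms gives $\mathcal{O}(\log(1+1/\varepsilon)+\log(1+\log(1/\delta)))$ iterations at fixed relative tolerance. Multiplying the three counts yields the stated bound.

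I expect the main obstacle to be the per-integral cost. Specifically, the claim that the crossings of $e^\varepsilon f_\mathrm{m}(x)-f_\mathrm{m}(x+\varphi)$ number $\mathcal{O}(K)$ and can be located in $\mathcal{O}(K)$ time each, or in amortized $\mathcal{O}(K^2)$ in total, needs care. My first attempt would be to write the difference as $e^{-x^2/2\sigma^2}$ times an exponential polynomial $\sum_j a_j e^{b_j x}$ with $\mathcal{O}(K)$ terms. A Descartes-type rule for exponential sums then bounds the number of real roots by the number of sign changes in the coefficients, which is $\mathcal{O}(K)$. Each root can then be bracketed near the components centered at $k\Delta$ and $k\Delta-\varphi$. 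If this is too delicate, I would fall back to treating the quadrature cost as a constant number of $\mathcal{O}(K)$-term evaluations per each of $\mathcal{O}(K)$ subintervals.
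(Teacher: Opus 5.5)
Your skeleton matches the paper's. Correctness comes from Lemma~\ref{lemma:mG_monotonicity} plus bisection below $\sigma_\mathrm{g}$. The running time is (bisection iterations)$\times$(grid points)$\times$(cost per integral), with the iteration count read off the analytic-Gaussian bracket.

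\textbf{The gap.} The step that fails is the one turning the grid size $\lceil\Delta/(\sqrt{2\pi}\eta\sigma\delta)\rceil+1$ into $\mathcal{O}(1/(\eta\delta))$, namely your claim $\sigma^\star\ge c\Delta$. This is false for large $\varepsilon$:
\begin{itemize}
\item Every probed $\sigma$ lies below $r=\sigma_\mathrm{g}$, and $\sigma_\mathrm{g}=\Delta/\sqrt{2\varepsilon}+\mathcal{O}(\varepsilon^{-1})$ (Lemma~\ref{lem:aux_lem_1_first}, e.g.\ for $\delta<1/2$). So every call of Algorithm~\ref{alg:compute_psi} uses $\Omega(\sqrt{\varepsilon}/(\eta\delta))$ grid points.
\item Rescaling cannot repair this, because the grid size depends only on the scale-free ratio $\sigma/\Delta$. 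Normalizing $\Delta=1$ removes a parameter but gives no lower bound on $\lambda=\sigma/\Delta$.
\item Starting the search at $l=c\Delta$ would also break the ``returns the minimum $\sigma$'' claim. For large $\varepsilon$ the minimizer lies below $c\Delta$, and the bracket $(c\Delta,\sigma_\mathrm{g})$ is even empty.
\end{itemize}
The paper's proof does not close this gap either. After setting $\Delta=1$ it simply asserts $\mathcal{O}(1/(\eta\delta))$ integrals per bisection step, each costing $\mathcal{O}(K^2)$. That tacitly treats $\sigma/\Delta$ as bounded below. A rigorous version must either restrict to bounded $\varepsilon$ or carry the factor $\Delta/\sigma^\star$ explicitly. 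Given monotonicity, bisection from $l=0$ only probes $\sigma\ge\sigma^\star/2$, and $\Delta/\sigma^\star$ is at least of order $\sqrt{\varepsilon}$. The stated bound, which is flat in $\varepsilon$ for large $\varepsilon$, therefore cannot be obtained by this counting.

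\textbf{The grid spacing $\beta$.} Your concern that $\beta$ depends on $\sigma$ is legitimate, and the paper does not address it. Lemma~\ref{lemma:mG_monotonicity} is proved with one $\beta$ shared by $\sigma$ and $\sigma'$, whereas Algorithm~\ref{alg:compute_psi} recomputes $\beta$ from $\sigma$. Your fix ``refining the grid only strengthens the condition'' works only for nested grids. The grids $\{j\Delta/n\}$ from the ceiling formula are nested only when one $n$ divides the other. You would need, e.g., dyadic $n$, or one fixed grid built from a valid positive lower bound on $\sigma^\star$, and that bound cannot be $c\Delta$.

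\textbf{The per-integral cost.} Your exact evaluation of each integral is a genuinely different route from the paper. The paper integrates numerically over a support of radius $\mathcal{O}(K)$ using $\mathcal{O}(K)$-cost evaluations of $f_\mathrm{m}$.
\begin{itemize}
\item Your zero count is right. With the interleaved exponents $(k\Delta-\varphi)/\sigma^2<k\Delta/\sigma^2$, the coefficients alternate in sign, so Laguerre's rule gives at most $4K+1$ crossings.
\item This route has the advantage that its cost does not depend on $\sigma/\Delta$. Quadrature does depend on it, since it must resolve bumps of width $\sigma$.
\item However, locating those crossings within $\mathcal{O}(K^2)$ total time is not justified, and your fallback is exactly the paper's heuristic.
\end{itemize}
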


{\color{black}Algorithm~\ref{alg:mG} should be viewed as a one-time offline calibration tool for the multi-Gaussian mixture mechanism, as opposed to a general-purpose alternative to black-box privacy verification methods (\textit{e.g.}, \citealt{wang2023randomized}). Section~\ref{sec:qG} provides a hyperparameter-free mechanism with a faster algorithm which only logarithmically depends on $\delta$.}

\begin{algorithm}[!t]
\caption{Tuning $\sigma > 0$ for density~\eqref{eq:mG}.}
\label{alg:mG}
\begin{algorithmic}
    \STATE \textbf{Input:} privacy parameters $\varepsilon, \delta > 0$, sensitivity \mbox{$\Delta > 0$}, modality parameter $K \in \mathbb{N}$ and discretization $\eta \in (0,1)$
    \STATE \textbf{Output:} smallest $\sigma > 0$ (w.r.t.\@ Thm~\ref{thm:mG}) of multi-Gaussian mixture distribution~\eqref{eq:mG} that satisfies $(\varepsilon, \delta)$-DP as an additive noise mechanism
    \STATE \textbf{Set} $l =0$, $r = \texttt{ANALYTIC\_GAUSSIAN}(\varepsilon, (1-\eta)\delta, \Delta)$.
    \STATE \textbf{Set} $\sigma = \texttt{BISECTION}(l, r, \psi(\sigma; \eta) = 0)$ where $\psi(\sigma;\eta)$ is the monotonically increasing privacy shortfall function whose root is sought. The computation of $\psi(\sigma;\eta)$ is from Algorithm~\ref{alg:compute_psi}.
    \STATE \textbf{Return:} $\sigma$
\end{algorithmic}

\end{algorithm}
\begin{algorithm}[!tb]
\caption{Computing the worst-case left-hand side of~\eqref{eq:mG_condition} for a given $\sigma > 0$.}
\label{alg:compute_psi}
\begin{algorithmic}
    \STATE \textbf{Input:} privacy parameters $\varepsilon, \delta > 0$, sensitivity $\Delta > 0$, modality parameter $K \in \mathbb{N}$, discretization $\eta \in (0,1)$ and scale parameter $\sigma > 0$
    \STATE \textbf{Output:} privacy shortfall $\psi(\sigma;\eta)$
    \STATE \textbf{Initialize:} $\mathcal{S} = \emptyset$ and $\beta = \Delta / \lceil \Delta/(\sqrt{2 \pi} \eta \sigma \delta)\rceil$
    \FOR{$\varphi \in \{0, \beta, 2\beta, \ldots, \Delta\}$}
        \STATE Numerically compute the 1D integral
        \begin{align*}
            s = \int_{-\infty}^{\infty} \min\{e^\varepsilon f_{\mathrm{m}}(x; \sigma, K) - f_{\mathrm{m}}(x + \varphi; \sigma, K), 0\}\mathrm{d}x,
        \end{align*}
        and append $\mathcal{S} = \mathcal{S} \cup \{ s \}$.
    \ENDFOR
    \STATE Set $s_{\min} = \min\{\mathcal{S}\}$. 
    \STATE \textbf{Return:} $\psi(\sigma;\eta) = s_{\min} + (1 - \eta)\delta$
\end{algorithmic}
\end{algorithm}

One of the key reasons behind the suitability of Gaussian mechanisms in ML, particularly for DP optimization algorithms including DP stochastic gradient descent \citep{abadi2016deep} and DP proximal coordinate descent \citep{mangold2022differentially}, is its favorable composition property where one can obtain tighter compositions than distribution-agnostic compositions. This is because the Gaussian mechanism fits well with the framework of zero-concentrated DP (zCDP, \citealt{mironov2017renyi,bun2016concentrated,dong2022gaussian}), under which composition is lossless. Specifically, if mechanisms $\mathcal{A}_1$ and $\mathcal{A}_2$ are $\rho_1$-zCDP and $\rho_2$-zCDP, respectively, then their composition is $(\rho_1+\rho_2)$-zCDP \citep[Lemma 7]{bun2016concentrated}. While we postpone the definitions to the supplementary materials, we review the crucial part below.
\begin{theorem}[\citealt{bun2016concentrated}, Proposition 6] \label{thm:zcdp}
    A Gaussian mechanism with parameter $\sigma$ satisfies \mbox{$\rho$-zCDP} with $\rho = \Delta^2/2\sigma^2$.
\end{theorem}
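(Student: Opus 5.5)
The statement to prove is Theorem~\ref{thm:zcdp}: a Gaussian mechanism with noise $\mathcal{N}(0,\sigma^2)$ and sensitivity $\Delta$ satisfies $\rho$-zCDP with $\rho = \Delta^2/(2\sigma^2)$. I would argue directly from the definition of zCDP, using Rényi divergences.

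Recall that $\mathcal{A}$ is $\rho$-zCDP if, for all neighbors $(D,D') \in \mathcal{N}$ and all orders $\alpha \in (1,\infty)$,
\[
D_\alpha(\mathcal{A}(D)\,\|\,\mathcal{A}(D')) \leq \rho\alpha .
\]
For the Gaussian mechanism, $\mathcal{A}(D) \sim \mathcal{N}(q(D),\sigma^2)$ and $\mathcal{A}(D') \sim \mathcal{N}(q(D'),\sigma^2)$. Their means differ by $\mu := q(D) - q(D')$, and $|\mu| \leq \Delta$ by the definition of sensitivity. Rényi divergence is invariant under translating both arguments by the same amount, so it suffices to compute $D_\alpha(\mathcal{N}(\mu,\sigma^2)\,\|\,\mathcal{N}(0,\sigma^2))$.

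The key step is evaluating the integral
\[
\int p^\alpha q^{1-\alpha}\,\mathrm{d}x
= \frac{1}{\sqrt{2\pi}\sigma}\int \exp\!\Big(-\frac{\alpha(x-\mu)^2 + (1-\alpha)x^2}{2\sigma^2}\Big)\mathrm{d}x .
\]

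1. Expand the exponent's numerator: $\alpha(x-\mu)^2 + (1-\alpha)x^2 = x^2 - 2\alpha\mu x + \alpha\mu^2$.
2. Complete the square: this equals $(x-\alpha\mu)^2 + \alpha(1-\alpha)\mu^2$.
3. The remaining Gaussian integral normalizes to $1$, so the whole expression equals $\exp\big(\alpha(\alpha-1)\mu^2/(2\sigma^2)\big)$.
4. Hence
\[
D_\alpha = \frac{1}{\alpha-1}\log\big(\cdot\big) = \frac{\alpha\mu^2}{2\sigma^2} \leq \frac{\alpha\Delta^2}{2\sigma^2} = \rho\alpha .
\]

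The bound holds for every $\alpha > 1$ and every neighboring pair. The argument is symmetric in $D$ and $D'$, since swapping them only replaces $\mu$ by $-\mu$, and the bound depends on $\mu^2$.

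There is no real obstacle here. The only points needing care are the completing-the-square algebra and stating the zCDP definition consistently with the source. Under the variant of zCDP that adds an offset $\xi$, the bound is achieved with $\xi = 0$.
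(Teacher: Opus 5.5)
Your proof is correct: completing the square to $(x-\alpha\mu)^2+\alpha(1-\alpha)\mu^2$ gives $D_\alpha = \alpha\mu^2/(2\sigma^2)\le \rho\alpha$, and translation invariance also covers the shifted-mean form the paper later uses in the proof of Corollary~\ref{coro:zcdp}. The paper gives no proof of its own and imports the result from \citet{bun2016concentrated}, where it follows from this same closed-form R\'enyi divergence computation for two equal-variance Gaussians, so your route is the standard one.
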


Our mechanism enjoys an \textit{identical} composition theory.

\begin{corollary}\label{coro:zcdp}
    A multi-Gaussian mixture mechanism with parameter $\sigma$ satisfies $\rho$-zCDP with $\rho = \Delta^2/2\sigma^2$.
\end{corollary}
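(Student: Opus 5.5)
The plan is to view the multi-Gaussian mixture mechanism as a post-processing of a Gaussian mechanism, and then to invoke Theorem~\ref{thm:zcdp} together with the post-processing property of zCDP. Write the mixture noise as $\tilde{X} = \Delta J + Z$, where $Z \sim \mathcal{N}(0,\sigma^2)$ and $J$ is an integer-valued random variable, independent of $Z$ and of the database, with $\mathbb{P}[J = k] = e^{-\lvert k\rvert \varepsilon}/\sum_{j=-K}^K e^{-\lvert j \rvert\varepsilon}$ for $\lvert k\rvert \leq K$. Conditioning on $J=k$ gives the density $\phi(x;k\Delta,\sigma)/(\sqrt{2\pi}\sigma)$. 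Mixing over $J$ therefore reproduces exactly $f_{\mathrm{m}}(x;\sigma,K)$ from Definition~\ref{def:mG}. The output $\mathcal{A}(D) = q(D) + Z + \Delta J$ is then obtained from the Gaussian mechanism $\mathcal{G}(D) = q(D) + Z$ by the randomized map $y \mapsto y + \Delta J$. Crucially, this map uses randomness independent of $D$.

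The key steps, in order, are as follows. First, by Theorem~\ref{thm:zcdp}, $\mathcal{G}$ is $\rho$-zCDP with $\rho = \Delta^2/(2\sigma^2)$.

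Second, zCDP is closed under (randomized) post-processing. This follows from the data-processing inequality for R\'enyi divergences: for every order $\alpha > 1$ and every Markov kernel $T$,
\[
D_\alpha(T\mathcal{G}(D)\,\|\,T\mathcal{G}(D')) \leq D_\alpha(\mathcal{G}(D)\,\|\,\mathcal{G}(D')).
\]
Hence the condition $D_\alpha \leq \rho\alpha$ for all $\alpha>1$ carries over to $\mathcal{A}$. If needed, I would cite \citet[Lemma 8]{bun2016concentrated} for post-processing, or simply note that the joint convexity of $e^{(\alpha-1)D_\alpha}$ gives it directly.

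Third, I would conclude that $\mathcal{A}$ is $\rho$-zCDP with the same $\rho$.

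The only point needing care, which I expect to be the main (though mild) obstacle, is verifying that the kernel is genuinely data-independent: the shift $\Delta J$ depends only on $\varepsilon,\Delta,K$, never on $D$. Alternatively, one could argue directly. The output distribution for $D$ is a convex combination $\sum_k w_k \mathcal{N}(q(D)+k\Delta,\sigma^2)$, and for $D'$ it is the same combination with $q(D')$ in place of $q(D)$, using the same weights. Joint quasi-convexity of $D_\alpha$ bounds the divergence by the maximum over components. Each component pair is two Gaussians whose means differ by $\lvert q(D)-q(D')\rvert \leq \Delta$, so each divergence is at most $\alpha\Delta^2/(2\sigma^2)$. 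Either route yields the claim.
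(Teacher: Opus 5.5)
Your proof is correct. The fallback route you sketch is exactly the paper's proof: common mixture weights on both sides, quasi-convexity of $D_\alpha$ via \citet[Lemma 15]{bun2016concentrated}, and the Gaussian bound $\alpha\Delta^2/(2\sigma^2)$ for each pair of components.

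Your primary route differs from the paper's. You write the noise as $Z + \Delta J$ and view the mechanism as a data-independent randomized post-processing of the Gaussian mechanism $q(D)+Z$, so the result follows from the data-processing inequality for R\'enyi divergence.

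This route proves more than the corollary. The multi-Gaussian mechanism with parameter $\sigma$ inherits every guarantee of the Gaussian mechanism with the same $\sigma$ that is closed under post-processing: R\'enyi DP at every order, the full $(\varepsilon,\delta)$ privacy profile, and Gaussian DP with $\mu=\Delta/\sigma$. For example, it gives the second claim of Lemma~\ref{lemma:mG_monotonicity} (that $\sigma_{\mathrm{g}}$ is feasible) in one line, without appealing to closure of approximate DP under mixtures.

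The quasi-convexity argument has a different advantage. It never requires the output to factor through a single base mechanism; it needs only common mixing weights and a per-component divergence bound. It therefore applies verbatim to mixtures whose components are not translates of one fixed noise distribution, where the $Z+\Delta J$ decomposition is unavailable.

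For the shift-mixture in this paper the two routes reduce to the same inequality. Both rest on joint convexity of $(p,q)\mapsto p^\alpha q^{1-\alpha}$ applied to $\sum_k w_k P(\cdot-k\Delta)$ versus $\sum_k w_k Q(\cdot-k\Delta)$, as you observe.
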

Since the multi-Gaussian mixture mechanism can be viewed as a convex combination of Gaussians, the proof follows directly by exploiting the quasi-convexity of the $\alpha$-R\'{e}nyi divergence \citep[Lemma 15]{bun2016concentrated}, on which zCDP is defined. A proof is provided in Appendix~\ref{subsec:zcdp}.

{\color{black}A direct composition in terms of $(\varepsilon,\delta)$-DP is as follows.
\begin{corollary}\label{coro:explicit_composition}
Consider $T$ multi-Gaussian mixture mechanisms applied to queries with sensitivities
$\Delta_t$, $t \in [T]$, each using scale parameter
$\sigma_t>0$. Then, for any $\delta_{\mathrm{tot}}\in(0,1)$, their composition
satisfies $(\varepsilon_{\mathrm{tot}},\delta_{\mathrm{tot}})$-DP with $
    \varepsilon_{\mathrm{tot}}
    =
    \rho_{\mathrm{tot}}
    +
    2\sqrt{\rho_{\mathrm{tot}}\log(1/\delta_{\mathrm{tot}})}$ where $
    \rho_{\mathrm{tot}}
    =
    \sum_{t=1}^T \Delta_t^2/(2\sigma_t^2)$.
\end{corollary}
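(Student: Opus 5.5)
The argument combines Corollary~\ref{coro:zcdp} with two standard facts about zero-concentrated DP: zCDP composes additively, and $\rho$-zCDP converts to $(\varepsilon,\delta)$-DP. By Corollary~\ref{coro:zcdp}, the $t$-th multi-Gaussian mixture mechanism, applied to a query with sensitivity $\Delta_t$ at scale $\sigma_t$, is $\rho_t$-zCDP with $\rho_t = \Delta_t^2/(2\sigma_t^2)$. The mixture weights and the modality parameter $K$ play no role here. The corollary assumed the component means sit at $k\Delta$ for the sensitivity being protected, so I will note that the $t$-th mechanism is built with its own $\Delta_t$. This only matters through the quasi-convexity argument already used for Corollary~\ref{coro:zcdp}: each pair of neighbouring shifted Gaussians has divergence at most $\alpha\Delta_t^2/(2\sigma_t^2)$, whatever the component means are.

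Next I invoke the composition theorem for zCDP \citep[Lemma 7]{bun2016concentrated}. It states that the (possibly adaptive) composition of a $\rho_1$-zCDP and a $\rho_2$-zCDP mechanism is $(\rho_1+\rho_2)$-zCDP. A straightforward induction on $T$ then shows that the composition of all $T$ mechanisms is $\rho_{\mathrm{tot}}$-zCDP with $\rho_{\mathrm{tot}}=\sum_{t=1}^T \Delta_t^2/(2\sigma_t^2)$. If adaptivity is wanted, I will check that the conditional R\'enyi divergence bound holds uniformly over the earlier outputs. It does, because each $\rho_t$ depends only on $(\Delta_t,\sigma_t)$ and not on the data or on the previous outputs.

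Finally I apply the conversion from zCDP to approximate DP \citep[Proposition 3]{bun2016concentrated}: a $\rho$-zCDP mechanism satisfies $(\rho+2\sqrt{\rho\log(1/\delta)},\delta)$-DP for every $\delta>0$. Taking $\rho=\rho_{\mathrm{tot}}$ and $\delta=\delta_{\mathrm{tot}}\in(0,1)$ gives exactly $\varepsilon_{\mathrm{tot}}$. For completeness, that conversion comes from a tail bound on the privacy loss random variable. Markov's inequality applied to $e^{(\alpha-1)Z}$ gives $\mathbb{P}[Z>\varepsilon]\le e^{(\alpha-1)(\alpha\rho-\varepsilon)}$, and the choice $\alpha = (\varepsilon+\rho)/(2\rho)$ makes this at most $\delta$.

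The only step I expect to need any care is the first one: confirming that Corollary~\ref{coro:zcdp} applies uniformly over all neighbouring pairs, including the case where the query difference $\varphi$ is smaller than $\Delta_t$. This is immediate, because the R\'enyi divergence between two Gaussians with common variance $\sigma_t^2$ whose means differ by $\varphi$ equals $\alpha\varphi^2/(2\sigma_t^2)$, which is monotone in $|\varphi|$. Beyond that, the proof is a direct chain of the three cited results.
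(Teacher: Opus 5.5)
Your proposal is correct and follows essentially the same route as the paper: per-mechanism $\rho_t$-zCDP from Corollary~\ref{coro:zcdp}, additive composition of zCDP, and then the standard zCDP-to-$(\varepsilon,\delta)$-DP conversion of \citet{bun2016concentrated}. Your additional verification of the tail bound with $\alpha=(\varepsilon+\rho)/(2\rho)$ also checks out (the exponent becomes $-(\varepsilon-\rho)^2/(4\rho)=\log\delta$), though the paper simply cites the conversion result.
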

}

\section{Quasi-Gaussian Mixtures}\label{sec:qG}
The multi-Gaussian mixture mechanism has a modality hyperparameter $K\in[N]$. Furthermore, for a fixed $K$, the quality of Algorithm~\ref{alg:mG} depends on a discretization parameter $\eta \in (0,1)$. While in numerical experiments we show choosing $K \in [20]$ and $\eta = 0.01$ is a good rule-of-thumb, here we present a hyperparameter-free mixture mechanism. To this end, we define a distribution whose density function is the mixture of a zero-mean Gaussian and a quasi-Gaussian\footnote{{\color{black}The name `quasi' follows since the term $\propto \exp(- (\lvert x \rvert - \Delta)^2 / (2\sigma^2))$ replaces $x$ in a Gaussian density with $\lvert x \rvert$.}}.
\begin{definition}[quasi-Gaussian mixture]\label{def:qG}
    For privacy parameter $\varepsilon > 0$, query sensitivity $\Delta > 0$, and scale parameter $\sigma > 0$, the \emph{quasi-Gaussian mixture distribution} is defined by the pdf
    \begin{align}\label{eq:qG}
        f_{\mathrm{q}}(x; \sigma) = \dfrac{e^\varepsilon}{c} \exp\left(- \frac{x^2}{2\sigma^2}\right) +  \dfrac{1}{c} \exp\left(- \frac{(|x| - \Delta)^2}{2\sigma^2}\right),
    \end{align}
    with normalization constant $c = \sqrt{2\pi} \sigma (e^\varepsilon + 2\Phi(\Delta/\sigma))$.
\end{definition}
Appendices~\ref{app:well-defined}-\ref{app:sample-qG} show that Definition~\ref{def:qG} specifies a well-defined distribution and plot it, derive its cdf, and show how to sample from it. In Appendix~\ref{app:l1_l2}, we derive the expectation of $\lvert \tilde{X} \rvert$ (noise amplitude) and $\tilde{X}^2$ (noise power) in closed form where $\tilde{X} \sim f_\mathrm{q}(\cdot \; ; \sigma)$. Next, we show that this distribution can be used to obtain a feasible additive noise mechanism. That is, for a query function $q: \mathcal{D} \mapsto \mathbb{R}$ with sensitivity $\Delta$, and for privacy parameters $\varepsilon, \delta$, there exists $\sigma > 0$ so that the randomized algorithm $\mathcal{A}(D) := q(D) + \tilde{X}, \ D \in \mathcal{D}$, satisfies $(\varepsilon, \delta)$-DP if $\tilde{X} \sim f_\mathrm{q}(\cdot \; ; \sigma)$.
\begin{theorem}[quasi-Gaussian mechanism]\label{thm:qG}
    For privacy parameters $\varepsilon > 0$ and $\delta \in (0,1)$, and query sensitivity $\Delta > 0$, an additive noise mechanism whose noise follows a quasi-Gaussian mixture distribution with density $f_{\mathrm{q}}(x; \sigma)$ satisfies $(\varepsilon, \delta)$-DP if $\sigma \geq \max\{\sigma_1, \sigma_2\}$ where
    \begin{subequations}
    \begin{align}\label{eq:sigma_1}
        \sigma_1 = \min\left\{\sigma: h_1(\sigma) + h_2(\sigma) \geq 0 \right\} 
    \end{align}
    for
    \begin{equation}\label{eq:h}
        \begin{aligned}
            h_1(\sigma) &= e^{2\varepsilon} \Phi\left( -\dfrac{\varepsilon\sigma}{\Delta} -  \dfrac{\Delta}{\sigma} \right) - \Phi\left( -\dfrac{\varepsilon\sigma}{\Delta} +  \dfrac{\Delta}{\sigma} \right), \\
            h_2(\sigma) &= \left(e^\varepsilon + 2\Phi\left(\dfrac{\Delta}{\sigma} \right)\right) \delta,
        \end{aligned}
    \end{equation}
    \end{subequations}
    as well as
    \begin{subequations}
    \begin{align}\label{eq:sigma_2}
        \sigma_2 = \min\left\{\sigma: \frac{f_{\mathrm{q}, \max}(\sigma)}{f_{\mathrm{q}, \min}(\sigma)} \leq e^\varepsilon  \right\}
    \end{align}
    for
    \begin{align}\label{eq:fmaxmin}
        f_{\mathrm{q}, \max}(\sigma) = \max_{x \in [0, \Delta]}f_\mathrm{q}(x; \sigma), \; f_{\mathrm{q}, \min}(\sigma) = \min_{x \in [0, \Delta]}f_\mathrm{q}(x; \sigma).
    \end{align}
    \end{subequations} 
\end{theorem}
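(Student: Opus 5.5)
Our plan is to use the standard characterization of $(\varepsilon,\delta)$-DP for additive noise. The mechanism is $(\varepsilon,\delta)$-DP if and only if, for every shift $\varphi \in [-\Delta,\Delta]$,
\begin{align*}
\int_{-\infty}^{\infty} \max\{f_\mathrm{q}(x+\varphi;\sigma) - e^\varepsilon f_\mathrm{q}(x;\sigma),0\}\,\mathrm{d}x \le \delta .
\end{align*}
Since $f_\mathrm{q}$ is even, it suffices to take $\varphi\in[0,\Delta]$.

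\textbf{Step 1: reduce to a single boundary quantity.} The density $f_\mathrm{q}$ consists of a central bump $e^\varepsilon\phi(x;0,\sigma)$ and two side bumps $\phi(x;\pm\Delta,\sigma)$ glued through $|x|$. Condition~\eqref{eq:sigma_2} says $f_\mathrm{q}$ varies by at most a factor $e^\varepsilon$ on $[0,\Delta]$, and by symmetry also on $[-\Delta,0]$. We will use this to show that, for $x$ and $x+\varphi$ in a central region, the integrand vanishes. The only remaining contributions then come from the far tails, where $f_\mathrm{q}$ behaves like a sum of Gaussians.

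Concretely, for $x\ge 0$ and $x+\varphi$ outside the central region, we will write
\begin{align*}
f_\mathrm{q}(x+\varphi) = \tfrac1c\bigl[e^\varepsilon\phi(x+\varphi;0,\sigma) + \phi(x+\varphi;\Delta,\sigma)\bigr]
\end{align*}
and compare it with
\begin{align*}
e^\varepsilon f_\mathrm{q}(x) = \tfrac1c\bigl[e^{2\varepsilon}\phi(x;0,\sigma) + e^\varepsilon\phi(|x|;\Delta,\sigma)\bigr].
\end{align*}
The aim is to dominate the positive part by a comparison of matched components:
\begin{align*}
\bigl[\phi(x+\varphi;\Delta)-e^\varepsilon\phi(x;\Delta)\bigr]_+ + e^\varepsilon\bigl[\phi(x+\varphi;0)-e^\varepsilon\phi(x;0)\bigr]_+ .
\end{align*}
We should also use the cross term $\phi(x;\Delta)$ versus $e^\varepsilon\phi(x+\varphi;0)$, which in the worst case $\varphi=\Delta$ cancels exactly. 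It then remains to argue that $\varphi=\Delta$ is the worst shift, by monotonicity in $\varphi$ of the resulting Gaussian privacy-loss integrals.

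\textbf{Step 2: compute the $\varphi=\Delta$ case.} At $\varphi=\Delta$ the surviving term is the analytic-Gaussian type expression
\begin{align*}
\int \bigl[e^\varepsilon\phi(x+\Delta;0,\sigma) - e^{2\varepsilon}\phi(x;0,\sigma)\bigr]_+\,\mathrm{d}x .
\end{align*}
The integrand is positive exactly for $x < -\Delta/2-\varepsilon\sigma^2/\Delta$. Integrating, and dividing by $c/(\sqrt{2\pi}\sigma)$, gives
\begin{align*}
\frac{e^\varepsilon\Phi\!\left(-\frac{\varepsilon\sigma}{\Delta}+\frac{\Delta}{\sigma}\right)\cdots}{e^\varepsilon+2\Phi(\Delta/\sigma)} ,
\end{align*}
with the factor bookkeeping to be fixed in the calculation. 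After multiplying by $e^\varepsilon+2\Phi(\Delta/\sigma)$ and rearranging, we expect to obtain exactly $-h_1(\sigma)\le h_2(\sigma)$, i.e.\ condition~\eqref{eq:sigma_1}. Here $h_1$ arises as $e^{2\varepsilon}\Phi(\cdot)-\Phi(\cdot)$ after factoring out $e^{-\varepsilon}$ appropriately.

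\textbf{Step 3: monotonicity.} We need $\sigma\ge\sigma_1$ to imply $h_1+h_2\ge0$. This follows the Gaussian-mechanism argument of \citet{analytic_Gaussian}: the privacy-loss profile decreases in $\sigma$. Alternatively, one can differentiate $h_1+h_2$ directly, or invoke the post-processing / scaling argument used for Lemma~\ref{lemma:mG_monotonicity}. Similarly, $f_{\mathrm{q},\max}/f_{\mathrm{q},\min}$ decreases as $\sigma$ grows, since both components flatten on $[0,\Delta]$.

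\textbf{Main obstacle.} The hardest part is Step 1: rigorously showing that every shift $\varphi\in[0,\Delta]$ and every region of $x$ (including $x$ and $x+\varphi$ straddling $0$) contribute no more than the $\varphi=\Delta$ tail term. The cross terms between the central and side components make this a delicate case analysis. Our first attempt would be to split $\mathbb{R}$ into $(-\infty,-\Delta]$, $[-\Delta,0]$, $[0,\infty)$. On the middle and right pieces we would use the ratio bound from $\sigma_2$, noting that $f_\mathrm{q}$ is nonincreasing on $[\Delta,\infty)$ when a shift moves mass outward. On the left piece we would reduce to the explicit Gaussian comparisons above.
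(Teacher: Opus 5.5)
Your skeleton matches the paper's: handle the central window by $e^\varepsilon f_{\mathrm{q},\min}\ge f_{\mathrm{q},\max}$, reduce the tails to $\varphi=\Delta$, cancel a cross term exactly, then evaluate a Gaussian integral. However, the two steps that carry the content are wrong or missing.

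\textbf{Step 2 computes the wrong surviving term.} In your orientation, $[f_{\mathrm{q}}(x+\varphi)-e^\varepsilon f_{\mathrm{q}}(x)]_+$ vanishes for every $x\ge-\Delta$. On $[-\Delta,\Delta]$ this follows from the $\sigma_2$ condition, and for $x>\Delta$ because $f_{\mathrm{q}}$ decreases there. So your comparison ``for $x\ge0$'' contributes nothing, and all the mass comes from $x<-\Delta$. There, at $\varphi=\Delta$,
\[
c\,f_{\mathrm{q}}(x+\Delta)=e^\varepsilon\phi(x+\Delta;0,\sigma)+\phi(x+2\Delta;0,\sigma),\qquad c\,e^\varepsilon f_{\mathrm{q}}(x)=e^{2\varepsilon}\phi(x;0,\sigma)+e^\varepsilon\phi(x+\Delta;0,\sigma).
\]
The term $e^\varepsilon\phi(x+\Delta;0,\sigma)$ cancels, and what survives is $[\phi(x+2\Delta;0,\sigma)-e^{2\varepsilon}\phi(x;0,\sigma)]_+$: the \emph{side} bump of the shifted density against $e^{2\varepsilon}$ times the \emph{central} bump. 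This is positive exactly for $x<-\Delta-\varepsilon\sigma^2/\Delta$. It integrates to $-\sqrt{2\pi}\sigma\,h_1(\sigma)$, an analytic-Gaussian integral with shift $2\Delta$ and parameter $2\varepsilon$. Requiring its quotient by $c$ to be at most $\delta$ is exactly $h_1+h_2\ge0$.

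Your integrand $[e^\varepsilon\phi(x+\Delta;0,\sigma)-e^{2\varepsilon}\phi(x;0,\sigma)]_+$ retains precisely the term that cancels. It integrates to $e^\varepsilon\sqrt{2\pi}\sigma\bigl[\Phi(\frac{\Delta}{2\sigma}-\frac{\varepsilon\sigma}{\Delta})-e^{\varepsilon}\Phi(-\frac{\Delta}{2\sigma}-\frac{\varepsilon\sigma}{\Delta})\bigr]$, whose arguments and weights no bookkeeping turns into $h_1$. For the same reason, the matched-component bound in Step 1 (center against center, side against side) cannot be combined with the cross cancellation, since each component can be used only once. On its own it yields only an analytic-Gaussian-type requirement, essentially forcing $\sigma$ up to the analytic Gaussian level, rather than $h_1+h_2\ge0$.

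\textbf{The reduction to $\varphi=\Delta$ is missing its key idea.} It does not follow from monotonicity in $\varphi$ of Gaussian privacy-loss integrals: for $\varphi<\Delta$ the cross terms do not cancel, and no such monotonicity is available for the glued mixture. The paper's device is pointwise: $f_{\mathrm{q}}$ increases on $(-\infty,-\Delta)$ and decreases on $(\Delta,\infty)$. In the paper's orientation ($e^\varepsilon f_{\mathrm{q}}(x+\varphi)-f_{\mathrm{q}}(x)$), $x>\Delta-\varphi$ gives $f_{\mathrm{q}}(x+\varphi)\ge f_{\mathrm{q}}(x+\Delta)$, after which the cancellation is exact for all $x\ge0$. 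In yours, for $x<-\Delta$ use $f_{\mathrm{q}}(x)\ge f_{\mathrm{q}}(x+\varphi-\Delta)$, since both points lie in $(-\infty,-\Delta)$, and substitute $y=x+\varphi-\Delta$. You must move the argument that stays in the monotone tail, not $x+\varphi$, which can land in $(-\Delta,0)$ where $f_{\mathrm{q}}$ is not monotone. The same tail monotonicity is what makes $f_{\mathrm{q},\max}$ the global maximum, which your central estimate needs.

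Finally, of your Step 3 options only direct differentiation works (Lemma~\ref{lem:monotonicity}). The convolution argument of Lemma~\ref{lemma:mG_monotonicity} does not transfer, because $f_{\mathrm{q}}(\cdot;\sigma')$ is not a Gaussian smoothing of $f_{\mathrm{q}}(\cdot;\sigma)$. Analytic-Gaussian monotonicity alone is inconclusive, since $-h_1$ and $h_2$ both decrease in $\sigma$.
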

In Theorem~\ref{thm:qG}, the definition of $\sigma_2$ is independent of $\delta$, hence, for small values of $\delta$, we have $\sigma_1 > \sigma_2$, while for large values of $\delta$ we have $\sigma_1 < \sigma_2$. Since the variance of the quasi-Gaussian mixture distribution scales with $\sigma$, we will be interested in the smallest $\sigma$ satisfying the above result, that is, $\sigma = \max\{\sigma_1, \sigma_2\}$. To this end, the rest of this section is devoted to developing efficient algorithms to find the exact values of $\sigma_1$ and $\sigma_2$. From here on, we fix the privacy parameters $\varepsilon > 0$ and $\delta \in (0,1)$ as well as the query sensitivity $\Delta > 0$ and do not repeat their definitions.
\begin{lemma}\label{lem:monotonicity}
    The function $\sigma \mapsto h_1(\sigma) + h_2(\sigma)$ is monotonically increasing on the region \mbox{$\sigma \in (0, \sqrt{2(\varepsilon - \log \delta )} \Delta / \varepsilon)$}, and nonnegative for all $\sigma \geq \sqrt{2(\varepsilon - \log \delta )} \Delta / \varepsilon$. If additionally $e^\varepsilon   + 2 \geq \delta^{-1}$, then this function is nonnegative everywhere.
\end{lemma}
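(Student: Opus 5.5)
The plan is to differentiate $h_1+h_2$ in closed form and show that its derivative has the sign of a simple expression that changes sign exactly once, at the stated threshold. Write $a=\varepsilon\sigma/\Delta$ and $b=\Delta/\sigma$, so that $ab=\varepsilon$. Let $\phi_0(t)=e^{-t^2/2}/\sqrt{2\pi}$ denote the standard normal density. We have $h_1=e^{2\varepsilon}\Phi(-a-b)-\Phi(-a+b)$ and $h_2=(e^\varepsilon+2\Phi(b))\delta$.

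Using $\mathrm{d}a/\mathrm{d}\sigma=\varepsilon/\Delta$ and $\mathrm{d}b/\mathrm{d}\sigma=-\Delta/\sigma^2$, the chain rule gives
\begin{align*}
h_1'(\sigma)=e^{2\varepsilon}\phi_0(a+b)\Bigl(-\tfrac{\varepsilon}{\Delta}+\tfrac{\Delta}{\sigma^2}\Bigr)+\phi_0(a-b)\Bigl(\tfrac{\varepsilon}{\Delta}+\tfrac{\Delta}{\sigma^2}\Bigr).
\end{align*}
The key identity is $\phi_0(a+b)/\phi_0(a-b)=e^{-2ab}=e^{-2\varepsilon}$, so $e^{2\varepsilon}\phi_0(a+b)=\phi_0(a-b)$. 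The $\varepsilon/\Delta$ terms then cancel, leaving $h_1'(\sigma)=\tfrac{2\Delta}{\sigma^2}\phi_0(a-b)$. Also $h_2'(\sigma)=-\tfrac{2\Delta}{\sigma^2}\delta\,\phi_0(b)$. Hence
\begin{align*}
(h_1+h_2)'(\sigma)=\tfrac{2\Delta}{\sigma^2}\bigl(\phi_0(a-b)-\delta\,\phi_0(b)\bigr).
\end{align*}
This is positive if and only if $(a-b)^2-b^2<2\log(1/\delta)$. Since $ab=\varepsilon$, that is equivalent to $a^2<2(\varepsilon-\log\delta)$, i.e. $\sigma<\sqrt{2(\varepsilon-\log\delta)}\,\Delta/\varepsilon$. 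So the function is strictly increasing below the threshold and nonincreasing above it.

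For nonnegativity above the threshold, I use monotone decrease together with the limit at infinity. As $\sigma\to\infty$ we have $a\to\infty$ and $b\to0$, so $h_1\to0$ and $h_2\to(e^\varepsilon+1)\delta>0$. A function that is nonincreasing on $[\sigma^\ast,\infty)$ and tends to a positive limit is bounded below by that limit there, hence positive on this range.

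For the global claim, the function is increasing on $(0,\sigma^\ast)$, so its infimum on that interval is the limit as $\sigma\downarrow0$. In that limit $a\to0$ and $b\to\infty$, so $h_1\to-1$ and $h_2\to(e^\varepsilon+2)\delta$. The limit is therefore $(e^\varepsilon+2)\delta-1$, which is $\geq 0$ exactly when $e^\varepsilon+2\geq\delta^{-1}$; under that hypothesis the function is nonnegative on $(0,\sigma^\ast)$ too. Together with the previous step, this gives nonnegativity everywhere.

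The only delicate step is spotting the cancellation $e^{2\varepsilon}\phi_0(a+b)=\phi_0(a-b)$. Once it is noticed, everything else is elementary limits.
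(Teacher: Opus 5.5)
Your proof is correct. The derivative computation matches the paper's: your cancellation $e^{2\varepsilon}\phi_0(a+b)=\phi_0(a-b)$ is the identity the paper writes as $(a+b)^2=(a-b)^2+4ab$. The $\sigma\downarrow 0$ limit for the last claim is also what the paper does; the paper uses monotonicity on $(0,\sigma^\ast)$ only implicitly there, and you state it explicitly.

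You take a genuinely different route for nonnegativity on $[\sigma^\ast,\infty)$. The paper does not use the sign of the derivative beyond the threshold. Instead:
\begin{itemize}
\item it bounds $h_1+h_2\ge 2\delta-\Phi(-\varepsilon\sigma/\Delta+\Delta/\sigma)$, using $\Phi(\Delta/\sigma)\ge 1/2$ and dropping the positive term of $h_1$;
\item it notes that $\varepsilon\sigma/\Delta-\Delta/\sigma$ is increasing in $\sigma$, so the argument of $\Phi$ is at most $-t$ with $t=(\varepsilon-2\log\delta)/\sqrt{2(\varepsilon-\log\delta)}$;
\item it applies Mills' inequality. This needs a separate (trivial) case for $\delta\ge 1/2$ and yields the pointwise bound $h_1+h_2\ge\delta$.
\end{itemize}
Your argument is that the derivative is negative beyond $\sigma^\ast$ and the function tends to $(e^\varepsilon+1)\delta$ as $\sigma\to\infty$. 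This reuses work you have already done, needs no Gaussian tail estimate and no case split on $\delta$, and gives the sharper lower bound $(e^\varepsilon+1)\delta$. The paper's route, in exchange, is a direct pointwise estimate that needs neither the derivative's sign above the threshold nor a limiting argument.
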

Lemma~\ref{lem:monotonicity} shows that $\sigma_1$ as defined in~\eqref{eq:sigma_1} satisfies $\sigma_1 \in (0, \sqrt{2(\varepsilon - \log \delta )} \Delta / \varepsilon)$ and the constraint in its definition satisfies monotonicity on this region, thus allowing us to search for $\sigma_1$ via bisection search. We next provide a similar result for $\sigma_2$. As a step towards showing this, we will first study the $\max$ and $\min$ problems of~\eqref{eq:fmaxmin} in the following abridged result (unabridged version is in Appendix~\ref{sec:big_lemma}).

\begin{lemma}[abridged]\label{eq:lemma}
     For any $\sigma > 0$, we have:
     \begin{enumerate}
          \item[(i)] $f_{\mathrm{q}}(\cdot; \sigma)$ is unimodal on the interval
          \begin{align*}
              \mathcal{R}_{\max}(\sigma) = \left(0, \dfrac{\Delta - \sqrt{(\Delta^2 - 4\sigma^2)^+}}{2} \right)
          \end{align*}
          with a unique maximum. The maximizer also solves $\max_{x \in [0,\Delta]} f_\mathrm{q}(x; \sigma)$.
          \item[(ii)] $f_{\mathrm{q}}(\cdot; \sigma)$ is unimodal on the interval
          \begin{align*}
              \mathcal{R}_{\min}(\sigma) = \left(\dfrac{\Delta}{2}, \dfrac{\Delta + \sqrt{(\Delta^2 - 4\sigma^2)^+}}{2} \right)
          \end{align*}
          with a unique minimum. If $\mathcal{R}_{\min}(\sigma) = \emptyset$, then $x = \Delta$ solves $\min_{x \in [0,\Delta]} f_\mathrm{q}(x; \sigma)$. Otherwise, either $x = \Delta$, or the minimizer of the unimodal region solves $\min_{x \in [0,\Delta]} f_\mathrm{q}(x; \sigma)$.
     \end{enumerate}
\end{lemma}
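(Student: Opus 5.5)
On $[0,\Delta]$ we have $|x|=x$, so up to the positive factor $1/c$,
\[
g(x) := e^{\varepsilon} e^{-x^2/(2\sigma^2)} + e^{-(x-\Delta)^2/(2\sigma^2)} .
\]
The plan is to study $g'$ and $g''$ directly, exploiting that $g$ is a sum of two Gaussian bumps centred at $0$ and $\Delta$. Writing $a(x)=e^{\varepsilon}e^{-x^2/(2\sigma^2)}$ and $b(x)=e^{-(x-\Delta)^2/(2\sigma^2)}$, we get
\[
\sigma^2 g'(x) = -x\,a(x) + (\Delta - x)\,b(x),
\]
so $g'(x)=0$ is equivalent to
\[
\frac{x}{\Delta - x} = \frac{b(x)}{a(x)} = e^{-\varepsilon}\exp\!\Big(\frac{2x\Delta-\Delta^2}{2\sigma^2}\Big).
\]
Taking logarithms, critical points in $(0,\Delta)$ are the roots of
\[
r(x) := \log x - \log(\Delta - x) - \frac{x\Delta}{\sigma^2} + \frac{\Delta^2}{2\sigma^2} + \varepsilon .
\]
The key computation is
\[
r'(x) = \frac{\Delta}{x(\Delta - x)} - \frac{\Delta}{\sigma^2}.
\]
Hence $r'>0$ exactly when $x(\Delta-x)<\sigma^2$, i.e.\ outside the interval $\big[\tfrac{\Delta-\sqrt{(\Delta^2-4\sigma^2)^+}}{2},\tfrac{\Delta+\sqrt{(\Delta^2-4\sigma^2)^+}}{2}\big]$. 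These endpoints are exactly those appearing in $\mathcal{R}_{\max}(\sigma)$ and $\mathcal{R}_{\min}(\sigma)$, which confirms the approach. Also, $r\to-\infty$ as $x\downarrow0$, $r\to+\infty$ as $x\uparrow\Delta$, and $\operatorname{sign} g' = -\operatorname{sign} r$.

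Part (i) is a shape analysis of $r$. On the first region $r$ increases, then on the middle interval it decreases, then it increases again; if $\Delta\le 2\sigma$ it increases throughout. Consequently $r$ has either one root, or three roots $x_1<x_2<x_3$ with $x_1\in\mathcal{R}_{\max}$ and $x_3\in\mathcal{R}_{\min}$, or degenerate tangencies. Since $g'>0$ near $0^+$, the function $g$ first increases, so any local maximum with $r$ crossing from negative to positive lies where $r'>0$.

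The main obstacle is matching this to the statement, which asserts a unique maximum inside $\mathcal{R}_{\max}$. That requires a root of $r$ to exist there, i.e.\ $r$ to be positive at the right endpoint of $\mathcal{R}_{\max}$, or else the maximiser of $g$ over $[0,\Delta]$ sits differently. I would handle this by showing that the largest value of $g$ on $[0,\Delta]$ is attained at the first local maximum $x_1$. The argument compares the two bump contributions: for $x\ge\Delta/2$ we have $g(x)\le g(\Delta-x)$, because $e^{\varepsilon}>1$ weights the bump at $0$ more. This symmetry inequality rules out a second maximiser near $\Delta$ beating $x_1$, and it also implies $x_1\le\Delta/2$. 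When $\Delta\le2\sigma$, the region degenerates, and unimodality on the stated interval (possibly empty) follows from $r$ being monotone with a single root; there I would verify that the unique root lies in the stated range, interpreting the interval appropriately.

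For part (ii), the same symmetry shows that the interior local minimum, if it exists, must be in $[\Delta/2,\Delta]$. It corresponds to the middle root $x_2$, where $r$ crosses from positive to negative in the decreasing region of $r$. Combining $x_2\ge\Delta/2$ with the right endpoint of the decreasing region gives $x_2\in\mathcal{R}_{\min}(\sigma)$, and $r$ has only one root there, so the minimum is unique. If $\mathcal{R}_{\min}(\sigma)=\emptyset$ (that is, $\Delta\le2\sigma$), $g$ has only the single maximum, so it decreases on $[x_1,\Delta]$. Then the minimum over $[0,\Delta]$ is at an endpoint, and $g(\Delta)\le g(0)$ by the symmetry inequality, so it is at $x=\Delta$. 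Otherwise the minimum over $[0,\Delta]$ is among $\{0,x_2,\Delta\}$, and $0$ is excluded by $g(0)\ge g(\Delta)$, which gives the dichotomy in the statement.
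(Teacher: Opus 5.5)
This is essentially the paper's proof: your $r$ is a monotone reparametrization of the paper's auxiliary function (the paper's $g$ equals $e^{\varepsilon}(e^{-r}-1)$), whose monotonicity is likewise governed by the sign of $x(\Delta-x)-\sigma^2$. Your reflection inequality is the paper's intermediary lemma $f_{\mathrm{q}}(x;\sigma)>f_{\mathrm{q}}(\Delta-x;\sigma)$ on $[0,\Delta/2)$, used as there to keep the maximizer left of $\Delta/2$ and to rule out $x=0$ as the minimizer.

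Two small fixes. In the three-root case it is the middle root that lies in $\mathcal{R}_{\min}(\sigma)$, not the third: the third lies in $\bigl(\tfrac{\Delta+\sqrt{\Delta^2-4\sigma^2}}{2},\Delta\bigr)$ and is a local maximum, which your part (ii) argument in fact uses correctly. Also, the direct evaluation $r(\Delta/2)=\varepsilon>0$ (the paper's $g(\Delta/2)<0$) gives both $r>0$ at the right endpoint of $\mathcal{R}_{\max}(\sigma)$ and the strict inequality $x_2>\Delta/2$, which you need for membership in the open interval $\mathcal{R}_{\min}(\sigma)$ and which the symmetry argument only yields non-strictly.
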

Lemma~\ref{eq:lemma} allows us to find the values $f_{\mathrm{q}, \max}(\sigma)$ and $f_{\mathrm{q}, \min}(\sigma)$ as defined in~\eqref{eq:fmaxmin} for a given $\sigma > 0$ within unimodal regions, allowing the use of a golden-section search method. This lemma also leads us to the following result that will be helpful in computing $\sigma_2$.
\begin{lemma}\label{lem:monotonic_ratio}
     The function $$\sigma \mapsto {\max}_{x \in [0, \Delta]} \; f_\mathrm{q}(x; \sigma)/{\min}_{x \in [0, \Delta]} \; f_\mathrm{q}(x; \sigma)$$
    is monotonically nonincreasing in $\sigma$, and is upper bounded by $e^\varepsilon$ at the point $\sigma = \sqrt{\Delta^2 / (2\varepsilon)}$.
\end{lemma}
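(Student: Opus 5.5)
The plan is to drop the normalization and reparametrize. On $[0,\Delta]$ we have $|x|=x$, so $f_{\mathrm q}(x;\sigma)=g_t(x)/c$ with
\[
g_t(x)=e^{\varepsilon}e^{-tx^2}+e^{-t(x-\Delta)^2},\qquad t=\frac{1}{2\sigma^2}.
\]
The constant $c$ cancels in the ratio, so we must study
\[
R(t)=\frac{\max_{x\in[0,\Delta]}g_t(x)}{\min_{y\in[0,\Delta]}g_t(y)}.
\]
The claim ``nonincreasing in $\sigma$'' is then equivalent to ``$R$ nondecreasing in $t$''. I would work with $\log R(t)=\max_x\log g_t(x)-\min_y\log g_t(y)$, which is a difference of a max and a min of smooth functions of $(x,t)$ over a compact set. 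By Danskin's theorem its one-sided derivatives in $t$ are
\[
\partial_t\log g_t(x^\star)-\partial_t\log g_t(y^\star),
\]
where $x^\star$ is a maximizer and $y^\star$ a minimizer. By the unabridged Lemma~\ref{eq:lemma}, $x^\star$ lies in $\mathcal R_{\max}(\sigma)\subset[0,\Delta/2]$, and $y^\star$ is either $\Delta$ or the minimizer in $\mathcal R_{\min}(\sigma)\subset[\Delta/2,\Delta]$.

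Next I compute the derivative in posterior form. Let
\[
p_t(x)=\frac{e^{\varepsilon}e^{-tx^2}}{g_t(x)}
\]
be the responsibility of the zero-mean component. Then
\[
-\partial_t\log g_t(x)=D_t(x):=p_t(x)\,x^2+\bigl(1-p_t(x)\bigr)(x-\Delta)^2,
\]
the expected squared distance of $x$ to the component centers. The required monotonicity reduces to the inequality $D_t(y^\star)\ge D_t(x^\star)$, i.e., the minimizer sits farther (in this weighted sense) from the centers than the maximizer does.

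I would verify this inequality using the stationarity conditions supplied by Lemma~\ref{eq:lemma}.

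\emph{Interior maximizer.} At an interior $x^\star$ we have $g_t'(x^\star)=0$. This gives $p\,x^\star=(1-p)(\Delta-x^\star)$, hence $p=(\Delta-x^\star)/\Delta$ and $D_t(x^\star)=x^\star(\Delta-x^\star)$.

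\emph{Interior minimizer.} The same identity holds, so $D_t(y^\star)=y^\star(\Delta-y^\star)$. Since $x^\star$ and $y^\star$ are the two roots $\tfrac{\Delta\mp\sqrt{\Delta^2-4\sigma^2}}{2}$ structure (the endpoints of $\mathcal R_{\max}$ and $\mathcal R_{\min}$ are symmetric about $\Delta/2$), the comparison reduces to showing $y^\star$ is closer to $\Delta/2$ than $x^\star$. I expect this to follow because the dominant weight $e^{\varepsilon}$ on the zero component pushes the maximizer toward $0$ more than the minimizer is pushed toward $\Delta$.

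\emph{Boundary cases.} The cases $x^\star=0$ and $y^\star=\Delta$ are handled by direct evaluation:
\[
D_t(\Delta)=p_t(\Delta)\Delta^2,\qquad D_t(0)=\bigl(1-p_t(0)\bigr)\Delta^2,
\]
and $p_t(\Delta)\ge 1-p_t(0)$ because $e^{\varepsilon}\ge1$.

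This case comparison, particularly mixed cases such as interior $x^\star$ with $y^\star=\Delta$, is the step I expect to be the main obstacle. If it resists, I would fall back on showing, for each fixed pair $x\le\Delta/2\le y$ of the relevant type, that $g_t(x)/g_t(y)$ is nondecreasing in $t$, and then take the max and min.

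For the bound at $\sigma=\sqrt{\Delta^2/(2\varepsilon)}$, i.e.\ $t=\varepsilon/\Delta^2$, I would substitute $u=x/\Delta\in[0,1]$:
\[
g(x)=e^{\varepsilon(1-u^2)}+e^{-\varepsilon(1-u)^2}.
\]
The goal is $\max g\le e^{\varepsilon}\min g$. For the maximum, I use $e^{\varepsilon(1-u^2)}\le e^{\varepsilon}$ and $e^{-\varepsilon(1-u)^2}\le 1$, giving $g\le e^{\varepsilon}+1$. Refining via the location of $x^\star$ in $\mathcal R_{\max}$ should give $g(x^\star)\le e^{\varepsilon}+e^{-\varepsilon}$ up to a controlled slack. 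For the minimum, $g(\Delta)=2$. Every $u$ also satisfies $g(u)\ge 2e^{\varepsilon(1/2-u^2/2-(1-u)^2/2)}$ by AM--GM, and the exponent is minimized on $[0,1]$ at the endpoints, so $\min g\ge$ a quantity comparable to $2$. The condition then reduces to the elementary inequality $e^{\varepsilon}+e^{-\varepsilon}\le 2e^{\varepsilon}$, with $\min g$ checked at $y^\star=\Delta$ or at the interior minimizer using the same stationarity identity as above.
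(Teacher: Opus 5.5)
Your strategy matches the paper's: an envelope/Danskin derivative of $\log\max-\log\min$, with the case split taken from the unabridged lemma. Your derivative computation is the correct one, $\frac{\mathrm d}{\mathrm d\sigma}\log R=\bigl(D_t(x^\star)-D_t(y^\star)\bigr)/\sigma^3$. But the step you flag as the main obstacle is a genuine gap, and it cannot be closed.

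In the mixed case (interior $x^\star$, $y^\star=\Delta$) you need $p_t(\Delta)\Delta^2\ge x^\star(\Delta-x^\star)$, and this fails in general. Your fallback cannot work either: if $g_t(x)/g_t(y)$ were nondecreasing in $t$ for all $x\le\Delta/2\le y$, then so would be their maximum, which is $R$. The monotonicity claim itself is false for small $\varepsilon$. Take $\Delta=1$ and $\varepsilon=0.1$. At $\sigma=0.5$ (case (i), minimizer at $\Delta$) one finds $x^\star\approx0.249$, so $D_t(x^\star)\approx0.187$, whereas $p_t(\Delta)=e^{\varepsilon-2}/(1+e^{\varepsilon-2})\approx0.130$. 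Numerically, $\max f_{\mathrm q}/\min f_{\mathrm q}\approx1.131$ at $\sigma=0.5$ but $\approx1.156$ at $\sigma=0.55$. Analytically, as $\varepsilon\downarrow0$ with $\sigma\ge\Delta/2$ the ratio tends to $2e^{-s}/(1+e^{-4s})$ with $s=\Delta^2/(8\sigma^2)$. This is strictly increasing in $\sigma$ for $s\in(\tfrac{\ln 3}{4},\tfrac12)$.

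The paper's proof gets through this case only via two slips:
\begin{itemize}
\item It writes $f_{\mathrm q}(\Delta;\sigma)=\frac1c e^{\varepsilon}e^{-\Delta^2/(2\sigma^2)}$, omitting the $\frac1c$ from the second component. This replaces your $p_t(\Delta)\Delta^2$ by $\Delta^2$.
\item It gives the log-derivative at $x_{\max}$ as $x_{\max}^2/\sigma^3$ instead of $x_{\max}(\Delta-x_{\max})/\sigma^3$. It differentiates the closed form obtained from the $\sigma$-dependent stationarity identity while holding $x_{\max}$ fixed.
\end{itemize}
A correct statement needs an extra hypothesis on $(\varepsilon,\sigma)$ that enforces $p_t(\Delta)\Delta^2\ge x^\star(\Delta-x^\star)$.

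The rest is fine, with small corrections.
\begin{itemize}
\item The interior/interior case closes for a simpler reason than the one you guess: $x^\star<x_1=\Delta-x_2<\Delta-y^\star\le\Delta/2$, and $x(\Delta-x)$ is symmetric about $\Delta/2$ and increasing on $[0,\Delta/2]$.
\item The case $x^\star=0$ never occurs, because $f_{\mathrm q}'(0;\sigma)>0$.
\item For the bound at $\sigma=\Delta/\sqrt{2\varepsilon}$, drop the ``refinement''. Since $g(0)=e^\varepsilon+e^{-\varepsilon}$ and $g$ increases at $0$, you actually have $\max g>e^\varepsilon+e^{-\varepsilon}$. The refinement is unnecessary anyway: $\max g\le e^\varepsilon+1$ together with your AM--GM bound $g(u)\ge 2e^{\varepsilon u(1-u)}\ge 2$ gives a ratio of at most $(e^\varepsilon+1)/2\le e^\varepsilon$. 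This is slightly sharper than the paper's componentwise bound $e^{\Delta^2/(2\sigma^2)}$.
\end{itemize}
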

Lemma~\ref{lem:monotonic_ratio} allows us to find $\sigma_2$ as defined in~\eqref{eq:sigma_2} via a bisection search in the region $(0, \sqrt{\Delta^2 / (2\varepsilon)})$. 

We now present Algorithm~\ref{alg:qG} that computes a value for $\sigma > 0$ to ensure that the quasi-Gaussian mixture distribution satisfies $(\varepsilon, \delta)$-DP. Note that this does not guarantee that we find the minimum possible such $\sigma$, since the condition for $\sigma \geq \max\{\sigma_1, \sigma_2\}$ satisfying $(\varepsilon, \delta)$-DP (\textit{cf.}~Theorem~\ref{thm:qG}) is itself a sufficient condition. However, we find the \textit{exact} values of $\sigma_1$ and $\sigma_2$ in Theorem~\ref{thm:qG} so that we return the tightest possible $\sigma \geq \max\{\sigma_1, \sigma_2 \}$ within the scope of our analysis. In the presentation of Algorithm~\ref{alg:qG} we use $\texttt{BISECTION}(l,r,\psi(\sigma) = 0)$ to denote the bisection search algorithm to find the root of a monotonic function $\psi$ within the region $(l, r)$. Moreover, Algorithm~\ref{alg:compute_maxmin}, which is called within Algorithm~\ref{alg:qG}, uses $\texttt{GOLDEN}(l,r,f(x))$ to denote the golden-section search algorithm to find the maximum value of a unimodal function $f$ within the range $(l,r)$. In Appendix~\ref{app:plots_intuition}, we share plots that visualize various steps of the algorithms presented in this section, for further intuition.

\begin{theorem}\label{thm:runtime_cG}
Algorithm~\ref{alg:qG} returns the minimum $\sigma$ satisfying the DP condition in Theorem~\ref{thm:qG} in time
{\color{black}
$\mathcal{O}\left(
\log(1+\frac{1}{\varepsilon})+\log(1+\log \frac{1}{\delta})
\right)$}.
\end{theorem}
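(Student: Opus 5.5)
The proof has two parts: correctness (Algorithm~\ref{alg:qG} returns exactly $\max\{\sigma_1,\sigma_2\}$) and the stated running time. For correctness I use the monotonicity results already established. By Lemma~\ref{lem:monotonicity}, $\sigma\mapsto h_1(\sigma)+h_2(\sigma)$ is increasing on $(0,\bar\sigma_1)$ with $\bar\sigma_1=\sqrt{2(\varepsilon-\log\delta)}\Delta/\varepsilon$, and it is nonnegative beyond $\bar\sigma_1$. Hence $\sigma_1$ in~\eqref{eq:sigma_1} is the unique sign change in $[0,\bar\sigma_1]$, or equals $0$ when $e^\varepsilon+2\ge\delta^{-1}$, and bisection on that bracket finds it. Likewise, Lemma~\ref{lem:monotonic_ratio} shows that the ratio $f_{\mathrm q,\max}/f_{\mathrm q,\min}$ is nonincreasing and is already $\le e^\varepsilon$ at $\bar\sigma_2=\Delta/\sqrt{2\varepsilon}$. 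So $\sigma_2$ is the threshold of a monotone predicate on $(0,\bar\sigma_2]$, and bisection again locates it. To evaluate that predicate at a given $\sigma$, I invoke Lemma~\ref{eq:lemma}. The maximum is the golden-section optimum over $\mathcal R_{\max}(\sigma)$. The minimum is the smaller of $f_{\mathrm q}(\Delta;\sigma)$ and the golden-section optimum over $\mathcal R_{\min}(\sigma)$, where the latter is skipped when the interval is empty. Because each search is over a unimodal function, each is exact up to tolerance. Since $\sigma_1$ and $\sigma_2$ are the exact thresholds, $\max\{\sigma_1,\sigma_2\}$ is the minimal $\sigma$ allowed by Theorem~\ref{thm:qG}.

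For the running time, I interpret the complexity, as in Theorem~\ref{thm:runtime_mG}, as the number of bisection steps needed to reach a fixed relative (equivalently, scale-normalized) accuracy. Each evaluation of $h_1+h_2$ costs $\mathcal O(1)$, since it needs only a constant number of calls to $\Phi$. Each golden-section search reaches fixed accuracy on an interval of length at most $\Delta$ in $\mathcal O(1)$ iterations once the tolerance is measured relative to $\Delta$. The step count for each bisection is then $\log$ of the ratio between the upper bracket and the required resolution. Normalizing by $\Delta$, the upper brackets are $\bar\sigma_1/\Delta=\sqrt{2(\varepsilon+\log(1/\delta))}/\varepsilon$ and $\bar\sigma_2/\Delta=1/\sqrt{2\varepsilon}$. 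Taking logarithms gives
\[
\log\frac{\bar\sigma_1}{\Delta}\le \log\Bigl(1+\frac1\varepsilon\Bigr)+\tfrac12\log\Bigl(2+2\log\frac1\delta\Bigr)+\mathcal O(1),
\]
which is $\mathcal O\bigl(\log(1+1/\varepsilon)+\log(1+\log(1/\delta))\bigr)$. The bound $\log(\bar\sigma_2/\Delta)=\mathcal O(\log(1+1/\varepsilon))$ follows in the same way. Adding the two bisections gives the claimed bound.

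The main obstacle is the golden-section step. I must argue that inside the $\sigma_2$-bisection, the search accuracy does not add a factor that depends on $\varepsilon$ or $\delta$. The resolution also has to be fixed in a scale-invariant way so that the bracket ratios above, rather than the absolute values of $\sigma$, determine the count. To handle this, I would rescale $x=\Delta u$ and $\sigma=\Delta s$. Then $f_{\mathrm q}$ depends only on $(u,s,\varepsilon)$, and the search intervals have length at most $1/2$ in $u$. This makes the number of golden-section iterations a constant for a fixed target precision. The same rescaling shows that the bisection resolution can be taken relative to $\Delta$, which completes the bound.
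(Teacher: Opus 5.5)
Your proposal is correct and follows essentially the same route as the paper: correctness uses Lemma~\ref{lem:monotonicity} and Lemma~\ref{lem:monotonic_ratio} to justify the two bisections and Lemma~\ref{eq:lemma} to justify the golden-section evaluations, and the running time comes from normalizing $\Delta=1$, taking logarithms of the bracket radii $\sqrt{2(\varepsilon-\log\delta)}/\varepsilon$ and $1/\sqrt{2\varepsilon}$, and treating each $\Phi$ evaluation and each golden-section search over a subinterval of $(0,1)$ as constant cost. Your explicit inequality bounding $\log(\bar\sigma_1/\Delta)$ by $\log(1+1/\varepsilon)+\tfrac12\log(2+2\log(1/\delta))$ is a small, useful addition that the paper leaves implicit.
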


\begin{algorithm}[!t]
\caption{Tuning $\sigma > 0$ for density~\eqref{eq:qG}.}
\label{alg:qG}
\begin{algorithmic}
    \STATE \textbf{Input:} privacy parameters $\varepsilon, \delta > 0$, sensitivity $\Delta > 0$
    \STATE \textbf{Output:} smallest $\sigma > 0$ (w.r.t.\@ Thm~\ref{thm:qG}) of quasi-Gaussian mixture distribution~\eqref{eq:qG} that satisfies $(\varepsilon, \delta)$-DP as an additive noise mechanism
    \IF{$e^\varepsilon + 2 < \delta^{-1}$}
        \STATE \textbf{Set} $\sigma_1 = \texttt{BISECTION}(l_1, r_1, \psi_1(\sigma_1) = 0)$ where $l_1 = 0, \; r_1 = \sqrt{2(\varepsilon - \log \delta)}\Delta / \varepsilon$ are the left and right limits of the bisection search region, and
        \begin{align*}
            \psi_1(\sigma) = & e^{2\varepsilon} \Phi\left( -\dfrac{\varepsilon\sigma}{\Delta} -  \dfrac{\Delta}{\sigma} \right) - \Phi\left( -\dfrac{\varepsilon\sigma}{\Delta} +  \dfrac{\Delta}{\sigma} \right) + \\
            & \left(e^\varepsilon + 2\Phi\left(\dfrac{\Delta}{\sigma} \right)\right) \delta
        \end{align*}
        is the monotonically increasing function whose root is sought.
    \ELSE
        \STATE \textbf{Set} $\sigma_1 = 0$.
    \ENDIF
    \STATE \textbf{Set} $\sigma_2 = \texttt{BISECTION}(l_2, r_2, \psi_2(\sigma_2) = 0)$ where $l_2 = 0, \; r_2 = \sqrt{\Delta^2/(2\varepsilon)}$ are the left and right limits of the bisection search region, and
    \begin{align*}
        \psi_2(\sigma) = \underset{x \in [0, \Delta]}{\max} \; f_\mathrm{q}(x; \sigma) \; \Big/ \; \underset{x \in [0, \Delta]}{\min} \; f_\mathrm{q}(x; \sigma) - e^\varepsilon
    \end{align*}
    is the monotonically decreasing function whose root is sought. The computation of $\max$- and $\min$-terms are from Algorithm~\ref{alg:compute_maxmin}.
    \STATE \textbf{Return:} $\sigma=  \max\{\sigma_1, \sigma_2\}$
\end{algorithmic}
\end{algorithm}
\begin{algorithm}[!tb]
\caption{Computing $\max$ and $\min$ problems~\eqref{eq:fmaxmin} for a given $\sigma > 0$.}
\label{alg:compute_maxmin}
\begin{algorithmic}
    \STATE \textbf{Input:} privacy parameters $\varepsilon, \delta > 0$, sensitivity $\Delta > 0$, and $\sigma > 0$ of $f_\mathrm{q}(\cdot;\sigma)$
    \STATE \textbf{Output:} $f_{\mathrm{q}, \max}(\sigma) = \max_{x \in [0, \Delta]}f_\mathrm{q}(x; \sigma)$ and $f_{\mathrm{q}, \min}(\sigma) = \min_{x \in [0, \Delta]}f_\mathrm{q}(x; \sigma)$
    \STATE \textbf{Initialize:}
    \begin{align*}
    x_1 &= \dfrac{\Delta - \sqrt{(\Delta^2 - 4 \sigma^2)^{+}}}{2}, \; x_2 = \dfrac{\Delta + \sqrt{(\Delta^2 - 4 \sigma^2)^{+}}}{2},  \\
    t &= -e^\varepsilon  + \dfrac{\Delta - x_2}{x_2} \exp\left(\dfrac{2 x_2 \Delta -\Delta^2}{2\sigma^2}\right).
    \end{align*}
    \IF{$\Delta^2 \leq 4\sigma^2$}
        \STATE \textbf{Set} $x_{\min} = \Delta$ and $x_{\max} = \texttt{GOLDEN}(0, \frac{\Delta}{2}, f_{\mathrm{q}}(x; \sigma))$
    \ELSIF{$\Delta^2 > 4\sigma^2$ and $t\leq 0$}
        \STATE \textbf{Set} $x_{\min} = \Delta$ and $x_{\max} = \texttt{GOLDEN}(0, x_1, f_{\mathrm{q}}(x; \sigma))$
    \ELSE
        \STATE \textbf{Set} $x_{\min} = \Delta$ and $x_{\max} = \texttt{GOLDEN}(0, x_1, f_{\mathrm{q}}(x; \sigma))$
        \STATE \textbf{Set} $x_{\mathrm{tmp}} = \texttt{GOLDEN}(\frac{\Delta}{2}, x_2, -f_{\mathrm{q}}(x; \sigma))$
        \IF{$f_{\mathrm{q}}(x_{\mathrm{tmp}}; \sigma) < f_{\mathrm{q}}(x_{\min}; \sigma)$}
            \STATE Update $x_{\min} = x_{\mathrm{tmp}}$
        \ENDIF
    \ENDIF
    \STATE \textbf{Return:} $f_{\mathrm{q}, \max}(\sigma) = f_{\mathrm{q}}(x_{\max}; \sigma)$ and $f_{\mathrm{q}, \min}(\sigma) = f_{\mathrm{q}}(x_{\min}; \sigma)$.  
\end{algorithmic}
\end{algorithm}

{\color{black}
Similarly to the multi-Gaussian mixture mechanism in Section~\ref{sec:mG}, here we developed an algorithm to compute a tight value of $\sigma$ satisfying the DP condition in Theorem~\ref{thm:qG}. We will then demonstrate through numerical experiments in Section~\ref{sec:numericals} that the quasi-Gaussian mechanism attains smaller losses than the analytic Gaussian mechanism in low-privacy regimes, while requiring substantially lighter computations than the multi-Gaussian mechanism. Since these numerical gains do not themselves constitute a theoretical guarantee, we show analytically that the quasi-Gaussian mechanism is guaranteed to strictly improve upon the analytic Gaussian mechanism for sufficiently large $\varepsilon$. This is an analogous result to Proposition~\ref{prop:mg_is_better_strictly} for the multi-Gaussian mixture mechanism.
\begin{proposition}\label{prop:qg_is_better_strictly}
    For any $\delta \in (0, 1/2)$, there exists $\varepsilon_0 > 0$ such that we have $
        L^{\mathrm{QG}}_2(\varepsilon, \delta) < L^{\mathrm{AG}}_2(\varepsilon, \delta)$ for all $\varepsilon \geq \varepsilon_0$,
   where $L^{\mathrm{QG}}_2$ and $L^{\mathrm{AG}}_2$ are the expected $l_2$-losses (noise powers) of the quasi-Gaussian mixture and analytic Gaussian mechanisms, respectively.
\end{proposition}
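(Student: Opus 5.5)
We compare the noise powers of the two mechanisms as $\varepsilon\to\infty$ with $\delta\in(0,1/2)$ fixed, by first writing each loss in terms of its scale parameter and then bounding each scale parameter.

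\textbf{Noise power of the quasi-Gaussian mixture.} Write $\sigma_{\mathrm{QG}}=\max\{\sigma_1,\sigma_2\}$ for the parameter returned by Algorithm~\ref{alg:qG}. From \eqref{eq:qG}, the distribution is a mixture of $N(0,\sigma^2)$ with weight $e^\varepsilon\sqrt{2\pi}\sigma/c$ and a quasi-Gaussian part with weight $2\sqrt{2\pi}\sigma\Phi(\Delta/\sigma)/c$. The second moment of the quasi-Gaussian part is at most $\sigma^2+\Delta^2+O(\sigma\Delta)$. Hence
\[
L_2^{\mathrm{QG}}\le \sigma^2+\frac{2}{e^\varepsilon}\bigl(\Delta^2+2\sigma\Delta+\sigma^2\bigr),
\]
which can be checked against the closed form in Appendix~\ref{app:l1_l2}. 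So $L_2^{\mathrm{QG}}=\sigma_{\mathrm{QG}}^2+O(e^{-\varepsilon})$ whenever $\sigma_{\mathrm{QG}}$ stays bounded.

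\textbf{Bounding $\sigma_2$ and $\sigma_1$.} By Lemma~\ref{lem:monotonic_ratio}, $\sigma_2\le \Delta/\sqrt{2\varepsilon}$, so $\sigma_2^2\le \Delta^2/(2\varepsilon)$. For $\sigma_1$: once $e^\varepsilon+2\ge\delta^{-1}$, which holds for $\varepsilon$ large since $\delta$ is fixed, Lemma~\ref{lem:monotonicity} gives $h_1+h_2\ge0$ for every $\sigma$. Then $\sigma_1=0$, as in Algorithm~\ref{alg:qG}. Therefore, for large $\varepsilon$,
\[
L_2^{\mathrm{QG}}\le \frac{\Delta^2}{2\varepsilon}+O(e^{-\varepsilon}).
\]

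\textbf{Lower bound for the analytic Gaussian.} We need $\sigma_{\mathrm{AG}}(\varepsilon,\delta)^2$ to decay strictly slower than $1/\varepsilon$. By \citet{analytic_Gaussian}, $N(0,\sigma^2)$ is $(\varepsilon,\delta)$-DP if and only if
\[
\Phi\!\left(\frac{\Delta}{2\sigma}-\frac{\varepsilon\sigma}{\Delta}\right)-e^\varepsilon\Phi\!\left(-\frac{\Delta}{2\sigma}-\frac{\varepsilon\sigma}{\Delta}\right)\le\delta .
\]
Suppose $\sigma\le C\Delta/\sqrt{\varepsilon}$. Then $\varepsilon\sigma/\Delta\le C\sqrt\varepsilon$ while $\Delta/(2\sigma)\ge\sqrt\varepsilon/(2C)$. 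For $C<1/\sqrt2$ the first argument tends to $+\infty$, so the first term tends to $1$. For the second term, the Mills-ratio bound gives
\[
e^\varepsilon\Phi(-a)\lesssim \frac{1}{a}\exp\bigl(\varepsilon-a^2/2\bigr),\qquad a=\frac{\Delta}{2\sigma}+\frac{\varepsilon\sigma}{\Delta}.
\]
Since $a^2/2\ge \Delta^2/(8\sigma^2)+\varepsilon/2+\dots$, this is $\le e^{\varepsilon/2-\Delta^2/(8\sigma^2)}$ up to lower-order factors, which vanishes when $\sigma^2<\Delta^2/(4\varepsilon)$. The left-hand side therefore tends to $1>\delta$, so DP fails. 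This yields $\sigma_{\mathrm{AG}}^2\ge c_0\Delta^2/\varepsilon$ with some $c_0$, but we need $c_0>1/2$, and this crude argument does not deliver that.

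\textbf{Main obstacle: the constant.} The real asymptotics are sharper. With $\sigma=\Delta/\sqrt{2\varepsilon}\,(1+o(1))$ both arguments are $\Theta(\sqrt\varepsilon)$ and the difference is near its threshold, so the comparison hinges on lower-order terms. I would parametrize $\sigma=\Delta s/\sqrt{\varepsilon}$. The first argument is then $\sqrt\varepsilon\,(1/(2s)-s)$, which is positive iff $s<1/\sqrt2$. So for $s<1/\sqrt2$ the first $\Phi$ tends to $1$, and the second term equals
\[
e^{\varepsilon}\Phi\!\bigl(-\sqrt\varepsilon(1/(2s)+s)\bigr)\approx \exp\!\bigl(\varepsilon[1-(1/(2s)+s)^2/2]\bigr).
\]
Here $(1/(2s)+s)^2/2\ge 1$ with equality at $s=1/\sqrt2$. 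So for any fixed $s<1/\sqrt2$ the left-hand side tends to $1>\delta$, giving $\sigma_{\mathrm{AG}}^2\ge \Delta^2/(2\varepsilon)\,(1-o(1))$. This alone does not give strict inequality. The remaining task is to sharpen the analysis at $s=1/\sqrt2-\tau/\sqrt\varepsilon$, where the first argument becomes $\Theta(\tau)$: show the AG constraint then forces $\sigma_{\mathrm{AG}}^2\ge\Delta^2/(2\varepsilon)+c/\varepsilon^{3/2}$ with $c>0$ depending on $\delta<1/2$. The role of $\delta<1/2$ is that $\Phi(\text{arg})>\delta$ requires the argument to exceed $\Phi^{-1}(\delta)<0$, which pins $\tau$. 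Alternatively, one can tighten the QG side: the bound $\sigma_2\le\Delta/\sqrt{2\varepsilon}$ in Lemma~\ref{lem:monotonic_ratio} should be strict by a polynomial gap, because the ratio at that point is strictly below $e^\varepsilon$ (e.g.\ $f_{\mathrm{q,min}}$ is boosted by the quasi-Gaussian bump at $\Delta$). Either refinement beats the $O(e^{-\varepsilon})$ error term, which yields $L_2^{\mathrm{QG}}<L_2^{\mathrm{AG}}$ for all $\varepsilon\ge\varepsilon_0$.
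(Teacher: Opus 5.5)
\textbf{There is a genuine gap, on the analytic Gaussian side.}

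Your treatment of the quasi-Gaussian side is correct and matches the paper. For large $\varepsilon$ one has $e^\varepsilon+2\ge\delta^{-1}$, so $\sigma_1=0$ by Lemma~\ref{lem:monotonicity}. Lemma~\ref{lem:monotonic_ratio} gives $\sigma_2\le\Delta/\sqrt{2\varepsilon}$. The mixture structure then gives $L_2^{\mathrm{QG}}\le\Delta^2/(2\varepsilon)+O(e^{-\varepsilon})$.

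The gap is the analytic Gaussian lower bound, which is the crux of the proposition. What you actually prove is $\sigma_{\mathrm{AG}}^2\ge(1-o(1))\,\Delta^2/(2\varepsilon)$. A bound of that form is compatible with $L_2^{\mathrm{AG}}<\Delta^2/(2\varepsilon)$, so it cannot be compared with your upper bound on $L_2^{\mathrm{QG}}$. You correctly name the statement you need, $\sigma_{\mathrm{AG}}^2\ge\Delta^2/(2\varepsilon)+c\,\varepsilon^{-3/2}$ with $c>0$ coming from $\delta<1/2$, but you leave it as ``the remaining task''. Note also that the informative regime is $s$ slightly \emph{above} $1/\sqrt{2}$, where the first argument is negative, not $s=1/\sqrt{2}-\tau/\sqrt{\varepsilon}$ with $\tau>0$. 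Your fallback of sharpening $\sigma_2$ does not close the argument by itself either. Any polynomial gap on the quasi-Gaussian side still has to be compared against an analytic Gaussian lower bound whose $(1-o(1))$ factor you have not quantified.

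\textbf{How to close it.} Write $a=\Delta/(2\sigma)-\varepsilon\sigma/\Delta$. The second argument equals $-\sqrt{a^2+2\varepsilon}$ for every $\sigma$, so Mills' inequality gives $e^\varepsilon\Phi(-\sqrt{a^2+2\varepsilon})\le 1/(2\sqrt{\pi\varepsilon})$ uniformly in $\sigma$. In particular, at $\sigma=\Delta/\sqrt{2\varepsilon}$ one has $a=0$, so the left-hand side is at least $1/2-1/(2\sqrt{\pi\varepsilon})>\delta$.

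Now go slightly further. Take $\sigma^*=\frac{\Delta}{\sqrt{2\varepsilon}}(1+\kappa\varepsilon^{-1/2})$ with $0<\kappa<|\Phi^{-1}(\delta)|/\sqrt{2}$. Then $a\to-\sqrt{2}\kappa>\Phi^{-1}(\delta)$, so the left-hand side tends to $\Phi(-\sqrt{2}\kappa)>\delta$. Hence $\sigma^*$ is infeasible for large $\varepsilon$.

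The left-hand side is continuous in $\sigma$ and tends to $0$ as $\sigma\to\infty$. Since $\sigma_{\mathrm{AG}}$ is the unique root of the calibration equation, this forces $\sigma_{\mathrm{AG}}>\sigma^*$, that is, $L_2^{\mathrm{AG}}\ge\Delta^2/(2\varepsilon)+\kappa\Delta^2\varepsilon^{-3/2}$. This margin dominates the $O(e^{-\varepsilon})$ remainder, which completes the proof.

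The paper reaches the same conclusion through a two-sided expansion (Lemma~\ref{lem:aux_lem_1_first}). From $\Phi(a)=\delta+O(\varepsilon^{-1/2})$ it gets $a=\Phi^{-1}(\delta)+O(\varepsilon^{-1/2})$. It then solves $\sigma/\Delta=(-a+\sqrt{a^2+2\varepsilon})/(2\varepsilon)$ explicitly, giving $\sigma_{\mathrm{AG}}=\Delta/\sqrt{2\varepsilon}+\Delta|\Phi^{-1}(\delta)|/(2\varepsilon)+O(\varepsilon^{-3/2})$. Only the lower half of that expansion is needed for the proposition.
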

}

\section{Numerical Experiments}\label{sec:numericals}
In this section, we compare expected losses attained by multimodal Gaussian mechanisms with unimodal Gaussian mechanisms. Furthermore, we compare our multimodal Gaussian mechanisms with the non-Gaussian mechanisms that are asymptotically optimal in high-privacy regimes.

For each pair of $(\varepsilon, \delta)$, we tune the parameters of the quasi-Gaussian and multi-Gaussian mixture distributions via Algorithms~\ref{alg:mG} and~\ref{alg:qG}. We then evaluate the resulting distributions in terms of their expected noise amplitudes, $\mathbb{E}[\lvert \tilde{X} \rvert]$ ($l_1$-loss; results in the main paper), and noise powers, $\mathbb{E}[\tilde{X}^2]$ ($l_2$-loss; results in the appendices), where $\tilde{X}$ is a random variable drawn from each of these noise distributions. All experiments fix the query sensitivity to $\Delta = 1$.

Our Julia implementation is provided in the supplements, where we rely on the \texttt{Roots} package for the bisection search method in Algorithms~\ref{alg:mG} and~\ref{alg:qG}, the \texttt{Optim} package for the golden-section search method in Algorithm~\ref{alg:compute_maxmin}, and \texttt{QuadGK} for the one-dimensional integral evaluations in Algorithm~\ref{alg:compute_psi}. {\color{black}Any errors introduced by finite discretizations are handled conservatively, so that the resulting certificate still ensures DP.} All experiments were run on Intel Xeon 2.66GHz cluster nodes with 16GB of memory in single-core and single-thread mode with a wall-clock time limit of $1$-hour for our algorithms and $24$-hours for the numerical-optimization lower bound benchmark~\citep{selvi2025differential}.

\subsection{COMPARISON WITHIN GAUSSIANS}
Recall that the analytic Gaussian mechanism gives the optimal \textit{unimodal} Gaussian. The multi-Gaussian mechanism~\eqref{eq:mG}, on the other hand, specifies a class of \textit{multimodal} Gaussians (reduces to the unimodal Gaussian when $K = 0$). Table~\ref{tab:multi_vs_analytic_l1} shows that multimodal Gaussians can significantly outperform unimodal Gaussians. The entries of the table represent $100\% \cdot (a - m)/\max(a,m)$ where $m$ and $a$ are the expected losses attained by the best multimodal Gaussian ($K \in \{1,\ldots,20 \}$; fixes $\eta = 0.01$ in Algorithm~\ref{alg:mG}) and the unimodal Gaussian mechanism, respectively. We observe that in $142/150$ of the cases, multimodality provides strict improvements (expected $l_1$-loss) over unimodality. Across all instances, the mean improvement is $53.73\%$ (sd $34.86$) and the median improvement is $61.86\%$. 

We share more numerical experiments in appendices where we \textit{(i)} report the best values of $K$ to obtain the entries of Table~\ref{tab:multi_vs_analytic_l1}, \textit{(ii)} revise Table~\ref{tab:multi_vs_analytic_l1} for the $l_2$-loss, \textit{(iii)} share the percentage of the optimality gap multimodal Gaussians close (over the unimodal Gaussian) by using optimal lower bounds of~\citet{selvi2025differential}, \textit{(iv)} share a similar table to Table~\ref{tab:multi_vs_analytic_l1} under our lighter and hyperparameter-free quasi-Gaussian mechanism, \textit{(v)} discuss the runtimes of our algorithms.
\begin{table*}[!t]
  \caption{Improvement ($\%$ of $l_1$-loss) of multimodal Gaussians over the best unimodal Gaussian.}\label{tab:multi_vs_analytic_l1}
  \centering
    \resizebox{0.871\linewidth}{!}{%
  \begin{tabular}{l|cccccccccc}
    \toprule
    $\delta \downarrow \lvert \; \varepsilon \rightarrow$  & $0.1$ & $0.25$ & $0.5$ & $0.75$ & $1$ & $2$ & $3$ & $4$ & $5$ & $10$ \\
    \midrule
    $5 \cdot 10^{-7}$ & \textbf{{\cellcolor{Gray!20} NA}} & \textbf{{\cellcolor{SeaGreen!26.04} 2.12\%}} & \textbf{{\cellcolor{SeaGreen!33.48} 16.74\%}} & \textbf{{\cellcolor{SeaGreen!56.11} 61.20\%}} & \textbf{{\cellcolor{SeaGreen!55.7} 60.39\%}} & \textbf{{\cellcolor{SeaGreen!58.14} 65.19\%}} & \textbf{{\cellcolor{SeaGreen!64.94} 78.55\%}} & \textbf{{\cellcolor{SeaGreen!65.55} 79.75\%}} & \textbf{{\cellcolor{SeaGreen!68.49} 85.53\%}} & \textbf{{\cellcolor{SeaGreen!66.97} 82.54\%}}\\
$10^{-6}$ & \textbf{{\cellcolor{Gray!20} NA}} & \textbf{{\cellcolor{SeaGreen!26.42} 2.88\%}} & \textbf{{\cellcolor{SeaGreen!34.64} 19.02\%}} & \textbf{{\cellcolor{SeaGreen!55.56} 60.13\%}} & \textbf{{\cellcolor{SeaGreen!55.17} 59.36\%}} & \textbf{{\cellcolor{SeaGreen!57.75} 64.43\%}} & \textbf{{\cellcolor{SeaGreen!64.71} 78.10\%}} & \textbf{{\cellcolor{SeaGreen!65.36} 79.37\%}} & \textbf{{\cellcolor{SeaGreen!65.18} 79.02\%}} & \textbf{{\cellcolor{SeaGreen!69.79} 88.08\%}}\\
$5 \cdot 10^{-6}$ & \textbf{{\cellcolor{Gray!20} NA}} & \textbf{{\cellcolor{SeaGreen!26.69} 3.40\%}} & \textbf{{\cellcolor{SeaGreen!38.12} 25.86\%}} & \textbf{{\cellcolor{SeaGreen!59.24} 67.34\%}} & \textbf{{\cellcolor{SeaGreen!60.09} 69.02\%}} & \textbf{{\cellcolor{SeaGreen!64.62} 77.93\%}} & \textbf{{\cellcolor{SeaGreen!69.67} 87.84\%}} & \textbf{{\cellcolor{SeaGreen!71.81} 92.05\%}} & \textbf{{\cellcolor{SeaGreen!72.69} 93.78\%}} & \textbf{{\cellcolor{SeaGreen!74.96} 98.22\%}}\\
$10^{-5}$ & \textbf{{\cellcolor{Gray!20} NA}} & \textbf{{\cellcolor{SeaGreen!26.92} 3.86\%}} & \textbf{{\cellcolor{SeaGreen!43.28} 35.99\%}} & \textbf{{\cellcolor{SeaGreen!58.57} 66.04\%}} & \textbf{{\cellcolor{SeaGreen!59.47} 67.80\%}} & \textbf{{\cellcolor{SeaGreen!65.25} 79.16\%}} & \textbf{{\cellcolor{SeaGreen!69.54} 87.58\%}} & \textbf{{\cellcolor{SeaGreen!71.72} 91.87\%}} & \textbf{{\cellcolor{SeaGreen!73.15} 94.68\%}} & \textbf{{\cellcolor{SeaGreen!75.0} 98.31\%}}\\
$5\cdot 10^{-5}$ & \textbf{{\cellcolor{Gray!20} NA}} & \textbf{{\cellcolor{SeaGreen!28.02} 6.01\%}} & \textbf{{\cellcolor{SeaGreen!55.5} 60.00\%}} & \textbf{{\cellcolor{SeaGreen!56.67} 62.29\%}} & \textbf{{\cellcolor{SeaGreen!57.64} 64.20\%}} & \textbf{{\cellcolor{SeaGreen!63.08} 74.90\%}} & \textbf{{\cellcolor{SeaGreen!69.13} 86.78\%}} & \textbf{{\cellcolor{SeaGreen!71.33} 91.11\%}} & \textbf{{\cellcolor{SeaGreen!73.56} 95.48\%}} & \textbf{{\cellcolor{SeaGreen!73.5} 95.37\%}}\\
$10^{-4}$ & \textbf{{\cellcolor{SeaGreen!25.0} 0.08\%}} & \textbf{{\cellcolor{SeaGreen!28.68} 7.32\%}} & \textbf{{\cellcolor{SeaGreen!54.29} 57.62\%}} & \textbf{{\cellcolor{SeaGreen!55.39} 59.79\%}} & \textbf{{\cellcolor{SeaGreen!57.1} 63.14\%}} & \textbf{{\cellcolor{SeaGreen!62.05} 72.86\%}} & \textbf{{\cellcolor{SeaGreen!68.61} 85.76\%}} & \textbf{{\cellcolor{SeaGreen!71.15} 90.75\%}} & \textbf{{\cellcolor{SeaGreen!73.16} 94.71\%}} & \textbf{{\cellcolor{SeaGreen!73.64} 95.63\%}}\\
$5 \cdot 10^{-4}$ & \textbf{{\cellcolor{Gray!20} NA}} & \textbf{{\cellcolor{SeaGreen!32.62} 15.04\%}} & \textbf{{\cellcolor{SeaGreen!50.65} 50.47\%}} & \textbf{{\cellcolor{SeaGreen!52.16} 53.45\%}} & \textbf{{\cellcolor{SeaGreen!54.26} 57.57\%}} & \textbf{{\cellcolor{SeaGreen!60.26} 69.36\%}} & \textbf{{\cellcolor{SeaGreen!67.77} 84.10\%}} & \textbf{{\cellcolor{SeaGreen!71.72} 91.87\%}} & \textbf{{\cellcolor{SeaGreen!73.2} 94.78\%}} & \textbf{{\cellcolor{SeaGreen!73.46} 95.28\%}}\\
$10^{-3}$ & \textbf{{\cellcolor{Gray!20} NA}} & \textbf{{\cellcolor{SeaGreen!45.78} 40.90\%}} & \textbf{{\cellcolor{SeaGreen!48.56} 46.36\%}} & \textbf{{\cellcolor{SeaGreen!50.4} 49.97\%}} & \textbf{{\cellcolor{SeaGreen!52.66} 54.42\%}} & \textbf{{\cellcolor{SeaGreen!59.28} 67.43\%}} & \textbf{{\cellcolor{SeaGreen!67.31} 83.21\%}} & \textbf{{\cellcolor{SeaGreen!71.51} 91.45\%}} & \textbf{{\cellcolor{SeaGreen!73.07} 94.53\%}} & \textbf{{\cellcolor{SeaGreen!73.37} 95.11\%}}\\
$5 \cdot 10^{-3}$ & \textbf{{\cellcolor{SeaGreen!25.08} 0.25\%}} & \textbf{{\cellcolor{SeaGreen!39.17} 27.93\%}} & \textbf{{\cellcolor{SeaGreen!42.7} 34.86\%}} & \textbf{{\cellcolor{SeaGreen!45.01} 39.39\%}} & \textbf{{\cellcolor{SeaGreen!47.47} 44.23\%}} & \textbf{{\cellcolor{SeaGreen!56.23} 61.43\%}} & \textbf{{\cellcolor{SeaGreen!65.93} 80.50\%}} & \textbf{{\cellcolor{SeaGreen!70.87} 90.20\%}} & \textbf{{\cellcolor{SeaGreen!72.69} 93.78\%}} & \textbf{{\cellcolor{SeaGreen!73.11} 94.60\%}}\\
$0.01$ & \textbf{{\cellcolor{SeaGreen!32.04} 13.91\%}} & \textbf{{\cellcolor{SeaGreen!35.05} 19.84\%}} & \textbf{{\cellcolor{SeaGreen!39.05} 27.68\%}} & \textbf{{\cellcolor{SeaGreen!42.13} 33.74\%}} & \textbf{{\cellcolor{SeaGreen!44.31} 38.03\%}} & \textbf{{\cellcolor{SeaGreen!54.4} 57.84\%}} & \textbf{{\cellcolor{SeaGreen!65.23} 79.12\%}} & \textbf{{\cellcolor{SeaGreen!70.53} 89.54\%}} & \textbf{{\cellcolor{SeaGreen!72.75} 93.89\%}} & \textbf{{\cellcolor{SeaGreen!72.97} 94.32\%}}\\
$0.02$ & \textbf{{\cellcolor{SeaGreen!29.21} 8.35\%}} & \textbf{{\cellcolor{SeaGreen!32.76} 15.33\%}} & \textbf{{\cellcolor{SeaGreen!34.46} 18.67\%}} & \textbf{{\cellcolor{SeaGreen!39.09} 27.76\%}} & \textbf{{\cellcolor{SeaGreen!40.1} 29.75\%}} & \textbf{{\cellcolor{SeaGreen!52.08} 53.27\%}} & \textbf{{\cellcolor{SeaGreen!64.24} 77.18\%}} & \textbf{{\cellcolor{SeaGreen!70.13} 88.74\%}} & \textbf{{\cellcolor{SeaGreen!72.51} 93.42\%}} & \textbf{{\cellcolor{SeaGreen!72.81} 94.00\%}}\\
$0.05$ & \textbf{{\cellcolor{SeaGreen!29.06} 8.05\%}} & \textbf{{\cellcolor{SeaGreen!29.89} 9.69\%}} & \textbf{{\cellcolor{SeaGreen!29.83} 9.57\%}} & \textbf{{\cellcolor{SeaGreen!33.86} 17.49\%}} & \textbf{{\cellcolor{SeaGreen!34.49} 18.73\%}} & \textbf{{\cellcolor{SeaGreen!47.83} 44.93\%}} & \textbf{{\cellcolor{SeaGreen!62.49} 73.74\%}} & \textbf{{\cellcolor{SeaGreen!69.36} 87.24\%}} & \textbf{{\cellcolor{SeaGreen!72.11} 92.63\%}} & \textbf{{\cellcolor{SeaGreen!72.54} 93.47\%}}\\
$0.1$ & \textbf{{\cellcolor{SeaGreen!29.06} 8.05\%}} & \textbf{{\cellcolor{SeaGreen!28.38} 6.73\%}} & \textbf{{\cellcolor{SeaGreen!25.15} 0.37\%}} & \textbf{{\cellcolor{SeaGreen!31.94} 13.71\%}} & \textbf{{\cellcolor{SeaGreen!31.64} 13.13\%}} & \textbf{{\cellcolor{SeaGreen!43.13} 35.70\%}} & \textbf{{\cellcolor{SeaGreen!60.63} 70.07\%}} & \textbf{{\cellcolor{SeaGreen!68.57} 85.67\%}} & \textbf{{\cellcolor{SeaGreen!71.69} 91.81\%}} & \textbf{{\cellcolor{SeaGreen!72.27} 92.95\%}}\\
$0.15$ & \textbf{{\cellcolor{SeaGreen!26.57} 3.17\%}} & \textbf{{\cellcolor{SeaGreen!26.86} 3.75\%}} & \textbf{{\cellcolor{SeaGreen!26.72} 3.47\%}} & \textbf{{\cellcolor{Gray!20} NA}} & \textbf{{\cellcolor{SeaGreen!25.39} 0.85\%}} & \textbf{{\cellcolor{SeaGreen!43.14} 35.72\%}} & \textbf{{\cellcolor{SeaGreen!59.18} 67.22\%}} & \textbf{{\cellcolor{SeaGreen!67.96} 84.48\%}} & \textbf{{\cellcolor{SeaGreen!71.38} 91.19\%}} & \textbf{{\cellcolor{SeaGreen!72.08} 92.57\%}}\\
$0.25$ & \textbf{{\cellcolor{SeaGreen!26.07} 2.19\%}} & \textbf{{\cellcolor{SeaGreen!26.59} 3.21\%}} & \textbf{{\cellcolor{SeaGreen!26.01} 2.06\%}} & \textbf{{\cellcolor{SeaGreen!30.51} 10.91\%}} & \textbf{{\cellcolor{SeaGreen!29.86} 9.64\%}} & \textbf{{\cellcolor{SeaGreen!37.32} 24.29\%}} & \textbf{{\cellcolor{SeaGreen!56.75} 62.46\%}} & \textbf{{\cellcolor{SeaGreen!66.95} 82.50\%}} & \textbf{{\cellcolor{SeaGreen!70.87} 90.19\%}} & \textbf{{\cellcolor{SeaGreen!71.77} 91.97\%}}\\
    \bottomrule
  \end{tabular}
  }
\end{table*}

\subsection{COMPARISON BEYOND GAUSSIANS}
As discussed in Section~\ref{sec:weaknesses}, mechanisms that are optimal in high-dimensional, high-composition, or high-privacy asymptotics may be strictly suboptimal in one-dimensional, one-shot, and low-privacy regimes. We address this by comparing our multimodal Gaussian mechanisms with the best mechanism out of the truncated Laplace mechanism (that is asymptotically optimal as $\varepsilon, \delta \downarrow 0$, \citealt{truncated_Laplace}), Tulap mechanism (that is canonical for approximate DP, \citealt{TULAP,awan2023canonical}), staircase mechanism (that is optimal for $\delta = 0$, \citealt{soria2013optimal,geng2015staircase}), cactus mechanism (that is optimal under asymptotically large number of compositions, \citealt{alghamdi2022cactus}), and flipped Huber mechanism (that is designed under the notion of optimality in high dimensions, \citealt{muthukrishnan2023grafting}). The {\color{black}results are reported in Table~\ref{tab:new_after_rebuttal} (computed and reported similarly to Table~\ref{tab:multi_vs_analytic_l1}) and further} visualized in Figure~\ref{fig:best}. Since our $\varepsilon$ grid spans the interval $(0,10]$, a direct plot of the improvements would misleadingly suggest that our method dominates across the entire range. To address this {\color{black}in Figure~\ref{fig:best}}, we apply a monotone logarithmic transformation $\mathcal{T}$ to $\varepsilon$ such that $\mathcal{T}(0) = 0$, $\mathcal{T}(1) = 0.5$, and $\mathcal{T}(10) = 1$. This scaling ensures that the high-privacy regime ($\varepsilon \leq 1$) occupies half of the figure, while the low-privacy regime ($\varepsilon \geq 1$) occupies the other half. An analogous transformation is applied for $\delta$, with $\delta \leq 0.001$. We observe that our mechanisms offer a strict improvement over the best known approximate DP mechanisms when $\varepsilon \geq 1$. We share the data entries {\color{black}for the $l_2$-loss} in our appendices.

\begin{table*}[!t]
  \caption{\color{black}Improvement ($\%$ of $l_1$-loss) of multimodal Gaussians (green) over the best approximate DP benchmark (red).}\label{tab:new_after_rebuttal} 
  \centering
    \resizebox{0.871\linewidth}{!}{%
  \begin{tabular}{l|cccccccccc}
    \toprule
    $\delta \downarrow \lvert \; \varepsilon \rightarrow$  & $0.1$ & $0.25$ & $0.5$ & $0.75$ & $1$ & $2$ & $3$ & $4$ & $5$ & $10$ \\
    \midrule
$5 \cdot 10^{-7}$ & \textbf{{\cellcolor{BrickRed!27.65} -66.95\%}} & \textbf{{\cellcolor{BrickRed!28.1} -68.01\%}} & \textbf{{\cellcolor{BrickRed!26.4} -63.94\%}} & \textbf{{\cellcolor{BrickRed!13.55} -24.58\%}} & \textbf{{\cellcolor{BrickRed!14.25} -27.51\%}} & \textbf{{\cellcolor{BrickRed!12.9} -21.66\%}} & \textbf{{\cellcolor{SeaGreen!32.81} 18.47\%}} & \textbf{{\cellcolor{SeaGreen!34.43} 20.79\%}} & \textbf{{\cellcolor{SeaGreen!49.2} 41.95\%}} & \textbf{{\cellcolor{SeaGreen!35.96} 22.98\%}}\\
$10^{-6}$ & \textbf{{\cellcolor{BrickRed!27.05} -65.52\%}} & \textbf{{\cellcolor{BrickRed!27.46} -66.51\%}} & \textbf{{\cellcolor{BrickRed!25.44} -61.58\%}} & \textbf{{\cellcolor{BrickRed!13.44} -24.07\%}} & \textbf{{\cellcolor{BrickRed!14.13} -27.01\%}} & \textbf{{\cellcolor{BrickRed!12.76} -21.02\%}} & \textbf{{\cellcolor{SeaGreen!33.21} 19.05\%}} & \textbf{{\cellcolor{SeaGreen!34.86} 21.41\%}} & \textbf{{\cellcolor{SeaGreen!32.47} 17.99\%}} & \textbf{{\cellcolor{SeaGreen!53.8} 48.56\%}}\\
$5 \cdot 10^{-6}$ & \textbf{{\cellcolor{BrickRed!25.39} -61.46\%}} & \textbf{{\cellcolor{BrickRed!25.93} -62.80\%}} & \textbf{{\cellcolor{BrickRed!22.52} -54.02\%}} & \textbf{{\cellcolor{SeaGreen!20.71} 1.13\%}} & \textbf{{\cellcolor{SeaGreen!22.71} 3.99\%}} & \textbf{{\cellcolor{SeaGreen!38.89} 27.18\%}} & \textbf{{\cellcolor{SeaGreen!60.49} 58.14\%}} & \textbf{{\cellcolor{SeaGreen!69.95} 71.70\%}} & \textbf{{\cellcolor{SeaGreen!73.78} 77.19\%}} & \textbf{{\cellcolor{SeaGreen!84.63} 92.74\%}}\\
$10^{-5}$ & \textbf{{\cellcolor{BrickRed!24.54} -59.32\%}} & \textbf{{\cellcolor{BrickRed!25.12} -60.78\%}} & \textbf{{\cellcolor{BrickRed!19.12} -44.33\%}} & \textbf{{\cellcolor{SeaGreen!20.9} 1.39\%}} & \textbf{{\cellcolor{SeaGreen!22.81} 4.13\%}} & \textbf{{\cellcolor{SeaGreen!43.43} 33.70\%}} & \textbf{{\cellcolor{SeaGreen!60.86} 58.67\%}} & \textbf{{\cellcolor{SeaGreen!70.12} 71.95\%}} & \textbf{{\cellcolor{SeaGreen!76.49} 81.07\%}} & \textbf{{\cellcolor{SeaGreen!85.0} 93.27\%}}\\
$5 \cdot 10^{-5}$ & \textbf{{\cellcolor{BrickRed!22.13} -52.97\%}} & \textbf{{\cellcolor{BrickRed!22.65} -54.38\%}} & \textbf{{\cellcolor{SeaGreen!20.01} 0.13\%}} & \textbf{{\cellcolor{SeaGreen!21.28} 1.95\%}} & \textbf{{\cellcolor{SeaGreen!22.78} 4.09\%}} & \textbf{{\cellcolor{SeaGreen!38.95} 27.27\%}} & \textbf{{\cellcolor{SeaGreen!61.54} 59.64\%}} & \textbf{{\cellcolor{SeaGreen!69.96} 71.71\%}} & \textbf{{\cellcolor{SeaGreen!79.3} 85.10\%}} & \textbf{{\cellcolor{SeaGreen!77.6} 82.67\%}}\\
$10^{-4}$ & \textbf{{\cellcolor{BrickRed!20.86} -49.43\%}} & \textbf{{\cellcolor{BrickRed!21.31} -50.70\%}} & \textbf{{\cellcolor{SeaGreen!20.0} 0.11\%}} & \textbf{{\cellcolor{SeaGreen!20.59} 0.95\%}} & \textbf{{\cellcolor{SeaGreen!24.26} 6.22\%}} & \textbf{{\cellcolor{SeaGreen!37.27} 24.87\%}} & \textbf{{\cellcolor{SeaGreen!60.61} 58.31\%}} & \textbf{{\cellcolor{SeaGreen!69.95} 71.70\%}} & \textbf{{\cellcolor{SeaGreen!77.97} 83.19\%}} & \textbf{{\cellcolor{SeaGreen!78.63} 84.14\%}}\\
$5 \cdot 10^{-4}$ & \textbf{{\cellcolor{BrickRed!17.44} -39.02\%}} & \textbf{{\cellcolor{BrickRed!16.62} -36.27\%}} & \textbf{{\cellcolor{BrickRed!10.0} -0.08\%}} & \textbf{{\cellcolor{SeaGreen!20.45} 0.76\%}} & \textbf{{\cellcolor{SeaGreen!24.06} 5.93\%}} & \textbf{{\cellcolor{SeaGreen!37.22} 24.79\%}} & \textbf{{\cellcolor{SeaGreen!60.6} 58.30\%}} & \textbf{{\cellcolor{SeaGreen!74.03} 77.54\%}} & \textbf{{\cellcolor{SeaGreen!79.18} 84.93\%}} & \textbf{{\cellcolor{SeaGreen!78.63} 84.14\%}}\\
$10^{-3}$ & \textbf{{\cellcolor{BrickRed!15.81} -33.44\%}} & \textbf{{\cellcolor{BrickRed!10.01} -0.52\%}} & \textbf{{\cellcolor{BrickRed!10.01} -0.43\%}} & \textbf{{\cellcolor{SeaGreen!20.43} 0.73\%}} & \textbf{{\cellcolor{SeaGreen!23.84} 5.62\%}} & \textbf{{\cellcolor{SeaGreen!37.16} 24.70\%}} & \textbf{{\cellcolor{SeaGreen!60.59} 58.28\%}} & \textbf{{\cellcolor{SeaGreen!74.03} 77.54\%}} & \textbf{{\cellcolor{SeaGreen!79.18} 84.93\%}} & \textbf{{\cellcolor{SeaGreen!78.63} 84.14\%}}\\
$5 \cdot 10^{-3}$ & \textbf{{\cellcolor{BrickRed!12.39} -19.21\%}} & \textbf{{\cellcolor{BrickRed!10.0} -0.11\%}} & \textbf{{\cellcolor{BrickRed!10.0} -0.21\%}} & \textbf{{\cellcolor{SeaGreen!20.24} 0.45\%}} & \textbf{{\cellcolor{SeaGreen!22.55} 3.76\%}} & \textbf{{\cellcolor{SeaGreen!36.76} 24.13\%}} & \textbf{{\cellcolor{SeaGreen!60.5} 58.15\%}} & \textbf{{\cellcolor{SeaGreen!74.01} 77.51\%}} & \textbf{{\cellcolor{SeaGreen!79.18} 84.92\%}} & \textbf{{\cellcolor{SeaGreen!78.63} 84.14\%}}\\
$0.01$ & \textbf{{\cellcolor{BrickRed!10.01} -0.64\%}} & \textbf{{\cellcolor{BrickRed!10.09} -2.49\%}} & \textbf{{\cellcolor{BrickRed!10.05} -1.78\%}} & \textbf{{\cellcolor{SeaGreen!20.22} 0.43\%}} & \textbf{{\cellcolor{SeaGreen!21.35} 2.05\%}} & \textbf{{\cellcolor{SeaGreen!36.33} 23.51\%}} & \textbf{{\cellcolor{SeaGreen!60.68} 58.42\%}} & \textbf{{\cellcolor{SeaGreen!74.07} 77.60\%}} & \textbf{{\cellcolor{SeaGreen!79.99} 86.09\%}} & \textbf{{\cellcolor{SeaGreen!78.63} 84.14\%}}\\
$0.02$ & \textbf{{\cellcolor{BrickRed!10.16} -3.48\%}} & \textbf{{\cellcolor{BrickRed!10.02} -1.09\%}} & \textbf{{\cellcolor{BrickRed!10.27} -4.91\%}} & \textbf{{\cellcolor{SeaGreen!20.46} 0.77\%}} & \textbf{{\cellcolor{BrickRed!10.03} -1.33\%}} & \textbf{{\cellcolor{SeaGreen!35.57} 22.43\%}} & \textbf{{\cellcolor{SeaGreen!60.52} 58.18\%}} & \textbf{{\cellcolor{SeaGreen!74.12} 77.67\%}} & \textbf{{\cellcolor{SeaGreen!79.98} 86.08\%}} & \textbf{{\cellcolor{SeaGreen!78.63} 84.14\%}}\\
$0.05$ & \textbf{{\cellcolor{BrickRed!10.18} -3.82\%}} & \textbf{{\cellcolor{BrickRed!10.06} -1.88\%}} & \textbf{{\cellcolor{BrickRed!10.37} -5.98\%}} & \textbf{{\cellcolor{BrickRed!10.03} -1.37\%}} & \textbf{{\cellcolor{BrickRed!10.18} -3.83\%}} & \textbf{{\cellcolor{SeaGreen!33.63} 19.65\%}} & \textbf{{\cellcolor{SeaGreen!60.08} 57.55\%}} & \textbf{{\cellcolor{SeaGreen!74.02} 77.53\%}} & \textbf{{\cellcolor{SeaGreen!79.96} 86.04\%}} & \textbf{{\cellcolor{SeaGreen!78.63} 84.14\%}}\\
$0.1$ & \textbf{{\cellcolor{BrickRed!10.37} -5.95\%}} & \textbf{{\cellcolor{BrickRed!10.19} -3.94\%}} & \textbf{{\cellcolor{BrickRed!10.89} -10.38\%}} & \textbf{{\cellcolor{SeaGreen!20.76} 1.21\%}} & \textbf{{\cellcolor{BrickRed!10.07} -2.10\%}} & \textbf{{\cellcolor{SeaGreen!30.85} 15.67\%}} & \textbf{{\cellcolor{SeaGreen!59.45} 56.64\%}} & \textbf{{\cellcolor{SeaGreen!73.88} 77.33\%}} & \textbf{{\cellcolor{SeaGreen!79.92} 85.98\%}} & \textbf{{\cellcolor{SeaGreen!78.63} 84.14\%}}\\
$0.15$ & \textbf{{\cellcolor{BrickRed!11.11} -11.91\%}} & \textbf{{\cellcolor{BrickRed!10.5} -7.20\%}} & \textbf{{\cellcolor{BrickRed!10.35} -5.77\%}} & \textbf{{\cellcolor{BrickRed!10.81} -9.78\%}} & \textbf{{\cellcolor{BrickRed!10.88} -10.26\%}} & \textbf{{\cellcolor{SeaGreen!34.67} 21.13\%}} & \textbf{{\cellcolor{SeaGreen!58.87} 55.83\%}} & \textbf{{\cellcolor{SeaGreen!73.74} 77.14\%}} & \textbf{{\cellcolor{SeaGreen!79.88} 85.93\%}} & \textbf{{\cellcolor{SeaGreen!78.63} 84.14\%}}\\
$0.25$ & \textbf{{\cellcolor{BrickRed!11.37} -13.55\%}} & \textbf{{\cellcolor{BrickRed!10.6} -8.10\%}} & \textbf{{\cellcolor{BrickRed!10.32} -5.48\%}} & \textbf{{\cellcolor{SeaGreen!23.48} 5.09\%}} & \textbf{{\cellcolor{SeaGreen!22.59} 3.82\%}} & \textbf{{\cellcolor{SeaGreen!30.54} 15.22\%}} & \textbf{{\cellcolor{SeaGreen!57.87} 54.39\%}} & \textbf{{\cellcolor{SeaGreen!73.5} 76.79\%}} & \textbf{{\cellcolor{SeaGreen!79.82} 85.85\%}} & \textbf{{\cellcolor{SeaGreen!78.64} 84.16\%}}\\
    \bottomrule
  \end{tabular}
  }
\end{table*}

\begin{figure}[!t]
    \centering
    \includegraphics[trim=100 10 0 0, clip, width=1.1\columnwidth]{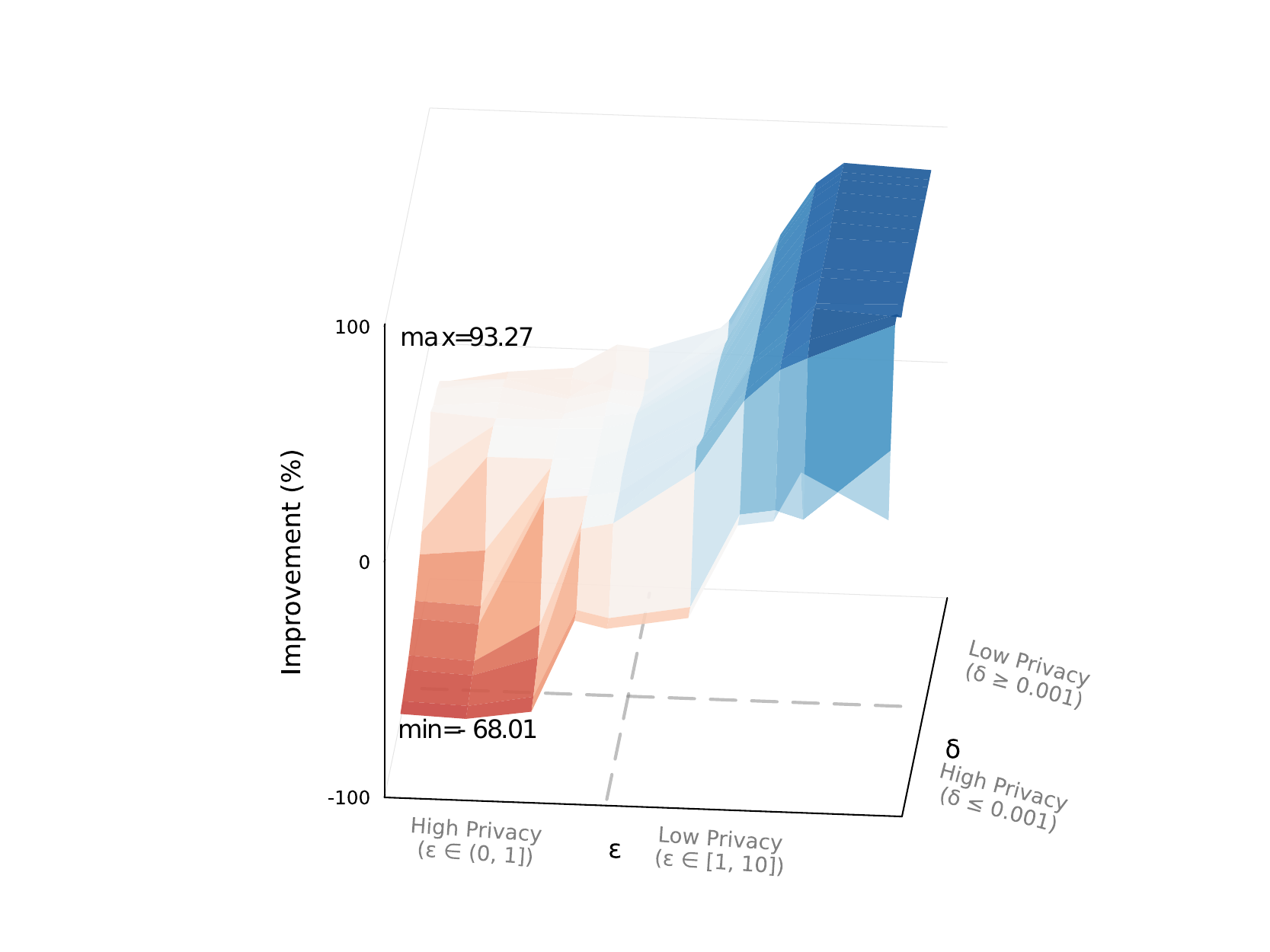}
    \caption{Improvement (\% of $l_1$-loss) of the best mixture mechanism (blue) over the best benchmark mechanism (red).}
    \label{fig:best}
\end{figure}

\subsection{PRICE OF PRIVACY IN ML}
While this paper focuses on designing privacy mechanisms with smaller one-shot expected losses, we also study the price of privacy in a ML setting and how unimodal and multimodal Gaussian mechanisms compare under composition. To this end, we consider building regularized logistic classifiers via DP proximal coordinate descent, which is a natural fit for our setting. This method constructs multi-dimensional classifiers through a sequence of one-dimensional coordinate updates, allowing privacy to be enforced via scalar noise injection via our mechanisms. Due to space constraints, we defer all ML experiments to the appendices. There, we study the generalization performance of different privacy mechanisms under composition. The results show that the multi-Gaussian mechanism consistently performs best, while the second-best mechanism alternates between the quasi-Gaussian and analytic Gaussian baselines. 

\section{Conclusions}\label{sec:conclusions}
We introduced new additive noise mechanisms for approximate DP based on Gaussian mixtures and developed efficient algorithms to tune their parameters to minimize expected losses under given $(\varepsilon,\delta)$-DP constraints. Our experiments show that, relative to the minimum-variance Gaussian mechanism, Gaussian mixtures yield substantially lower expected losses across all privacy regimes. In comparison with near-optimal non-Gaussian mechanisms, our approach continues to provide improvements when $\varepsilon \geq 1$ while additionally admitting a zero-concentrated DP guarantee identical to that of the Gaussian mechanism.

There are several avenues for future research. We did not optimize mixture weights in our distributions, and tuning them could offer further improvements in our mechanisms. {\color{black}Moreover, constituent distributions in our mixtures share the same scale parameter $\sigma$, and one could allow them to differ.} We considered scalar queries similar to the non-Gaussian benchmarks, and extensions to multi-dimensional queries would increase the applicability of our mechanisms.

\clearpage
\newpage
\section*{Acknowledgements}
{\color{black}The authors are grateful to the four anonymous reviewers for their comments and suggestions, which have helped improve the manuscript.}

\section*{Impact Statement}
This paper presents work whose goal is to advance the field of Machine
Learning. There are many potential societal consequences of our work, none
which we feel must be specifically highlighted here.

\bibliography{bibliography}
\bibliographystyle{icml2026}

\newpage
\appendix
\onecolumn
\section{Notation}
Throughout the proofs, we denote by $\log(\cdot)$ the natural logarithm. The cumulative distribution function of a standard Gaussian distribution is denoted by $\Phi$, and $\Phi'$ is its derivative (probability density function; pdf). For univariate  $f$, we use $\mathrm{d}f(x)/\mathrm{d}x$ to denote its derivative, while for a multivariate function $\partial f(x,y) / \partial x$ denotes the partial derivative. The notation $\tilde{X} \sim f(\cdot)$ denotes a random variable whose distribution is specified by the pdf $f$. Bold lowercase symbols are vectors. The nonnegative part of a real number $x$ is $(x)^{+} = \max\{0,x\}$. For a vector $\bm{x}$, the $l_p$ norm is denoted by $\lVert \bm{x} \rVert_p, \ p \geq 1$. We use $\mathcal{O}(\cdot)$ for the big O notation.
\section{Proofs for Section~\ref{sec:mG}}
\subsection{The multi-Gaussian mixture distribution is well-defined}\label{app:B_well-defined}
To show that the density function~\eqref{eq:mG} of the multi-Gaussian mixture distribution is well-defined, fix arbitrary $\sigma > 0$ and $k \in \mathbb{N}$, and note that $f_\mathrm{m}(x; \sigma,K) \geq 0$ for all $x \in \mathbb{R}$ since all terms in its definition are nonnegative. Moreover, observe that the density function integrates to $1$ since each $\phi$-term is the unnormalized Gaussian density function where only the mean varies, that is, 
\begin{align*}
    \int_{-\infty}^{\infty} \phi(x; k\Delta, \sigma) \mathrm{d}x = \sqrt{2\pi}\sigma,
\end{align*}
hence by the linearity of integrals, the density~\eqref{eq:mG} integrates to $1$. We can also simply represent the cumulative distribution function $F_\mathrm{m}(\bar{x};\sigma)$ via Gaussian cumulative distribution functions:
\begin{align*}
    F_\mathrm{m}(\bar{x};\sigma) & = \dfrac{1}{c_K} \sum_{k=-K}^K e^{-\lvert k \rvert \varepsilon} \int_{-\infty}^{\bar{x}} \phi(x; k\Delta, \sigma) \mathrm{d}x \\
    & = \dfrac{1}{c_K} \sum_{k=-K}^K e^{-\lvert k \rvert \varepsilon} \sqrt{2\pi}\sigma \Phi\left( \dfrac{\bar{x} - k \Delta}{\sigma} \right) \\
    & = \dfrac{1}{\sqrt{2\pi} \sigma \sum_{k=-K}^K e^{-\lvert k \rvert \varepsilon}} \sum_{k=-K}^K e^{-\lvert k \rvert \varepsilon} \sqrt{2\pi}\sigma \Phi\left( \dfrac{\bar{x} - k \Delta}{\sigma} \right) \\
    & = \dfrac{1}{\sum_{k=-K}^{K} e^{- \lvert k \rvert \varepsilon}}  \sum_{k = -K}^K e^{-\lvert k \rvert \varepsilon} \Phi \left( \dfrac{\bar{x} - k \Delta}{\sigma}\right),
\end{align*}
    
where the first equality exploits the linearity of integrals, the second equality replaces the integral via a closed-form term using the Gaussian cumulative distribution, the third equality substitutes the definition of $c_K$, and the final equality cancels the $\sqrt{2\pi}\sigma$ terms.

Figure~\ref{fig:mG} visualizes the probability density function~\eqref{eq:mG} of the multi-Gaussian mixture distribution for $\varepsilon = 1$ and $\sigma = 0.25$ for $K = 1$ and $K =3$. For a better intuition, one can rewrite~\eqref{eq:mG} as:
\begin{align*}
    f_{\mathrm{m}}(x; \sigma, K) & = \dfrac{1}{c_K} \left[ \phi(x; 0, \sigma) + \sum_{k=1}^K e^{-k\varepsilon}\left(\phi(x; k\Delta, \sigma)  + \phi(x; -k\Delta, \sigma)\right) \right].
\end{align*}

\begin{figure}[!t]
    \centering
    \includegraphics[width=0.48\linewidth]{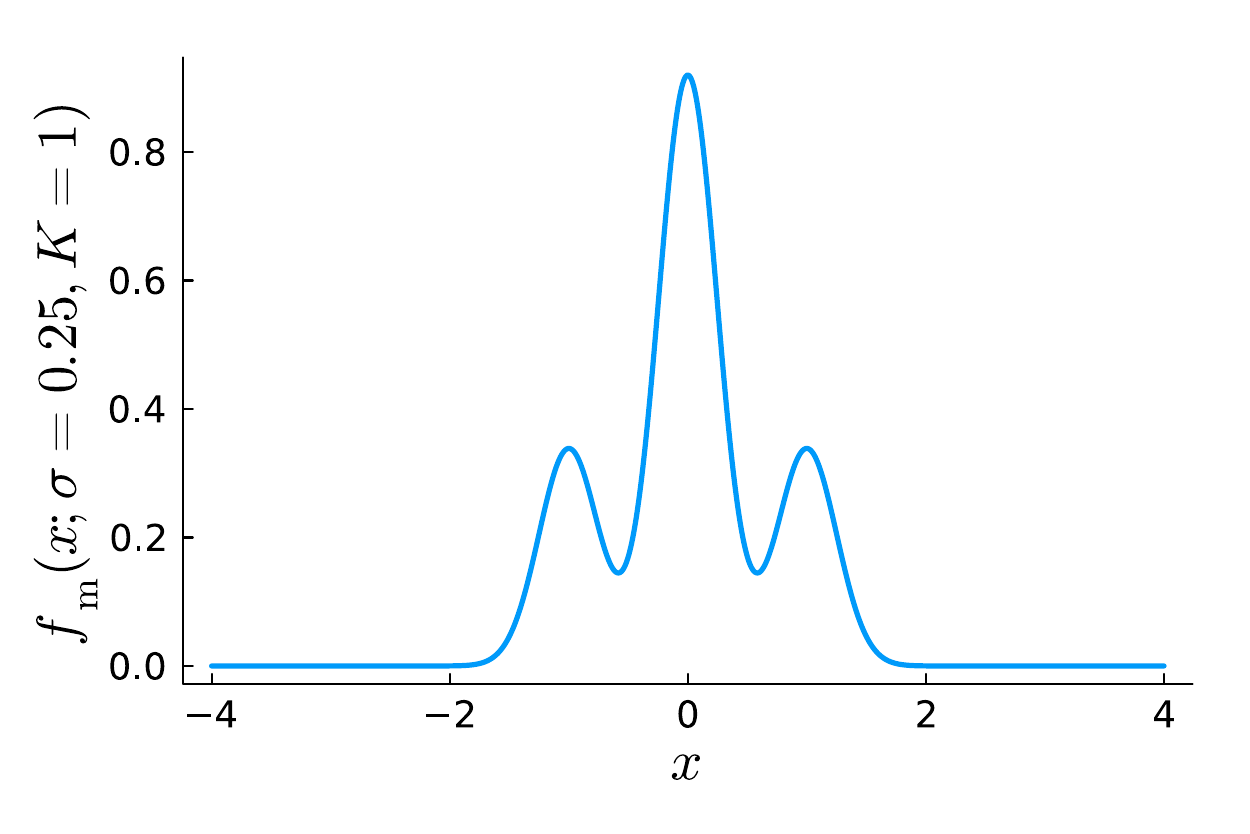}
    \includegraphics[width=0.48\linewidth]{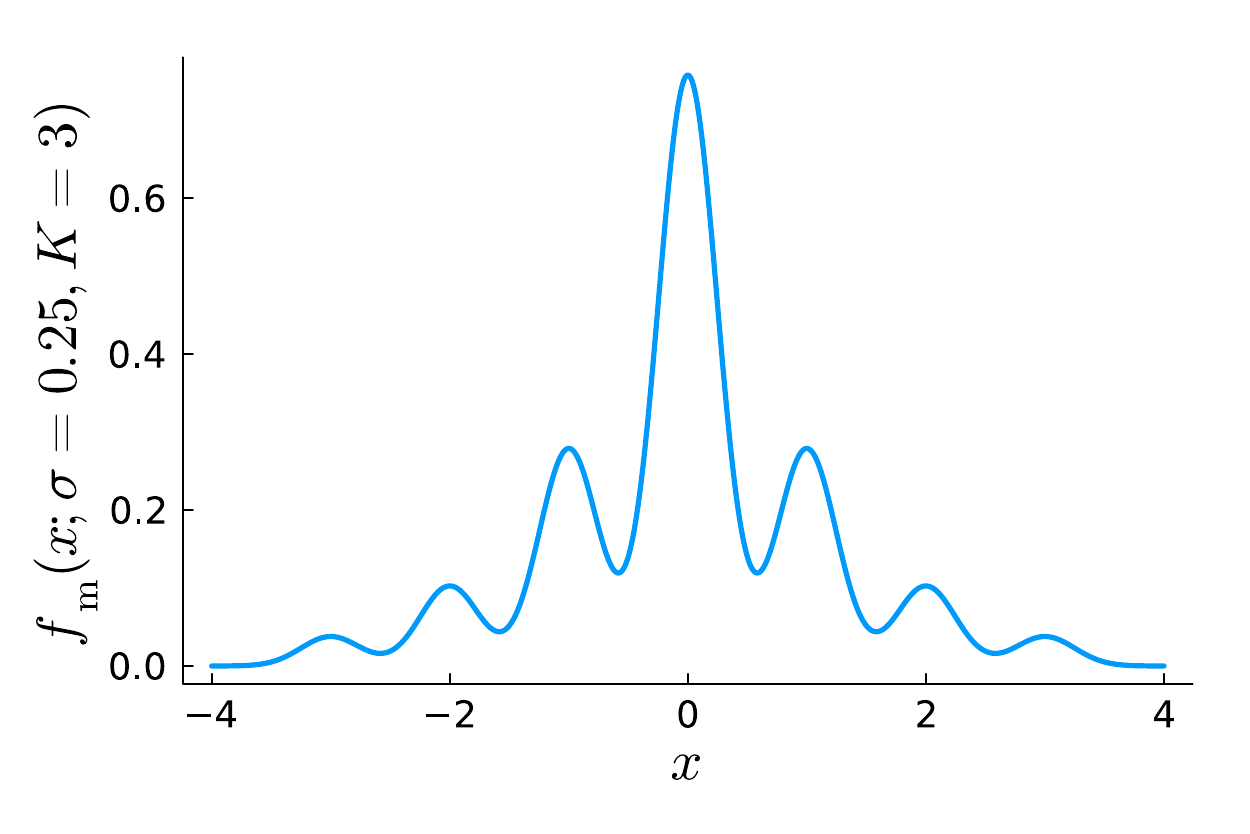}
    \caption{The probability density function~\eqref{eq:mG} of the multi-Gaussian mixture distribution for $\varepsilon = 1$ and $\sigma=0.25$ where $K$ is set to $1$ (left) and $3$ (right).}
    \label{fig:mG}
\end{figure}

Note that sampling from this distribution is straightforward since for $k \in \{-K,\ldots,K\}$, we sample from the $k\Delta$-mean Gaussian distribution with standard deviation $\sigma$ with probability $e^{-\lvert k \rvert \varepsilon} / \sum_{k=-K}^{K} e^{-\lvert k \rvert \varepsilon}$.

\subsection{Noise amplitude and power of the multi-Gaussian mixture distributions}\label{app:B_sl1_l2}
For the \textit{noise amplitude} of the random variable $\tilde{X}$ with density $f_{\mathrm{m}}(\cdot;\sigma,K)$, we have:
\begin{align*}
    \mathbb{E}_{\tilde{X}} [\lvert x \rvert] =&  \dfrac{1}{c_K} \sum_{k=-K}^K e^{-\lvert k \rvert \varepsilon} \int_{-\infty}^{\infty} \lvert x \rvert \phi(x; k\Delta, \sigma) \mathrm{d}x.
\end{align*}
The integral can be written as
{
\allowdisplaybreaks
\begin{align*}
    & \int_{-\infty}^{\infty} \lvert x \rvert \exp\left(- \dfrac{(x-k\Delta)^2}{2\sigma^2} \right) \mathrm{d}x \\
    = &  \int_{-\infty}^{\infty} \lvert y + k \Delta \rvert \exp\left(- \dfrac{y^2}{2\sigma^2} \right) \mathrm{d}y\\
    = &  -\int_{-\infty}^{-k\Delta} (y + k \Delta) \exp\left(- \dfrac{y^2}{2\sigma^2} \right) \mathrm{d}y + \int_{-k\Delta}^{\infty} (y + k \Delta) \exp\left(- \dfrac{y^2}{2\sigma^2} \right) \mathrm{d}y\\
     = &  -\int_{-\infty}^{-k\Delta} y\exp\left(- \dfrac{y^2}{2\sigma^2} \right) \mathrm{d}y 
     - k\Delta\int_{-\infty}^{-k\Delta} \exp\left(- \dfrac{y^2}{2\sigma^2} \right) \mathrm{d}y 
     \\ 
     & \hspace{2cm} + \int_{-k\Delta}^{\infty} y \exp\left(- \dfrac{y^2}{2\sigma^2} \right) \mathrm{d}y + k \Delta\int_{-k\Delta}^{\infty} \exp\left(- \dfrac{y^2}{2\sigma^2} \right) \mathrm{d}y\\
     = & -\int_{-\infty}^{-k\Delta} y\exp\left(- \dfrac{y^2}{2\sigma^2} \right) \mathrm{d}y 
     + \int_{-k\Delta}^{\infty} y \exp\left(- \dfrac{y^2}{2\sigma^2} \right) \mathrm{d}y + k \Delta \sqrt{2\pi}\sigma \left( \Phi \left( \dfrac{k \Delta}{\sigma} \right)  - \Phi \left( -\dfrac{k \Delta}{\sigma} \right) \right)
     \\ 
     = & \int_{-\infty}^{\infty} y \exp\left(- \dfrac{y^2}{2\sigma^2} \right) \mathrm{d}y -2\int_{-\infty}^{-k\Delta} y\exp\left(- \dfrac{y^2}{2\sigma^2} \right) \mathrm{d}y  + k \Delta\sqrt{2\pi}\sigma \left( \Phi \left( \dfrac{k \Delta}{\sigma} \right)  - \Phi \left( -\dfrac{k \Delta}{\sigma} \right) \right)
     \\ 
     = & -2\int_{-\infty}^{-k\Delta} y\exp\left(- \dfrac{y^2}{2\sigma^2} \right) \mathrm{d}y  + k \Delta \sqrt{2\pi}\sigma \left( \Phi \left( \dfrac{k \Delta}{\sigma} \right)  - \Phi \left( -\dfrac{k \Delta}{\sigma} \right) \right) \\
     = & -2\left[-\sigma^2 \exp\left( - \dfrac{y^2}{2\sigma^2}\right)\right]_{y= -\infty}^{-k\Delta}  + k \Delta \sqrt{2\pi}\sigma \left( \Phi \left( \dfrac{k \Delta}{\sigma} \right)  - \Phi \left( -\dfrac{k \Delta}{\sigma} \right) \right) \\
      = & 2\sigma^2\exp\left(- \dfrac{k^2\Delta^2}{2\sigma^2} \right) + k \Delta \sqrt{2\pi}\sigma \left(1 - 2\Phi \left( -\dfrac{k \Delta}{\sigma} \right) \right).
\end{align*}
}\noindent
Here, the first equality is due to variable change, the second equality partitions the integral region into two regions, the third equality exploits the linearity of integrals, the fourth equality replaces the Gaussian density integrals with their closed form representation, the fifth equality breaks down its second integral (from $-k \Delta$ to $\infty$) as the subtraction of two integrals (from $-\infty$ to $\infty$ and from $-\infty$ to $- k\Delta)$, the sixth equality notes that the first integral is the expected value of a $0$-mean Gaussian, and the following two equalities compute the definite integral.
Hence, we have
\begin{align*}
    \mathbb{E}_{\tilde{X}} [\lvert x \rvert] =&  \dfrac{1}{c_K} \sum_{k=-K}^K e^{-\lvert k \rvert \varepsilon} \left[2\sigma^2\exp\left(- \dfrac{k^2\Delta^2}{2\sigma^2} \right) + k \Delta \sqrt{2\pi}\sigma \left(1 - 2\Phi \left( -\dfrac{k \Delta}{\sigma} \right) \right)\right].
\end{align*}

For the \textit{noise power} of the random variable $\tilde{X}$ with density $f_{\mathrm{m}}(\cdot;\sigma,K)$, we similarly have:
\begin{align*}
    \mathbb{E}_{\tilde{X}} [x^2] =&  \dfrac{1}{c_K} \sum_{k=-K}^K e^{-\lvert k \rvert \varepsilon} \int_{-\infty}^{\infty} x^2 \phi(x; k\Delta, \sigma) \mathrm{d}x.
\end{align*}
The integral can be written as
\begin{align*}
    & \int_{-\infty}^{\infty} x^2 \exp\left(- \dfrac{(x-k\Delta)^2}{2\sigma^2} \right) \mathrm{d}x \\
    = & \int_{-\infty}^{\infty} (y + k\Delta)^2 \exp\left(- \dfrac{y^2}{2\sigma^2} \right) \mathrm{d}y \\
    = & \underbrace{\int_{-\infty}^{\infty} y^2 \exp\left(- \dfrac{y^2}{2\sigma^2} \right) \mathrm{d}y}_{= \sqrt{2\pi}\sigma^3} + k^2\Delta^2 \underbrace{\int_{-\infty}^{\infty} \exp\left(- \dfrac{y^2}{2\sigma^2} \right) \mathrm{d}y}_{= \sqrt{2\pi}\sigma} + 2 k \Delta \underbrace{\int_{-\infty}^{\infty} y \exp\left(- \dfrac{y^2}{2\sigma^2} \right) \mathrm{d}y}_{=0}\\
    = & \sqrt{2\pi}\sigma^3 + k^2\Delta^2 \sqrt{2\pi}\sigma
\end{align*}
where the first equality is due to variable change, the second equality expands the squared term, and the final equality uses the fact that $\exp(-y^2 / 2\sigma^2)$ is the (unnormalized) density function of a Gaussian distribution with $0$ mean and $\sigma^2$ variance. This concludes that:
\begin{align*}
    \mathbb{E}_{\tilde{X}} [x^2] =&  \dfrac{1}{c_K} \sum_{k=-K}^K e^{-\lvert k \rvert \varepsilon}(\sqrt{2\pi}\sigma^3 + k^2\Delta^2 \sqrt{2\pi}\sigma)\\
    = & \dfrac{1}{\sum_{k=-K}^K e^{-\lvert k \rvert \varepsilon}} \sum_{k=-K}^K e^{-\lvert k \rvert \varepsilon}(\sigma^2 + k^2\Delta^2).
\end{align*}
{\color{black}
\subsection{Intuition for the Multi-Gaussian Mixture Mechanism}\label{app:discretize-then-smooth}
We give an intuition behind the construction in~\eqref{eq:mG}, particularly regarding why the mixture weights are chosen to be proportional to $e^{-\lvert k \rvert \varepsilon}$ and why the Gaussian components are centered at $k\Delta$, can be given for the case $K=1$. To this end, consider a mixture of three Gaussians with common variance $\sigma^2$, means $-a,0,a$ for some $a\in\mathbb R_+$, and symmetric nonnegative mixture weights $p_{\varepsilon},q_{\varepsilon},p_{\varepsilon}$, respectively. Equivalently, the noise can be decomposed as
\begin{align*}
    \tilde X = \tilde Z + \tilde G,
\end{align*}
where $\tilde G\sim\mathcal N(0,\sigma^2)$ is a Gaussian random variable and $\tilde Z$ is an independent discrete random variable supported on $\{-a,0,a\}$ with
\begin{align*}
    \mathbb P[\tilde Z=0]=q_{\varepsilon},
    \qquad
    \mathbb P[\tilde Z=\pm a]=p_{\varepsilon}.
\end{align*}
Thus, the mixture can be viewed as a Gaussian smoothing of a three-level discrete mechanism.

To see why the spacing $a=\Delta$ and the weight ratio $p_{\varepsilon}/q_{\varepsilon}=e^{-\varepsilon}$ are natural (as suggested by our construction~\eqref{eq:mG}), first ignore the Gaussian smoothing and consider the discrete mechanism $q^{\mathrm{disc}}(D)+\tilde Z$, where $q^{\mathrm{disc}}$ is a discrete query $q^\mathrm{disc} :  \mathcal{D} \mapsto \{- \Delta, 0 ,\Delta \}$ with sensitivity $\Delta$. To examine the DP guarantees for the discrete mechanism $q^{\mathrm{disc}}(D)+\tilde Z$, consider neighboring databases $(D,D') \in \mathcal{N}$ with $q^{\mathrm{disc}}(D)=0$ and $q^{\mathrm{disc}}(D')=\Delta$. The supports of $q^{\mathrm{disc}}(D)+\tilde Z$ and $q^{\mathrm{disc}}(D')+\tilde Z$ are $\{-a,0,a\}$ and $\{\Delta-a,\Delta,\Delta+a\}$, respectively. If $a\notin\{\Delta/2,\Delta\}$, these supports have no overlap, and the DP constraints are infeasible unless $\delta\geq 1$, which coincides with the no-privacy case. If $a=\Delta/2$, the event $\{-\Delta/2,0\}$ has probability $1-p_{\varepsilon}$ under $D$ and probability zero under $D'$, forcing $\delta\geq 1-p_{\varepsilon}$. By contrast, when $a=\Delta$, the two shifted supports overlap at the two points $0$ and $\Delta$, and only one endpoint is left unmatched in each direction. For the comparison from $D$ to $D'$, this unmatched output is $-\Delta$, which has probability $p_{\varepsilon}$; hence, setting $\delta=p_{\varepsilon}$ accounts for this mass and leaves only the multiplicative DP constraints on the overlapping support points. This shows that $a=\Delta$ gives the tight spacing in this three-level construction.

Now that we fixed $a=\Delta$ as the tight spacing, we investigate how to set the weights $p_{\varepsilon}$ and $q_{\varepsilon}$. The event $A_1=\{-\Delta\}$ forces $\delta\geq p_{\varepsilon}$ in the DP constraint. Letting the smallest possible value $\delta=p_{\varepsilon}$, the event $A_2=\{-\Delta,0\}$ implies
\begin{align*}
    p_{\varepsilon}+q_{\varepsilon}
    \; \leq \; e^{\varepsilon}p_{\varepsilon}+\delta
    \; =\; 
    e^{\varepsilon}p_{\varepsilon}+p_{\varepsilon},
\end{align*}
which implies $q_{\varepsilon}\leq e^{\varepsilon}p_{\varepsilon}$. Choosing the weights so that this inequality is tight yields
\begin{align*}
    p_{\varepsilon}=q_{\varepsilon}e^{-\varepsilon}.
\end{align*}
Therefore, each Gaussian component centered at $\pm \Delta$ receives an $e^{-\varepsilon}$ fraction of the mass assigned to the Gaussian component centered at $0$. This is exactly the $K=1$ instance of the mixture weights in~\eqref{eq:mG}, which are proportional to $e^{-\lvert k\rvert\varepsilon}$. In this sense, the three-level argument gives intuition for both the $k\Delta$ spacing and the exponential decay of the mixture weights.

Finally, adding the Gaussian term $\tilde G$ turns this discrete construction into a continuous additive noise mechanism. This also explains why multimodality can help: a single broad Gaussian centered at zero smooths out the useful peak structure and allocates mass to intermediate regions, whereas the mixture places separate Gaussian components around the preferred points $k\Delta$, paying only the local smoothing cost around each point.
}

\subsection{Proof of Theorem~\ref{thm:mG}}
We first state and prove two intermediary results that will be used in the proof of Theorem~\ref{thm:mG}.
\begin{lemma}\label{lemma:simplify_DP_definition}
    An additive noise with density $f(x)$ satisfies $(\varepsilon, \delta)$-DP if and only if it satisfies
    \begin{align*}
    \int_{x \in A} (e^\varepsilon f(x+\varphi) - f(x)) \mathrm{d} x \geq -\delta \qquad \forall \varphi \in [0, \Delta],\;  \forall A \in \mathcal{F},
    \end{align*}\noindent
or, alternatively, if and only if it satisfies
    \begin{align*}
    \int_{x \in A} (e^\varepsilon f(x) - f(x+ \varphi)) \mathrm{d} x \geq -\delta \qquad \forall \varphi \in [0, \Delta],\;  \forall A \in \mathcal{F}.
    \end{align*}\noindent
\end{lemma}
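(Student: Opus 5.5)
The plan is to unpack the definition of $(\varepsilon,\delta)$-DP for the additive mechanism $\mathcal{A}(D)=q(D)+\tilde X$ and reduce it to a statement about shifts of the density $f$. Fix neighbors $(D,D')\in\mathcal{N}$ and write $u=q(D)$, $u'=q(D')$. For any measurable $A$, we have $\mathbb{P}[\mathcal{A}(D)\in A]=\int_{A-u} f(x)\,\mathrm{d}x$ and $\mathbb{P}[\mathcal{A}(D')\in A]=\int_{A-u'} f(x)\,\mathrm{d}x$. Setting $B=A-u$ and $\varphi_0=u-u'$, so that $A-u'=B+\varphi_0$, the DP inequality becomes
\begin{align*}
\int_{B} \bigl(e^\varepsilon f(x+\varphi_0)-f(x)\bigr)\,\mathrm{d}x \geq -\delta.
\end{align*}
As $A$ ranges over $\mathcal{F}$, so does $B$, since translation is a bijection on Borel sets. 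Also $|\varphi_0|\le\Delta$ by the definition of sensitivity.

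The next step handles the sign of $\varphi_0$. Since $\mathcal{N}$ is symmetric (neighboring is a symmetric relation), both orderings $(D,D')$ and $(D',D)$ must be handled, so shifts $\varphi_0\in[-\Delta,\Delta]$ appear. For a negative shift $\varphi_0=-\varphi$ with $\varphi\ge 0$, substitute $y=x-\varphi$, i.e.\ $B'=B-\varphi$. This gives $\int_{B'}(e^\varepsilon f(y)-f(y+\varphi))\,\mathrm{d}y\ge-\delta$, which is exactly the second form in the lemma. So DP is equivalent to the conjunction of the first form and the second form over $\varphi\in[0,\Delta]$.

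It remains to show that each form alone suffices. One route is to use the symmetry $f(-x)=f(x)$, which holds for both mixture densities considered in the paper. Reflecting $x\mapsto -x$ and $B\mapsto -B$ turns $\int_B(e^\varepsilon f(x+\varphi)-f(x))\,\mathrm{d}x$ into $\int_{-B}(e^\varepsilon f(y-\varphi)-f(y))\,\mathrm{d}y$. A further translation by $\varphi$ yields the second form, so the two forms are equivalent and each is equivalent to DP. Without symmetry, I would instead argue via the intended reading: the supremum over neighbors realizes every shift in $[0,\Delta]$ in the appropriate direction, and each form encodes one direction after the change of variables.

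The main obstacle is the \emph{only if} direction. It requires that every shift $\varphi\in[0,\Delta]$ is actually attained by some neighboring pair, which is the standard worst-case-over-queries convention, i.e.\ DP holding for all queries of sensitivity $\Delta$. I would state this convention explicitly. Then necessity follows by choosing a query and neighbors realizing $u-u'=\pm\varphi$, and sufficiency follows from the change of variables above.
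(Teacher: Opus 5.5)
Your argument is correct and is essentially the paper's proof: both read DP as the shift condition over $\varphi\in[-\Delta,\Delta]$ and fold the negative shifts onto $[0,\Delta]$ via the reflection $x\mapsto -x$, $A\mapsto -A$ together with the symmetry $f(-x)=f(x)$ (which the paper also invokes even though the lemma statement omits it); your preliminary translation step just makes explicit that, before symmetry is used, DP is exactly the conjunction of the two forms. Do drop the ``without symmetry'' fallback, however: for $f(x)=e^{-x}\mathbb{I}[x\geq 0]$ with $\Delta\leq\varepsilon$, the first form holds with $\delta=0$, while the second (take $A=[-\Delta,0)$, $\varphi=\Delta$) forces $\delta\geq 1-e^{-\Delta}$, so for non-symmetric densities neither form alone is equivalent to DP and symmetry is genuinely required.
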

\begin{proof}
    An additive noise with density $f(x)$ achieves $(\varepsilon, \delta)$-DP if and only if it satisfies
\begin{align*}
    \int_{x \in A} (e^\varepsilon f(x+\varphi) - f(x)) \mathrm{d} x \geq -\delta \qquad \forall \varphi \in [-\Delta, \Delta],\;  \forall A \in \mathcal{F}
\end{align*}
as per the definition. We can replace $\varphi \in [-\Delta, \Delta]$ with $\varphi \in [0, \Delta]$ without loss of generality since:
\begin{align*}
    & \int_{x \in A} (e^\varepsilon f(x+\varphi) - f(x)) \mathrm{d} x \geq -\delta && \forall \varphi \in [0, \Delta],\;  \forall A \in \mathcal{F} \\
    \iff & \int_{x \in A} (e^\varepsilon f(-x+\varphi) - f(-x)) \mathrm{d} x \geq -\delta && \forall \varphi \in [0, \Delta],\;  \forall A \in \mathcal{F} \\
    \iff & \int_{x \in A} (e^\varepsilon f(x-\varphi) - f(x)) \mathrm{d} x \geq -\delta && \forall \varphi \in [0, \Delta],\;  \forall A \in \mathcal{F}\\
    \iff & \int_{x \in A} (e^\varepsilon f(x+\varphi) - f(x)) \mathrm{d} x \geq -\delta && \forall \varphi \in [-\Delta, 0],\;  \forall A \in \mathcal{F},
\end{align*}
where the first equivalence follows from substituting $x = -x$ as well as from the fact that $\{A \mid A \in \mathcal{F}\} = \{-A \mid A \in \mathcal{F}\}$, the second equivalence uses the symmetry of $f$, and the final equivalence substitutes $\varphi = -\varphi$. We thus obtained the first representation of the DP constraints presented in the statement of this Lemma. Applying analogous steps to the DP definition
\begin{align*}
    \int_{x \in A} (e^\varepsilon f(x) - f(x+\varphi)) \mathrm{d} x \geq -\delta \qquad \forall \varphi \in [0, \Delta],\;  \forall A \in \mathcal{F}
    \end{align*}
    concludes the proof.
\end{proof}
\begin{lemma}\label{lemma:intermediary_for_mG}
    For a multi-Gaussian mixture distribution with parameters $\varphi, \Delta, \sigma > 0$ and $K \in \mathbb{N}$, and for any $0 \leq \varphi_1 \leq \varphi_2 \leq \Delta$, we have:
    \begin{align*}
        \int_{-\infty}^{\infty} \lvert f_\mathrm{m}(x + \varphi_1; \sigma, K) - f_\mathrm{m}(x + \varphi_2; \sigma, K) \rvert \mathrm{d}x \leq \dfrac{\sqrt{2/\pi}}{\sigma} (\varphi_2 - \varphi_1).
    \end{align*}
\end{lemma}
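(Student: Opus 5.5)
The plan is to reduce to a single Gaussian by convexity, and then use the closed-form total-variation distance between two shifted Gaussians. First, the substitution $y = x + \varphi_1$ shows that the integral equals $\int |f_\mathrm{m}(y) - f_\mathrm{m}(y+h)|\,\mathrm{d}y$ with $h = \varphi_2 - \varphi_1 \in [0,\Delta]$. Next, write $f_\mathrm{m}$ as the convex combination
\[
f_\mathrm{m}(y;\sigma,K) = \sum_{k=-K}^K w_k \, g_k(y), \qquad w_k = \frac{e^{-|k|\varepsilon}}{\sum_{j=-K}^K e^{-|j|\varepsilon}},
\]
where $g_k(y) = \phi(y;k\Delta,\sigma)/(\sqrt{2\pi}\sigma)$ is the normalized Gaussian density. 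The triangle inequality then gives
\[
\int |f_\mathrm{m}(y)-f_\mathrm{m}(y+h)|\,\mathrm{d}y \le \sum_k w_k \int |g_k(y) - g_k(y+h)|\,\mathrm{d}y .
\]
Translation invariance makes each integral equal to that of the zero-mean Gaussian, so it suffices to prove the single-Gaussian bound $\int |g_0(y)-g_0(y+h)|\,\mathrm{d}y \le \sqrt{2/\pi}\,h/\sigma$.

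For the single Gaussian, the two densities $g_0(y)$ and $g_0(y+h)$ cross exactly once, at $y = -h/2$. On the half-line $y > -h/2$ we have $g_0(y) \ge g_0(y+h)$, and by symmetry the two halves contribute equally. The $L^1$ distance is therefore
\[
2\left[\Phi\!\left(\tfrac{h}{2\sigma}\right) - \Phi\!\left(-\tfrac{h}{2\sigma}\right)\right] = 2\int_{-h/(2\sigma)}^{h/(2\sigma)} \Phi'(t)\,\mathrm{d}t .
\]
Bounding $\Phi'(t) \le 1/\sqrt{2\pi}$ gives at most $2 \cdot (h/\sigma)/\sqrt{2\pi} = \sqrt{2/\pi}\,h/\sigma$. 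Summing against the weights $w_k$, which sum to $1$, yields the claim.

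An alternative route avoids locating the crossing point. Write $g_0(y) - g_0(y+h) = -\int_0^h g_0'(y+s)\,\mathrm{d}s$, apply Fubini, and use $\int |g_0'| = 2 g_0(0) = 2/(\sqrt{2\pi}\sigma)$; this gives the same constant $\sqrt{2/\pi}/\sigma$. I would in fact apply this argument directly to $f_\mathrm{m}$, bounding $\int |f_\mathrm{m}'| \le \sum_k w_k \int |g_k'|$, which combines the two steps above.

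None of these steps is a real obstacle. The only point needing care is the normalization of $\phi$: it is unnormalized, and the $1/c_K$ factor turns it into the weights $w_k$ times normalized densities. That bookkeeping is what makes the constant come out exactly as $\sqrt{2/\pi}/\sigma$, independent of $K$ and $\varepsilon$.
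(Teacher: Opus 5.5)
Your proposal is correct. The route you say you would actually use is exactly the paper's proof: write $f_\mathrm{m}(x+\varphi_1)-f_\mathrm{m}(x+\varphi_2)$ as an integral of $f_\mathrm{m}'$ over the shift, apply the triangle inequality and Fubini, and bound $\int|f_\mathrm{m}'|$ componentwise by $\frac{1}{c_K}\sum_k e^{-|k|\varepsilon}\cdot 2=\sqrt{2/\pi}/\sigma$. Your headline route, through the exact distance $2[\Phi(h/(2\sigma))-\Phi(-h/(2\sigma))]$ between shifted Gaussians, is also valid, and it is slightly sharper (it also gives the cap $2$) before you relax it with $\Phi'\le 1/\sqrt{2\pi}$.
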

\begin{proof}
    Since $\sigma > 0$ and $K \in \mathbb{N}$ are fixed, we denote by $f_{\mathrm{m}}'(x; \sigma, K)$ the derivative of $f_{\mathrm{m}}$ in $x$. By using this notation, the left-hand side of the inequality in the statement can be upper bounded by
    \begin{align*}
        \int_{-\infty}^{\infty} \lvert f_\mathrm{m}(x + \varphi_1; \sigma, K) - f_\mathrm{m}(x + \varphi_2; \sigma, K) \rvert \mathrm{d}x \; = &  \;  \int_{-\infty}^{\infty} \left\lvert \int_{\varphi_1}^{\varphi_2} f_{\mathrm{m}}'(x + \varphi; \sigma, K) \mathrm{d}\varphi \right\rvert \mathrm{d}x \\ 
         \leq & \int_{-\infty}^{\infty}  \int_{\varphi_1}^{\varphi_2} \left\lvert f_{\mathrm{m}}'(x + \varphi; \sigma, K) \right\rvert \mathrm{d}\varphi   \;\mathrm{d}x \\
         = & \int_{\varphi_1}^{\varphi_2} \underbrace{\int_{-\infty}^{\infty}  \left\lvert f_{\mathrm{m}}'(x + \varphi; \sigma, K) \right\rvert \mathrm{d}x}_{(i)} \; \mathrm{d}\varphi,
    \end{align*}
    where the first equality follows from the fundamental theorem of calculus, the inequality follows from the triangle inequality of the absolute value, and the final equality changes the order of integrals via Fubini's theorem. Finally, term $(i)$ satisfies
    \begin{align*}
        (i) & = \int_{-\infty}^{\infty}  \left\lvert f_{\mathrm{m}}'(x; \sigma, K) \right\rvert \mathrm{d}x\\
        & \leq  \dfrac{1}{c_K} \sum_{k=-K}^K e^{-\lvert k \rvert \varepsilon} \int_{-\infty}^{\infty}  \left\lvert \dfrac{\partial}{\partial x} \exp\left( - \dfrac{(x- k\Delta)^2}{2\sigma^2} \right) \right\rvert \mathrm{d}x\\
        & = \dfrac{1}{c_K} \sum_{k=-K}^K e^{-\lvert k \rvert \varepsilon} \int_{-\infty}^{\infty} \dfrac{\lvert x - k\Delta \rvert}{\sigma^2}  \exp\left( - \dfrac{(x- k\Delta)^2}{2\sigma^2} \right)  \mathrm{d}x\\
        & =  \dfrac{1}{c_K} \sum_{k=-K}^K e^{-\lvert k \rvert \varepsilon} \int_{-\infty}^{\infty} \dfrac{\lvert y \rvert}{\sigma^2}  \exp\left( - \dfrac{y^2}{2\sigma^2} \right)  \mathrm{d}y\\
        & =  \dfrac{2}{\sqrt{2\pi} \sigma \sum_{k=-K}^K e^{-\lvert k \rvert \varepsilon}} \sum_{k=-K}^K e^{-\lvert k \rvert \varepsilon} = \dfrac{\sqrt{2/\pi}}{\sigma},
    \end{align*}
    where the first equality follows from variable change $x = x + \varphi$, the inequality uses the definition of $f_{\mathrm{m}}(x; \sigma, K)$ and exploits the triangle inequality of the absolute value, the second equality explicitly writes the derivative term, the third equality follows from variable change $y = x - k \Delta$, the fourth equality explicitly writes the expression for $c_K$ as well as notes that the integral computes the expected absolute value of a random variable that follows a Gaussian distribution with $0$ mean and $\sigma^2$ variance (\textit{cf.}~term~$(i)$ of Appendix~\ref{app:l1_l2} for full derivation), and the final equality cancels common terms. Using this upper bound for $(i)$ back allows us to conclude
    \begin{align*}
        & \int_{\varphi_1}^{\varphi_2} \int_{-\infty}^{\infty}  \left\lvert f_{\mathrm{m}}'(x + \varphi; \sigma, K) \right\rvert \mathrm{d}x \; \mathrm{d}\varphi \; \leq \; \int_{\varphi_1}^{\varphi_2} \dfrac{\sqrt{2/\pi}}{\sigma} \mathrm{d}\varphi \; = \; \dfrac{\sqrt{2/\pi}}{\sigma}(\varphi_2 - \varphi_1)
    \end{align*}
    which coincides with the right-hand side presented in the lemma.
\end{proof}
We can now prove Theorem~\ref{thm:mG}. \\
\textbf{Proof of Theorem~\ref{thm:mG}.} Given $\eta \in (0,1)$ fix any $\beta \leq \sqrt{2\pi} \eta \sigma \delta$, and assume the condition in the statement of this theorem holds:
\begin{align}\label{eq:condition_to_use_in_proof}
\mspace{-20mu}
    \int_{-\infty}^{\infty} \min\{e^\varepsilon f_{\mathrm{m}}(x; \sigma, K) - f_{\mathrm{m}}(x + \varphi; \sigma, K), 0\}\mathrm{d}x + (1 - \eta) \delta  \geq 0 \quad \forall \varphi \in \{ 0, \beta, 2\beta, \ldots, \Delta\}.
\end{align}
To show the desired result, we should show that this condition guarantees the following definition of $(\varepsilon, \delta)$-DP (\textit{cf.}~Lemma~\ref{lemma:simplify_DP_definition}):
\begin{align}\label{eq:condition_to_imply_in_proof}
    & \int_{x \in A} (e^\varepsilon f_\mathrm{m}(x; \sigma,K) - f_{\mathrm{m}}(x + \varphi; \sigma,K)) \mathrm{d} x \geq -\delta  && \forall \varphi \in [0, \Delta],\;  \forall A \in \mathcal{F} \notag \\
    \iff  &  \inf_{A \in \mathcal{F}} \left\{\int_{x \in A} (e^\varepsilon f_\mathrm{m}(x; \sigma,K) - f_{\mathrm{m}}(x + \varphi; \sigma,K)) \mathrm{d} x\right\} \geq -\delta && \forall \varphi \in [0, \Delta] \notag \\
  \iff & \int_{-\infty}^{\infty} \min\{e^\varepsilon f_\mathrm{m}(x; \sigma,K) - f_{\mathrm{m}}(x + \varphi; \sigma,K),0\} \mathrm{d} x \geq - \delta && \forall \varphi \in [0, \Delta].
\end{align}
Here, the first equivalence follows from taking the infimum of the left hand side over $A \in \mathcal{F}$, and the second equivalence follows since the worst-case event $A$ does not include points $x$ with $e^\varepsilon f_\mathrm{m}(x; \sigma,K) - f_{\mathrm{m}}(x+\varphi; \sigma,K)) > 0$. We will show $\eqref{eq:condition_to_use_in_proof} \implies \eqref{eq:condition_to_imply_in_proof}$.

Fix an arbitrary $\varphi \in [0, \Delta]$, let $k' \in \mathbb{N}$ be the index that satisfies $\lvert \varphi - k' \beta \rvert \leq \beta/2$, and note that the left-hand side of~\eqref{eq:condition_to_imply_in_proof} satisfies:
\begin{align*}
    & \int_{-\infty}^{\infty} \min\{(e^\varepsilon f_\mathrm{m}(x; \sigma,K) - f_{\mathrm{m}}(x + \varphi; \sigma,K)),0\} \mathrm{d} x \\
    \geq & \int_{-\infty}^{\infty} \min\{f_\mathrm{m}(x; \sigma,K) - f_{\mathrm{m}}(x + \varphi; \sigma,K),0\} \mathrm{d} x \\
    = & \int_{-\infty}^{\infty} \min\{(f_\mathrm{m}(x; \sigma,K) - f_\mathrm{m}(x + k' \beta; \sigma,K) ) - (f_{\mathrm{m}}(x + \varphi; \sigma,K) - f_\mathrm{m}(x + k' \beta; \sigma,K)),0\} \mathrm{d} x \\
    \geq & \int_{-\infty}^{\infty} \min\{f_\mathrm{m}(x; \sigma,K) - f_\mathrm{m}(x + k' \beta; \sigma,K),0\} - \lvert f_{\mathrm{m}}(x + \varphi; \sigma,K) - f_\mathrm{m}(x + k' \beta; \sigma,K) \rvert \mathrm{d} x \\
    = & \underbrace{\int_{-\infty}^{\infty} \min\{f_\mathrm{m}(x; \sigma,K) - f_\mathrm{m}(x + k' \beta; \sigma,K),0\} \mathrm{d}x}_{(i)} \\ 
    & \hspace{5cm}  - \underbrace{\int_{-\infty}^{\infty}\lvert f_{\mathrm{m}}(x + \varphi; \sigma,K) - f_\mathrm{m}(x + k' \beta; \sigma,K) \rvert \mathrm{d} x}_{(ii)}.
\end{align*}
Here, the first inequality follows from $e^\varepsilon \geq 1$, the equality that follows adds and subtracts the common term $f_\mathrm{m}(x + k' \beta; \sigma,K) $ in the $\min$-term, the second inequality is due to the fact that for any $a,b \in \mathbb{R}$ we have $\min\{a-b,0\} \geq \min\{a, 0\} - \lvert b \rvert$, and the final equality exploits the linearity of integrals. To conclude this proof, we will now show that $(i) - (ii) \geq -\delta$.

The fact that $\varphi \in [0,\Delta]$ holds implies that $k' \beta \in \{0,\beta,2\beta,\ldots,\Delta\}$, hence from~\eqref{eq:condition_to_use_in_proof} we have:
\begin{align*}
    (i) \geq -(1-\eta)\delta.
\end{align*}
Moreover, term $(ii)$ can be upper bounded by
\begin{align*}
    (ii) \leq \dfrac{\sqrt{2/\pi}}{\sigma} \lvert \varphi - k' \beta \rvert \leq \dfrac{1}{\sqrt{2 \pi }\sigma} \beta \leq \eta \delta,
\end{align*}
where the first inequality follows from Lemma~\ref{lemma:intermediary_for_mG}, the second inequality is due to the definition of $k' \in \mathbb{N}$, and the final inequality is due to the definition of $\beta$. 

Putting these terms together finally concludes the proof since $(i) - (ii) \geq -(1-\eta)\delta - \eta \delta = -\delta$. \hfill $\square$

{\color{black}
\subsection{Proof of Proposition~\ref{prop:mg_is_better_strictly}}
Throughout this proof, we fix an arbitrary query with sensitivity $\Delta > 0$. The proof of Proposition~\ref{prop:mg_is_better_strictly} relies on three auxiliary lemmas which we prove first.
\begin{lemma}\label{lem:aux_lem_1_first}
    For any $\delta \in (0,1/2)$ and as $\varepsilon \rightarrow \infty$, the standard deviation $\sigma^{\mathrm{AG}}(\varepsilon,\delta)$ of the analytic Gaussian mechanism satisfies
    \begin{align*}
        \sigma^{\mathrm{AG}}(\varepsilon,\delta) \; = \;  \dfrac{\Delta}{\sqrt{2\varepsilon}} + \dfrac{\Delta  \cdot \lvert\Phi^{-1}(\delta)\rvert}{2\varepsilon} + \mathcal{O}(\varepsilon^{-3/2}),
    \end{align*}
    which implies that the expected $l_2$-loss of the analytic Gaussian mechanism satisfies
    \begin{align*}
    L_2^{\mathrm{AG}}(\varepsilon,\delta) \; = \; [\sigma^{\mathrm{AG}}(\varepsilon,\delta)]^2 \; = \; \dfrac{\Delta^2}{2\varepsilon} + \dfrac{\lvert \Phi^{-1}(\delta)\rvert \cdot \Delta^2}{\sqrt{2}\varepsilon^{3/2}} + \mathcal{O}(\varepsilon^{-2}).
    \end{align*}
\end{lemma}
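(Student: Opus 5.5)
The plan is to start from the exact characterization of the analytic Gaussian mechanism from \citet{analytic_Gaussian}. A Gaussian with standard deviation $\sigma$ is $(\varepsilon,\delta)$-DP if and only if
\begin{align*}
    \Phi\left(\frac{\Delta}{2\sigma} - \frac{\varepsilon\sigma}{\Delta}\right) - e^{\varepsilon}\Phi\left(-\frac{\Delta}{2\sigma} - \frac{\varepsilon\sigma}{\Delta}\right) \leq \delta .
\end{align*}
The left-hand side is strictly decreasing in $\sigma$, so $\sigma^{\mathrm{AG}}(\varepsilon,\delta)$ is the unique root of the corresponding equation. I would reparametrize by $u = \varepsilon\sigma/\Delta - \Delta/(2\sigma)$. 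Then the first argument is $-u$, and the second is $-(u + \Delta/\sigma)$. The equation becomes $\Phi(-u) - e^{\varepsilon}\Phi(-u-\Delta/\sigma) = \delta$.

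The first key step is to show that the second term is negligible. Write $v = \Delta/(2\sigma) + \varepsilon\sigma/\Delta$. Completing squares gives $e^{\varepsilon}\Phi'(v) = \Phi'(u)$, since $v^2 - u^2 = 2\varepsilon$. By the Mills ratio, $e^{\varepsilon}\Phi(-v) \le \Phi'(u)/v$. I expect $\sigma \asymp \varepsilon^{-1/2}$, so that $v \asymp \sqrt{\varepsilon}$ and $u = O(1)$. In that case the correction is $O(\varepsilon^{-1/2})$. Hence $\Phi(-u) = \delta + O(\varepsilon^{-1/2})$, and so $u = -\Phi^{-1}(\delta) + O(\varepsilon^{-1/2}) = \lvert\Phi^{-1}(\delta)\rvert + O(\varepsilon^{-1/2})$. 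This uses $\delta < 1/2$, which makes $\Phi^{-1}(\delta) < 0$, and the local Lipschitz property of $\Phi^{-1}$ at $\delta$.

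To justify the a priori scaling $\sigma \asymp \varepsilon^{-1/2}$ I would use a sandwich argument. Evaluate the DP functional at $\sigma_\pm = \Delta/\sqrt{2\varepsilon} \pm C/\varepsilon$, equivalently at $u = \pm c$ for suitable constants, and check the sign. Monotonicity then confines the root to this window.

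Next I would invert $u = \varepsilon\sigma/\Delta - \Delta/(2\sigma)$ in closed form. Writing $s = \sigma/\Delta$ gives $\varepsilon s^2 - u s - 1/2 = 0$, so
\begin{align*}
    s = \frac{u + \sqrt{u^2 + 2\varepsilon}}{2\varepsilon} = \frac{1}{\sqrt{2\varepsilon}} + \frac{u}{2\varepsilon} + O(\varepsilon^{-3/2}).
\end{align*}
Substituting $u = \lvert\Phi^{-1}(\delta)\rvert + O(\varepsilon^{-1/2})$ gives the stated expansion of $\sigma^{\mathrm{AG}}$. The error $O(\varepsilon^{-1/2})/(2\varepsilon)$ is $O(\varepsilon^{-3/2})$, as required.

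Squaring then gives $L_2^{\mathrm{AG}} = \Delta^2/(2\varepsilon) + 2\cdot\frac{\Delta}{\sqrt{2\varepsilon}}\cdot\frac{\Delta\lvert\Phi^{-1}(\delta)\rvert}{2\varepsilon} + O(\varepsilon^{-2})$. The cross term equals $\lvert\Phi^{-1}(\delta)\rvert\Delta^2/(\sqrt{2}\,\varepsilon^{3/2})$, matching the statement. The identity $L_2 = \sigma^2$ holds because the noise has zero mean. The main obstacle is making the $O(\varepsilon^{-1/2})$ control of the neglected $e^{\varepsilon}\Phi(-v)$ term uniform over the bracketing window, which the sandwich argument handles.
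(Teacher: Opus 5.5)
Your proposal is correct and follows essentially the same route as the paper: the same calibration equation, the reparametrization $u=-a$ in the paper's notation, the identity $v^2-u^2=2\varepsilon$ with the Mills-ratio bound, a first-order expansion of $\Phi^{-1}$ at $\delta$, and the explicit positive root $s=(u+\sqrt{u^2+2\varepsilon})/(2\varepsilon)$. The only difference is your sandwich step, which is unnecessary. Since $\Phi'(u)\le 1/\sqrt{2\pi}$ and $v=\sqrt{u^2+2\varepsilon}\ge\sqrt{2\varepsilon}$ for every $u$, the bound $e^{\varepsilon}\Phi(-v)\le 1/(2\sqrt{\pi\varepsilon})$ is already uniform; this is exactly how the paper obtains $\Phi(-u)=\delta+\mathcal{O}(\varepsilon^{-1/2})$, and hence $u=\mathcal{O}(1)$, without any a priori scaling of $\sigma$.
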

\begin{proof}
According to~\citet{analytic_Gaussian}, $\sigma^{\mathrm{AG}}(\varepsilon,\delta)$ is the unique positive solution $\sigma$ to
\begin{align}
\label{eq:ag-calibration-corrected}
\delta
=
\Phi\left(\frac{\Delta}{2\sigma}-\frac{\varepsilon \sigma}{\Delta}\right)
-
e^\varepsilon
\Phi\left(-\frac{\Delta}{2\sigma}-\frac{\varepsilon \sigma}{\Delta}\right),
\end{align}
where $\Phi$ is the standard normal cdf. For simplicity of notation, we let
\begin{align}\label{def:ab}
r \coloneqq \dfrac{\sigma}{\Delta}, \qquad 
a\coloneqq\frac{1}{2r}-\varepsilon r, 
\qquad 
b:=-\frac{1}{2r}-\varepsilon r.
\end{align}
Then \eqref{eq:ag-calibration-corrected} can be written as
\begin{align}\label{eq:bound_with_mills}
    \delta \; = \; \Phi(a) -e^\varepsilon \Phi(b) \; = \; \Phi(a)-e^\varepsilon \Phi\left(-\sqrt{a^2+2\varepsilon}\right),
\end{align}
where the second equality exploits the identity $b=-\sqrt{a^2+2\varepsilon}$ which holds since we have
\begin{align*}
(-b)^2-a^2 \; = \; \left(\dfrac{1}{2r}+\varepsilon r\right)^2
-
\left(\dfrac{1}{2r}-\varepsilon r\right)^2 \; = \; 2\varepsilon.
\end{align*}
Our proof unfolds as follows: since $r = \sigma /\Delta$ for $\sigma$ solving the desired equation~\eqref{eq:ag-calibration-corrected}, we aim to derive an asymptotic expression for $r$, and subsequently multiply it by $\Delta$ to recover $\sigma$. However, since $r$ also defines $a$ in~\eqref{def:ab}, we will start by deriving an asymptotic expression for $a$. To this end, consider~\eqref{eq:bound_with_mills} and observe that the final term on the right-hand side can be bounded as
\begin{align*}
    e^\varepsilon \Phi\left(-\sqrt{a^2+2\varepsilon}\right) \; \leq \; \dfrac{e^\varepsilon}{\sqrt{2\pi}} \dfrac{\exp\left( - \dfrac{a^2 + 2\varepsilon}{2} \right)}{\sqrt{a^2 + 2\varepsilon}} \; = \; \dfrac{1}{\sqrt{2\pi}} \dfrac{\exp\left( - \dfrac{a^2}{2} \right)}{\sqrt{a^2 + 2\varepsilon}}  \; \leq \; \dfrac{1}{\sqrt{2\pi}} \dfrac{1}{\sqrt{2\varepsilon}} \; = \; \dfrac{1}{2\sqrt{\pi\varepsilon}},
\end{align*}
where the first inequality is due to Mills' inequality (\textit{i.e.}, bounding Gaussian tail probability via its pdf), the equality that follows cancels the $\pm\varepsilon$ terms in the exponentials, the second inequality uses the fact that $\exp(-a^2 / 2)\leq 1$ and $\sqrt{a^2 + 2\varepsilon} \geq \sqrt{2\varepsilon}$, and the final equality simplifies the expression. Plugging this in~\eqref{eq:bound_with_mills} yields
\begin{align*}
    \delta \; \geq \; \Phi(a) - \dfrac{1}{2\sqrt{\pi\varepsilon}} \; \implies \; \Phi(a) = \delta + \mathcal{O}(\varepsilon^{-1/2}) \; \iff \; a = \Phi^{-1}(\delta + \mathcal{O}(\varepsilon^{-1/2})).
\end{align*}
Now that we have representation $ a = \Phi^{-1}(\delta + \mathcal{O}(\varepsilon^{-1/2}))$, we can apply a first-order Taylor expansion for $a$ around $\delta$ and obtain
\begin{align*}
    a \; = \; \Phi^{-1}(\delta) + (\Phi^{-1})' (\delta) \cdot  \mathcal{O}(\varepsilon^{-1/2}) + \mathcal{O}([\varepsilon^{-1/2}]^2) \; = \; \Phi^{-1}(\delta) + \mathcal{O}(\varepsilon^{-1/2}),
\end{align*}
where the final equation follows from the fact that $(\Phi^{-1})' (\delta)$ is constant\footnote{\color{black}Note also that $(\Phi^{-1})'$ exists by the inverse function theorem, since $\Phi$ is continuously differentiable and strictly increasing with $\Phi'(\Phi^{-1}(\delta))>0$ for any fixed $\delta\in(0,1)$.} for fixed $\delta$, and also from the fact that $\varepsilon^{-1/2}$ dominates $[\varepsilon^{-1/2}]^2 = \varepsilon^{-1}$. If we then let $q\coloneqq \Phi^{-1}(\delta)$, then the previous derivation can be represented as
\begin{align}\label{eq:exploit_aux}
    a \; = \; q+\mathcal{O}(\varepsilon^{-1/2}).
\end{align}
We next derive a closed-form expression for $r$ as
\begin{align*}
\begin{array}{ll}
     r \; = \; \dfrac{-a + \sqrt{a^2 + 2\varepsilon}}{2\varepsilon} \; = \; \dfrac{-a + \sqrt{2\varepsilon} + \mathcal{O}(\varepsilon^{-1/2})}{2\varepsilon} \; & = \; \dfrac{-q - \mathcal{O}(\varepsilon^{-1/2})  + \sqrt{2\varepsilon} + \mathcal{O}(\varepsilon^{-1/2})}{2\varepsilon} \; \\
    & = \dfrac{1}{\sqrt{2\varepsilon}} - \dfrac{q}{2\varepsilon} + \mathcal{O}(\varepsilon^{-3/2}) \\
    & = \dfrac{1}{\sqrt{2\varepsilon}} + \dfrac{\lvert \Phi^{-1}(\delta) \rvert}{2\varepsilon} + \mathcal{O}(\varepsilon^{-3/2}) 
\end{array}
\end{align*}
where the first equality follows from the definition of $a$ in~\eqref{def:ab} (as a function of $r$) and the fact that $r> 0 $, the second equality holds since~\eqref{eq:exploit_aux} implies $a = q+\mathcal{O}(\varepsilon^{-1/2}) = \mathcal{O}(1)$, the third equality exploits~\eqref{eq:exploit_aux} to substitute $a = q+\mathcal{O}(\varepsilon^{-1/2})$, the fourth equality uses the fact that $\mathcal{O}(\varepsilon^{-1/2}) - \mathcal{O}(\varepsilon^{-1/2}) = \mathcal{O}(\varepsilon^{-1/2})$, and the final equality uses the fact that $\delta \in (0, 1/2)$ implies $q = \Phi^{-1}(\delta) < 0$. We can now use the definition $r = \sigma / \Delta$ to conclude that
\begin{align*}
    \sigma = \dfrac{\Delta}{\sqrt{2\varepsilon}} + \dfrac{\Delta \cdot \lvert \Phi^{-1}(\delta) \rvert}{2\varepsilon} + \mathcal{O}(\varepsilon^{-3/2}),
\end{align*}
which coincides with the identity in the statement of this lemma. Taking the square of $\sigma$ gives the $l_2$-loss of the analytic Gaussian mechanism as this coincides with the variance as the Gaussian mechanism has zero mean.
\end{proof}

\begin{lemma}\label{lem:aux_result}
    Let $g_{\mu}(x)$ denote the density function of $\mathcal{N}(\mu, \sigma^2)$ for $\sigma = \Delta/\sqrt{2\varepsilon}$. For any $\Delta > 0$ and $\varphi \in [0, \Delta]$, we have
    \begin{align}\label{eq:aux_1_result}
        g_0(x) \leq \theta g_{\varphi - \Delta}(x) + (1-\theta)e^\varepsilon g_{\varphi}(x)
    \end{align}
    as well as
    \begin{align}\label{eq:aux_2_result}
        g_{\Delta}(x) \leq \theta e^{\varepsilon (1-\theta)} g_{\varphi}(x) + (1-\theta) g_{\varphi + \Delta}(x),
    \end{align}
    where $\theta \coloneqq \varphi/\Delta \in [0,1]$.
\end{lemma}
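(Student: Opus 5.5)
The plan is to prove both inequalities pointwise in $x$ using the weighted arithmetic--geometric mean inequality. The key observation is that for $\sigma = \Delta/\sqrt{2\varepsilon}$ every density $g_\mu$ has the form $g_\mu(x) = C\exp\big(-\varepsilon (x-\mu)^2/\Delta^2\big)$ with a common constant $C = 1/(\sqrt{2\pi}\sigma)$. Hence a weighted geometric mean of such densities is again a Gaussian-type exponential, and its exponent can be computed by completing the square.

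For any $u,v \geq 0$ and $\theta \in [0,1]$ we have $\theta u + (1-\theta) v \geq u^\theta v^{1-\theta}$; the endpoint cases $\theta\in\{0,1\}$ are trivial. Applied to \eqref{eq:aux_1_result}, the right-hand side is bounded below by $g_{\varphi-\Delta}(x)^\theta \big(e^\varepsilon g_\varphi(x)\big)^{1-\theta}$. Substitute $y = x-\varphi$ and note that $x = y+\theta\Delta$ because $\varphi = \theta\Delta$. The quadratic form in the exponent is
\begin{align*}
\theta(y+\Delta)^2 + (1-\theta)y^2 = y^2 + 2\theta\Delta y + \theta\Delta^2 = x^2 + \theta(1-\theta)\Delta^2.
\end{align*}
The log of the lower bound is therefore $\log C - \varepsilon x^2/\Delta^2 - \varepsilon\theta(1-\theta) + \varepsilon(1-\theta)$. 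This simplifies to $\log g_0(x) + \varepsilon(1-\theta)^2 \geq \log g_0(x)$, which proves \eqref{eq:aux_1_result}.

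For \eqref{eq:aux_2_result}, AM--GM bounds the right-hand side below by $\big(e^{\varepsilon(1-\theta)} g_\varphi(x)\big)^\theta g_{\varphi+\Delta}(x)^{1-\theta}$. With $y = x-\varphi$ we have $x-\Delta = y-(1-\theta)\Delta$, and
\begin{align*}
\theta y^2 + (1-\theta)(y-\Delta)^2 = y^2 - 2(1-\theta)\Delta y + (1-\theta)\Delta^2 = (x-\Delta)^2 + \theta(1-\theta)\Delta^2.
\end{align*}
The exponent becomes $-\varepsilon(x-\Delta)^2/\Delta^2 - \varepsilon\theta(1-\theta) + \varepsilon\theta(1-\theta)$, so the lower bound equals exactly $g_\Delta(x)$. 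This yields \eqref{eq:aux_2_result}.

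There is no genuine obstacle. The only delicate points are bookkeeping: using the specific variance $\sigma^2 = \Delta^2/(2\varepsilon)$ so that the cross term $\theta(1-\theta)\Delta^2/(2\sigma^2)$ equals precisely $\varepsilon\theta(1-\theta)$, and checking that the prefactors $e^{\varepsilon}$ and $e^{\varepsilon(1-\theta)}$ enter with the exponents $1-\theta$ and $\theta$ respectively. The second inequality is tight, with the AM--GM step as the only slack, which explains why the factor $e^{\varepsilon(1-\theta)}$ in the statement is needed.
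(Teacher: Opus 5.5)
Your proof is correct and follows the paper's argument. Both proofs use the choice $\sigma^2=\Delta^2/(2\varepsilon)$ to show that the weighted geometric mean of the two shifted densities equals $g_0(x)e^{-\varepsilon\theta(1-\theta)}$ (respectively $g_\Delta(x)e^{-\varepsilon\theta(1-\theta)}$), absorb the $e^{\varepsilon}$ or $e^{\varepsilon(1-\theta)}$ factor, and finish with the weighted AM--GM inequality; the only cosmetic difference is that you complete the square via $y=x-\varphi$ rather than expanding the exponents directly.
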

\begin{proof}
We first prove inequality~\eqref{eq:aux_1_result}. We can show that
\begin{align}\label{eq:long_derivation}
    g_{\varphi-\Delta}(x)^\theta g_\varphi(x)^{1-\theta} \; & = \; \left(\dfrac{1}{\sqrt{2\pi\sigma^2}}\right)^{\theta} \exp\left( - \theta \cdot \dfrac{(x - (\varphi - \Delta))^2}{2\sigma^2} \right) \left(\dfrac{1}{\sqrt{2\pi\sigma^2}}\right)^{1-\theta} \exp\left( -(1- \theta) \cdot \dfrac{(x - \varphi)^2}{2\sigma^2} \right)  \notag \\
    & = \; \dfrac{1}{\sqrt{2\pi\sigma^2}} \exp\left( - \dfrac{x^2}{2\sigma^2} \right) \exp\left( \dfrac{2x\left(\theta(\varphi-\Delta)+(1-\theta)\varphi\right) - \theta(\varphi-\Delta)^2 - (1-\theta)\varphi^2}{2\sigma^2} \right) \notag \\
    & = \; g_0(x) \exp\left( \dfrac{2x\left(\varphi-\theta\Delta\right) - \theta(\varphi-\Delta)^2 - (1-\theta)\varphi^2}{2\sigma^2} \right) \notag \\
    & = \; g_0(x) \exp\left( - \dfrac{\theta(\varphi-\Delta)^2 + (1-\theta)\varphi^2}{2\sigma^2} \right) \notag \\
    & = \; g_0(x) \exp\left( - \dfrac{\theta(1-\theta)^2\Delta^2 + (1-\theta)\theta^2\Delta^2}{2\sigma^2} \right) \notag \\
    & = \; g_0(x) \exp\left( - \dfrac{\theta(1-\theta)\Delta^2}{2\sigma^2} \right) \notag \\
    & = \; g_0(x) \exp\left( - \varepsilon\theta(1-\theta) \right).
\end{align}
Here, the first equality follows from the definition of the Gaussian density. The second equality follows by collecting the normalizing constants and expanding the two quadratic terms in the exponent. The third equality follows from the definition of $g_0(x)$ and the identity $\theta(\varphi-\Delta)+(1-\theta)\varphi \; = \; \varphi-\theta\Delta.$ The fourth equality uses the definition $\theta=\varphi/\Delta$, which implies $\varphi-\theta\Delta=0$. The fifth equality uses $\varphi=\theta\Delta$, while the sixth equality follows by simplifying
\begin{align*}
    \theta(1-\theta)^2\Delta^2 + (1-\theta)\theta^2\Delta^2 \; = \; \theta(1-\theta)\Delta^2.
\end{align*}
The final equality follows from $\sigma^2=\Delta^2/(2\varepsilon)$. The equality~\eqref{eq:long_derivation} can be written as
\begin{align*}
    g_0(x) \;  = \; g_{\varphi-\Delta}(x)^\theta g_\varphi(x)^{1-\theta} e^{\varepsilon \theta(1-\theta)} \; & = \; e^{-\varepsilon (1-\theta)^2} g_{\varphi-\Delta}(x)^\theta  (e^\varepsilon g_\varphi(x) )^{1-\theta} \\
    & \leq \; g_{\varphi-\Delta}(x)^\theta  (e^\varepsilon g_\varphi(x) )^{1-\theta} \\
    & \leq \; \theta g_{\varphi-\Delta}(x) + (1-\theta)e^\varepsilon g_\varphi(x).
\end{align*}
where the second equality uses the identity $\varepsilon \theta(1-\theta) = \varepsilon(1-\theta) - \varepsilon (1 - \theta)^2$, first inequality holds since $-\varepsilon (1-\theta)^2 \leq 0$ and thus the first $\exp$-term lies in $[0,1]$, and the final inequality applies the weighted arithmetic mean-geometric mean inequality. This concludes the desired inequality~\eqref{eq:aux_1_result}. 

Inequality~\eqref{eq:aux_2_result} can be proved similarly. An analogous application of equality~\eqref{eq:long_derivation} yields 
\begin{align*}
    g_\Delta(x) \; =\;  e^{\varepsilon\theta(1-\theta)} g_\varphi(x)^\theta g_{\varphi+\Delta}(x)^{1-\theta} \; = \; \big(e^{\varepsilon(1-\theta)}g_\varphi(x)\big)^\theta
    g_{\varphi+\Delta}(x)^{1-\theta},
\end{align*}
and an application of the weighted arithmetic mean-geometric mean inequality concludes the desired inequality.
\end{proof}

\begin{lemma}\label{lem:aux_lem_3_third}
    For any $\delta \in (0,1/2)$ and $\varepsilon > 0$ satisfying $\varepsilon\geq \log(\delta^{-1} - 2)$, setting $\sigma = \Delta / \sqrt{2\varepsilon}$ for the $K=1$ multi-Gaussian mixture distribution gives a feasible $(\varepsilon, \delta)$-DP mechanism. Consequently, the $l_2$-loss of the multi-Gaussian mixture mechanism satisfies
    \begin{align*}
        L_2^{\mathrm{MG}}(\varepsilon, \delta) \; \leq \; \dfrac{\Delta^2}{2\varepsilon} + 2\Delta^2e^{-\varepsilon}.
    \end{align*}
\end{lemma}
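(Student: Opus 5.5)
The plan is to verify $(\varepsilon,\delta)$-DP through a \emph{pointwise} density domination with an explicit additive remainder of mass at most $\delta$. The $l_2$ bound will then follow from the closed-form noise power in Appendix~\ref{app:B_sl1_l2}.

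Write $g_\mu$ for the normalized $\mathcal{N}(\mu,\sigma^2)$ density with $\sigma=\Delta/\sqrt{2\varepsilon}$, as in Lemma~\ref{lem:aux_result}. For $K=1$, multiplying numerator and denominator of \eqref{eq:mG} by $e^\varepsilon$ gives
\begin{align*}
f_{\mathrm{m}}(x) \; = \; \frac{e^\varepsilon g_0(x) + g_{-\Delta}(x) + g_{\Delta}(x)}{e^\varepsilon + 2}.
\end{align*}
By Lemma~\ref{lemma:simplify_DP_definition}, together with the symmetry of $f_{\mathrm{m}}$ and the substitution $\varphi\mapsto-\varphi$, it suffices to show the following for every $\varphi\in[0,\Delta]$ and every $A\in\mathcal{F}$:
\begin{align*}
\int_A \big(f_{\mathrm{m}}(x) - e^\varepsilon f_{\mathrm{m}}(x-\varphi)\big)\,\mathrm{d}x \;\le\; \delta.
\end{align*}
The shifted density $e^\varepsilon f_{\mathrm{m}}(x-\varphi)$ has numerator $e^{2\varepsilon} g_\varphi + e^\varepsilon g_{\varphi-\Delta} + e^\varepsilon g_{\varphi+\Delta}$ over the same denominator $e^\varepsilon+2$.

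\textbf{Domination step.} Set $\theta=\varphi/\Delta\in[0,1]$ and bound the terms of the numerator of $f_{\mathrm{m}}$ one at a time.
\begin{itemize}
\item Multiplying \eqref{eq:aux_1_result} by $e^\varepsilon$ gives $e^\varepsilon g_0 \le \theta e^\varepsilon g_{\varphi-\Delta} + (1-\theta) e^{2\varepsilon} g_\varphi$.
\item Inequality \eqref{eq:aux_2_result} gives $g_\Delta \le \theta e^{\varepsilon(1-\theta)} g_\varphi + (1-\theta) g_{\varphi+\Delta}$.
\end{itemize}
Summing these two bounds, I then compare coefficients with the numerator of $e^\varepsilon f_{\mathrm{m}}(x-\varphi)$.
\begin{itemize}
\item On $g_\varphi$: $(1-\theta)e^{2\varepsilon} + \theta e^{\varepsilon(1-\theta)} \le e^{2\varepsilon}$.
\item On $g_{\varphi-\Delta}$: $\theta e^\varepsilon \le e^\varepsilon$.
\item On $g_{\varphi+\Delta}$: $1-\theta \le e^\varepsilon$.
\end{itemize}
All Gaussian densities are nonnegative, so pointwise
\begin{align*}
f_{\mathrm{m}}(x) \;\le\; e^\varepsilon f_{\mathrm{m}}(x-\varphi) + \frac{g_{-\Delta}(x)}{e^\varepsilon+2}.
\end{align*}
Integrating over any $A$ bounds the privacy loss by $\int g_{-\Delta}/(e^\varepsilon+2) = 1/(e^\varepsilon+2)$. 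The hypothesis $\varepsilon\ge\log(\delta^{-1}-2)$ is exactly $e^\varepsilon+2\ge \delta^{-1}$, so this remainder is at most $\delta$, which gives $(\varepsilon,\delta)$-DP.

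\textbf{Loss bound.} From Appendix~\ref{app:B_sl1_l2} with $K=1$,
\begin{align*}
L_2^{\mathrm{MG}} \;=\; \sigma^2 + \frac{2e^{-\varepsilon}\Delta^2}{1+2e^{-\varepsilon}} \;\le\; \frac{\Delta^2}{2\varepsilon} + 2\Delta^2 e^{-\varepsilon}.
\end{align*}
The optimal mechanism's loss is no larger than that of this feasible choice of $\sigma$, so the bound transfers to $L_2^{\mathrm{MG}}(\varepsilon,\delta)$.

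\textbf{Main obstacle.} The delicate point is the bookkeeping in the domination step. One must pair each unshifted component with the right shifted neighbours so that Lemma~\ref{lem:aux_result} applies, and leave exactly one side component ($g_{-\Delta}$) unmatched, since its mass is what $\delta$ pays for. One must also make sure the chosen direction of the DP inequality is the one covered, via symmetry, by Lemma~\ref{lemma:simplify_DP_definition}. The coefficient check on $g_\varphi$ is the only inequality that uses anything nontrivial, namely $\theta e^{\varepsilon(1-\theta)}\le \theta e^{2\varepsilon}$.
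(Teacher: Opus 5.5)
Your proposal is correct and follows the paper's proof essentially step for step. Like the paper, you establish the pointwise domination $f_{\mathrm{m}}(x)\le e^\varepsilon f_{\mathrm{m}}(x-\varphi)+g_{-\Delta}(x)/(e^\varepsilon+2)$ by adding the two inequalities of Lemma~\ref{lem:aux_result} with weights proportional to $e^\varepsilon$ and $1$ (you merely clear the common denominator $e^\varepsilon+2$), and you then apply the same closed-form noise-power bound; the only cosmetic difference is that you get the reverse direction from Lemma~\ref{lemma:simplify_DP_definition} and symmetry, whereas the paper simply swaps the roles of $D$ and $D'$.
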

\begin{proof}
If $\sigma = \Delta / \sqrt{2\varepsilon}$ is feasible for $(\varepsilon, \delta)$-DP, then the $l_2$-loss bound follows immediately from the closed-form expression for the noise power derived in Appendix~\ref{app:B_sl1_l2}. That is, the $K=1$ multi-Gaussian mixture mechanism satisfies
\begin{align*}
    L_2^{\mathrm{MG}}(\varepsilon, \delta)
    =
    \mathbb E[\tilde X^2]
    &=
    \dfrac{
        e^{-\varepsilon}(\sigma^2+\Delta^2)
        +
        \sigma^2
        +
        e^{-\varepsilon}(\sigma^2+\Delta^2)
    }{
        1+2e^{-\varepsilon}
    } \; = \; 
    \sigma^2
    +
    \dfrac{2\Delta^2e^{-\varepsilon}}{1+2e^{-\varepsilon}} \; \leq \;  \dfrac{\Delta^2}{2\varepsilon}
    +
    2\Delta^2e^{-\varepsilon},
\end{align*}
where the last inequality uses $\sigma^2=\Delta^2/(2\varepsilon)$ since we set $\sigma = \Delta/\sqrt{2\varepsilon}$. Therefore, the remainder of this proof will prove the feasibility of $\sigma = \Delta / \sqrt{2\varepsilon}$ for the $K=1$ multi-Gaussian mixture mechanism. For simplicity of notation, set
\begin{align}\label{eq:set_for_simplicity}
    \sigma \coloneqq \frac{\Delta}{\sqrt{2\varepsilon}},
    \qquad
    p_\varepsilon \coloneqq \frac{1}{e^\varepsilon+2},
    \qquad
    q_\varepsilon \coloneqq \frac{e^\varepsilon}{e^\varepsilon+2}
    =
    e^\varepsilon p_\varepsilon .
\end{align}
We will study the DP condition by directly comparing the output densities associated with any two neighboring databases. To this end, consider any neighboring databases $(D,D') \in \mathcal{N}$ with a query difference $\varphi\in[0,\Delta]$. After translation, we may assume that the two query values are $0$ and $\varphi$. Let $g_\mu$ denote the density of $\mathcal N(\mu,\sigma^2)$. Using~\eqref{eq:mG} for $K=1$, the two output densities (the likelihoods of observing a realization $x$ under these two neighboring databases) are given by
\begin{align*}
    f_\varphi(x)
    &=
    p_\varepsilon g_{-\Delta}(x)
    +
    q_\varepsilon g_0(x)
    +
    p_\varepsilon g_\Delta(x), \\
    h_\varphi(x)
    &=
    p_\varepsilon g_{\varphi-\Delta}(x)
    +
    q_\varepsilon g_\varphi(x)
    +
    p_\varepsilon g_{\varphi+\Delta}(x).
\end{align*}
We will prove feasibility of the DP constraint by establishing the following stronger pointwise condition: the density $f_\varphi$ is dominated by $e^\varepsilon h_\varphi$ up to a residual Gaussian component. In particular, we will show that
\begin{align}\label{eq:pointwise_dp_target}
    f_\varphi(x)
    \; \leq \;
    e^\varepsilon h_\varphi(x)
    +
    p_\varepsilon  g_{-\Delta}(x)
    \qquad \forall x\in\mathbb R.
\end{align}
To see why this is sufficient, let $A\subseteq\mathbb R$ be any measurable set. Integrating~\eqref{eq:pointwise_dp_target} over $A$ gives
\begin{align*}
    \mathbb P[q(D)+\tilde X\in A] \; = \;  \int_{A} f_\varphi(x)\mathrm dx \; \leq \;  e^\varepsilon \int_{A} h_\varphi(x)\mathrm dx
    +
    p_\varepsilon \int_{A} g_{-\Delta}(x)\mathrm dx \; \leq \; 
    e^\varepsilon \mathbb P[q(D')+\tilde X\in {A}]
    +
    p_\varepsilon.
\end{align*}
Hence, once $p_\varepsilon\leq\delta$ (which translates as $\varepsilon \geq \log(\delta^{-1}-2)$, coinciding with the condition in the statement of the lemma), the desired DP inequality follows for the comparison from $D$ to $D'$. The reverse comparison follows by interchanging the roles of $D$ and $D'$.
It therefore remains to prove the pointwise bound~\eqref{eq:pointwise_dp_target}. 

To prove~\eqref{eq:pointwise_dp_target}, we multiply inequalities~\eqref{eq:aux_1_result} and~\eqref{eq:aux_2_result} of Lemma~\ref{lem:aux_result} by $q_\varepsilon$ and $p_\varepsilon$, respectively, and sum them to obtain:
\begin{align}\label{eq:ineq_aux_f_h}
    q_\varepsilon g_0(x)+p_\varepsilon g_\Delta(x)
    \; & \leq \; 
    \theta q_\varepsilon g_{\varphi-\Delta}(x)
    +
    \underbrace{\left[(1-\theta)e^\varepsilon q_\varepsilon
    +
    \theta e^{\varepsilon(1-\theta)}p_\varepsilon \right]}_{(*)} g_\varphi(x)
    +
    (1-\theta)p_\varepsilon g_{\varphi+\Delta}(x) \notag \\
    & \leq \; \theta q_\varepsilon g_{\varphi-\Delta}(x)
    +
   e^\varepsilon q_{\varepsilon} g_\varphi(x)
    +
    (1-\theta)p_\varepsilon g_{\varphi+\Delta}(x) \notag \\
    & \leq \; q_\varepsilon g_{\varphi-\Delta}(x)
    +
   e^\varepsilon q_{\varepsilon} g_\varphi(x)
    +
    q_\varepsilon g_{\varphi+\Delta}(x) \notag \\
    & = \; e^\varepsilon p_{\varepsilon} g_{\varphi-\Delta}(x)
    +
   e^\varepsilon q_{\varepsilon} g_\varphi(x)
    +
    e^\varepsilon p_{\varepsilon} g_{\varphi+\Delta}(x) \notag \\
    & = \;  e^\varepsilon h_\varphi(x).
\end{align}
Here, the second inequality follows since the term $(*)$ is upper bounded by $(1-\theta)e^\varepsilon q_\varepsilon + \theta e^\varepsilon q_\varepsilon = e^\varepsilon q_\varepsilon$, as a result of $p_{\varepsilon} \leq q_{\varepsilon}$ and $e^{\varepsilon(1-\theta)}\leq e^\varepsilon$. The third inequality follows since $\theta q_\varepsilon\leq q_\varepsilon$ and $(1-\theta)p_\varepsilon\leq p_\varepsilon \leq q_\varepsilon$. The first equality follows from $q_\varepsilon = e^\varepsilon p_{\varepsilon}$, and the final equality recognizes the definition of $h_\varphi(x)$.

The desired inequality now can be concluded as
\begin{align*}
    f_\varphi(x) \; = \;
    p_\varepsilon g_{-\Delta}(x)
    +
    q_\varepsilon g_0(x)
    +
    p_\varepsilon g_\Delta(x) \; \leq \;
    p_\varepsilon g_{-\Delta}(x)
    +
    e^\varepsilon h_\varphi(x),
\end{align*}
where the first equality is due to the definition of $f_\varphi(x)$ and the inequality is due to~\eqref{eq:ineq_aux_f_h}.
\end{proof}
We are now ready to prove Proposition~\ref{prop:mg_is_better_strictly}. \\
\textbf{Proof of Proposition~\ref{prop:mg_is_better_strictly}.} 
Auxiliary Lemma~\ref{lem:aux_lem_1_first} yields
\begin{align*}
    L_2^{\mathrm{AG}}(\varepsilon,\delta)
    \; = \;
    \dfrac{\Delta^2}{2\varepsilon}
    +
    c_\delta \dfrac{\Delta^2}{\varepsilon^{3/2}}
    +
    \mathcal{O}(\varepsilon^{-2}),
\end{align*}
where for simplicity of presentation we set
\begin{align*}
    c_\delta \coloneqq \dfrac{\lvert \Phi^{-1}(\delta) \rvert}{\sqrt{2}} > 0.
\end{align*}
The remainder term $\mathcal{O}(\varepsilon^{-2})$ is asymptotically smaller than the positive correction term $\varepsilon^{-3/2}$. In particular, since $\varepsilon^{-2}=o(\varepsilon^{-3/2})$, there exists $\varepsilon_1(\delta)>0$ such that, for all $\varepsilon\geq \varepsilon_1(\delta)$,
\begin{align*}
    \mathcal{O}(\varepsilon^{-2})
    \; \geq \;
    - \dfrac{c_\delta}{2}\dfrac{\Delta^2}{\varepsilon^{3/2}}.
\end{align*}
Therefore, for all $\varepsilon\geq \varepsilon_1(\delta)$, we obtain
\begin{align*}
    L_2^{\mathrm{AG}}(\varepsilon,\delta)
    \; \geq \;
    \dfrac{\Delta^2}{2\varepsilon}
    +
    \dfrac{c_\delta}{2}\dfrac{\Delta^2}{\varepsilon^{3/2}}.
\end{align*}

Moreover, the third auxiliary Lemma~\ref{lem:aux_lem_3_third} shows that, for all $\varepsilon> 0$ satisfying $\varepsilon \geq \log(\delta^{-1}-2)$, we have
\begin{align*}
    L_2^{\mathrm{MG}}(\varepsilon,\delta)
    \; \leq \;
    \dfrac{\Delta^2}{2\varepsilon}
    +
    2\Delta^2e^{-\varepsilon}.
\end{align*}
Since $e^{-\varepsilon}=o(\varepsilon^{-3/2})$, there exists $\varepsilon_2(\delta)>0$ such that, for all $\varepsilon\geq \varepsilon_2(\delta)$,
\begin{align*}
    2e^{-\varepsilon}
    \; < \;
    \dfrac{c_\delta}{2}\varepsilon^{-3/2}.
\end{align*}
Now if we define
\begin{align*}
    \varepsilon_0
    \coloneqq
    \max\left\{\varepsilon_1(\delta), \;
        \varepsilon_2(\delta), \;
        \log(\delta^{-1}-2)
    \right\},
\end{align*}
then the above derivations conclude that for all $\varepsilon\geq \varepsilon_0$, we obtain
\begin{align*}
    L_2^{\mathrm{MG}}(\varepsilon,\delta)
    \; \leq \;
    \dfrac{\Delta^2}{2\varepsilon}
    +
    2\Delta^2e^{-\varepsilon}
    \; < \;
    \dfrac{\Delta^2}{2\varepsilon}
    +
    \dfrac{c_\delta}{2}\dfrac{\Delta^2}{\varepsilon^{3/2}}
    \; \leq \;
    L_2^{\mathrm{AG}}(\varepsilon,\delta).
\end{align*}
This proves the desired strict inequality.
}
\subsection{Proof of Lemma~\ref{lemma:mG_monotonicity}}
Throughout this proof, we will use the representation 
\begin{align*}
        f_\mathrm{m}(x; \sigma, K) = \dfrac{1}{\sum_{k=-K}^{K} e^{-\lvert k \rvert \varepsilon}} \sum_{k = -K}^{K} \mathcal{N}(x; k \Delta, \sigma),
\end{align*}
    where
    \begin{align*}
         \mathcal{N}(x; \mu, \sigma) := \dfrac{1}{\sqrt{2\pi}\sigma} \exp\left( - \dfrac{(x - \mu)^2}{2\sigma^2}\right)
    \end{align*}
    denotes the normalized Gaussian density function. Using such notation, we first provide an intermediary result.
\begin{lemma}\label{lemma:convolution_intermediary}
    For fixed $\varepsilon > 0$, $\delta \in (0,1)$, $\Delta > 0$, $K \in \mathbb{N}$, and $\varphi \in [0,\Delta]$, define
    \begin{align*}
        \ell(x; \sigma) := e^\varepsilon f_{\mathrm{m}}(x; \sigma, K) - f_{\mathrm{m}}(x + \varphi; \sigma, K)
    \end{align*}
    as the integrand of the left-hand side of the constraint~\eqref{eq:mG_condition} for any $\sigma > 0$. For all $\sigma' \geq \sigma > 0$, we have
    \begin{align*}
        \ell(x; \sigma') := (\ell(\cdot; \sigma) \otimes \mathcal{N}(\cdot; 0, \sqrt{\sigma'^2 - \sigma^2}))(x),
    \end{align*}
    where $\otimes$ denotes the convolution operator defined as $(f \otimes g)(x) := \int_{-\infty}^{\infty} f(y) g(x - y) \mathrm{d}y$.
\end{lemma}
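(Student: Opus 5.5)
The plan is to exploit the semigroup property of Gaussians under convolution: $\mathcal{N}(\cdot;\mu,\sigma)\otimes\mathcal{N}(\cdot;0,\tau)=\mathcal{N}(\cdot;\mu,\sqrt{\sigma^2+\tau^2})$ for $\tau\ge 0$. The case $\sigma'=\sigma$ is the degenerate convolution with a Dirac mass, so we set $\tau\coloneqq\sqrt{\sigma'^2-\sigma^2}>0$ and assume $\sigma'>\sigma$. We first write $f_{\mathrm{m}}$ in normalized form. Since $c_K=\sqrt{2\pi}\sigma\sum_k e^{-|k|\varepsilon}$, we have
\begin{align*}
f_{\mathrm{m}}(x;\sigma,K)=\sum_{k=-K}^K w_k\,\mathcal{N}(x;k\Delta,\sigma),\qquad w_k=\frac{e^{-|k|\varepsilon}}{\sum_{j=-K}^K e^{-|j|\varepsilon}}.
\end{align*}
The weights $w_k$ do not depend on $\sigma$; this is the key structural fact. (The representation displayed just before the lemma omits $e^{-|k|\varepsilon}$ inside the sum, and I would use the weighted version.)

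The first step is to compute the convolution of one component. Completing the square, or arguing via the sum of independent Gaussians $Y\sim\mathcal{N}(\mu,\sigma^2)$ and $Z\sim\mathcal{N}(0,\tau^2)$, gives $(\mathcal{N}(\cdot;\mu,\sigma)\otimes\mathcal{N}(\cdot;0,\tau))(x)=\mathcal{N}(x;\mu,\sigma')$. By linearity of convolution (a finite sum) it follows that $f_{\mathrm{m}}(\cdot;\sigma,K)\otimes\mathcal{N}(\cdot;0,\tau)=f_{\mathrm{m}}(\cdot;\sigma',K)$.

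The second step handles the shifted term. Convolution commutes with translation: if $T_\varphi g(x)=g(x+\varphi)$, then $(T_\varphi g)\otimes h=T_\varphi(g\otimes h)$. This follows from the substitution $y\mapsto y-\varphi$ in the convolution integral. Applying this with $g=f_{\mathrm{m}}(\cdot;\sigma,K)$ gives $f_{\mathrm{m}}(\cdot+\varphi;\sigma,K)\otimes\mathcal{N}(\cdot;0,\tau)=f_{\mathrm{m}}(\cdot+\varphi;\sigma',K)$. Alternatively, one can observe directly that the shifted density is a mixture of Gaussians with means $k\Delta-\varphi$ and reuse the first step.

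The final step combines the two by linearity of convolution with the fixed kernel $\mathcal{N}(\cdot;0,\tau)$:
\begin{align*}
(\ell(\cdot;\sigma)\otimes\mathcal{N}(\cdot;0,\tau))(x)=e^\varepsilon f_{\mathrm{m}}(x;\sigma',K)-f_{\mathrm{m}}(x+\varphi;\sigma',K)=\ell(x;\sigma').
\end{align*}
All the integrals involved are absolutely convergent, since each term is an integrable bounded function convolved with a density, so Fubini and linearity are justified.

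There is no real obstacle here. The only point needing care is confirming that the normalization $1/c_K$ scales correctly with $\sigma$, that is, that the mixture weights stay $\sigma$-independent once the components are written as normalized densities. This is exactly what makes the identity exact rather than holding only up to a constant.
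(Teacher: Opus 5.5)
Your proof is correct and follows essentially the same argument as the paper: the Gaussian convolution identity $\mathcal{N}(\cdot;k\Delta,\sigma)\otimes\mathcal{N}(\cdot;0,\tau)=\mathcal{N}(\cdot;k\Delta,\sigma')$, then linearity of convolution over the $\sigma$-independent mixture weights to get $f_{\mathrm{m}}(\cdot;\sigma',K)$, then linearity again to pass to $\ell$. You are right that the paper's displayed representation drops the factor $e^{-\lvert k\rvert\varepsilon}$ inside the sum, and your explicit translation-commutation step for the shifted term $f_{\mathrm{m}}(x+\varphi;\sigma,K)$ and your remark on the degenerate case $\sigma'=\sigma$ make precise two points the paper uses without comment.
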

\begin{proof}
    For $\sigma' \geq \sigma$, if we denote by $\tau = \sqrt{\sigma'^2 - \sigma^2}$, then from~\citep{bromiley2003products} the following identity holds:
    \begin{align*}
         \mathcal{N}(x; k \Delta, \sigma') = \mathcal{N}(x; k \Delta, \sigma) \otimes \mathcal{N}(x; 0, \tau).
    \end{align*}
    Note that we have a slight breach of notation since with $\mathcal{N}(x; k \Delta, \sigma) \otimes \mathcal{N}(x; 0, \tau)$ we mean $(\mathcal{N}(\cdot; k \Delta, \sigma) \otimes \mathcal{N}(\cdot; 0, \tau)) (x)$. We can now show
    \begin{align*}
        f_\mathrm{m}(x; \sigma', K) & = \dfrac{1}{\sum_{k=-K}^{K} e^{-\lvert k \rvert \varepsilon}} \sum_{k = -K}^{K} \mathcal{N}(x; k \Delta, \sigma') \\
        & = \dfrac{1}{\sum_{k=-K}^{K} e^{-\lvert k \rvert \varepsilon}} \sum_{k = -K}^{K} \mathcal{N}(x; k \Delta, \sigma) \otimes \mathcal{N}(x; 0, \tau) \\
        & = \dfrac{1}{\sum_{k=-K}^{K} e^{-\lvert k \rvert \varepsilon}} \left[\left(\sum_{k = -K}^{K} \mathcal{N}(x; k \Delta, \sigma)\right) \otimes \mathcal{N}(x; 0, \tau)\right]\\
        & = \left(\dfrac{1}{\sum_{k=-K}^{K} e^{-\lvert k \rvert \varepsilon}} \sum_{k = -K}^{K} \mathcal{N}(x; k \Delta, \sigma)\right) \otimes \mathcal{N}(x; 0, \tau) \\
        & = f_\mathrm{m}(x; \sigma, K) \otimes \mathcal{N}(x; 0, \tau),
    \end{align*}
    where the first equality is due to the definition of $f_{\mathrm{m}}$, the second equality is due to the convolution property, the third equality is due to the distributivity property of the convolution operator, the fourth equality is due to the associativity property of the convolution operator with scalar multiplication, and the final equality substitutes the definition of $f_\mathrm{m}(x; \sigma, K)$. Thanks to this identity, we can derive
    \begin{align*}
        \ell(x; \sigma') & = e^\varepsilon f_{\mathrm{m}}(x; \sigma', K) - f_{\mathrm{m}}(x + \varphi; \sigma', K) \notag \\
        & = e^\varepsilon f_\mathrm{m}(x; \sigma, K) \otimes \mathcal{N}(x; 0, \tau) - f_\mathrm{m}(x + \varphi; \sigma, K) \otimes \mathcal{N}(x; 0, \tau) \notag \\
        & = \left( e^\varepsilon f_\mathrm{m}(x; \sigma, K) - f_\mathrm{m}(x + \varphi; \sigma, K) \right) \otimes \mathcal{N}(x; 0, \tau) \notag \\
        & = \ell(x; \sigma) \otimes \mathcal{N}(x; 0, \tau),
    \end{align*}
    which coincides with the statement of this intermediary lemma.
\end{proof}
We now prove Lemma~\ref{lemma:mG_monotonicity}.\\
\textbf{Proof of Lemma~\ref{lemma:mG_monotonicity}.} For the given $\eta \in (0,1)$, select any $\beta \leq \sqrt{2\pi}\eta \sigma \delta$ and assume a given $\sigma > 0$ satisfies~\eqref{eq:mG_condition}. Such selection of $\beta$ is also valid for $\sigma' \geq \sigma$. Hence, for any fixed $\varphi \in \{0, \beta, 2\beta,\ldots,\Delta\}$, if we denote by $\ell(x; \sigma) := e^\varepsilon f_{\mathrm{m}}(x; \sigma, K) - f_{\mathrm{m}}(x + \varphi; \sigma, K)$, then we should show that
    \begin{align}\label{eq:lemma_mGcondition_show}
        & \int_{-\infty}^{\infty} \min\{ \ell(x; \sigma'), 0\} \mathrm{d}x \; \geq \; \int_{-\infty}^{\infty}\min\{ \ell(x; \sigma), 0\} \mathrm{d}x
    \end{align}
    holds to conclude the proof. The $\min$-term on the left-hand side of inequality~\eqref{eq:lemma_mGcondition_show} satisfies
    \begin{align}\label{eq:conv_inequality}
        \min\{\ell(x; \sigma') ,0 \} & =  \min\left\{ \int_{-\infty}^{\infty} \ell(x - t; \sigma, K) \mathcal{N}(t; 0, \tau) \mathrm{d}t,0 \right\} \notag \\
        & \geq \int_{-\infty}^{\infty} \min\{\ell(x - t; \sigma, K), 0\} \mathcal{N}(t; 0, \tau) \mathrm{d}t,
    \end{align}
    where the equality follows from Lemma~\ref{lemma:convolution_intermediary} as well as the definition of the convolution operator, and the inequality follows from Jensen's inequality applied to the concave function $t \mapsto \min\{t, 0\}$ which is applicable since $\mathcal{N}(t; 0, \tau)$ is a density function. Hence, the desired inequality is shown as:
    \begin{align*}
        \int_{-\infty}^{\infty} \min\{ \ell(x; \sigma'), 0\} \mathrm{d}x & \geq \int_{-\infty}^{\infty} 
        \int_{-\infty}^{\infty} \min\{\ell(x - t; \sigma, K), 0\} \mathcal{N}(t; 0, \tau) \mathrm{d}t
        \mathrm{d}x \\ 
        & =  
        \int_{-\infty}^{\infty} \left[\int_{-\infty}^{\infty} \min\{\ell(x - t; \sigma, K), 0\} \mathcal{N}(t; 0, \tau)
        \mathrm{d}x \right] \mathrm{d}t \\
        & = \int_{-\infty}^{\infty} \left[\int_{-\infty}^{\infty} \min\{\ell(x; \sigma), 0\} \mathcal{N}(t; 0, \tau)
        \mathrm{d}x \right] \mathrm{d}t \\
        & = \int_{-\infty}^{\infty} \mathcal{N}(t; 0, \tau) \left[\int_{-\infty}^{\infty} \min\{\ell(x; \sigma), 0\} 
        \mathrm{d}x \right] \mathrm{d}t \\
        & = \underbrace{\left[\int_{-\infty}^{\infty} \mathcal{N}(t; 0, \tau)  \mathrm{d}t \right]}_{=1} \left[\int_{-\infty}^{\infty} \min\{\ell(x; \sigma), 0\} 
        \mathrm{d}x \right]\\
        & = \int_{-\infty}^{\infty} \min\{\ell(x; \sigma), 0\} 
        \mathrm{d}x.
    \end{align*}
    Here, the inequality follows from~\eqref{eq:conv_inequality}, the equalities that follow change the order of the integral, apply variable change $x = x - t$, rearrange constants within integrals, note that $\mathcal{N}(t; 0, \tau)$ is a density function integrating to $1$. The monotonicity proof is complete since we proved the desired inequality~\eqref{eq:lemma_mGcondition_show}.

    Now, to see that $\sigma = \sigma_\mathrm{g}$ satisfies~\eqref{eq:mG_condition}, consider the stronger condition
    \begin{align*}
     \int_{-\infty}^{\infty} \min\{e^\varepsilon f_{\mathrm{m}}(x; \sigma, K) - f_{\mathrm{m}}(x + \varphi; \sigma, K), 0\}\mathrm{d}x + (1 - \eta) \delta  \geq 0 \quad \forall  \varphi \in [0,\Delta],
    \end{align*}
    which coincides with the definition of $(\varepsilon, (1-\eta)\delta)$-DP (\textit{cf.}~proof of Theorem~\ref{thm:mG}). A sufficient condition for this is when each mixture density $\frac{1}{\sqrt{2\pi}\sigma} \exp(- (x - \mu)^2 / (2\sigma^2))$ of~\eqref{eq:mG} satisfies $(\varepsilon, (1-\eta)\delta)$-DP since approximate DP is closed under mixtures~\citep[\S2.1]{selvi2025differential}. Moreover, since approximate DP is also closed under shifting distributions (\textit{cf.}~proof of Lemma~\ref{lemma:simplify_DP_definition}), it is sufficient to show the zero-mean Gaussian density $\frac{1}{\sqrt{2\pi}\sigma} \exp(- x^2 / (2\sigma^2))$ satisfies $(\varepsilon, (1-\eta)\delta)$-DP. We can therefore borrow the standard deviation $\sigma_\mathrm{g}$ needed for the Gaussian mechanism~\citep[Thm 3.22]{dwork2014algorithmic} or the analytic Gaussian mechanism~\citep{analytic_Gaussian}, and conclude that a sufficient condition for~\eqref{eq:mG_condition} is $\sigma = \sigma_{\mathrm{g}}$.
    \hfill $\square$

\subsection{Proof of Theorem~\ref{thm:runtime_mG}}
{\color{black}
We first note that, for the runtime analysis, we may set $\Delta=1$ without loss of generality. To see this, let
$f_{\mathrm m}^{\Delta}(\cdot;\sigma,K)$ denote the multi-Gaussian density with sensitivity $\Delta$ (which we did not parameterize before as $\Delta > 0$ was fixed). Under the change of variables $y=x/\Delta$ and $\lambda=\sigma/\Delta$, we have
\begin{align*}
    f_{\mathrm m}^{\Delta}(\Delta y;\Delta\lambda,K)
    =
    \dfrac{1}{\Delta} f_{\mathrm m}^{1}(y;\lambda,K).
\end{align*}
Thus, the privacy integral in~\eqref{eq:mG_condition} with shift $\varphi\in[0,\Delta]$ is equivalent to the corresponding normalized integral with shift $\varphi/\Delta\in[0,1]$. Moreover, the grid size also normalizes, since
\begin{align*}
    \dfrac{\beta}{\Delta}
    =
    \dfrac{1}{\left\lceil 1/(\sqrt{2\pi}\eta\lambda\delta)\right\rceil}.
\end{align*}
Hence the algorithmic complexity depends on the dimensionless scale $\lambda=\sigma/\Delta$, and the scale for the original sensitivity-$\Delta$ problem is recovered as $\sigma=\Delta\lambda$.
}

The correctness of Algorithm~\ref{alg:mG} follows directly from Lemma~\ref{lemma:mG_monotonicity}. The first step of this algorithm is to compute the analytic Gaussian mechanism, which sets $\sigma_\mathrm{g}$ in Lemma~\ref{lemma:mG_monotonicity}. In light of the monotonicity presented in this lemma, the bisection method finds the smallest $\sigma$ value so that any $\sigma' \geq \sigma$ satisfies condition~\eqref{eq:mG_condition}. The computation for checking whether a given $\sigma$ satisfies this condition is given in Algorithm~\ref{alg:compute_psi}, which simply evaluates the left-hand side of the constraint~\eqref{eq:mG_condition} for every $\varphi$ in the grid. Note that this grid in $\varphi$ is a function of $\beta$. To make sure that \textit{(i)} $\beta \leq \sqrt{2\pi}\eta \sigma \delta$ and also \textit{(ii)} $\beta$ divides $\Delta$, Algorithm~\ref{alg:mG} sets $\beta$ as $\Delta / \lceil \Delta/(\sqrt{2 \pi} \eta \sigma \delta)\rceil$.

For the runtime complexity, first note that we treat the tolerance in the bisection search method as a constant. One needs to be careful in implementation since numerical tolerances should be reflected in the presentation of $\delta$ to ensure feasibility.

Note that Algorithm~\ref{alg:mG} first finds the standard deviation $r = \sigma_\mathrm{g}$ needed by the analytic Gaussian mechanism. The implementation of the analytic Gaussian~\citep{analytic_Gaussian} is via a bisection search method where the root is sought in a region with an upper bound for $\sigma_{\mathrm{g}}$ as {\color{black}$\mathcal{O}(\varepsilon^{-1} \sqrt{\log(\delta^{-1})})$}, where we use the normalization $\Delta=1$. The function, however, can be evaluated in constant time. Given that Algorithm~\ref{alg:mG} applies a bisection search within $(0, r)$, the runtime of the rest of this algorithm dominates the computation of the analytic Gaussian. We borrow the upper bound {\color{black}$r = \mathcal{O}(\varepsilon^{-1} \sqrt{\log(\delta^{-1})})$} for the analysis of our bisection search method next.

Our bisection method is restricted to an interval of radius {\color{black}$\mathcal{O}(\varepsilon^{-1} \sqrt{\log(\delta^{-1})})$}. {\color{black}Since our paper also considers regimes with $\varepsilon>1$, we write the logarithmic dependence on $\varepsilon$ as $\log(1+\frac{1}{\varepsilon})$ rather than $\log(1/\varepsilon)$; this keeps the bound nonnegative and treats the contribution of $\varepsilon$ as constant once $\varepsilon$ is large. Similarly, we write the $\delta$-dependent logarithmic term as $\log(1+\log \frac{1}{\delta})$ to avoid a regime-specific discussion of when $\log\log(1/\delta)$ is nonnegative.} The number of bisection iterations is therefore
\begin{align*}
    {\color{black}
    \mathcal{O}\left(
        \log\left(1+\dfrac{1}{\varepsilon}\right)
        +
        \log\left(1+\log \dfrac{1}{\delta}\right)
    \right)}.
\end{align*}
The iterations of this bisection search, as shown in Algorithm~\ref{alg:compute_psi}, compute an integral, in total
\begin{align*}
    {\color{black}
    \mathcal{O}\left( \dfrac{1}{\eta\delta} \right)}
\end{align*}
times. The computation of this integral has complexity
\begin{align*}
    {\color{black}
    \mathcal{O}\left(K^2\right)}
\end{align*}
because, after the normalization $\Delta=1$, the integral can be approximated on a fixed support of radius $\mathcal{O}(K)$ since the Gaussian tails vanish exponentially fast, and the evaluation of $f_{\mathrm{m}}(x;\sigma, K)$ takes time $\mathcal{O}(K)$. Multiplying the three displayed bounds concludes the runtime represented in the statement of this theorem.

\subsection{Proof of Corollary \ref{coro:zcdp}}\label{subsec:zcdp}
For any two distribution $P, Q \in \mathcal{F}$, the $\alpha$-R\'{e}nyi divergence, denoted by $D_\alpha(P \| Q )$, is given by
\begin{equation}
    D_\alpha(P \| Q) = \frac{1}{\alpha -1} \log\left( \int_{x \in \Omega} P(x)^\alpha Q(x)^{1-\alpha} \mathrm{d}x\right).   
\end{equation}
\begin{definition}[Zero-Concentrated Differential Privacy (zCDP; \citealt{bun2016concentrated})]
A mechanism $\mathcal{A}: \mathcal{N} \mapsto \mathcal{F}$ is called $\rho$-zCDP if for all neighbors $(D, D') \in \mathcal{N}$, it satisfies
\begin{equation*}
    D_\alpha(\mathcal{A}(D) \| \mathcal{A}(D')) \leq \rho \alpha,  \quad \forall 
\alpha \in (1, \infty).
\end{equation*}
\end{definition}
An important property of $\alpha$-Renyi divergence, according to \citet[Lemma 15]{bun2016concentrated}, is the quasi-convexity: the convex combinations $\sum_{i} w_i P_i$ and $\sum_{i} w_i Q_i$ with $\sum w_i =1$ and $w_i \geq 0$ satisfy
\begin{equation}
    D_\alpha \left(\sum_{i} w_i P_i \|\sum_{i} w_i Q_i \right) \leq \max_i\left\{ D_\alpha(P_i \| Q_i) \right\}.
\end{equation}
According to Theorem \ref{thm:zcdp}, the Gaussian mechanism with noise $\mathcal{N}(\mu, \sigma^2)$ satisfies $\left(\frac{\Delta^2}{2\sigma^2}\right)$-zCDP. That is, for any $\mu$ and any neighbors $(D, D') \in \mathcal{N}$ satisfying $|q(D) - q(D')| \leq \Delta$, we have
\begin{equation}\label{ineq:zcdp-gaussian}
    D_\alpha\left( \mathcal{N}( \mu + q(D), \sigma) \|  \mathcal{N}( \mu + q(D'), \sigma) \right) \leq \alpha\left(\frac{\Delta^2}{2\sigma^2}\right),  \quad \forall 
\alpha \in (1, \infty).
\end{equation}
Thus, our multi-Gaussian mixture mechanism $
    f_{\mathrm{m}}(x; \sigma, K)  = \dfrac{1}{c_K} \sum_{k=-K}^K e^{-\lvert k \rvert \varepsilon} \phi(x; k\Delta, \sigma)$
is also $\left(\frac{\Delta^2}{2\sigma^2}\right)$-zCDP, which follows from the fact that for any $\alpha \in (1, \infty)$ we have
\begin{equation*}
\begin{aligned}
    \; & D_\alpha\left(\dfrac{1}{c_K} \sum_{k=-K}^K e^{-\lvert k \rvert \varepsilon} \phi(\cdot ; q(D) + k\Delta, \sigma) \Big\| \dfrac{1}{c_K} \sum_{k=-K}^K e^{-\lvert k \rvert \varepsilon} \phi(\cdot ; q(D') + k\Delta, \sigma) \right) \\
    = \; & D_\alpha\left(\sum_{k = -K}^K \dfrac{e^{-\lvert k \rvert \varepsilon}}{ \sum_{k' = -K}^K e^{-\lvert k' \rvert \varepsilon}} \mathcal{N}(q(D) + k\Delta, \sigma) \Big\| \sum_{k = -K}^K  \dfrac{e^{-\lvert k \rvert \varepsilon}}{ \sum_{k' = -K}^K e^{-\lvert k' \rvert \varepsilon}} \mathcal{N}(q(D') + k\Delta, \sigma)  \right) \\
    \leq \; & \max_{k \in \{-K, -K+1, \cdots, K\} }\left\{ D_\alpha\left(  \mathcal{N}(q(D) + k\Delta, \sigma) \Big\|  \mathcal{N}(q(D') + k\Delta, \sigma) \right)  \right\} \\
    \leq \; & \alpha\left(\frac{\Delta^2}{2\sigma^2}\right)
\end{aligned}
\end{equation*}
where the equality writes the distributions as mixtures of normalized Gaussians, the first inequality holds because of the quasi-convexity of $\alpha$-Renyi divergence and the second inequality holds from \eqref{ineq:zcdp-gaussian}.

{\color{black}
\subsection{Proof of Corollary~\ref{coro:explicit_composition}}
By Corollary~\ref{coro:zcdp}, the $t$th multi-Gaussian mixture mechanism is
$\rho_t$-zCDP with $\rho_t=\Delta_t^2/(2\sigma_t^2)$. Since zCDP composes
additively, the composition is $\rho_{\mathrm{tot}}$-zCDP with
\begin{align*}
    \rho_{\mathrm{tot}}=\sum_{t=1}^T \rho_t
    =
    \sum_{t=1}^T \frac{\Delta_t^2}{2\sigma_t^2}.
\end{align*}
The standard conversion from $\rho$-zCDP to approximate DP~\citep[Lemma 3.5]{bun2016concentrated} then implies that,
for any $\delta_{\mathrm{tot}}\in(0,1)$, the composed mechanism is
\begin{align*}
    \left(
    \rho_{\mathrm{tot}}
    +
    2\sqrt{\rho_{\mathrm{tot}}\log(1/\delta_{\mathrm{tot}})},
    \delta_{\mathrm{tot}}
    \right)\text{-DP}.
\end{align*}
    
This proves the claim.
}

\section{Proofs for Section~\ref{sec:qG}}
\subsection{The quasi-Gaussian mixture distribution is well-defined}\label{app:well-defined}
To show that the density function~\eqref{eq:qG} of the quasi-Gaussian mixture distribution is well-defined, fix arbitrary $\sigma > 0$, and note that $f_\mathrm{q}(x; \sigma) \geq 0$ for all $x \in \mathbb{R}$ since all terms in its definition are nonnegative. Moreover, the density function integrates to $1$ since
\begin{align*}
    \int_{-\infty}^{\infty} f_\mathrm{q}(x; \sigma) \mathrm{d}x & = \dfrac{1}{c} \int_{-\infty}^{\infty} e^\varepsilon \exp\left(- \frac{x^2}{2\sigma^2}\right) + \exp\left(- \frac{(|x| - \Delta)^2}{2\sigma^2}\right) \mathrm{d}x\\
    & = \dfrac{e^\varepsilon}{c} \int_{-\infty}^{\infty}  \exp\left(- \frac{x^2}{2\sigma^2}\right) \mathrm{d}x + \dfrac{2}{c} \int_{0}^{\infty} \exp\left(- \frac{1}{2} \left(\frac{x - \Delta}{\sigma}\right)^2 \right) \mathrm{d}x \\
    & = \dfrac{e^\varepsilon}{c} \sqrt{2\pi}\sigma +  \dfrac{2}{c} \int_{0}^{\infty} \exp\left(- \frac{1}{2} \left(\frac{x - \Delta}{\sigma}\right)^2 \right) \mathrm{d}x\\
    & = \dfrac{e^\varepsilon}{c} \sqrt{2\pi}\sigma +  \dfrac{2}{c} \sigma \int_{-\Delta / \sigma}^{\infty} \exp\left(- \frac{1}{2} y^2 \right) \mathrm{d}y \\
    & = \dfrac{e^\varepsilon}{c} \sqrt{2\pi}\sigma +  \dfrac{2}{c} \sqrt{2 \pi} \sigma \Phi\left(\dfrac{\Delta}{\sigma}\right)  \\
    & = \dfrac{1}{c} \sqrt{2 \pi} \sigma \left(e^\varepsilon + 2 \Phi\left(\dfrac{\Delta}{\sigma}\right) \right) = 1.
\end{align*}
The first equality follows from Definition~\ref{def:qG}, the second equality applies algebraic manipulations, the third equality recognized the first integrand as a Gaussian density function, the fourth equality applies variable change $y = (x - \Delta) / \sigma$, the fifth equality recognizes the integrand as a Gaussian density function, and the final equations rearrange terms and substitute the definition of $c$.

In Figure~\ref{fig:qG}, we visualize the probability density function~\eqref{eq:qG} of the quasi-Gaussian mixture distribution for $\varepsilon = 1.0$ for smaller and larger values of $\sigma$. 

\begin{figure}[!t]
    \centering
    \includegraphics[width=0.48\linewidth]{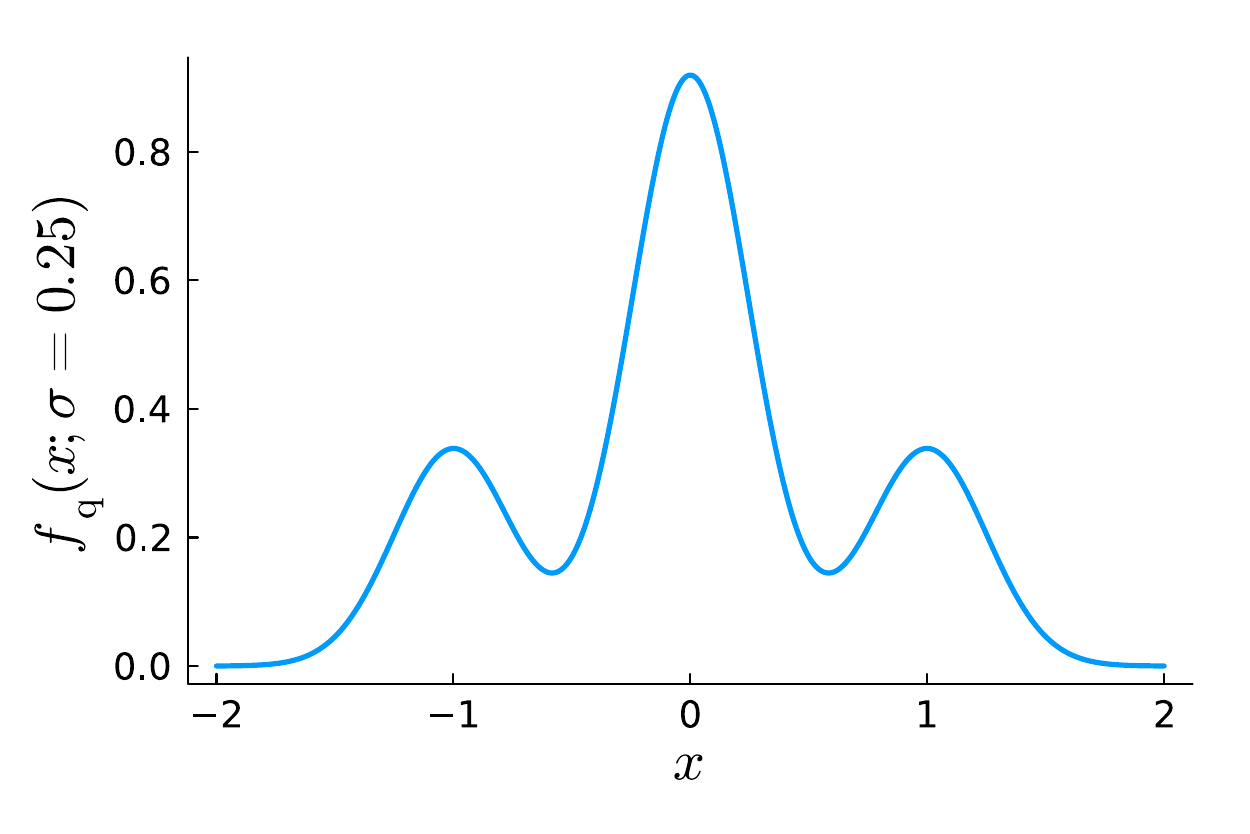}
    \includegraphics[width=0.48\linewidth]{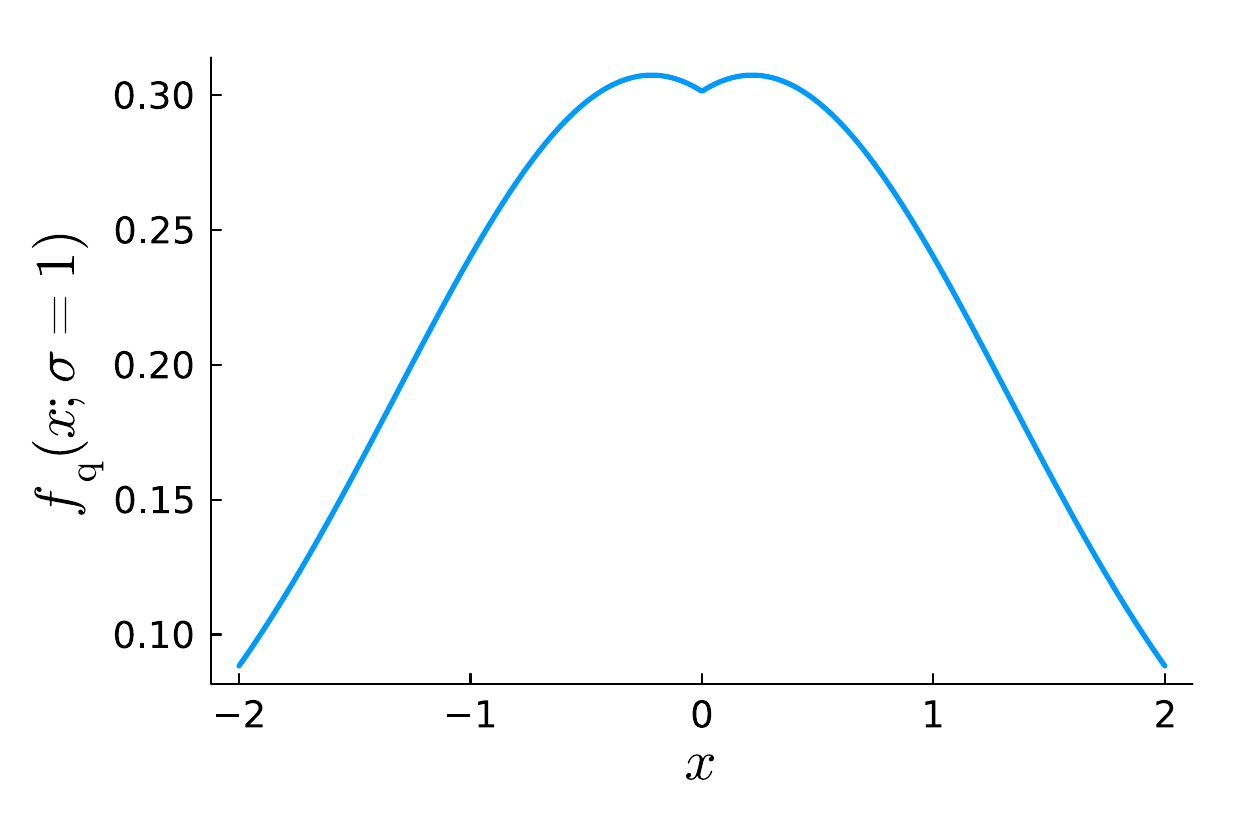}
    \caption{The probability density function~\eqref{eq:qG} of the quasi-Gaussian mixture distribution for $\varepsilon = 1$ where $\sigma$ is set to $0.25$ (left) and $1$ (right).}
    \label{fig:qG}
\end{figure}

\subsection{The cumulative distribution function of the quasi-Gaussian mixture distribution}\label{app:cdf-qG}
For fixed $\varepsilon, \Delta, \sigma > 0$, the cumulative distribution function $F_\mathrm{q}(\bar{x}; \sigma)$ of the quasi-Gaussian mixture distribution is given by:
\begin{align*}
    F_\mathrm{q}(\bar{x}; \sigma) & = \int_{-\infty}^{\bar{x}} f_\mathrm{q}(x; \sigma) \mathrm{d}x \\
    & = \underbrace{\dfrac{1}{c} \int_{-\infty}^{\bar{x}} e^\varepsilon \exp\left(- \frac{x^2}{2\sigma^2}\right) \mathrm{d}x}_{(i)} + \underbrace{\dfrac{1}{c} \int_{-\infty}^{\bar{x}} \exp\left(- \frac{(|x| - \Delta)^2}{2\sigma^2}\right) \mathrm{d}x}_{(ii)}
\end{align*}
Here, term $(i)$ can be written in closed form:
\begin{align*}
    \dfrac{1}{c} \int_{-\infty}^{\bar{x}} e^\varepsilon \exp\left(- \frac{x^2}{2\sigma^2}\right) \mathrm{d}x = \dfrac{e^\varepsilon}{c} \sqrt{2 \pi} \sigma \Phi\left( \dfrac{\bar{x}}{\sigma} \right) = \dfrac{e^\varepsilon \Phi\left(\dfrac{\bar{x}}{\sigma}\right)}{e^\varepsilon + 2 \Phi\left(\dfrac{\Delta}{\sigma}\right)}.
\end{align*}
Term $(ii)$ can be studied in two cases. For the first case, suppose $\bar{x} < 0$. We have
\begin{align*}
    \dfrac{1}{c} \int_{-\infty}^{\bar{x}} \exp\left(- \frac{(|x| - \Delta)^2}{2\sigma^2}\right) \mathrm{d}x & = \dfrac{1}{c} \int_{-\infty}^{\bar{x}} \exp\left(- \frac{(x + \Delta)^2}{2\sigma^2}\right) \mathrm{d}x \\
    & = \dfrac{\sqrt{2 \pi} \sigma}{c} \Phi\left(\dfrac{\bar{x} + \Delta}{\sigma}\right) = \dfrac{\Phi\left(\dfrac{\bar{x} + \Delta}{\sigma}\right)}{e^\varepsilon + 2 \Phi\left(\dfrac{\Delta}{\sigma}\right)}.
\end{align*}
For the case $\bar{x} \geq 0$, we can write term $(ii)$ as
\begin{align*}
    \dfrac{1}{c} \int_{-\infty}^{\bar{x}} \exp\left(- \frac{(|x| - \Delta)^2}{2\sigma^2}\right) \mathrm{d}x & = \dfrac{1}{c} \int_{-\infty}^{0} \exp\left(- \frac{(x + \Delta)^2}{2\sigma^2}\right) \mathrm{d}x + \dfrac{1}{c} \int_{0}^{\bar{x}} \exp\left(- \frac{(x - \Delta)^2}{2\sigma^2}\right) \mathrm{d}x \\
    & = \dfrac{\sqrt{2\pi} \sigma}{c} \Phi\left(\dfrac{\Delta}{\sigma}\right) + \dfrac{1}{c} \int_{0}^{\bar{x}} \exp\left(- \frac{(x - \Delta)^2}{2\sigma^2}\right) \mathrm{d}x \\
    & = \dfrac{\sqrt{2\pi} \sigma}{c} \Phi\left(\dfrac{\Delta}{\sigma}\right) + \dfrac{\sqrt{2\pi} \sigma}{c} \left( \Phi\left(\dfrac{\bar{x} - \Delta}{\sigma}\right) - \Phi\left(-\dfrac{\Delta}{\sigma}\right) \right)\\
    & = \dfrac{\Phi\left(\dfrac{\Delta}{\sigma}\right) + \Phi\left(\dfrac{\bar{x} - \Delta}{\sigma}\right)  - \Phi\left(- \dfrac{\Delta}{\sigma}\right)}{e^\varepsilon + 2 \Phi\left(\dfrac{\Delta}{\sigma}\right)}.
\end{align*}
We thus have
\begin{align*}
        F_\mathrm{q}(\bar{x}; \sigma) = \begin{cases}
        \dfrac{e^\varepsilon \Phi\left(\dfrac{\bar{x}}{\sigma}\right) +  \Phi\left(\dfrac{\bar{x} + \Delta}{\sigma}\right)}{e^\varepsilon + 2 \Phi\left(\dfrac{\Delta}{\sigma}\right)} & \text{ if } \bar{x} < 0 \\[2.5em]
        \dfrac{e^\varepsilon \Phi\left(\dfrac{\bar{x}}{\sigma}\right) + \Phi\left(\dfrac{\bar{x} - \Delta}{\sigma}\right) + \Phi\left(\dfrac{\Delta}{\sigma}\right)  - \Phi\left(- \dfrac{\Delta}{\sigma}\right)}{e^\varepsilon + 2 \Phi\left(\dfrac{\Delta}{\sigma}\right)} & \text{ if } \bar{x} \geq 0.
        \end{cases}
\end{align*}
\subsection{Sampling from the quasi-Gaussian mixture distribution}\label{app:sample-qG}
Since the quasi-Gaussian is a mixture distribution, we decide which of the two distributions to sample noise from via $\tilde{\theta} \sim \mathrm{Bernoulli}(e^\varepsilon / (e^\varepsilon + 2 \Phi(\Delta/ \sigma)))$. If $\tilde{\theta} = 1$, then we can simply sample noise from the zero-mean Gaussian distribution with standard deviation $\sigma$. If $\tilde{\theta} = 0$, then we sample noise from the distribution with density $\propto \exp(- (\lvert x \rvert - \Delta)^2)/(2\sigma^2) )$. This could be obtained by sampling from $\propto \exp(- (z - \Delta)^2)/(2\sigma^2) )\cdot \mathbb{I}[z \geq 0]$, where $\mathbb{I}$ denotes the indicator function, and then flipping the sign of $z$ with $1/2$ probability. The density $\propto \exp(- (z - \Delta)^2)/(2\sigma^2) )\cdot \mathbb{I}[z \geq 0]$ coincides with the truncated Gaussian distribution with mean $\Delta$ and standard deviation $\sigma$, whose inverse cumulative density function is given by $\Delta +  \sigma \Phi^{-1}( \Phi(- \Delta / \sigma) + p \cdot \Phi (\Delta / \sigma)), \ p \sim \mathrm{Uniform}(0,1)$. In summary, we can sample noise from the quasi-Gaussian mechanism via Algorithm~\ref{alg:sample_qG}, where $\mathrm{Uniform}(0,1)$ denotes the uniform distribution supported on $(0,1)$, $\mathrm{Bernoulli}(1/2)$ denotes the Bernoulli distribution with success probability $1/2$, and $\mathcal{N}(0, \sigma)$ denotes the zero-mean Gaussian with standard deviation $\sigma$.

\begin{algorithm}[!ht]
\caption{Sampling noise from the quasi-Gaussian distribution.}
\label{alg:sample_qG}
\begin{algorithmic}
    \STATE \textbf{Input:} privacy parameter $\varepsilon > 0$, sensitivity $\Delta > 0$, variance parameter $\sigma > 0 $
    \STATE \textbf{Output:} a sample from the quasi-Gaussian distribution with density~\eqref{eq:qG}
    \STATE Sample $\theta \sim \mathrm{Bernoulli}(e^\varepsilon / (e^\varepsilon + 2 \Phi(\Delta/ \sigma))$.
    \IF{$\theta = 1$}
        \STATE Sample $x \sim \mathcal{N}(0, \sigma)$.
    \ELSE
        \STATE Sample $p \sim \mathrm{Uniform}(0,1)$.
        \STATE Set $x = \Delta +  \sigma \Phi^{-1}( \Phi(- \Delta / \sigma) + p \cdot \Phi (\Delta / \sigma))$.
        \STATE Sample $s  \sim \mathrm{Bernoulli}(1/2)$.
        \IF{$s = 1$}
            \STATE Update $x = -x$.
        \ENDIF
    \ENDIF
    \STATE \textbf{Return:} $x$.
\end{algorithmic}
\end{algorithm}

\subsection{Noise amplitude and power of the quasi-Gaussian mixture distributions}\label{app:l1_l2}
For the \textit{noise amplitude} of the random variable $\tilde{X}$ with density $f_{\mathrm{q}}(\cdot; \sigma)$, we have:
\begin{align*}
    \mathbb{E}_{\tilde{X}} [\lvert x \rvert] = \dfrac{e^\varepsilon}{c}\underbrace{\int_{-\infty}^{\infty} \lvert x \rvert \exp\left( -\dfrac{x^2}{2\sigma^2} \right) \mathrm{d}x}_{(i)} +  \dfrac{1}{c}\underbrace{\int_{-\infty}^{\infty} \lvert x \rvert \exp\left( -\dfrac{(\lvert x \rvert - \Delta)^2}{2\sigma^2} \right) \mathrm{d}x}_{(ii)}.
\end{align*}
Term $(i)$ satisfies:
\begin{align*}
    \int_{-\infty}^{\infty} \lvert x \rvert \exp\left( -\dfrac{x^2}{2\sigma^2} \right) \mathrm{d}x = 2 \int_{0}^{\infty} x \exp\left( -\dfrac{x^2}{2\sigma^2} \right) \mathrm{d}x = 2 \sigma^2.
\end{align*}
Term $(ii)$ satisfies:
\begin{align*}
    & \int_{-\infty}^{\infty} \lvert x \rvert \exp\left( -\dfrac{(\lvert x \rvert - \Delta)^2}{2\sigma^2} \right) \mathrm{d}x \\
    = & 2 \int_{0}^{\infty} x \exp\left( -\dfrac{(x - \Delta)^2}{2\sigma^2} \right) \mathrm{d}x\\
    =& 2 \int_{-\Delta/\sigma}^{\infty} (\sigma u  + \Delta) \exp\left( -\dfrac{u^2}{2} \right) \sigma \mathrm{d}u \\ 
    = & 2 \sigma^2 \int_{-\Delta/\sigma}^{\infty} u \exp\left( -\dfrac{u^2}{2} \right) \mathrm{d}u + 2 \sigma \Delta \int_{-\Delta/\sigma}^{\infty} \exp\left( -\dfrac{u^2}{2} \right) \mathrm{d}u\\
    = & 2 \sigma^2 \exp\left(- \dfrac{\Delta^2}{2\sigma^2}\right) + 2\sigma\Delta \sqrt{2 \pi} \Phi\left(\dfrac{\Delta}{\sigma}\right)
\end{align*}
where the first equality exploits the symmetry of the absolute value, the second equality applies variable change $u = (x - \Delta)/\sigma$, the third equality exploits the linearity of integrals and the final equality recognizes $\exp(-u^2/2)$ as the (unnormalized) standard Gaussian density. We conclude
{\allowdisplaybreaks
\begin{align*}
    \mathbb{E}_{\tilde{X}} [\lvert x \rvert] & = \dfrac{e^\varepsilon}{c} 2\sigma^2 +  \dfrac{1}{c} \left(2 \sigma^2 \exp\left(- \dfrac{\Delta^2}{2\sigma^2}\right) + 2\sigma\Delta \sqrt{2 \pi} \Phi\left(\dfrac{\Delta}{\sigma}\right)\right) \\
    & = \dfrac{2\sigma^2\left(e^\varepsilon + \exp\left( -\dfrac{\Delta^2}{2\sigma^2} \right)\right) + 2\sigma\Delta \sqrt{2 \pi} \Phi\left(\dfrac{\Delta}{\sigma} \right)}{\sqrt{2\pi} \sigma \left(e^\varepsilon + 2\Phi\left(\dfrac{\Delta}{\sigma}\right)\right)}\\
    & = \dfrac{\sqrt{2 / \pi}\sigma\left(e^\varepsilon + \exp\left( -\dfrac{\Delta^2}{2\sigma^2} \right)\right) + 2\Delta  \Phi\left(\dfrac{\Delta}{\sigma} \right)}{e^\varepsilon + 2\Phi\left(\dfrac{\Delta}{\sigma}\right)},
\end{align*}
}\noindent
where the first equality is derived above, the second equality rearranges terms, and the final equality cancels the common terms from the numerator and the denominator.

For the \textit{noise power} of the random variable $\tilde{X}$ with density $f_{\mathrm{q}}(\cdot; \sigma)$, we have:
\begin{align*}
    \mathbb{E}_{\tilde{X}} [x^2] = \dfrac{e^\varepsilon}{c}\underbrace{\int_{-\infty}^{\infty} x^2 \exp\left( -\dfrac{x^2}{2\sigma^2} \right) \mathrm{d}x}_{(i)} +  \dfrac{1}{c}\underbrace{\int_{-\infty}^{\infty} x^2 \exp\left( -\dfrac{(\lvert x \rvert - \Delta)^2}{2\sigma^2} \right) \mathrm{d}x}_{(ii)}.
\end{align*}
Term $(i)$ satisfies
\begin{align*}
    \int_{-\infty}^{\infty} x^2 \exp\left( -\dfrac{x^2}{2\sigma^2} \right) \mathrm{d}x = \sigma^3\sqrt{2 \pi}
\end{align*}
since we recognize $\exp(-x^2/(2\sigma^2))$ as the (unnormalized) Gaussian density with variance $\sigma^2$. 

Term $(ii)$ satisfies
\begin{align*}
   & \int_{-\infty}^{\infty} x^2 \exp\left( -\dfrac{(\lvert x \rvert - \Delta)^2}{2\sigma^2} \right) \mathrm{d}x \\
   = & 2 \int_{0}^{\infty} x^2 \exp\left( -\dfrac{(x - \Delta)^2}{2\sigma^2} \right) \mathrm{d}x  \\
   = &  2\sigma \int_{-\Delta/\sigma}^{\infty} (\sigma u + \Delta)^2 \exp\left( -\dfrac{u^2}{2} \right) \mathrm{d}u   \\
   = &  2\sigma^3 \int_{-\Delta/\sigma}^{\infty} u^2 \exp\left( -\dfrac{u^2}{2} \right) \mathrm{d} u+ 2\sigma\Delta^2  \underbrace{\int_{-\Delta/\sigma}^{\infty}  \exp\left( -\dfrac{u^2}{2} \right) \mathrm{d}u}_{= \sqrt{2\pi}\Phi(\Delta/\sigma)}   \\ 
   & \hspace{5cm}+  4\sigma^2 \Delta \underbrace{\int_{-\Delta/\sigma}^{\infty} u \exp\left( -\dfrac{u^2}{2} \right) \mathrm{d}u}_{=\exp(- (\Delta^2/(2\sigma^2)) )}   \\
   = & 2\sigma^3 \int_{-\Delta/\sigma}^{\infty} u^2 \exp\left( -\dfrac{u^2}{2} \right) \mathrm{d}u + 2\sigma \Delta^2\sqrt{2\pi}\Phi\left(\dfrac{\Delta}{\sigma}\right) + 4\sigma^2 \Delta \exp\left( - \dfrac{\Delta^2}{2\sigma^2} \right) \\
   = & 2\sigma^3 \left[ - u \exp\left( -\dfrac{u^2}{2} \right) \right]_{-\Delta/\sigma}^{\infty} + 2\sigma^3 \underbrace{\int_{-\Delta/\sigma}^{\infty} \exp\left( -\dfrac{u^2}{2} \right) \mathrm{d}u}_{=\sqrt{2\pi}\Phi(\Delta/ \sigma)} \\
   & \hspace{5cm}+ 2\sigma\Delta^2\sqrt{2\pi}\Phi\left(\dfrac{\Delta}{\sigma}\right) + 4\sigma^2 \Delta \exp\left( - \dfrac{\Delta^2}{2\sigma^2} \right)\\
   = & -2\sigma^2 \Delta \exp\left( -\dfrac{\Delta^2}{2\sigma^2} \right) + 2\sigma^3\sqrt{2\pi}\Phi\left(\dfrac{\Delta}{\sigma}\right) + 2\sigma\Delta^2\sqrt{2\pi}\Phi\left(\dfrac{\Delta}{\sigma}\right) + 4\sigma^2\Delta \exp\left( - \dfrac{\Delta^2}{2\sigma^2} \right) \\
   = & 2\sigma^2\Delta \exp\left( - \dfrac{\Delta^2}{2\sigma^2} \right) + 2\sigma \sqrt{2\pi}\Phi\left(\dfrac{\Delta}{\sigma}\right)\left(\sigma^2 + \Delta^2 \right)\\
   = & 2\sigma \sqrt{2 \pi} \left( \Phi\left(\dfrac{\Delta}{\sigma}\right)\left(\sigma^2 + \Delta^2 \right) + \dfrac{\sigma \Delta}{\sqrt{2\pi}}\exp\left( - \dfrac{\Delta^2}{2\sigma^2} \right)  \right).
\end{align*}
where the first equality exploits the symmetry of the square and absolute value functions, the second equality uses the variable change $u = (x- \Delta)/\sigma$, the third equality breaks down the square term into three components, exploits the linearity of integral, and recognizes $\exp(-u^2/(2\sigma^2))$ as the (unnormalized) Gaussian density with variance $\sigma^2$, the fourth equality substitutes integrals with their closed-form expressions except for the first integral, hence the fifth equality uses integration by parts, the sixth equality writes the expression without an integral, and the final equalities rearrange terms. We conclude:
\begin{align*}
    \mathbb{E}_{\tilde{X}} [x^2] & = \dfrac{e^\varepsilon}{c} \sigma^3\sqrt{2 \pi} +  \dfrac{1}{c} \left(2\sigma \sqrt{2 \pi} \left( \Phi\left(\dfrac{\Delta}{\sigma}\right)\left(\sigma^2 + \Delta^2 \right) + \dfrac{\sigma \Delta}{\sqrt{2\pi}}\exp\left( - \dfrac{\Delta^2}{2\sigma^2} \right)  \right)\right) \\
    & = \dfrac{e^\varepsilon \sigma^3\sqrt{2 \pi} + 2\sigma \sqrt{2 \pi} \left( \Phi\left(\dfrac{\Delta}{\sigma}\right)\left(\sigma^2 + \Delta^2 \right) + \dfrac{\sigma \Delta}{\sqrt{2\pi}}\exp\left( - \dfrac{\Delta^2}{2\sigma^2} \right)  \right)}{\sqrt{2\pi} \sigma \left(e^\varepsilon + 2\Phi\left(\dfrac{\Delta}{\sigma}\right)\right)}\\
    &  = \dfrac{e^\varepsilon \sigma^2 + 2\left( \Phi\left(\dfrac{\Delta}{\sigma}\right)\left(\sigma^2 + \Delta^2 \right) + \dfrac{\sigma \Delta}{\sqrt{2\pi}}\exp\left( - \dfrac{\Delta^2}{2\sigma^2} \right)  \right)}{e^\varepsilon + 2\Phi\left(\dfrac{\Delta}{\sigma}\right)}.
\end{align*}

\subsection{Proof of Theorem~\ref{thm:qG}}

From Lemma~\ref{lemma:simplify_DP_definition} it suffices to show that
\begin{align*}
    \int_{x \in A} (e^\varepsilon f_\mathrm{q}(x+\varphi; \sigma) - f_{\mathrm{q}}(x; \sigma)) \mathrm{d} x \geq -\delta \qquad \forall \varphi \in [0, \Delta],\;  \forall A \in \mathcal{F}
\end{align*}
holds. To this end, in the rest of this proof, we will show that for an arbitrary $\varphi \in [0, \Delta]$ and $A \in \mathcal{F}$, the desired inequality holds. We first break down the left-hand side of the inequality into three cases by exploiting the linearity of integrals:
\begin{subequations}
\begin{align}
    & \int_{x \in A} (e^\varepsilon f_\mathrm{q}(x+\varphi; \sigma) - f_{\mathrm{q}}(x; \sigma)) \mathrm{d} x \notag \\
    = & \int_{x \in A: x < -\Delta - \varphi} (e^\varepsilon f_\mathrm{q}(x+\varphi; \sigma) - f_{\mathrm{q}}(x; \sigma)) \mathrm{d} x \;+  \label{case_1}\\
    & \int_{x \in A \cap [-\Delta - \varphi, \Delta - \varphi]} (e^\varepsilon f_\mathrm{q}(x+\varphi; \sigma) - f_{\mathrm{q}}(x; \sigma)) \mathrm{d} x \; + \label{case_2}\\
     &  \int_{x \in A: x > \Delta - \varphi} (e^\varepsilon f_\mathrm{q}(x+\varphi; \sigma) - f_{\mathrm{q}}(x; \sigma)) \mathrm{d} x. \label{case_3}
\end{align}
\end{subequations}
To complete the proof, we show that the sum of~\eqref{case_1},~\eqref{case_2} and~\eqref{case_3} is greater than or equal to $-\delta$. The first terms are studied next:

\begin{itemize}
    \item[\eqref{case_1}:] For any $x < 0$, we have
    \begin{align*}
        \dfrac{\partial}{\partial x} f_\mathrm{q}(x; \sigma) = \dfrac{e^\varepsilon}{c} \left( \dfrac{-2x}{2\sigma^2}\right)\exp\left(-\dfrac{x^2}{2\sigma^2} \right) + \dfrac{1}{c} \left( -\dfrac{2(x + \Delta)}{2 \sigma^2} \right)\exp\left(-\dfrac{(x + \Delta)^2}{2\sigma^2} \right)
    \end{align*}
    where all terms are nonnegative as we have $x < - \Delta - \varphi \leq 0$. This concludes that $f_\mathrm{q}(x; \sigma)$ is increasing in the region $(-\infty, -\Delta)$ and we thus have:
    \begin{align*}
        e^\varepsilon f_\mathrm{q}(x + \varphi; \sigma) - f_\mathrm{q}(x; \sigma) \geq f_\mathrm{q}(x + \varphi; \sigma) - f_\mathrm{q}(x; \sigma) \geq 0.
    \end{align*}
    \item[\eqref{case_2}:] We have
    \begin{align*}
        e^\varepsilon f_\mathrm{q}(x+\varphi; \sigma) -  f_\mathrm{q}(x; \sigma) \geq e^\varepsilon f_{\mathrm{q},\min}(\sigma) -  f_\mathrm{q}(x; \sigma) \geq  f_{\mathrm{q},\max}(\sigma)-   f_\mathrm{q}(x; \sigma) \geq 0 ,
    \end{align*}
    where the first inequality holds since $x + \varphi \in [-\Delta, \Delta]$ and $f_\mathrm{q}(\cdot \; ; \sigma)$ is symmetric, the second inequality holds since $\sigma \geq \sigma_2$ as specified in the statement of this theorem,  and the final inequality is due to $x \leq \Delta - \varphi \leq \Delta$.
\end{itemize}
This concludes that $\eqref{case_1} + \eqref{case_2} \geq 0$, and it will thus be sufficient to show $\eqref{case_3} \geq - \delta$. 
\begin{itemize}
    \item[\eqref{case_3}:] We have 
    \begin{align}\label{eq:follow-up}
    & \int_{x \in A: x > \Delta - \varphi} (e^\varepsilon f_\mathrm{q}(x+\varphi; \sigma) - f_{\mathrm{q}}(x; \sigma)) \mathrm{d} x = \eqref{case_3} \notag \\
    \geq &  \int_{x \in A: x > \Delta - \varphi} (e^\varepsilon f_\mathrm{q}(x+\Delta; \sigma) - f_{\mathrm{q}}(x; \sigma)) \mathrm{d} x \notag \\
    \geq &  \int_{x \in A: x > \Delta - \varphi} (e^\varepsilon f_\mathrm{q}(x+\Delta; \sigma) - f_{\mathrm{q}}(x; \sigma))^- \mathrm{d} x \notag \\
    \geq &  \int_{x \geq 0} (e^\varepsilon f_\mathrm{q}(x+\Delta; \sigma) - f_{\mathrm{q}}(x; \sigma))^- \mathrm{d} x
    \end{align}
    where $z^-$ denotes the negative part of $z$, (\textit{i.e.}, $z^- = \min\{0, z\}$). Here, the first inequality follows since, analogously to the case of \eqref{case_1}, $\frac{\mathrm d}{\mathrm dx} f_\mathrm{q}(x; \sigma)$ is decreasing in region $x \in (\Delta, \infty)$ and as a result $f_\mathrm{q}(x + \varphi) \geq f_\mathrm{q}(x + \Delta)$ holds. The second and third inequalities follow as any $z \in \mathbb{R}$ satisfies $z \geq z^-$ and $0 \geq z^-$, respectively. The domain $\{x : x \in \mathbb{R}\}$ of integration in~\eqref{eq:follow-up} can be replaced with $\{x :  x > \varepsilon \sigma^2 / \Delta\}$ without loss of generality since the term whose negative part is taken satisfies:
    \begin{align*}
        & e^\varepsilon f_\mathrm{q}(x + \Delta) - f_\mathrm{q}(x; \sigma) \\
        = &\dfrac{e^\varepsilon}{c}\left[ \exp\left(\varepsilon - \dfrac{(x + \Delta)^2}{2 \sigma^2}\right) + \exp\left(-\dfrac{x^2}{2\sigma^2}\right)- \exp\left(-\dfrac{x^2}{2 \sigma^2}\right) - \exp\left(-\varepsilon - \dfrac{(x-\Delta)^2}{2\sigma^2}\right) \right] \\
        = & \dfrac{e^\varepsilon}{c}\left[\exp\left(\varepsilon - \dfrac{(x + \Delta)^2}{2 \sigma^2}\right) -  \exp\left(-\varepsilon - \dfrac{(x-\Delta)^2}{2\sigma^2}\right) \right],
    \end{align*}
    which is nonnegative if and only if 
    \begin{align*}
        & \varepsilon - \dfrac{(x + \Delta)^2}{2 \sigma^2} \geq -\varepsilon - \dfrac{(x-\Delta)^2}{2\sigma^2} \iff  2\varepsilon \geq \dfrac{(x + \Delta)^2 - (x - \Delta)^2}{2 \sigma^2}  \iff  x \leq \dfrac{\varepsilon \sigma^2}{\Delta}.
    \end{align*}
    We can thus conclude:
    \begin{align*}
    \eqref{case_3}  \geq \eqref{eq:follow-up} \geq & \int_{x > \varepsilon \sigma^2 / \Delta} (e^\varepsilon f_\mathrm{q}(x+\Delta; \sigma) - f_{\mathrm{q}}(x; \sigma)) \mathrm{d} x\\
    = & \dfrac{1}{c}\int_{\varepsilon \sigma^2 / \Delta}^\infty e^{2\varepsilon}\exp\left(- \dfrac{(x + \Delta)^2}{2 \sigma^2}\right) -  \exp\left(- \dfrac{(x-\Delta)^2}{2\sigma^2}\right) \mathrm{d}x \\
    = & \dfrac{e^{2 \varepsilon} \sqrt{2 \pi} \sigma}{c} \Phi\left(- \dfrac{\varepsilon\sigma}{\Delta} - \dfrac{\Delta}{\sigma} \right) -  \dfrac{1}{c}\int_{\varepsilon \sigma^2 / \Delta}^\infty   \exp\left(- \dfrac{(x-\Delta)^2}{2\sigma^2}\right) \mathrm{d}x \\
    = & \dfrac{e^{2 \varepsilon} \sqrt{2 \pi} \sigma}{c} \Phi\left(- \dfrac{\varepsilon\sigma}{\Delta} - \dfrac{\Delta}{\sigma} \right)  -  \dfrac{\sqrt{2 \pi} \sigma}{c} \Phi\left(- \dfrac{\varepsilon\sigma}{\Delta} +  \dfrac{\Delta}{\sigma} \right)  \\
    = & \dfrac{1}{e^\varepsilon + 2 \Phi\left( \dfrac{\Delta}{\sigma} \right)} \left( e^{2\varepsilon} \Phi\left( -\dfrac{\varepsilon\sigma}{\Delta} -  \dfrac{\Delta}{\sigma} \right) - \Phi\left( -\dfrac{\varepsilon\sigma}{\Delta} +  \dfrac{\Delta}{\sigma} \right)  \right) \\
    = & \dfrac{h_1(\sigma)}{h_2(\sigma) / \delta} \geq  -\delta.
\end{align*}
Here, the first two inequalities were proved earlier, the four equalities that follow derive the integral in closed form by recognizing the integrands as Gaussian density functions, the final equality substitutes $h_1(\sigma)$ and $h_2(\sigma)$ as defined in~\eqref{eq:h}, and the final inequality follows from $\sigma \geq \sigma_1$ where $\sigma_1$ is defined in~\eqref{eq:sigma_1}. Note that while the final inequality is straightforward for $\sigma = \sigma_1$ by the definition of $\sigma_1$, to have the same conclusion for any $\sigma \geq \sigma_1$, we rely on Lemma~\ref{lem:monotonicity}.
\end{itemize}
We conclude the proof since the left-hand side of the DP constraints, that needs to be greater than or equal to $-\delta$, can be written as $\eqref{case_1} + \eqref{case_2} + \eqref{case_3}$, where $\eqref{case_1}, \eqref{case_2} \geq 0$ and $\eqref{case_3} \geq - \delta$. 

\subsection{Proof of Lemma~\ref{lem:monotonicity}}
To prove that the desired monotonicity, we investigate the derivative of the function of interest. To this end, note that
\begin{align*}
    & \dfrac{\mathrm{d}}{\mathrm{d}\sigma}  h_1(\sigma) \\
    = & e^{2\varepsilon}\left( - \dfrac{\varepsilon}{\Delta} + \dfrac{\Delta}{\sigma^2} \right) \Phi'\left(- \dfrac{\varepsilon \sigma}{\Delta} - \dfrac{\Delta}{\sigma}\right) - \left(- \dfrac{\varepsilon}{\Delta} - \dfrac{\Delta}{\sigma^2}\right) \Phi'\left(-\dfrac{\varepsilon \sigma}{\Delta} + \dfrac{\Delta}{\sigma}\right)\\
    = & \dfrac{e^{2\varepsilon}}{\sqrt{2\pi}} \left( -\dfrac{\varepsilon}{\Delta} +  \dfrac{\Delta}{\sigma^2 } \right)\exp\left(- \dfrac{1}{2}\left( \dfrac{\varepsilon\sigma}{\Delta} + \dfrac{\Delta}{\sigma} \right)^2 \right) +  \dfrac{1}{\sqrt{2\pi}}\left( \dfrac{\varepsilon}{\Delta} + \dfrac{\Delta}{\sigma^2 } \right) \exp\left(- \dfrac{1}{2}\left( -\dfrac{\varepsilon\sigma}{\Delta}  +  \dfrac{\Delta}{\sigma} \right)^2 \right)  \\
    = &\dfrac{2\Delta}{\sqrt{2\pi}\sigma^2} \exp\left(- \dfrac{1}{2}\left( -\dfrac{\varepsilon\sigma}{\Delta} +  \dfrac{\Delta}{\sigma} \right)^2 \right),
\end{align*}
where the first equality follows from the chain rule, the second equality uses the definition of the Gaussian probability density function, and the final equality is by using the algebraic property $(a + b)^2 = (a - b)^2 + 4ab$. Moreover, we have
\begin{align*}
    \dfrac{\mathrm{d}}{\mathrm{d}\sigma} h_2(\sigma) = -2\delta\dfrac{\Delta}{\sigma^2} \Phi'\left(\dfrac{\Delta}{\sigma}\right) = -\dfrac{2\Delta}{\sqrt{2\pi}\sigma^2}\delta\exp\left( -\dfrac{1}{2}\left( \dfrac{\Delta}{\sigma} \right)^2 \right)
\end{align*}
which concludes that 
\begin{align*}
    & \dfrac{\mathrm{d}}{\mathrm{d}\sigma} \left[ h_1(\sigma) + h_2(\sigma)\right] \\
    = & \dfrac{2\Delta}{\sqrt{2\pi}\sigma^2} \exp\left(- \dfrac{1}{2}\left( -\dfrac{\varepsilon\sigma}{\Delta} +  \dfrac{\Delta}{\sigma} \right)^2 \right) -\dfrac{2\Delta}{\sqrt{2\pi}\sigma^2}\delta \exp\left( -\dfrac{1}{2}\left( \dfrac{\Delta}{\sigma} \right)^2 \right) \\
    = & \dfrac{2\Delta}{\sqrt{2\pi}\sigma^2} \exp\left( -\dfrac{1}{2}\left( \dfrac{\Delta}{\sigma} \right)^2 \right) \left( \exp\left(\varepsilon - \dfrac{1}{2}\left(\dfrac{\varepsilon \sigma}{\Delta}\right)^2 \right) - \delta\right),
\end{align*}
which is positive whenever $\varepsilon - \varepsilon^2\sigma^2/(2\Delta^2) > \log \delta$ which is equivalent to $\sigma < \sqrt{2 (\varepsilon - \log \delta)} \Delta / \varepsilon$.

To conclude the function is nonnegative for $\sigma \geq \sqrt{2(\varepsilon - \log \delta )} \Delta / \varepsilon$, note that
\begin{align}\label{ineq:sigma:large}
    h_1(\sigma) + h_2(\sigma) \geq & h_1(\sigma) + 2\delta \notag \\
    \geq &  - \Phi\left( -\dfrac{\varepsilon\sigma}{\Delta} +  \dfrac{\Delta}{\sigma} \right) + 2\delta,
\end{align}
where the first inequality holds since $e^\varepsilon \geq 1$ and $\Phi(\Delta /\sigma) \geq \Phi(0) = 0.5$ and the second inequality follows from removing the positive part from the definition of $h_1(\sigma)$. The expression~\eqref{ineq:sigma:large} is trivially nonnegative for $\delta \geq 0.5$. To conclude the proof, we next show that~\eqref{ineq:sigma:large} is nonnegative for $\delta < 0.5$. To this end, suppose $\delta < 0.5$. Note that $\varepsilon\sigma/\Delta - \Delta/\sigma$ is monotonically increasing with respect to $\sigma$, so the fact that $\sigma \geq \sqrt{2(\varepsilon - \log \delta )} \Delta / \varepsilon$ implies
\begin{align}\label{eq:implied}
    \dfrac{\varepsilon\sigma}{\Delta} -  \dfrac{\Delta}{\sigma} \geq \sqrt{2(\varepsilon - \log \delta )} - \dfrac{\varepsilon}{\sqrt{2(\varepsilon - \log \delta )}} = \dfrac{\varepsilon - 2\log \delta}{\sqrt{2(\varepsilon - \log \delta )}} > 0.
\end{align}
By using the substitution $t = (\varepsilon - 2\log \delta)/\sqrt{2(\varepsilon - \log \delta )}$, we can further bound~\eqref{ineq:sigma:large} via
\begin{align}\label{eq:further}
   \Phi\left( -\dfrac{\varepsilon\sigma}{\Delta} +  \dfrac{\Delta}{\sigma} \right) \leq \Phi(-t) & \leq \dfrac{1}{\sqrt{2\pi}t} \exp\left(-\dfrac{\varepsilon^2 - 4(\varepsilon - \log \delta ) \log \delta }{4(\varepsilon - \log \delta )}\right) \\
   & = \dfrac{\delta}{\sqrt{2\pi}t} \exp\left(-\dfrac{\varepsilon^2 }{4(\varepsilon - \log \delta )}\right),
\end{align}
where the first inequality is due to~\eqref{eq:implied} and the second inequality is due to Mill's inequality, asserting that for $t > 0$ we have $\Phi(-t) \leq (1/\sqrt{2\pi}t)\exp\left(-t^2 / 2\right)$. This allows us to conclude the proof as:
\begin{align*}
    \eqref{ineq:sigma:large} & \geq 2\delta - \dfrac{\delta}{\sqrt{2\pi}t} \exp\left(-\dfrac{\varepsilon^2 }{4(\varepsilon - \log \delta )}\right) \\
    & \geq 2\delta - \dfrac{1}{\sqrt{\pi \log2}} \exp\left(-\dfrac{\varepsilon^2 }{4(\varepsilon - \log \delta )}\right) \delta \\
    & \geq \delta > 0.
\end{align*}
Here the first inequality follows from~\eqref{eq:further}, the second inequality follows from 
$$t = \dfrac{\varepsilon - 2\log \delta}{\sqrt{2(\varepsilon - \log \delta )}} \geq \dfrac{\varepsilon - \log \delta}{\sqrt{2(\varepsilon - \log \delta )}} = \sqrt{\dfrac{\varepsilon - \log \delta}{2}} \geq \sqrt{\dfrac{\log 2}{2}},$$
and the third inequality is due to the fact that $1 / \sqrt{\pi \log 2} \leq 1$.

Finally, to conclude that $e^\varepsilon + 2 \geq \delta^{-1}$ implies the function is nonnegative everywhere, we investigate the limiting case where
\begin{align*}
    \lim_{\sigma \rightarrow 0} h_1(\sigma) = -1
\end{align*}
and 
\begin{align*}
    \lim_{\sigma \rightarrow 0}  h_2(\sigma) = (e^\varepsilon + 2)\delta,
\end{align*}
hold, hence the function $h_1(\sigma) +  h_2(\sigma) \geq (e^\varepsilon + 2)\delta - 1 \geq 0$ is nonnegative everywhere under this condition.

\subsection{Proof of Lemma~\ref{eq:lemma}}\label{sec:big_lemma}
We first state the unabridged lemma here.
\begin{lemma}[Lemma \ref{eq:lemma} restated]
     For any $\sigma > 0$ let 
     \begin{align*}
    x_1 := \dfrac{\Delta - \sqrt{\Delta^2 - 4 \sigma^2}}{2}, \; x_2 := \dfrac{\Delta + \sqrt{\Delta^2 - 4 \sigma^2}}{2}, \; g(x) := -e^\varepsilon  + \dfrac{\Delta - x}{x} \exp\left(\dfrac{2 x \Delta -\Delta^2}{2\sigma^2}\right).
    \end{align*}
    The following statements hold for $\min_{x \in [0,\Delta]} f_\mathrm{q}(x; \sigma)$ and $\max_{x \in [0,\Delta]} f_\mathrm{q}(x; \sigma)$:
     \begin{enumerate}
          \item[(i)] If $\Delta^2 \leq 4 \sigma^2$, then $x = \Delta$ solves $\min_{x \in [0,\Delta]} f_\mathrm{q}(x; \sigma)$. Moreover, $f_{\mathrm{q}}(x; \sigma)$ is unimodal on the interval $(0, \Delta/2)$ with a unique maximum. The maximizer of this unimodal region also solves $\max_{x \in [0,\Delta]} f_\mathrm{q}(x; \sigma)$.
          \item[(ii)] If $\Delta^2 > 4 \sigma^2$ and $g(x_2)\leq 0$, then $x = \Delta$ solves $\min_{x \in [0,\Delta]} f_\mathrm{q}(x; \sigma)$. Moreover, $f_{\mathrm{q}}(x; \sigma)$ is unimodal on the interval $(0, x_1)$ with a unique maximum. The maximizer of this unimodal region also solves $\max_{x \in [0,\Delta]} f_\mathrm{q}(x; \sigma)$.
          \item[(iii)] If $\Delta^2 > 4 \sigma^2$ and $g(x_2) > 0$, then $f_{\mathrm{q}}(x; \sigma)$ is unimodal on the interval $(\Delta / 2, x_2)$ with a unique minimum. Either $x = \Delta$ or the minimizer of this unimodal region solves $\min_{x \in [0,\Delta]} f_\mathrm{q}(x; \sigma)$. Moreover, $f_{\mathrm{q}}(x; \sigma)$ is unimodal on the interval $(0, x_1)$ with a unique maximum. The maximizer of this unimodal region also solves $\max_{x \in [0,\Delta]} f_\mathrm{q}(x; \sigma)$.
     \end{enumerate}
 \end{lemma}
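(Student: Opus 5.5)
On $[0,\Delta]$ we have $|x|=x$, so up to the positive constant $1/c$ the density is $u(x)=e^\varepsilon e^{-x^2/(2\sigma^2)}+e^{-(x-\Delta)^2/(2\sigma^2)}$. The plan is to analyse the sign of $u'$ on $[0,\Delta]$. Differentiating gives
\begin{align*}
\sigma^2 u'(x) = -x e^\varepsilon e^{-x^2/(2\sigma^2)} + (\Delta-x)e^{-(x-\Delta)^2/(2\sigma^2)} .
\end{align*}
For $x\in(0,\Delta)$ the sign of $u'(x)$ equals the sign of $g(x)$, because after dividing by $x e^{-x^2/(2\sigma^2)}>0$ we get $-e^\varepsilon+\frac{\Delta-x}{x}\exp\bigl(\frac{2x\Delta-\Delta^2}{2\sigma^2}\bigr)=g(x)$. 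The whole lemma then reduces to a study of the one-dimensional function $g$ on $(0,\Delta)$. At the endpoints, $g(x)\to+\infty$ as $x\downarrow 0$ and $g(\Delta)=-e^\varepsilon<0$.

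Next I study the monotonicity of $g$. Write $g(x)+e^\varepsilon=\exp(\psi(x))$ with $\psi(x)=\log(\Delta-x)-\log x+(2x\Delta-\Delta^2)/(2\sigma^2)$. Then
\begin{align*}
\psi'(x)=-\frac{\Delta}{x(\Delta-x)}+\frac{\Delta}{\sigma^2},
\end{align*}
which is nonnegative iff $x(\Delta-x)\ge\sigma^2$, i.e.\ iff $x^2-\Delta x+\sigma^2\le 0$. This is exactly where the roots $x_1,x_2$ come from.

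If $\Delta^2\le 4\sigma^2$, then $\psi'\le 0$ on $(0,\Delta)$, so $g$ is strictly decreasing with a unique zero $x^\ast$. Consequently $u$ increases on $(0,x^\ast)$ and decreases on $(x^\ast,\Delta)$. The minimum over $[0,\Delta]$ is therefore at an endpoint, and the endpoint comparison $u(\Delta)-u(0)=(e^\varepsilon-1)(e^{-\Delta^2/(2\sigma^2)}-1)<0$ shows it is attained at $x=\Delta$.

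For the maximum I would show $x^\ast<\Delta/2$. This follows because $g(\Delta/2)=-e^\varepsilon+1<0$. Hence $u$ is unimodal on $(0,\Delta/2)$, and its maximizer there is the global maximizer on $[0,\Delta]$. This proves (i).

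If $\Delta^2>4\sigma^2$, then $g$ is decreasing on $(0,x_1)$, increasing on $(x_1,x_2)$, and decreasing on $(x_2,\Delta)$. So $x_1$ is a local minimum of $g$ and $x_2$ a local maximum. Since $x_1<\Delta/2$ and $g(\Delta/2)<0$, we get $g(x_1)<0$. Hence $g$ has exactly one zero in $(0,x_1)$, giving a unique local maximum of $u$ in $(0,x_1)$.

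On $(x_1,\Delta)$, the function $g$ is positive only possibly near $x_2$, so the remaining behaviour is decided by the sign of $g(x_2)$.

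In case (ii), where $g(x_2)\le 0$, we have $g\le 0$ on $[x_1,\Delta]$. Then $u$ is nonincreasing after its first critical point, so the minimum is at $\Delta$ by the endpoint comparison above, and the maximum is the unimodal maximizer in $(0,x_1)$.

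In case (iii), where $g(x_2)>0$, the function $g$ has one zero in $(x_1,x_2)$ and one in $(x_2,\Delta)$. The first zero is a local minimum of $u$ and the second a local maximum. I must place the local minimum inside $(\Delta/2,x_2)$; this follows from $g(\Delta/2)<0$ together with $g$ increasing on $(x_1,x_2)\ni\Delta/2$. Thus $u$ is unimodal (decreasing, then increasing) on $(\Delta/2,x_2)$. The global minimum is then either this interior local minimum or the endpoint $\Delta$ (the value at $0$ was already dominated by $u(\Delta)$).

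For the global maximum in case (iii), I still need that the second local maximum of $u$, located in $(x_2,\Delta)$, does not exceed the maximum found in $(0,x_1)$. This is the main obstacle. I would try the symmetry trick: for $y=\Delta-x$,
\begin{align*}
u(\Delta-y)-u(y)=(e^\varepsilon-1)\bigl(e^{-(\Delta-y)^2/(2\sigma^2)}-e^{-y^2/(2\sigma^2)}\bigr)\le 0 \qquad \text{for } y\le\Delta/2 .
\end{align*}
Hence every value of $u$ on $[\Delta/2,\Delta]$ is dominated by a value on $[0,\Delta/2]$. The global maximum therefore lies in $[0,\Delta/2]$, where the only critical point is the one in $(0,x_1)$, and by unimodality this maximizer is the global one. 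The same reflection argument also streamlines the maximum claims in (i) and (ii).
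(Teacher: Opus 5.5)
Your proposal is correct and follows essentially the paper's route. Both arguments reduce the sign of $f_{\mathrm{q}}'$ to that of $g$, read off the monotonicity of $g$ from the sign of $x(\Delta-x)-\sigma^2$ with roots $x_1,x_2$, and use the reflection inequality $f_{\mathrm{q}}(x)>f_{\mathrm{q}}(\Delta-x)$ for $x<\Delta/2$ (the paper's intermediary lemma) to put the global maximum in $[0,\Delta/2)$ and to prefer $\Delta$ over $0$ for the minimum. Differentiating $\log(g+e^\varepsilon)$ instead of $g$, and using $g\to+\infty$ at $0$ in place of the paper's small-$\xi$ device, are only cosmetic differences.
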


Since this lemma explores the minimizer and maximizer of $f_\mathrm{q}(x; \sigma)$ over a nonnegative interval $x \in [0, \Delta]$, we can rewrite it without an absolute value as
\begin{align}\label{eq:density_nonneg}
        f_{\mathrm{q}}(x; \sigma) = \dfrac{e^\varepsilon}{c} \exp\left(- \frac{x^2}{2\sigma^2}\right) +  \dfrac{1}{c} \exp\left(- \frac{(x - \Delta)^2}{2\sigma^2}\right),
\end{align}
and its derivative as
\begin{align}\label{eq:derivative_nonneg}
        \dfrac{\partial}{\partial x} f_\mathrm{q}(x; \sigma) &= \dfrac{e^\varepsilon}{c} \left( -\dfrac{x}{\sigma^2}\right)\exp\left(-\dfrac{x^2}{2\sigma^2} \right) + \dfrac{1}{c} \left( -\dfrac{x - \Delta}{\sigma^2} \right)\exp\left(-\dfrac{(x - \Delta)^2}{2\sigma^2} \right) \notag \\
        & = \dfrac{1}{c\sigma^2} \exp\left(- \dfrac{x^2}{2\sigma^2}\right) \left[ -e^\varepsilon x + (\Delta - x) \exp\left(\dfrac{2 x \Delta -\Delta^2}{2\sigma^2}\right) \right],
\end{align}
where the first equality follows from the chain rule and the second equality rearranges terms. We now state and prove an intermediary lemma that will be used in the proof of Lemma~\ref{eq:lemma}.
\begin{lemma}\label{lemma:intermediary}
For any $x \in [0, \Delta/2)$, we have $f_\mathrm{q}(x; \sigma) > f_\mathrm{q}(\Delta - x; \sigma)$.
\end{lemma}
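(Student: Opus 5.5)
Since the claim concerns $x\in[0,\Delta/2)$, we work with the sign-free form~\eqref{eq:density_nonneg} of $f_\mathrm{q}$ on $[0,\Delta]$ and compare $f_\mathrm{q}(x;\sigma)$ with $f_\mathrm{q}(\Delta-x;\sigma)$ directly. Substituting $\Delta - x$ into~\eqref{eq:density_nonneg} swaps the roles of the two exponentials:
\begin{align*}
    c\,f_\mathrm{q}(\Delta - x;\sigma) = e^\varepsilon \exp\left(-\frac{(\Delta-x)^2}{2\sigma^2}\right) + \exp\left(-\frac{x^2}{2\sigma^2}\right).
\end{align*}
Write $A := \exp(-x^2/(2\sigma^2))$ and $B := \exp(-(\Delta - x)^2/(2\sigma^2))$. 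Then $c\,f_\mathrm{q}(x;\sigma) = e^\varepsilon A + B$ and $c\,f_\mathrm{q}(\Delta-x;\sigma) = e^\varepsilon B + A$.

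Taking the difference gives
\begin{align*}
    c\left(f_\mathrm{q}(x;\sigma) - f_\mathrm{q}(\Delta-x;\sigma)\right) = (e^\varepsilon - 1)(A - B).
\end{align*}
Since $c>0$ and $\varepsilon > 0$ gives $e^\varepsilon - 1 > 0$, it suffices to show $A > B$. This is equivalent to $x^2 < (\Delta - x)^2$. For $x\in[0,\Delta/2)$ we have $0 \le x < \Delta - x$, because $x < \Delta/2$ is exactly $x < \Delta - x$. Both sides are nonnegative, so squaring preserves the strict inequality. Equivalently, $(\Delta-x)^2 - x^2 = \Delta(\Delta - 2x) > 0$.

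Strict monotonicity of $\exp$ then gives $A > B$, and the claim follows. No step is a real obstacle. The only point needing care is that strictness relies on $\varepsilon>0$ together with $x<\Delta/2$: at $x=\Delta/2$ the two values coincide.
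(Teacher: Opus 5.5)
Your proposal is correct and follows essentially the same route as the paper: express the difference as $\frac{e^\varepsilon-1}{c}\bigl(\exp(-x^2/(2\sigma^2)) - \exp(-(x-\Delta)^2/(2\sigma^2))\bigr)$ and use $(x-\Delta)^2 > x^2$ for $x\in[0,\Delta/2)$. Your explicit remarks that $\Delta - x$ lies in $[0,\Delta]$ (so the sign-free form applies) and that strictness needs $\varepsilon>0$ are welcome additions the paper leaves implicit.
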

\begin{proof}
We can use~\eqref{eq:density_nonneg} and show
    \begin{align*}
        f_{\mathrm{q}}(x; \sigma) - f_{\mathrm{q}}(\Delta - x; \sigma)  = \dfrac{e^\varepsilon - 1}{c} \left( \exp\left(- \frac{x^2}{2\sigma^2}\right) -  \exp\left(- \frac{(x - \Delta)^2}{2\sigma^2}\right) \right).
    \end{align*}
For $x \in [0, \Delta/2)$, we have $(x - \Delta)^2 > x^2$, which implies $\exp(- \frac{x^2}{2\sigma^2}) >  \exp(- \frac{(x - \Delta)^2}{2\sigma^2})$ and as a result concludes that the expression above is positive.
\end{proof}

We can now prove the unabridged version of Lemma~\ref{eq:lemma}.

\textbf{Proof of Lemma~\ref{eq:lemma}.} Since we fix $\sigma > 0$, we will use the shorthand $f_{\mathrm{q}}(x)$ and $f_{\mathrm{q}}'(x)$ for $f_{\mathrm{q}}(x; \sigma)$ and $\partial f_\mathrm{q}(x; \sigma) /\partial x$, respectively. We will study the sign of the derivative~\eqref{eq:derivative_nonneg} over $[0, \Delta]$ in order to conclude the proof. To this end, we first plug in $x \in \{0, \Delta/2, \Delta \}$ in~\eqref{eq:derivative_nonneg} and note:
\begin{align}\label{ineq:sign:f}
    f_{\mathrm{q}}'(0) > 0, \; f_{\mathrm{q}}'(\Delta/2) < 0, \; f_{\mathrm{q}}'(\Delta)< 0.
\end{align}
From the continuity of $f_{\mathrm{q}}'$, there exists some small enough constant $\xi > 0$ such that $f_{\mathrm{q}}'(x) > 0$ for all $x \in [0, \xi]$. For $x \in [\xi, \Delta]$, we can thus multiply the expression~\eqref{eq:derivative_nonneg} by $x / x$ and obtain:
\begin{align*}
        f_{\mathrm{q}}'(x) = \underbrace{\dfrac{x}{c\sigma^2} \exp\left(- \dfrac{x^2}{2\sigma^2}\right)}_{=: g^{+}(x)} \underbrace{\left[ -e^\varepsilon  + \frac{\Delta - x}{x} \exp\left(\dfrac{2 x \Delta -\Delta^2}{2\sigma^2}\right) \right]}_{=: g(x)}.
\end{align*}
Here, we have $g^+(x) > 0$ for all $x \in [0, \xi]$. Thus, the sign of $f_{\mathrm{q}}'(x)$ coincides the sign of $g(x)$ in the region $x \in [\xi, \Delta]$. Then, \eqref{ineq:sign:f} implies that:
\begin{align}\label{ineq:sign:g}
    g(\xi) > 0, \; g(\Delta/2) < 0, \; g(\Delta)< 0.
\end{align}
In order to explore the sign of $g(x)$, we further compute its derivative
\begin{align*}
        \dfrac{\mathrm{d}}{\mathrm dx} g(x) =: g'(x) &=  \exp\left(\frac{2 x \Delta  -\Delta^2}{2\sigma^2}\right) \left( \frac{\Delta - x}{x} \cdot \frac{\Delta }{\sigma^2} - \frac{\Delta}{x^2} \right) \\
        &= \underbrace{\frac{\Delta}{\sigma^2 x^2} \exp\left(\frac{2 x \Delta -\Delta^2}{2\sigma^2}\right)}_{=: g'^{1}(x)} \underbrace{\left( x(\Delta - x) - \sigma^2 \right)}_{=: g'^{2}(x)} ,
\end{align*}
where the first equality follows from the chain rule and the second equality rearranges terms. Since $g'^{1}(x) > 0$ for all $x \in [\xi, \Delta]$, we investigate the sign of $g'^2(x)$ by investigating three cases.

\textbf{Case 1 ($\Delta^2 \leq 4 \sigma^2$):} Since $x (\Delta - x) \leq \Delta^2/4$ for all $x \in \mathbb{R}$, this case implies that $g'^2(x) \leq 0$ for all $x\in[\xi, \Delta]$, and allows us to conclude that $g$ is monotonically decreasing for $ x \in [\xi, \Delta]$. Such monotonicity implies in~\eqref{ineq:sign:g} that there exists some $x_0 \in (\xi, \Delta /2)$ satisfying $g(x) > 0$ for all $x \in (\xi, x_0)$ and $g(x) < 0$ for all $x \in ( x_0, \Delta]$. Thus, $f_{\mathrm{q}}'$ is positive on $[0, x_0)$ and negative on $(x_0, \Delta]$, hence $x_0$ is the maximizer of $f_{\mathrm{q}}$ and the minimizer is at the extreme point. From Lemma~\ref{lemma:intermediary} we can conclude that $\Delta$ is the minimizer. This discussion also shows that $f_{\mathrm{q}}(x)$ is unimodal on $(0, \Delta /2)$. We thus proved item \textit{(i)} of this lemma.
 
\textbf{Case 2 ($\Delta^2 > 4 \sigma^2$):} Since $g'(x) = 0$ if and only if $g'^2(x) = 0$, and since $g'^2(x)$ is a quadratic function, we can solve $g'(x) = 0$ and obtain two roots
\begin{align}\label{eq:roots}
    x_1 = \frac{\Delta - \sqrt{\Delta^2 - 4 \sigma^2}}{2}, \quad x_2 = \frac{\Delta + \sqrt{\Delta^2 - 4 \sigma^2}}{2}.
\end{align}
These roots, along with the fact that $g'(\xi) <0$ and $g'(\Delta / 2) >0$ imply that $g(x)$ is decreasing in $x \in [\xi, x_1]$ and increasing in $x \in [x_1, x_2]$. This monotonicity implies
\begin{align}\label{eq:x_1_neg}
    g(x_1) < g(\Delta / 2) < 0,
\end{align}
since $\Delta / 2 \in (x_1, x_2)$ and $g(\Delta / 2) < 0$.

 Thus far we discussed that $g(\xi) > 0$ (\textit{cf.}~equation~\ref{ineq:sign:g}), $g(x_1) <0$ (\textit{cf.}~equation~\ref{eq:x_1_neg}) and $g'(x) <0$ in $(\xi, x_1)$ (\textit{cf.}~equation~\ref{eq:roots}). This concludes that there exists some $y_{\max} \in (\xi, x_1)$ such that $g(x) >0$ for $x \in(\xi, y_{\max})$ and $g(x) <0$ for $x \in( y_{\max}, x_1)$, and $f_{\mathrm{q}}'(x)$ exhibits a similar pattern as its sign coincides with the sign of $g(x)$. This shows that $f_{\mathrm{q}}(x)$ is unimodal in the region $(0, x_1)$ and has a local maximizer $y_{\max}$. To investigate the behavior of $f_{\mathrm{q}}(x)$ in the region $[x_1, \Delta]$, we further focus on two cases:
 \begin{itemize}
     \item Suppose that we have $g(x_2) \leq 0$. Given that $g(x)$ is increasing in $[x_1, x_2]$, this implies that $g(x) \leq 0$ for all $x \in [x_1, x_2]$. Furthermore, since we have no roots in $(x_2, \Delta]$, the fact that $g(\Delta) < 0$ implies that $g(x) \leq 0$ for all $x \in (x_2, \Delta]$. Hence, $f_\mathrm{q}'(x) \leq 0$ holds for all $x \in [x_1, \Delta]$, which implies that the function $f_{\mathrm{q}}(x)$ is decreasing in $[x_1, \Delta]$. We can conclude that $f_{\mathrm{q}}(x)$ achieves minimum at the extreme point $\Delta$ since $f_\mathrm{q}(\Delta) < f_\mathrm{q}(0)$ (\textit{cf.}~Lemma~\ref{lemma:intermediary}), and a maximum at $y_{\max} \in (0, x_1)$. We thus proved item \textit{(ii)} of this lemma.
     \item Suppose now that $g(x_2) > 0$. We investigate the behavior of the function in four regions: $\mathcal{R}_1 = [0, x_1]$, $\mathcal{R}_2 = [x_1, \Delta/2]$, $\mathcal{R}_3 = [\Delta/2, x_2]$, $\mathcal{R}_4 = [x_2, \Delta]$. Note that Lemma~\ref{lemma:intermediary} implies that the maximizer must lie in $[0, \Delta / 2) = \mathcal{R}_1 \cup \mathcal{R}_2$ and the minimizer in $[\Delta/2, \Delta]= \mathcal{R}_3 \cup\mathcal{R}_4$. We already discussed that $f_q(x)$ is unimodal on $\mathcal{R}_1$ with a maximizer $y_{\max} \in (\xi, x_1)$ for an arbitrarily small $\xi > 0$. We can discard $\mathcal{R}_2$ since we also discussed that $g(x) < 0$ for $x \in \mathcal{R}_2$ (\textit{i.e.}, the function $f_q(x)$ is decreasing in $\mathcal{R}_2$). Since $g(\Delta / 2) < 0$, $g(x_2) > 0$, and $g'(x) > 0, \ x \in \mathcal{R}_3$ simultaneously hold, there exists a local minimum $y_{\min} \in (\Delta/2, x_2)$ in $\mathcal{R}_3$. Analogously, for $\mathcal{R}_4$, we have $g(x_2) > 0$ and $g(\Delta) < 0$, and since there is no root of $g'(x) = 0$ in $(x_2, \Delta)$, the function $g(x)$ is decreasing in $\mathcal{R}_4$. This allows us to conclude that there exists a local maximizer of $f_\mathrm{q}(x)$ in $\mathcal{R}_4$, but we can ignore this as we only seek minimizers in this region. In this case, the minimizer of $\mathcal{R}_4$ is one of the extreme points ($x_2$ or $\Delta$), but since we already found a local minimum in $\mathcal{R}_3$, the only candidate minimizer for $f_\mathrm{q}(x)$ is $\Delta$. We thus proved item \textit{(iii)} of this lemma.
 \end{itemize}
 Since we proved each of the three items in the statement of this lemma, we conclude the proof. \hfill $\square$
\subsection{Proof of Lemma~\ref{lem:monotonic_ratio}}
Let $f_{\mathrm{q}, \max}(\sigma) = \max_{x \in [0, \Delta]}f_\mathrm{q}(x; \sigma) \; \text{and} \; f_{\mathrm{q}, \min}(\sigma) = \min_{x \in [0, \Delta]}f_\mathrm{q}(x; \sigma)$. We consider the three cases studied in Lemma~\ref{eq:lemma} and show that $f_{\mathrm{q}, \max}(\sigma)/f_{\mathrm{q}, \min}(\sigma)$ is decreasing in each of these cases.

For the \emph{first case}, suppose $\Delta^2 \leq 4\sigma^2$. In this case, Lemma~\ref{eq:lemma} implies:
\begin{align}\label{eq:min}
    f_{\mathrm{q}, \min}(\sigma)  = f_{\mathrm{q}}(\Delta;\sigma) = \dfrac{1}{c} e^\varepsilon \exp\left( - \dfrac{\Delta^2}{2 \sigma^2} \right).
\end{align}
Its derivative satisfies:
\begin{align}\label{eq:derivative_min}
    \dfrac{\mathrm{d}}{\mathrm{d}\sigma} f_{\mathrm{q}, \min}(\sigma)  = \dfrac{1}{c} e^\varepsilon \dfrac{\Delta^2}{\sigma^3} \exp\left( - \dfrac{\Delta^2}{2 \sigma^2} \right).
\end{align}

Moreover, for $x_{\max} \in (0,\Delta/2)$ being the solution that satisfies $f_{\mathrm{q}, \max}(\sigma)  = f_{\mathrm{q}}(x_{\max};\sigma)$, we have
\begin{align*}
\mspace{-40mu}
\begin{array}{ll}
     & \dfrac{\partial}{\partial x}\Bigr|_{x = x_{\max}} f_{\mathrm{q}}(x; \sigma)   =  \dfrac{e^\varepsilon}{c} \left( -\dfrac{x_{\max}}{\sigma^2}\right)\exp\left(-\dfrac{x_{\max}^2}{2\sigma^2} \right) + \dfrac{1}{c} \left( -\dfrac{x_{\max} - \Delta}{\sigma^2} \right)\exp\left(-\dfrac{(x_{\max} - \Delta)^2}{2\sigma^2} \right) = 0 \\
    & \implies  - e^\varepsilon x_{\max}\exp\left(-\dfrac{x_{\max}^2}{2\sigma^2} \right) + (\Delta - x_{\max})\exp\left(-\dfrac{(x_{\max} - \Delta)^2}{2\sigma^2} \right) = 0\\
    & \implies  e^\varepsilon \dfrac{x_{\max}}{\Delta - x_{\max}} \exp\left(-\dfrac{x_{\max}^2}{2\sigma^2} \right)  = \exp\left(-\dfrac{(x_{\max} - \Delta)^2}{2\sigma^2} \right),
\end{array}
\end{align*}
where the first equality follows from the definition of the partial derivative of $f_{\mathrm{q}}(x; \sigma)$ in $x$ as derived in Section~\ref{sec:big_lemma} (which is equal to $0$; \textit{cf.}~Lemma~\ref{eq:lemma}), the implication that follows multiplies the expression by the positive quantity $c\sigma^2$, and the final implication divides the expression by the positive quantity $\Delta - x_{\max}$. The final equality can be substituted in the definition of $f_{\mathrm{q}}(x_{\max}; \sigma)$ to conclude:
\begin{align}\label{eq:max}
        f_{\mathrm{q}}(x_{\max}; \sigma) &= \dfrac{e^\varepsilon}{c} \exp\left(- \frac{x_{\max}^2}{2\sigma^2}\right) +  \dfrac{1}{c} \exp\left(- \frac{(x_{\max} - \Delta)^2}{2\sigma^2}\right), \notag \\
        & = \dfrac{1}{c} e^\varepsilon \dfrac{\Delta}{\Delta - x_{\max}} \exp\left(- \frac{x_{\max}^2}{2\sigma^2}\right).
\end{align}
Note that $x_{\max}$ is a function of $\sigma$ since it is the solution that maximizes $f_\mathrm{q}(x; \sigma)$ on $(0, \Delta/2)$. However, it can be treated as constant when computing the derivative of $f_{\mathrm{q}}(x_{\max}; \sigma)$ in $\sigma$, since $x_{\max}$ is a local maximum. To better reflect this, use a generic $x(\sigma)$ instead of $x_{\max}$, we can use the chain rule and show:
\begin{align*}
    \dfrac{\mathrm{d}}{\mathrm{d}\sigma} f_\mathrm{q}(x(\sigma); \sigma) = \underbrace{\dfrac{\partial}{\partial x}  f_\mathrm{q}(x; \sigma)}_{=0 \text{ when } x(\sigma) = x_{\max}} \cdot \dfrac{\mathrm{d}}{\mathrm{d} \sigma} x(\sigma) + \dfrac{\partial}{\partial \sigma'} f_\mathrm{q}(x(\sigma); \sigma'),
\end{align*}
and evaluating this derivative at $x(\sigma) = x_{\max}$ gives us
\begin{align*}
     \dfrac{\mathrm{d}}{\mathrm{d}\sigma}\Bigr|_{x(\sigma) = x_{\max}} f_\mathrm{q}(x(\sigma); \sigma) = \dfrac{\partial}{\partial \sigma} f_\mathrm{q}(x_{\max}; \sigma),
\end{align*}
that is, $x_{\max}$ can be treated as a constant. We thus have:
\begin{align}\label{eq:derivative_max}
    \dfrac{\mathrm{d}}{\mathrm{d}\sigma} f_{\mathrm{q}, \max}(\sigma) = \dfrac{\partial}{\partial \sigma} f_{\mathrm{q}}(x_{\max}; \sigma) = \dfrac{1}{c} e^\varepsilon \dfrac{\Delta}{\Delta - x_{\max}} \dfrac{x_{\max}^2}{\sigma^3} \exp\left(- \frac{x_{\max}^2}{2\sigma^2}\right).
\end{align}
We will conclude the desired result for this case if we can show 
\begin{align}
    \left(\dfrac{f_{\mathrm{q}, \max}(\sigma)}{f_{\mathrm{q}, \min}(\sigma)} \right)' \leq 0.
\end{align}
We have
\begin{align*}
    \left(\dfrac{f_{\mathrm{q}, \max}(\sigma)}{f_{\mathrm{q}, \min}(\sigma)} \right)' \propto & \quad  f'_{\mathrm{q}, \max}(\sigma)f_{\mathrm{q}, \min}(\sigma) - f'_{\mathrm{q}, \min}(\sigma)f_{\mathrm{q}, \max}(\sigma)\\
     = &  \left[\dfrac{1}{c} e^\varepsilon \dfrac{\Delta}{\Delta - x_{\max}} \dfrac{x_{\max}^2}{\sigma^3} \exp\left(- \frac{x_{\max}^2}{2\sigma^2}\right) \cdot \dfrac{1}{c} e^\varepsilon \exp\left( - \dfrac{\Delta^2}{2 \sigma^2} \right) \right] - \\
    & \qquad \left[ \dfrac{1}{c} e^\varepsilon \dfrac{\Delta^2}{\sigma^3} \exp\left( - \dfrac{\Delta^2}{2 \sigma^2} \right) \cdot \dfrac{1}{c} e^\varepsilon \dfrac{\Delta}{\Delta - x_{\max}} \exp\left(- \frac{x_{\max}^2}{2\sigma^2}\right) \right] \\
    \propto & \dfrac{\Delta}{\Delta - x_{\max}} \exp\left( - \dfrac{\Delta^2}{2 \sigma^2} \right) \exp\left(- \frac{x_{\max}^2}{2\sigma^2}\right) \left[\dfrac{x_{\max}^2}{\sigma^3} - \dfrac{\Delta^2}{\sigma^3} \right] \\
    \propto & \dfrac{x_{\max}^2}{\sigma^3} - \dfrac{\Delta^2}{\sigma^3} \; \leq 0.
\end{align*}
Here, the first step uses the quotient rule and ignores the positive quantity $f_{\mathrm{q},\min}(\sigma)^{-2}$, the second step uses equations~\eqref{eq:min},~\eqref{eq:derivative_min},~\eqref{eq:max} and~\eqref{eq:derivative_max}, the third step ignores the positive term $(e^{\varepsilon}/c)^2$, the fourth step ignores the exponential terms (as they are positive) as well as $\Delta  / (\Delta - x_{\max})$ (since $x_{\max} \in (0, \Delta/2)$), and the final inequality follows from $x_{\max} \in (0, \Delta/2)$. 

For the \emph{second case}, suppose $\Delta^2 > 4 \sigma^2$ and $g(x_2) \leq 0$ where $x_2$ and $g$ are defined as in Lemma~\ref{eq:lemma}. Similarly to the first case the minimizer of $f_{\mathrm{q}(x; \sigma)}$ over $[0,\Delta]$ (for fixed $\sigma$) is $x = \Delta$. The maximizer is the local maximizer in the unimodal region $(0,x_1)$, which is a smaller region than the one in the first case ($x_1 \leq \Delta/2$), hence the above analysis applies analogously here. 

For the \emph{third case}, suppose $\Delta^2 > 4 \sigma^2$ and $g(x_2) > 0$ where $x_2$ and $g$ are defined as in Lemma~\ref{eq:lemma}. Let $x_{\max}$ be the solution that satisfies $f_{\mathrm{q}, \max}(\sigma)  = f_{\mathrm{q}}(x_{\max};\sigma)$, and $x_{\min}$ be the solution that satisfies $f_{\mathrm{q}, \min}(\sigma)  = f_{\mathrm{q}}(x_{\min};\sigma)$. From Lemma~\ref{eq:lemma} it follows that $x_{\max}$ is the local maximum of $f_{\mathrm{q}}(x; \sigma)$ over the unimodal region $x \in (0, x_1)$ for $x_1$ as defined in Lemma~\ref{eq:lemma}. On the other hand, $x_{\min}$ is either equal to $\Delta$, or it is the local minimum of $f_{\mathrm{q}}(x; \sigma)$ over the unimodal region $x \in (\Delta/2, x_2)$ for $x_2$ as defined in Lemma~\ref{eq:lemma}. If $x_{\min} = \Delta$, the proof follows analogously as in the first two cases. Hence, for the rest of the proof, we assume without loss of generality that:
\begin{enumerate}
    \item[\textit{(i)}] $x_{\min}$ is the local minimum of $f_{\mathrm{q}}(x; \sigma)$ over the unimodal region $x \in (\Delta/2, x_2)$ for $x_2 = \Delta/2 + \sqrt{\Delta^2 - 4\sigma^2}/2$;
    \item[\textit{(ii)}] $x_{\max}$ is the local maximum of $f_{\mathrm{q}}(x; \sigma)$ over the unimodal region $x \in (0, x_1)$ for $x_1 = \Delta/2 - \sqrt{\Delta^2 - 4\sigma^2}/2$.
\end{enumerate}

Since $f_{\mathrm{q}, \max}(\sigma)$ and $f_{\mathrm{q}, \min}(\sigma)$ are positive functions, a sufficient condition for $f_{\mathrm{q}, \max}(\sigma)/f_{\mathrm{q}, \min}(\sigma)$ being decreasing in $\sigma$ is $\log(f_{\mathrm{q}, \max}(\sigma)/f_{\mathrm{q}, \min}(\sigma))$ being decreasing in $\sigma$. We have:
\begin{align*}
    \dfrac{\mathrm{d}}{\mathrm{d}\sigma} \log\left( \dfrac{f_{\mathrm{q}, \max}(\sigma)}{f_{\mathrm{q}, \min}(\sigma)} \right) & = \dfrac{\mathrm{d}}{\mathrm{d}\sigma} \left[ \log( f_{\mathrm{q}, \max}(\sigma)) - \log( f_{\mathrm{q}, \min}(\sigma)) \right]\\
    & = \underbrace{\dfrac{f'_{\mathrm{q}, \max}(\sigma)}{f_{\mathrm{q}, \max}(\sigma)}}_{(i)} - \underbrace{\dfrac{f'_{\mathrm{q}, \min}(\sigma)}{f_{\mathrm{q}, \min}(\sigma)}}_{(ii)}.
\end{align*}
We will thus show that $(i) \leq (ii)$ and conclude the desired result. From~\eqref{eq:derivative_max} and~\eqref{eq:max}, we have:
\begin{align*}
    \dfrac{f'_{\mathrm{q}, \max}(\sigma)}{f_{\mathrm{q}, \max}(\sigma)} & = \dfrac{\dfrac{1}{c} e^\varepsilon \dfrac{\Delta}{\Delta - x_{\max}} \dfrac{x_{\max}^2}{\sigma^3} \exp\left(- \frac{x_{\max}^2}{2\sigma^2}\right)}{ \dfrac{1}{c} e^\varepsilon \dfrac{\Delta}{\Delta - x_{\max}} \exp\left(- \frac{x_{\max}^2}{2\sigma^2}\right)} = \dfrac{x^2_{\max}}{\sigma^3}.
\end{align*}
With analogous calculations, we have:
\begin{align*}
    \dfrac{f'_{\mathrm{q}, \min}(\sigma)}{f_{\mathrm{q}, \min}(\sigma)} & = \dfrac{x^2_{\min}}{\sigma^3}.
\end{align*}
Since $x_{\min} \in (\Delta/2, \Delta/2 + \sqrt{\Delta^2 - 4\sigma^2}/2)$ and $x_{\max} \in (0, \Delta/2 - \sqrt{\Delta^2 - 4\sigma^2}/2)$, we always have $x_{\max} < x_{\min}$, hence $(i) \leq (ii)$.

We conclude the proof of monotonicity since we showed that, regardless of the case, the derivative of $f_{\mathrm{q}, \max}(\sigma)/f_{\mathrm{q}, \min}(\sigma)$ is negative. We can ignore the end-points of the cases since all the functions and their derivatives investigated in this proof are continuous in.

Finally, to see that the function of interest is upper bounded by $e^\varepsilon$ at $\sigma = \sqrt{\Delta^2 / (2\varepsilon)}$, note that 
    \begin{align*}
        \underset{x \in [0, \Delta]}{\max} \; f_\mathrm{q}(x; \sigma) & \leq \dfrac{1}{\sqrt{2\pi}\sigma}\max\left\{\max_{x\in [0,\Delta]}  \exp\left(-\dfrac{x^2}{2\sigma^2}\right),  \max_{x\in [0,\Delta]} \exp\left(-\dfrac{(\lvert x\rvert-\Delta)^2}{2\sigma^2}\right) \right\} \\
        & = \dfrac{1}{\sqrt{2\pi}\sigma}\max\left\{\exp\left(-\dfrac{0}{2\sigma^2}\right), \exp\left(-\dfrac{0}{2\sigma^2}\right) \right\} \\
        & =  \dfrac{1}{\sqrt{2\pi}\sigma} 
    \end{align*}
    where the inequality holds since $f_\mathrm{q}(\cdot; \sigma)$ is the convex combination of two distributions, hence the maximum of $f_\mathrm{q}(\cdot; \sigma)$ is upper bounded by one of the maxima of each of these two distributions, the first equality holds by solving the inner $\max$-problems, and the second equality holds by noting that both inner $\max$ problems have the same optimal value. Analogously, we can show
    \begin{align*}
        \underset{x \in [0, \Delta]}{\min} \; f_\mathrm{q}(x; \sigma) & \geq \dfrac{1}{\sqrt{2\pi}\sigma}\min\left\{\min_{x\in [0,\Delta]}  \exp\left(-\dfrac{x^2}{2\sigma^2}\right),  \min_{x\in [0,\Delta]} \exp\left(-\dfrac{(\lvert x\rvert-\Delta)^2}{2\sigma^2}\right) \right\} \\
        & = \dfrac{1}{\sqrt{2\pi}\sigma}\min\left\{\exp\left(-\dfrac{\Delta^2}{2\sigma^2}\right), \exp\left(-\dfrac{\Delta^2}{2\sigma^2}\right) \right\} \\
        & =  \dfrac{1}{\sqrt{2\pi}\sigma} \exp\left(-\dfrac{\Delta^2}{2\sigma^2}\right).
    \end{align*}
    We thus have
    \begin{align*}
        \dfrac{\underset{x \in [0, \Delta]}{\max} \; f_\mathrm{q}(x; \sigma)}{\underset{x \in [0, \Delta]}{\min} \; f_\mathrm{q}(x; \sigma)} & \leq \dfrac{\dfrac{1}{\sqrt{2\pi}\sigma}}{\dfrac{1}{\sqrt{2\pi}\sigma} \exp\left(-\dfrac{\Delta^2}{2\sigma^2}\right)} \\
        & = \exp \left(\dfrac{\Delta^2}{2\sigma^2} \right)
    \end{align*}
    and plugging in $\sigma = \Delta / \sqrt{2\varepsilon}$ sets the above quantity to $e^\varepsilon$.

\subsection{Plots for Section~\ref{sec:qG}}\label{app:plots_intuition}
We visualize the monotonicity of the constraint of $\sigma_1$ in~\eqref{eq:sigma_1}, as discussed in Lemma~\ref{lem:monotonicity}. We visualize a case where $\Delta = 1$, and plot two cases: the left plot uses $(\varepsilon, \delta) = (1, 0.1)$ and the right plot uses $(\varepsilon, \delta) = (1, 0.25)$. In the first plot, we observe that the function of interest is monotonically increasing on $(0, \sqrt{2(\varepsilon - \log\delta)}\Delta / \varepsilon)$ and is nonnegative beyond this region. The point where the function hits $0$ is also visualized as $\sigma_1$. In the second case, we have $e^\varepsilon + 2 \geq \delta^{-1}$, and thus the function is nonnegative (hence $\sigma_1 = 0$). 
\begin{figure}[!h]
    \centering
    \includegraphics[width=0.48\linewidth]{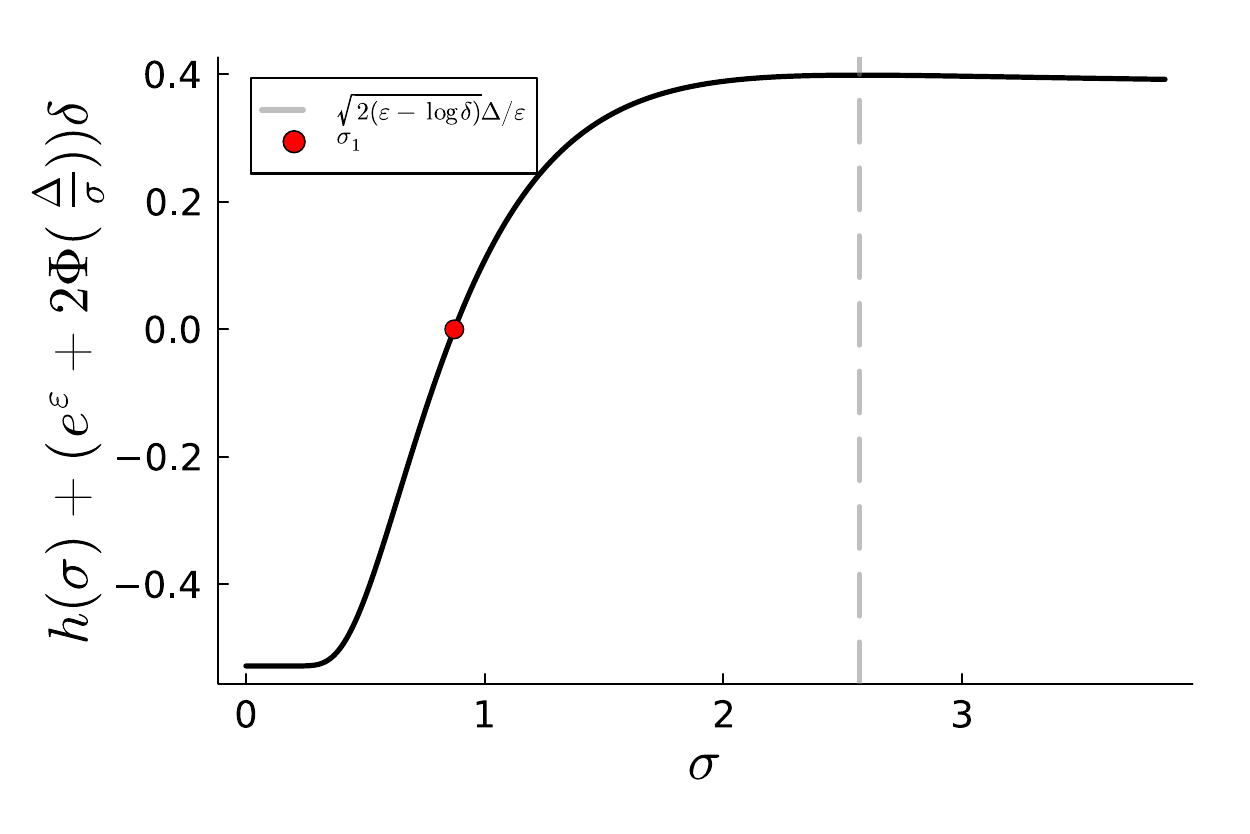}
    \includegraphics[width=0.48\linewidth]{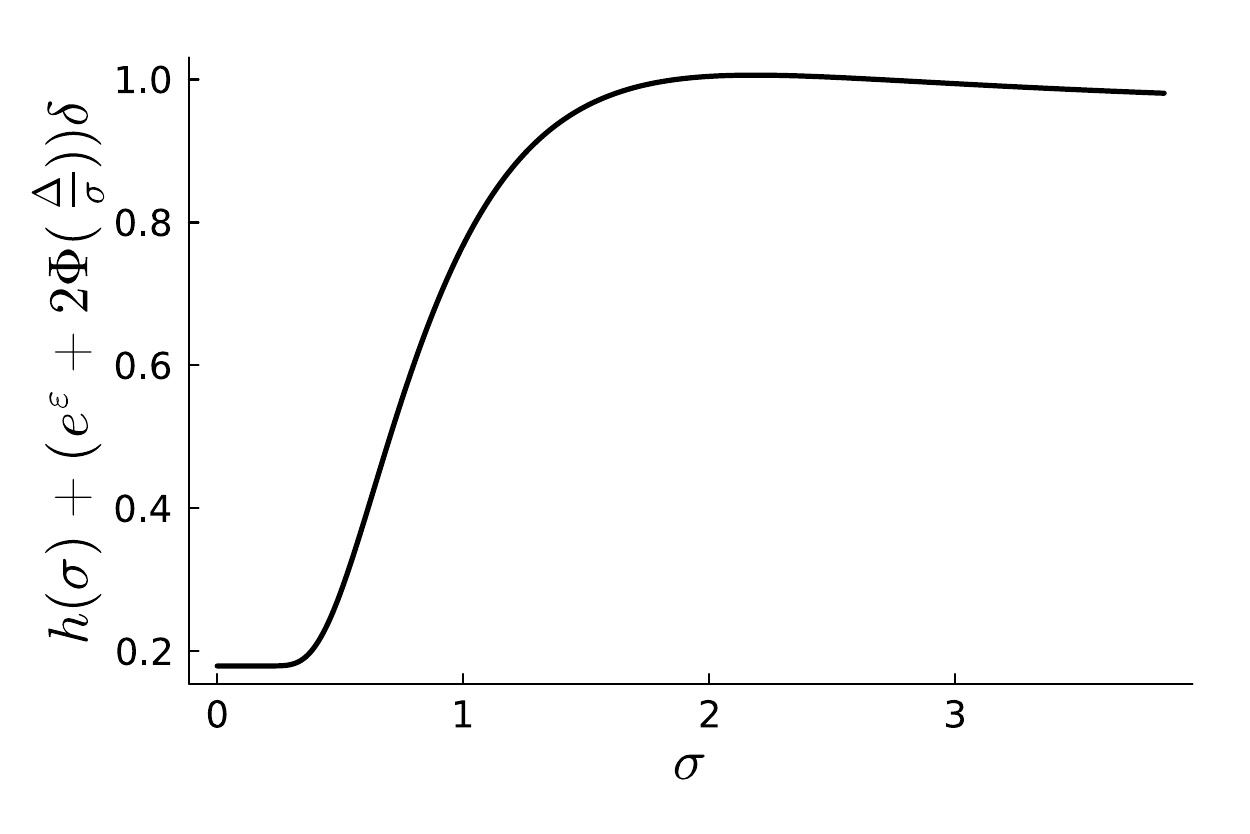}
    \caption{The constraint of~\eqref{eq:sigma_1} as a function of $\sigma$ under $\Delta = 1$, $(\varepsilon, \delta) = (1, 0.1)$ (left) and $(\varepsilon, \delta) = (1, 0.25)$ (right).}
    \label{fig:h}
\end{figure}

Finally, in Figure~\ref{fig:maxmin}, we visualize Lemmas~\ref{eq:lemma} and~\ref{lem:monotonic_ratio} that are used in finding $\sigma_2$ in~\eqref{eq:sigma_2}, for an instance with $\Delta = 1$ and $\varepsilon = 1$. The left figure plots $f_{\mathrm{q}}(x; \sigma)$ on $x \in [0,\Delta]$ for fixed $\sigma = 0.2$ and marks the maximizer and minimizer of this function over this range. We can observe what Lemma~\ref{eq:lemma} proves: the maximizer $x_{\max}$ is the local maximum of the function on the unimodal region $(0, x_1)$, and the minimizer $x_{\min}$ is the local minimum of the function on the unimodal region $(\Delta/2, x_2)$. The term $f_{\mathrm{q}}(x_{\max}; \sigma)/f_{\mathrm{q}}(x_{\min}; \sigma) - e^\varepsilon$, as a function of $\sigma$, is monotonically decreasing as proved in Lemma~\ref{lem:monotonic_ratio}, which is displayed in the right figure.
\begin{figure}[!h]
    \centering
    \includegraphics[width=0.48\linewidth]{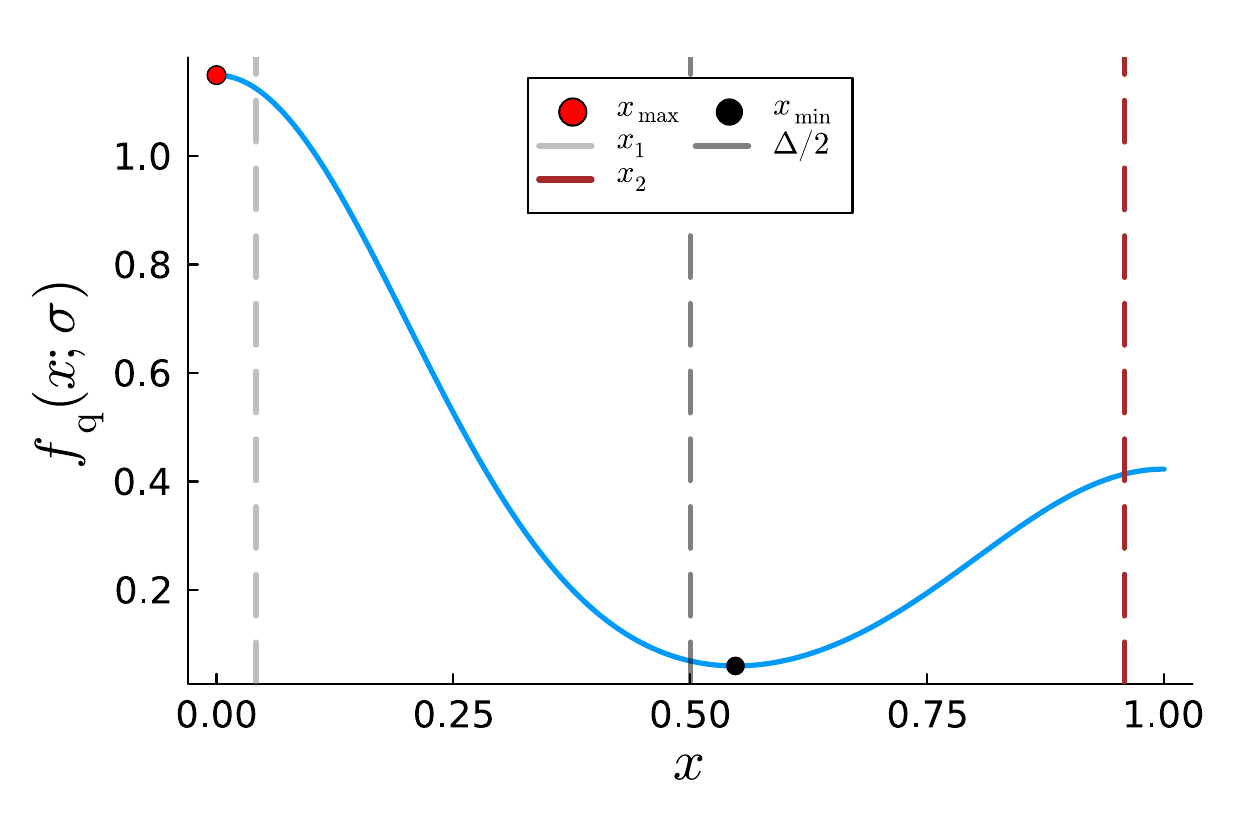}
    \includegraphics[width=0.48\linewidth]{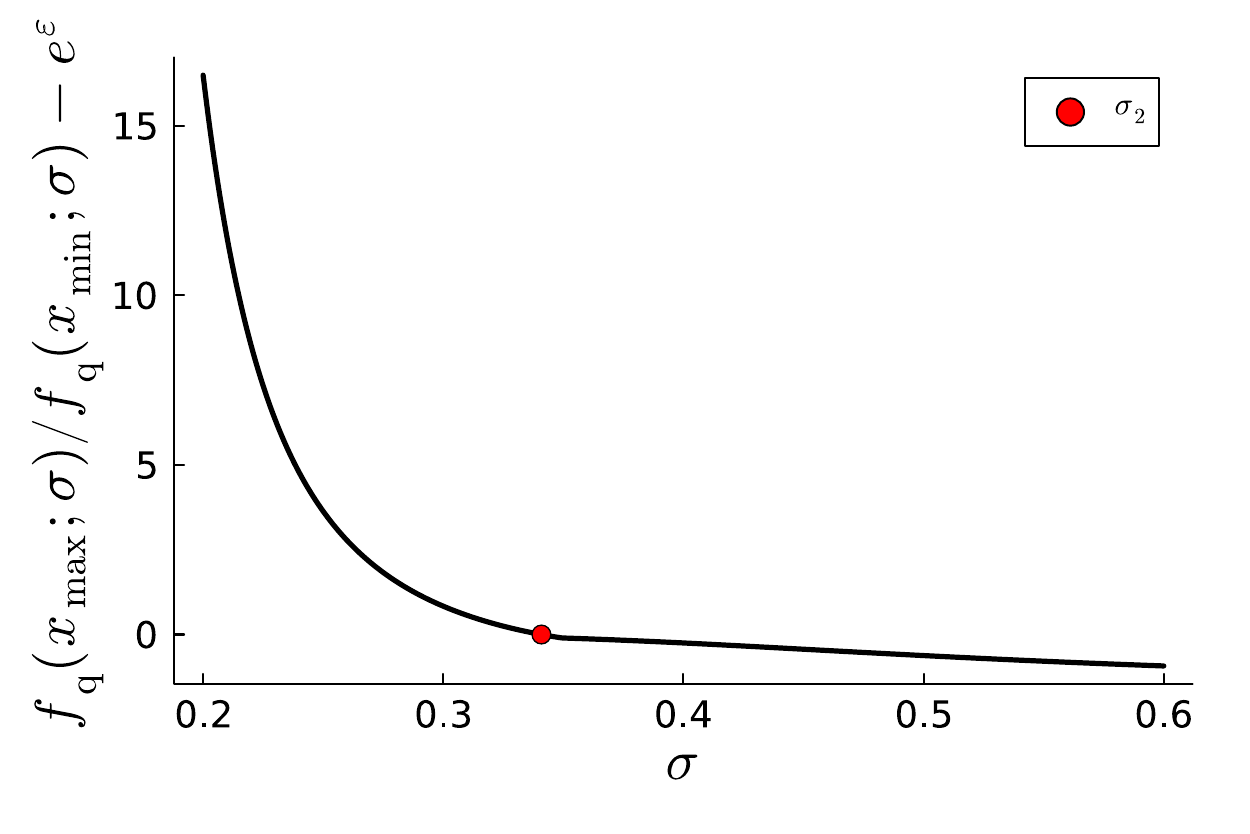}
    \caption{(Left) The density $f_{\mathrm{q}}(x; \sigma)$ on $x \in [0,\Delta]$ for $\Delta = 1$, $\varepsilon = 1$, and $\sigma = 0.2$. The notation $x_1, x_2, x_{\max}$ and $x_{\min}$ comes from Algorithm~\ref{alg:compute_maxmin}. (Right) The term $f_{\mathrm{q}}(x_{\max}; \sigma)/f_{\mathrm{q}}(x_{\min}; \sigma) - e^\varepsilon$ displayed as a function of $\sigma$ where $\Delta = 1$ and $\varepsilon = 1$. The point that sets this expression to $0$ is the $\sigma_2$ of~\eqref{eq:sigma_2}.}
    \label{fig:maxmin}
\end{figure}

\subsection{Proof of Theorem~\ref{thm:runtime_cG}}
For the correctness of the algorithm, note that the $\sigma_1$ computed coincides with the $\sigma_1$ in Theorem~\ref{thm:qG}, which is found via a bisection search method due to the monotonicity presented in Lemma~\ref{lem:monotonicity}. This lemma also gives an upper bound for $\sigma$ as beyond this upper bound the function is nonnegative, and Algorithm~\ref{alg:qG} accordingly restricts the search to a bounded region. Note that the if condition in this algorithm also ensures that we apply the bisection search method only in the case if the function is not nonnegative everywhere, as presented by Lemma~\ref{lem:monotonicity}. The $\sigma_2$ computed in this algorithm, similarly, coincides with $\sigma_2$ of Theorem~\ref{thm:qG}. This value can be found via a bisection search restricted to $(0, \sqrt{\Delta^2/(2\varepsilon)})$ due to Lemma~\ref{lem:monotonic_ratio}. Each evaluation of the bisection search relies on Lemma~\ref{eq:lemma} where the if-else if-else condition shows the cases presented in this lemma (the cases are elaborated in the unabridged version of the lemma). Finally, Algorithm~\ref{alg:qG} returns $\sigma$ as the maximum of $\sigma_1$ and $\sigma_2$, and therefore from Theorem~\ref{thm:qG} we conclude the feasibility of the returned value.

For the runtime complexity, first note that we treat the tolerances in the bisection and golden-section search methods as constants. Note that one needs to be careful in implementation since numerically the tolerances should be reflected in the presentation of $\delta$ to ensure feasibility. {\color{black}We also set $\Delta=1$ without loss of generality in the runtime analysis. To see this, let $f_{\mathrm q}^{\Delta}(\cdot;\sigma)$ denote the quasi-Gaussian mixture density with sensitivity $\Delta$. Under the change of variables $y=x/\Delta$ and $\lambda=\sigma/\Delta$, we have
\[
    f_{\mathrm q}^{\Delta}(\Delta y;\Delta\lambda)
    =
    \frac{1}{\Delta} f_{\mathrm q}^{1}(y;\lambda).
\]
Thus, the conditions defining $\sigma_1$ and $\sigma_2$ depend on $\sigma$ and $\Delta$ only through $\lambda=\sigma/\Delta$. The scale for the original sensitivity-$\Delta$ problem is then recovered as $\sigma=\Delta\lambda$.}

Now, note that Algorithm~\ref{alg:qG} has two bisection search methods implemented. The first one has a search radius of {\color{black}$\sqrt{2(\varepsilon - \log \delta)}/ \varepsilon$ after the normalization $\Delta=1$} and the evaluation of $\psi_1$ is constant assuming computation of $\Phi(\cdot)$ is constant. The complexity of the first bisection search is thus:
\begin{align*}
    {\color{black}
    \mathcal{O}\left(
        \log\left(1+\frac{1}{\varepsilon}\right)
        +
        \log\left(1+\log \frac{1}{\delta}\right)
    \right).}
\end{align*}
{\color{black}Here the ``$1+$'' terms are used only to write a bound that remains nonnegative in the moderate and low-privacy regimes considered in the paper. In particular, when $\varepsilon$ is large, the dependence on the search radius should be treated as constant rather than through the negative quantity $\log(1/\varepsilon)$; similarly, $\log(1+\log(1/\delta))$ avoids a regime-specific discussion of when $\log\log(1/\delta)$ is nonnegative.}

On the other hand, the second bisection search has a radius of {\color{black}$\sqrt{1/(2\varepsilon)}$ after the normalization $\Delta=1$}, where each iteration relies on Algorithm~\ref{alg:compute_maxmin} which, in the worst case, calls a golden-section search method whose search region is a subset of {\color{black}$(0,1)$}. The computation of $f_\mathrm{q}(x;\sigma)$ during the golden-section search evaluations is constant, hence the golden section has {\color{black}constant complexity under our fixed-tolerance convention}. The complexity of the bisection search is overall:
\begin{align*}
    {\color{black}
    \mathcal{O}\left(\log\left(1+\frac{1}{\varepsilon}\right)\right).}
\end{align*}
The sum of the complexities of both bisection search methods concludes the complexity presented in the statement of this theorem.

{\color{black}
\subsection{Proof of Proposition~\ref{prop:qg_is_better_strictly}}
Throughout this proof, we fix an arbitrary query sensitivity $\Delta > 0$. The proof of Proposition~\ref{prop:qg_is_better_strictly} relies on the following auxiliary lemma.
\begin{lemma}\label{lem:aux_lem_qg_loss}
    For any $\delta \in (0, 1/2)$ and $\varepsilon>0$ satisfying $\varepsilon \geq \log(\delta^{-1} -2)$, setting $\sigma = \Delta / \sqrt{2\varepsilon}$ for the quasi-Gaussian mixture distribution gives a feasible $(\varepsilon, \delta)$-DP mechanism. Consequently, the $l_2$-loss of the quasi-Gaussian mixture mechanism satisfies
    \begin{align*}
        L_2^{\mathrm{QG}}(\varepsilon, \delta) \; \leq \; \dfrac{\Delta^2}{2\varepsilon} + \left(4 + \dfrac{1}{\varepsilon} \right) \Delta^2 e^{-\varepsilon}.
    \end{align*}
    In particular, if $\varepsilon \geq \max\{1, \log(\delta^{-1}-2) \}$, then we have
    \begin{align*}
        L_2^{\mathrm{QG}}(\varepsilon, \delta) \; \leq \; \dfrac{\Delta^2}{2\varepsilon} + 5\Delta^2 e^{-\varepsilon}.
    \end{align*}
\end{lemma}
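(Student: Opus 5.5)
The plan is to obtain feasibility directly from Theorem~\ref{thm:qG} rather than redoing a pointwise domination argument as in Lemma~\ref{lem:aux_lem_3_third}. I will check that $\sigma = \Delta/\sqrt{2\varepsilon}$ satisfies $\sigma \geq \max\{\sigma_1,\sigma_2\}$, and then bound the noise power using the closed form from Appendix~\ref{app:l1_l2}.

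\textbf{Feasibility.} For $\sigma_1$, I rewrite the hypothesis $\varepsilon \geq \log(\delta^{-1}-2)$ as $e^\varepsilon + 2 \geq \delta^{-1}$. By the last statement of Lemma~\ref{lem:monotonicity}, $h_1+h_2$ is then nonnegative for every $\sigma>0$. Hence the constraint in~\eqref{eq:sigma_1} holds everywhere, $\sigma_1$ is effectively $0$ (this is the \texttt{else} branch of Algorithm~\ref{alg:qG}), and any $\sigma>0$ meets the $\sigma_1$-requirement. The proof of Theorem~\ref{thm:qG} only uses $h_1(\sigma)+h_2(\sigma)\geq 0$ at the chosen $\sigma$, which makes this step immediate. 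For $\sigma_2$, Lemma~\ref{lem:monotonic_ratio} says the ratio $f_{\mathrm{q},\max}/f_{\mathrm{q},\min}$ is nonincreasing in $\sigma$ and at most $e^\varepsilon$ at $\sigma=\sqrt{\Delta^2/(2\varepsilon)}$. So this $\sigma$ satisfies the constraint in~\eqref{eq:sigma_2}, and $\sigma\geq\sigma_2$. Theorem~\ref{thm:qG} then gives $(\varepsilon,\delta)$-DP.

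\textbf{Loss bound.} From Appendix~\ref{app:l1_l2}, writing $\Phi\coloneqq\Phi(\Delta/\sigma)$, I will rewrite
\begin{align*}
L_2^{\mathrm{QG}} \;=\; \sigma^2 + \frac{2\Phi\Delta^2 + \frac{2\sigma\Delta}{\sqrt{2\pi}}e^{-\Delta^2/(2\sigma^2)}}{e^\varepsilon + 2\Phi}
\;\leq\; \sigma^2 + e^{-\varepsilon}\left(2\Delta^2 + \frac{2\sigma\Delta}{\sqrt{2\pi}}e^{-\Delta^2/(2\sigma^2)}\right).
\end{align*}
The equality comes from splitting $e^\varepsilon\sigma^2+2\Phi\sigma^2$ off the numerator. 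The inequality uses $\Phi\leq 1$ and drops $2\Phi$ from the denominator. Substituting $\sigma=\Delta/\sqrt{2\varepsilon}$ gives $e^{-\Delta^2/(2\sigma^2)}=e^{-\varepsilon}\leq 1$ and $2\sigma\Delta/\sqrt{2\pi}=\Delta^2/\sqrt{\pi\varepsilon}$. The remaining task is to show $1/\sqrt{\pi\varepsilon}\leq 2+1/\varepsilon$ for all $\varepsilon>0$. For $\varepsilon\geq 1/\pi$ the left side is at most $1$. For $\varepsilon<1/\pi$ it is at most $1/(\pi\varepsilon)\leq 1/\varepsilon$. This yields the stated bound with room to spare, namely $\sigma^2+(2+2+1/\varepsilon)\Delta^2e^{-\varepsilon}$. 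The case $\varepsilon\geq1$ follows by $1/\varepsilon\leq1$.

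\textbf{Main obstacle.} I do not expect a real obstacle; the only delicate point is the $\sigma_1$ step. One must read $\sigma_1$ in Theorem~\ref{thm:qG} correctly when the constraint holds for all $\sigma$, since the minimum is then not attained. I would argue through the proof of Theorem~\ref{thm:qG}, which only needs the constraint to hold at the chosen $\sigma$, so this is harmless.
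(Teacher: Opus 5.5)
Your proof is correct. The feasibility part is the paper's argument, and the loss bound takes a slightly different route.

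\textbf{Feasibility.} You use $e^\varepsilon+2\ge\delta^{-1}$ with Lemma~\ref{lem:monotonicity} to handle $\sigma_1$. You use the explicit ratio bound of Lemma~\ref{lem:monotonic_ratio} at $\sigma=\Delta/\sqrt{2\varepsilon}$ to handle $\sigma_2$. This matches the paper. Your observation that the proof of Theorem~\ref{thm:qG} only needs $h_1(\sigma)+h_2(\sigma)\ge 0$ at the chosen $\sigma$ is a cleaner justification of the paper's ``set $\sigma_1=0$ without loss of generality.''

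\textbf{Loss bound.} The paper splits $f_{\mathrm q}$ into $\mathcal{N}(0,\sigma^2)$ with weight $\alpha_\sigma$ plus the folded component $h_\sigma$. It writes the noise power as $\alpha_\sigma\sigma^2+2I_\sigma/(e^\varepsilon+2\Phi(\Delta/\sigma))$. It then bounds the truncated integral $I_\sigma$ by the untruncated second moment, $\sigma^2+\Delta^2\le\sigma^2+2\Delta^2$. This gives $\sigma^2+2e^{-\varepsilon}(\sigma^2+2\Delta^2)$, and the constant $4+1/\varepsilon$ appears directly; the $1/\varepsilon$ comes from the double-counted $2e^{-\varepsilon}\sigma^2$.

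You instead start from the exact closed form in Appendix~\ref{app:l1_l2} and split off $\sigma^2$ exactly. You keep the boundary term $\frac{2\sigma\Delta}{\sqrt{2\pi}}e^{-\Delta^2/(2\sigma^2)}$, which equals $\Delta^2e^{-\varepsilon}/\sqrt{\pi\varepsilon}$ at the chosen $\sigma$. This yields the sharper intermediate bound $\sigma^2+e^{-\varepsilon}\Delta^2\bigl(2+e^{-\varepsilon}/\sqrt{\pi\varepsilon}\bigr)$, with no double counting and an extra $e^{-\varepsilon}$ you simply discard. The cost is the small case split showing $1/\sqrt{\pi\varepsilon}\le 2+1/\varepsilon$ to reach the stated form, which the paper's cruder estimate avoids. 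Both routes give $\Delta^2/(2\varepsilon)+O(e^{-\varepsilon})$ for $\varepsilon\ge 1$, which is all Proposition~\ref{prop:qg_is_better_strictly} needs.
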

\begin{proof}
We will first argue that the quasi-Gaussian mixture mechanism with $\sigma = \Delta / \sqrt{2\varepsilon}$ is feasible for $(\varepsilon,\delta)$-DP and subsequently study the $l_2$-loss of this mechanism. 

Recall that, by Theorem~\ref{thm:qG}, the quasi-Gaussian mixture mechanism is $(\varepsilon,\delta)$-DP whenever $\sigma \geq \max\{\sigma_1,\sigma_2\}$. Since we have $e^\varepsilon + 2 \geq \delta^{-1}$ from the statement of this lemma, Lemma~\ref{lem:monotonicity} implies that we can set $\sigma_1=  0$ without loss of generality and therefore we have $\sigma \geq \sigma_1$. To conclude feasibility, we thus need to show $\sigma \geq \sigma_2$ holds. This immediately follows from Lemma~\ref{lem:monotonic_ratio} which shows that we always have $\sigma_2 \leq \Delta / \sqrt{2\varepsilon} = \sigma$. The quasi-Gaussian mixture mechanism with $\sigma = \Delta / \sqrt{2\varepsilon}$ therefore satisfies $(\varepsilon,\delta)$-DP, which implies
\begin{align*}
    L_2^{\mathrm{QG}}(\varepsilon,\delta)
    \; \leq \;
    \mathbb{E}[\tilde{X}^2],
\end{align*}
where $\tilde{X}$ denotes a quasi-Gaussian random variable with $\sigma = \Delta / \sqrt{2\varepsilon}$. In the remainder of this proof, we bound $\mathbb{E}[\tilde{X}^2]$.

By Definition~\ref{def:qG}, the quasi-Gaussian density can be written as
\begin{align*}
    f_{\mathrm{q}}(x;\sigma)
    \; = \;
    \alpha_\sigma g_\sigma(x)+(1-\alpha_\sigma)h_\sigma(x),
\end{align*}
where $g_\sigma$ is the density of $\mathcal{N}(0,\sigma^2)$, the parameter $\alpha_\sigma$ is defined as
\begin{align*}
    \alpha_\sigma
    \coloneqq
    \dfrac{e^\varepsilon}{e^\varepsilon+2\Phi(\Delta/\sigma)} \in [0,1],
\end{align*}
and the density $h_\sigma$ is defined as
\begin{align*}
    h_\sigma(x)
    \coloneqq
    \dfrac{1}{\sqrt{2\pi}\sigma\,2\Phi(\Delta/\sigma)}
    \exp\left(-\dfrac{(|x|-\Delta)^2}{2\sigma^2}\right).
\end{align*}
Thus, $\tilde{X}$ is the mixture of a zero-mean Gaussian with standard deviation $\sigma$ as well as another random variable $\tilde{Y}$ with density $h_\sigma(\cdot)$, with weights $\alpha_\sigma$ and $1-\alpha_\sigma$, respectively. We thus exploit the mixture structure and derive
\begin{align}\label{eq:bound_second_term_come_back}
    \mathbb{E}[\tilde{X}^2]
    \; = \;
    \alpha_\sigma \sigma^2 + (1-\alpha_\sigma)\mathbb{E}[\tilde{Y}^2].
\end{align}

We next rewrite the term $\mathbb{E}[\tilde{Y}^2]$ in~\eqref{eq:bound_second_term_come_back} as follows
\begin{align*}
    \mathbb{E}[\tilde{Y}^2]
    \; &= \;
    \dfrac{1}{\sqrt{2\pi}\sigma\,\Phi(\Delta/\sigma)}
    \int_0^\infty x^2
    \exp\left(-\dfrac{(x-\Delta)^2}{2\sigma^2}\right)\mathrm dx\\
    & = \;  \dfrac{1}{\Phi(\Delta/\sigma)}
    \underbrace{\int_{-\Delta}^{\infty} (y+\Delta)^2
    \dfrac{1}{\sqrt{2\pi}\sigma}
    \exp\left(-\dfrac{y^2}{2\sigma^2}\right)\mathrm dy}_{\coloneqq I_\sigma} \\
    & = \; \dfrac{I_\sigma}{\Phi(\Delta/\sigma)}
\end{align*}
where the first equality exploits the symmetry of the absolute value in the density $h_\sigma(x)$, the second equality applies a change of variable $y = x - \Delta$, and the final equality uses the definition of $I_\sigma$.

Plugging this back in~\eqref{eq:bound_second_term_come_back} yields
\begin{align}\label{eq:final_bound}
    \mathbb{E}[\tilde{X}^2]
    \; = \;
    \alpha_\sigma \sigma^2 + (1-\alpha_\sigma)\dfrac{I_\sigma}{\Phi(\Delta/\sigma)} \; = \; \alpha_\sigma \sigma^2 + \dfrac{2 I_\sigma}{e^\varepsilon+2\Phi(\Delta/\sigma)} \; \leq \; \alpha_\sigma \sigma^2 + 2 e^{-\varepsilon} I_\sigma,
\end{align}
where the second equality follows from the definition of $\alpha_{\sigma}$ and the inequality holds by lower bounding the denominator. Finally, by expanding the integration region in the definition of $I_\sigma$, we obtain
\begin{align*}
    I_\sigma
    \; \leq \;
    \int_{-\infty}^{\infty} (y+\Delta)^2
    \dfrac{1}{\sqrt{2\pi}\sigma}
    \exp\left(-\dfrac{y^2}{2\sigma^2}\right)\mathrm dy
    \; = \;
    \sigma^2+\Delta^2
    \; \leq \;
    \sigma^2+2\Delta^2,
\end{align*}
and plugging this bound in~\eqref{eq:final_bound} concludes
\begin{align*}
    \mathbb{E}[\tilde{X}^2]
    \; \leq \;
    \alpha_\sigma\sigma^2 + 2e^{-\varepsilon}(\sigma^2+2\Delta^2) \; \leq \; \sigma^2 + 2e^{-\varepsilon}(\sigma^2+2\Delta^2),
\end{align*}
where the final inequality holds since $\alpha_\sigma \in [0,1]$. Plugging our selection of $\sigma = \Delta / \sqrt{2\varepsilon}$ yields the desired result:
\begin{align*}
    \mathbb{E}[\tilde{X}^2]
    \; \leq \;
    \dfrac{\Delta^2}{2\varepsilon}
    +
    2e^{-\varepsilon}\left(\dfrac{\Delta^2}{2\varepsilon}+2\Delta^2\right)
    \; = \;
    \dfrac{\Delta^2}{2\varepsilon}
    +
    \left(4+\dfrac{1}{\varepsilon}\right)\Delta^2 e^{-\varepsilon}.
\end{align*}
This proves the first displayed bound. If $\varepsilon\geq 1$, then $4+1/\varepsilon\leq 5$, which gives the second displayed bound.
\end{proof}

We are now ready to prove Proposition~\ref{prop:qg_is_better_strictly}. \\
\textbf{Proof of Proposition~\ref{prop:qg_is_better_strictly}.} Auxiliary Lemma~\ref{lem:aux_lem_1_first} yields
\begin{align*}
    L_2^{\mathrm{AG}}(\varepsilon,\delta)
    \; = \;
    \dfrac{\Delta^2}{2\varepsilon}
    +
    c_\delta \dfrac{\Delta^2}{\varepsilon^{3/2}}
    +
    \mathcal{O}(\varepsilon^{-2}),
\end{align*}
where for simplicity of presentation we set
\begin{align*}
    c_\delta \coloneqq \dfrac{\lvert \Phi^{-1}(\delta) \rvert}{\sqrt{2}} > 0.
\end{align*}
The remainder term $\mathcal{O}(\varepsilon^{-2})$ is asymptotically smaller than the positive correction term $\varepsilon^{-3/2}$. In particular, since $\varepsilon^{-2}=o(\varepsilon^{-3/2})$, there exists $\varepsilon_1(\delta)>0$ such that, for all $\varepsilon\geq \varepsilon_1(\delta)$,
\begin{align*}
    \mathcal{O}(\varepsilon^{-2})
    \; \geq \;
    - \dfrac{c_\delta}{2}\dfrac{\Delta^2}{\varepsilon^{3/2}}.
\end{align*}
Therefore, for all $\varepsilon\geq \varepsilon_1(\delta)$, we obtain
\begin{align*}
    L_2^{\mathrm{AG}}(\varepsilon,\delta)
    \; \geq \;
    \dfrac{\Delta^2}{2\varepsilon}
    +
    \dfrac{c_\delta}{2}\dfrac{\Delta^2}{\varepsilon^{3/2}}.
\end{align*}

Moreover, Auxiliary Lemma~\ref{lem:aux_lem_qg_loss} shows that, for all $\varepsilon>0$ satisfying $\varepsilon \geq \max\{1,\log(\delta^{-1}-2)\}$, we have
\begin{align*}
    L_2^{\mathrm{QG}}(\varepsilon,\delta)
    \; \leq \;
    \dfrac{\Delta^2}{2\varepsilon}
    +
    5\Delta^2e^{-\varepsilon}.
\end{align*}
Since $e^{-\varepsilon}=o(\varepsilon^{-3/2})$, there exists $\varepsilon_2(\delta)>0$ such that, for all $\varepsilon\geq \varepsilon_2(\delta)$ we have
\begin{align*}
    5e^{-\varepsilon}
    \; < \;
    \dfrac{c_\delta}{2}\varepsilon^{-3/2}.
\end{align*}
Now if we define
\begin{align*}
    \varepsilon_0
    \coloneqq
    \max\left\{\varepsilon_1(\delta), \;
        \varepsilon_2(\delta), \;
        1, \;
        \log(\delta^{-1}-2)
    \right\},
\end{align*}
then the above derivations conclude that for all $\varepsilon\geq \varepsilon_0$, we obtain
\begin{align*}
    L_2^{\mathrm{QG}}(\varepsilon,\delta)
    \; \leq \;
    \dfrac{\Delta^2}{2\varepsilon}
    +
    5\Delta^2e^{-\varepsilon}
    \; < \;
    \dfrac{\Delta^2}{2\varepsilon}
    +
    \dfrac{c_\delta}{2}\dfrac{\Delta^2}{\varepsilon^{3/2}}
    \; \leq \;
    L_2^{\mathrm{AG}}(\varepsilon,\delta).
\end{align*}
This proves the desired strict inequality.
}

\section{Omitted details in numerical experiments}
\subsection{Best $K$ values for Table~\ref{tab:multi_vs_analytic_l1}}\label{app:qGmG}
Table~\ref{tab:compare_K_l1} reports the values of $K \in \{1, \ldots, 20\}$ that yield the smallest $l_1$-loss for the multi-Gaussian mechanism across the 150 combinations of $(\varepsilon, \delta)$ considered. The results show that all values within the grid are being chosen, demonstrating the importance of tuning $K$. Although both the multi-Gaussian mechanism with $K = 1$ and the quasi-Gaussian mechanism produce distributions with three modes, there is no clear hierarchy between them. In fact, entries marked with a red star indicate cases where the quasi-Gaussian mechanism outperforms the multi-Gaussian mechanism with $K = 1$.

\begin{table}[!t]
  \caption{$K \in [10]$ values that attain the best $l_1$-loss for the multi-Gaussian mixture. The star sign indicates instances where quasi-Gaussian outperforms multi-Gaussian with $K=1$.}\label{tab:compare_K_l1}
  \centering
    \resizebox{0.75\linewidth}{!}{%
  \begin{tabular}{l|cccccccccc}
    \toprule
    $\delta \downarrow \lvert \; \varepsilon \rightarrow$  & $0.1$ & $0.25$ & $0.5$ & $0.75$ & $1$ & $2$ & $3$ & $4$ & $5$ & $10$ \\
    \midrule
    $5 \cdot 10^{-7}$ & \textbf{{\cellcolor{Gray!15.0} 1}} & \textbf{{\cellcolor{Gray!81.32} 19}} & \textbf{{\cellcolor{Gray!85.0} 20}} & \textbf{{\cellcolor{Gray!77.63} 18}}{\color{red}$\star$} & \textbf{{\cellcolor{Gray!62.89} 14}}{\color{red}$\star$} & \textbf{{\cellcolor{Gray!59.21} 13}}{\color{red}$\star$} & \textbf{{\cellcolor{Gray!62.89} 14}}{\color{red}$\star$} & \textbf{{\cellcolor{Gray!44.47} 9}}{\color{red}$\star$} & \textbf{{\cellcolor{Gray!62.89} 14}}{\color{red}$\star$} & \textbf{{\cellcolor{Gray!44.47} 9}}{\color{red}$\star$}\\
$10^{-6}$ & \textbf{{\cellcolor{Gray!44.47} 9}} & \textbf{{\cellcolor{Gray!85.0} 20}} & \textbf{{\cellcolor{Gray!85.0} 20}} & \textbf{{\cellcolor{Gray!77.63} 18}}{\color{red}$\star$} & \textbf{{\cellcolor{Gray!62.89} 14}}{\color{red}$\star$} & \textbf{{\cellcolor{Gray!55.53} 12}}{\color{red}$\star$} & \textbf{{\cellcolor{Gray!77.63} 18}}{\color{red}$\star$} & \textbf{{\cellcolor{Gray!62.89} 14}}{\color{red}$\star$} & \textbf{{\cellcolor{Gray!59.21} 13}}{\color{red}$\star$} & \textbf{{\cellcolor{Gray!44.47} 9}}{\color{red}$\star$}\\
$5 \cdot 10^{-6}$ & \textbf{{\cellcolor{Gray!26.05} 4}} & \textbf{{\cellcolor{Gray!85.0} 20}} & \textbf{{\cellcolor{Gray!85.0} 20}} & \textbf{{\cellcolor{Gray!73.95} 17}} & \textbf{{\cellcolor{Gray!70.26} 16}} & \textbf{{\cellcolor{Gray!73.95} 17}} & \textbf{{\cellcolor{Gray!77.63} 18}} & \textbf{{\cellcolor{Gray!77.63} 18}}{\color{red}$\star$} & \textbf{{\cellcolor{Gray!62.89} 14}}{\color{red}$\star$} & \textbf{{\cellcolor{Gray!44.47} 9}}{\color{red}$\star$}\\
$10^{-5}$ & \textbf{{\cellcolor{Gray!18.68} 2}} & \textbf{{\cellcolor{Gray!85.0} 20}} & \textbf{{\cellcolor{Gray!85.0} 20}} & \textbf{{\cellcolor{Gray!70.26} 16}} & \textbf{{\cellcolor{Gray!70.26} 16}} & \textbf{{\cellcolor{Gray!77.63} 18}} & \textbf{{\cellcolor{Gray!77.63} 18}} & \textbf{{\cellcolor{Gray!62.89} 14}}{\color{red}$\star$} & \textbf{{\cellcolor{Gray!62.89} 14}}{\color{red}$\star$} & \textbf{{\cellcolor{Gray!44.47} 9}}{\color{red}$\star$}\\
$5 \cdot 10^{-5}$ & \textbf{{\cellcolor{Gray!15.0} 1}} & \textbf{{\cellcolor{Gray!85.0} 20}} & \textbf{{\cellcolor{Gray!77.63} 18}} & \textbf{{\cellcolor{Gray!70.26} 16}} & \textbf{{\cellcolor{Gray!70.26} 16}} & \textbf{{\cellcolor{Gray!73.95} 17}} & \textbf{{\cellcolor{Gray!77.63} 18}} & \textbf{{\cellcolor{Gray!77.63} 18}} & \textbf{{\cellcolor{Gray!62.89} 14}}{\color{red}$\star$} & \textbf{{\cellcolor{Gray!26.05} 4}}\\
$10^{-4}$ & \textbf{{\cellcolor{Gray!26.05} 4}} & \textbf{{\cellcolor{Gray!85.0} 20}} & \textbf{{\cellcolor{Gray!70.26} 16}} & \textbf{{\cellcolor{Gray!73.95} 17}} & \textbf{{\cellcolor{Gray!77.63} 18}} & \textbf{{\cellcolor{Gray!40.79} 8}} & \textbf{{\cellcolor{Gray!81.32} 19}} & \textbf{{\cellcolor{Gray!81.32} 19}} & \textbf{{\cellcolor{Gray!62.89} 14}}{\color{red}$\star$} & \textbf{{\cellcolor{Gray!44.47} 9}}\\
$5 \cdot 10^{-4}$ & \textbf{{\cellcolor{Gray!15.0} 1}} & \textbf{{\cellcolor{Gray!85.0} 20}} & \textbf{{\cellcolor{Gray!59.21} 13}} & \textbf{{\cellcolor{Gray!44.47} 9}} & \textbf{{\cellcolor{Gray!77.63} 18}} & \textbf{{\cellcolor{Gray!40.79} 8}} & \textbf{{\cellcolor{Gray!81.32} 19}} & \textbf{{\cellcolor{Gray!44.47} 9}} & \textbf{{\cellcolor{Gray!44.47} 9}}{\color{red}$\star$} & \textbf{{\cellcolor{Gray!44.47} 9}}\\
$10^{-3}$ & \textbf{{\cellcolor{Gray!15.0} 1}} & \textbf{{\cellcolor{Gray!85.0} 20}} & \textbf{{\cellcolor{Gray!55.53} 12}} & \textbf{{\cellcolor{Gray!40.79} 8}} & \textbf{{\cellcolor{Gray!77.63} 18}} & \textbf{{\cellcolor{Gray!40.79} 8}} & \textbf{{\cellcolor{Gray!81.32} 19}} & \textbf{{\cellcolor{Gray!44.47} 9}} & \textbf{{\cellcolor{Gray!44.47} 9}}{\color{red}$\star$} & \textbf{{\cellcolor{Gray!44.47} 9}}\\
$5 \cdot 10^{-3}$ & \textbf{{\cellcolor{Gray!85.0} 20}} & \textbf{{\cellcolor{Gray!59.21} 13}} & \textbf{{\cellcolor{Gray!40.79} 8}} & \textbf{{\cellcolor{Gray!33.42} 6}} & \textbf{{\cellcolor{Gray!59.21} 13}} & \textbf{{\cellcolor{Gray!40.79} 8}} & \textbf{{\cellcolor{Gray!81.32} 19}} & \textbf{{\cellcolor{Gray!44.47} 9}}{\color{red}$\star$} & \textbf{{\cellcolor{Gray!44.47} 9}}{\color{red}$\star$} & \textbf{{\cellcolor{Gray!44.47} 9}}\\
$0.01$ & \textbf{{\cellcolor{Gray!77.63} 18}} & \textbf{{\cellcolor{Gray!51.84} 11}} & \textbf{{\cellcolor{Gray!37.11} 7}} & \textbf{{\cellcolor{Gray!29.74} 5}} & \textbf{{\cellcolor{Gray!26.05} 4}} & \textbf{{\cellcolor{Gray!40.79} 8}} & \textbf{{\cellcolor{Gray!44.47} 9}} & \textbf{{\cellcolor{Gray!44.47} 9}}{\color{red}$\star$} & \textbf{{\cellcolor{Gray!62.89} 14}} & \textbf{{\cellcolor{Gray!44.47} 9}}\\
$0.02$ & \textbf{{\cellcolor{Gray!59.21} 13}} & \textbf{{\cellcolor{Gray!40.79} 8}} & \textbf{{\cellcolor{Gray!33.42} 6}} & \textbf{{\cellcolor{Gray!26.05} 4}} & \textbf{{\cellcolor{Gray!26.05} 4}} & \textbf{{\cellcolor{Gray!40.79} 8}} & \textbf{{\cellcolor{Gray!44.47} 9}}{\color{red}$\star$} & \textbf{{\cellcolor{Gray!44.47} 9}} & \textbf{{\cellcolor{Gray!62.89} 14}} & \textbf{{\cellcolor{Gray!44.47} 9}}\\
$0.05$ & \textbf{{\cellcolor{Gray!37.11} 7}}{\color{red}$\star$} & \textbf{{\cellcolor{Gray!29.74} 5}} & \textbf{{\cellcolor{Gray!26.05} 4}} & \textbf{{\cellcolor{Gray!22.37} 3}} & \textbf{{\cellcolor{Gray!22.37} 3}} & \textbf{{\cellcolor{Gray!40.79} 8}} & \textbf{{\cellcolor{Gray!44.47} 9}} & \textbf{{\cellcolor{Gray!44.47} 9}} & \textbf{{\cellcolor{Gray!62.89} 14}} & \textbf{{\cellcolor{Gray!44.47} 9}}\\
$0.1$ & \textbf{{\cellcolor{Gray!26.05} 4}}{\color{red}$\star$} & \textbf{{\cellcolor{Gray!22.37} 3}} & \textbf{{\cellcolor{Gray!18.68} 2}} & \textbf{{\cellcolor{Gray!18.68} 2}} & \textbf{{\cellcolor{Gray!18.68} 2}} & \textbf{{\cellcolor{Gray!40.79} 8}}{\color{red}$\star$} & \textbf{{\cellcolor{Gray!44.47} 9}} & \textbf{{\cellcolor{Gray!44.47} 9}} & \textbf{{\cellcolor{Gray!62.89} 14}} & \textbf{{\cellcolor{Gray!44.47} 9}}\\
$0.15$ & \textbf{{\cellcolor{Gray!18.68} 2}}{\color{red}$\star$} & \textbf{{\cellcolor{Gray!18.68} 2}} & \textbf{{\cellcolor{Gray!18.68} 2}} & \textbf{{\cellcolor{Gray!15.0} 1}} & \textbf{{\cellcolor{Gray!18.68} 2}} & \textbf{{\cellcolor{Gray!44.47} 9}} & \textbf{{\cellcolor{Gray!44.47} 9}} & \textbf{{\cellcolor{Gray!44.47} 9}} & \textbf{{\cellcolor{Gray!62.89} 14}} & \textbf{{\cellcolor{Gray!44.47} 9}}\\
$0.25$ & \textbf{{\cellcolor{Gray!15.0} 1}} & \textbf{{\cellcolor{Gray!15.0} 1}}{\color{red}$\star$} & \textbf{{\cellcolor{Gray!15.0} 1}} & \textbf{{\cellcolor{Gray!15.0} 1}} & \textbf{{\cellcolor{Gray!15.0} 1}} & \textbf{{\cellcolor{Gray!44.47} 9}} & \textbf{{\cellcolor{Gray!44.47} 9}} & \textbf{{\cellcolor{Gray!44.47} 9}} & \textbf{{\cellcolor{Gray!62.89} 14}} & \textbf{{\cellcolor{Gray!44.47} 9}}\\
    \bottomrule
  \end{tabular}
} 
\end{table}
\subsection{$l_2$-Counterpart of Table~\ref{tab:multi_vs_analytic_l1}}
Table~\ref{tab:multi_vs_analytic_l2} is the $l_2$-counterpart of Table~\ref{tab:multi_vs_analytic_l1}. We note that the improvements for the $l_2$-loss are qualitatively better compared to the improvements for the $l_1$-loss. We observe that in $143/150$ of the cases, multimodality provides strict improvements (expected $l_2$-loss) over unimodality. Across all instances, the mean improvement is $61.86\%$ (sd $35.35$) and the median improvement is $79.44\%$.

\begin{table*}[!t]
  \caption{Improvement ($\%$ of $l_2$-loss) of multimodal Gaussians over the best unimodal Gaussian.}\label{tab:multi_vs_analytic_l2}
  \centering
    \resizebox{0.871\linewidth}{!}{%
  \begin{tabular}{l|cccccccccc}
    \toprule
    $\delta \downarrow \lvert \; \varepsilon \rightarrow$  & $0.1$ & $0.25$ & $0.5$ & $0.75$ & $1$ & $2$ & $3$ & $4$ & $5$ & $10$ \\
    \midrule
    $5\cdot 10^{-7}$ & \textbf{{\cellcolor{Gray!20} NA}} & \textbf{{\cellcolor{SeaGreen!26.95} 4.05\%}} & \textbf{{\cellcolor{SeaGreen!40.09} 30.27\%}} & \textbf{{\cellcolor{SeaGreen!66.78} 83.54\%}} & \textbf{{\cellcolor{SeaGreen!66.54} 83.05\%}} & \textbf{{\cellcolor{SeaGreen!68.3} 86.57\%}} & \textbf{{\cellcolor{SeaGreen!71.71} 93.38\%}} & \textbf{{\cellcolor{SeaGreen!72.3} 94.54\%}} & \textbf{{\cellcolor{SeaGreen!73.51} 96.97\%}} & \textbf{{\cellcolor{SeaGreen!73.49} 96.93\%}}\\
$10^{-6}$ & \textbf{{\cellcolor{Gray!20} NA}} & \textbf{{\cellcolor{SeaGreen!27.67} 5.49\%}} & \textbf{{\cellcolor{SeaGreen!41.93} 33.94\%}} & \textbf{{\cellcolor{SeaGreen!66.3} 82.59\%}} & \textbf{{\cellcolor{SeaGreen!66.07} 82.11\%}} & \textbf{{\cellcolor{SeaGreen!67.98} 85.93\%}} & \textbf{{\cellcolor{SeaGreen!71.55} 93.05\%}} & \textbf{{\cellcolor{SeaGreen!72.18} 94.30\%}} & \textbf{{\cellcolor{SeaGreen!72.41} 94.77\%}} & \textbf{{\cellcolor{SeaGreen!74.31} 98.55\%}}\\
$5\cdot 10^{-6}$ & \textbf{{\cellcolor{Gray!20} NA}} & \textbf{{\cellcolor{SeaGreen!28.13} 6.41\%}} & \textbf{{\cellcolor{SeaGreen!47.08} 44.22\%}} & \textbf{{\cellcolor{SeaGreen!68.14} 86.25\%}} & \textbf{{\cellcolor{SeaGreen!68.63} 87.22\%}} & \textbf{{\cellcolor{SeaGreen!70.54} 91.03\%}} & \textbf{{\cellcolor{SeaGreen!72.29} 94.52\%}} & \textbf{{\cellcolor{SeaGreen!73.42} 96.79\%}} & \textbf{{\cellcolor{SeaGreen!74.12} 98.19\%}} & \textbf{{\cellcolor{SeaGreen!75.0} 99.93\%}}\\
$10^{-5}$ & \textbf{{\cellcolor{Gray!20} NA}} & \textbf{{\cellcolor{SeaGreen!28.54} 7.22\%}} & \textbf{{\cellcolor{SeaGreen!53.88} 57.79\%}} & \textbf{{\cellcolor{SeaGreen!67.55} 85.08\%}} & \textbf{{\cellcolor{SeaGreen!68.1} 86.16\%}} & \textbf{{\cellcolor{SeaGreen!70.36} 90.68\%}} & \textbf{{\cellcolor{SeaGreen!72.1} 94.15\%}} & \textbf{{\cellcolor{SeaGreen!73.32} 96.59\%}} & \textbf{{\cellcolor{SeaGreen!74.11} 98.16\%}} & \textbf{{\cellcolor{SeaGreen!75.0} 99.94\%}}\\
$5 \cdot 10^{-5}$ & \textbf{{\cellcolor{Gray!20} NA}} & \textbf{{\cellcolor{SeaGreen!30.45} 11.04\%}} & \textbf{{\cellcolor{SeaGreen!64.83} 79.64\%}} & \textbf{{\cellcolor{SeaGreen!65.75} 81.48\%}} & \textbf{{\cellcolor{SeaGreen!66.46} 82.89\%}} & \textbf{{\cellcolor{SeaGreen!69.21} 88.38\%}} & \textbf{{\cellcolor{SeaGreen!71.56} 93.08\%}} & \textbf{{\cellcolor{SeaGreen!73.02} 95.98\%}} & \textbf{{\cellcolor{SeaGreen!74.0} 97.93\%}} & \textbf{{\cellcolor{SeaGreen!74.91} 99.75\%}}\\
$10^{-4}$ & \textbf{{\cellcolor{SeaGreen!25.0} 0.16\%}} & \textbf{{\cellcolor{SeaGreen!31.57} 13.26\%}} & \textbf{{\cellcolor{SeaGreen!63.63} 77.24\%}} & \textbf{{\cellcolor{SeaGreen!64.63} 79.24\%}} & \textbf{{\cellcolor{SeaGreen!65.67} 81.32\%}} & \textbf{{\cellcolor{SeaGreen!68.55} 87.07\%}} & \textbf{{\cellcolor{SeaGreen!71.24} 92.43\%}} & \textbf{{\cellcolor{SeaGreen!72.85} 95.65\%}} & \textbf{{\cellcolor{SeaGreen!73.9} 97.74\%}} & \textbf{{\cellcolor{SeaGreen!74.92} 99.77\%}}\\
$5 \cdot 10^{-4}$ & \textbf{{\cellcolor{Gray!20} NA}} & \textbf{{\cellcolor{SeaGreen!37.79} 25.69\%}} & \textbf{{\cellcolor{SeaGreen!59.64} 69.28\%}} & \textbf{{\cellcolor{SeaGreen!61.27} 72.53\%}} & \textbf{{\cellcolor{SeaGreen!62.62} 75.24\%}} & \textbf{{\cellcolor{SeaGreen!66.77} 83.51\%}} & \textbf{{\cellcolor{SeaGreen!70.3} 90.57\%}} & \textbf{{\cellcolor{SeaGreen!72.45} 94.84\%}} & \textbf{{\cellcolor{SeaGreen!73.68} 97.30\%}} & \textbf{{\cellcolor{SeaGreen!74.9} 99.73\%}}\\
$10^{-3}$ & \textbf{{\cellcolor{Gray!20} NA}} & \textbf{{\cellcolor{SeaGreen!53.91} 57.84\%}} & \textbf{{\cellcolor{SeaGreen!57.11} 64.23\%}} & \textbf{{\cellcolor{SeaGreen!59.29} 68.58\%}} & \textbf{{\cellcolor{SeaGreen!60.71} 71.41\%}} & \textbf{{\cellcolor{SeaGreen!65.7} 81.38\%}} & \textbf{{\cellcolor{SeaGreen!69.76} 89.48\%}} & \textbf{{\cellcolor{SeaGreen!72.17} 94.29\%}} & \textbf{{\cellcolor{SeaGreen!73.55} 97.04\%}} & \textbf{{\cellcolor{SeaGreen!74.89} 99.71\%}}\\
$5 \cdot 10^{-3}$ & \textbf{{\cellcolor{SeaGreen!25.49} 1.13\%}} & \textbf{{\cellcolor{SeaGreen!46.01} 42.08\%}} & \textbf{{\cellcolor{SeaGreen!50.07} 50.19\%}} & \textbf{{\cellcolor{SeaGreen!52.65} 55.34\%}} & \textbf{{\cellcolor{SeaGreen!54.46} 58.95\%}} & \textbf{{\cellcolor{SeaGreen!61.94} 73.88\%}} & \textbf{{\cellcolor{SeaGreen!67.91} 85.80\%}} & \textbf{{\cellcolor{SeaGreen!71.27} 92.50\%}} & \textbf{{\cellcolor{SeaGreen!73.12} 96.18\%}} & \textbf{{\cellcolor{SeaGreen!74.85} 99.64\%}}\\
$0.01$ & \textbf{{\cellcolor{SeaGreen!38.02} 26.14\%}} & \textbf{{\cellcolor{SeaGreen!40.25} 30.59\%}} & \textbf{{\cellcolor{SeaGreen!44.97} 40.01\%}} & \textbf{{\cellcolor{SeaGreen!49.03} 48.12\%}} & \textbf{{\cellcolor{SeaGreen!51.55} 53.14\%}} & \textbf{{\cellcolor{SeaGreen!59.39} 68.78\%}} & \textbf{{\cellcolor{SeaGreen!66.73} 83.43\%}} & \textbf{{\cellcolor{SeaGreen!70.71} 91.37\%}} & \textbf{{\cellcolor{SeaGreen!72.87} 95.68\%}} & \textbf{{\cellcolor{SeaGreen!74.84} 99.61\%}}\\
$0.02$ & \textbf{{\cellcolor{SeaGreen!34.83} 19.77\%}} & \textbf{{\cellcolor{SeaGreen!38.33} 26.77\%}} & \textbf{{\cellcolor{SeaGreen!38.05} 26.20\%}} & \textbf{{\cellcolor{SeaGreen!45.26} 40.59\%}} & \textbf{{\cellcolor{SeaGreen!44.71} 39.50\%}} & \textbf{{\cellcolor{SeaGreen!55.82} 61.66\%}} & \textbf{{\cellcolor{SeaGreen!65.12} 80.21\%}} & \textbf{{\cellcolor{SeaGreen!69.96} 89.88\%}} & \textbf{{\cellcolor{SeaGreen!72.53} 95.00\%}} & \textbf{{\cellcolor{SeaGreen!74.81} 99.56\%}}\\
$0.05$ & \textbf{{\cellcolor{SeaGreen!36.6} 23.32\%}} & \textbf{{\cellcolor{SeaGreen!35.77} 21.65\%}} & \textbf{{\cellcolor{SeaGreen!32.28} 14.70\%}} & \textbf{{\cellcolor{SeaGreen!37.86} 25.82\%}} & \textbf{{\cellcolor{SeaGreen!36.02} 22.16\%}} & \textbf{{\cellcolor{SeaGreen!48.35} 46.75\%}} & \textbf{{\cellcolor{SeaGreen!61.9} 73.80\%}} & \textbf{{\cellcolor{SeaGreen!68.52} 87.01\%}} & \textbf{{\cellcolor{SeaGreen!71.89} 93.72\%}} & \textbf{{\cellcolor{SeaGreen!74.77} 99.48\%}}\\
$0.1$ & \textbf{{\cellcolor{SeaGreen!37.52} 25.14\%}} & \textbf{{\cellcolor{SeaGreen!34.5} 19.12\%}} & \textbf{{\cellcolor{SeaGreen!26.55} 3.26\%}} & \textbf{{\cellcolor{SeaGreen!36.38} 22.87\%}} & \textbf{{\cellcolor{SeaGreen!32.89} 15.91\%}} & \textbf{{\cellcolor{SeaGreen!40.68} 31.45\%}} & \textbf{{\cellcolor{SeaGreen!57.98} 65.96\%}} & \textbf{{\cellcolor{SeaGreen!66.82} 83.62\%}} & \textbf{{\cellcolor{SeaGreen!71.15} 92.26\%}} & \textbf{{\cellcolor{SeaGreen!74.73} 99.39\%}}\\
$0.15$ & \textbf{{\cellcolor{SeaGreen!32.03} 14.19\%}} & \textbf{{\cellcolor{SeaGreen!31.31} 12.75\%}} & \textbf{{\cellcolor{SeaGreen!29.72} 9.57\%}} & \textbf{{\cellcolor{SeaGreen!25.5} 1.17\%}} & \textbf{{\cellcolor{SeaGreen!25.7} 1.57\%}} & \textbf{{\cellcolor{SeaGreen!42.82} 35.73\%}} & \textbf{{\cellcolor{SeaGreen!54.57} 59.17\%}} & \textbf{{\cellcolor{SeaGreen!65.4} 80.78\%}} & \textbf{{\cellcolor{SeaGreen!70.55} 91.05\%}} & \textbf{{\cellcolor{SeaGreen!74.69} 99.33\%}}\\
$0.25$ & \textbf{{\cellcolor{SeaGreen!28.06} 6.27\%}} & \textbf{{\cellcolor{SeaGreen!31.06} 12.26\%}} & \textbf{{\cellcolor{SeaGreen!29.66} 9.46\%}} & \textbf{{\cellcolor{SeaGreen!35.43} 20.98\%}} & \textbf{{\cellcolor{SeaGreen!31.76} 13.65\%}} & \textbf{{\cellcolor{SeaGreen!32.45} 15.03\%}} & \textbf{{\cellcolor{SeaGreen!48.14} 46.33\%}} & \textbf{{\cellcolor{SeaGreen!62.78} 75.55\%}} & \textbf{{\cellcolor{SeaGreen!69.46} 88.88\%}} & \textbf{{\cellcolor{SeaGreen!74.64} 99.21\%}}\\
    \bottomrule
  \end{tabular}
  }
\end{table*}

\subsection{Optimality gaps closed}
In the numerical experiments, we compared our mechanisms against the analytic Gaussian mechanism in terms of expected losses of the noise. We also investigated the optimality gaps of Gaussian mechanisms compared to the best possible loss any additive noise mechanism can attain under $(\varepsilon, \delta)$-DP. To this end, we use the work of~\citet{selvi2025differential}, who propose a sequential numerical optimization approach to construct a distribution, where in each iteration the upper bounds (the losses attained by the distribution) and associated lower bounds provably converge. Since in practice this convergence can take a significant amount of time, we run the algorithm until the upper and lower bounds are almost equal with a tolerance of $10^{-3}$, and then use the lower bound as a tight estimation of the ideal loss one can attain under $(\varepsilon, \delta)$-DP. If we denote by $o$ this estimation, then the analytic Gaussian mechanism has an optimality gap of $100\% \cdot (a - o)/o$. Under this analysis, we find that the mean and median gaps closed are $67.67\%$ (sd $34.78$) and  $85.47\%$, respectively. Moreover, for $\varepsilon = 5$, we close more than $90\%$ of the optimality gap for any value of $\delta >0$, reaching to $99.72\%$ for $\varepsilon = 10$.

\subsection{Quasi-Gaussian experiments}
Recall that the hyperparameter-free quasi-Gaussian mechanism~\eqref{eq:qG} simply takes the convex combination of a zero-mean Gaussian density with a quasi-Gaussian density. Table~\ref{tab:quasi_vs_analytic_l1} demonstrates that it can offer significant improvements in the expected $l_1$-loss in the targeted low-privacy regimes. The entries of the table represent $100\% \cdot (a-q)/\max\{a,q\}$ where $q$ and $a$ are the expected losses attained by the quasi-Gaussian and the analytic Gaussian mechanisms, respectively. We observe that in $75/150$ cases (exactly half) the quasi-Gaussian mechanism improves over the analytic Gaussian mechanism. Across all instances, the mean improvement is $17.62\%$ (sd $17.62$). The $l_2$-counterpart of this table is available in Table~\ref{tab:quasi_vs_analytic_l2} with qualitatively better results, where in $76/150$ cases the quasi-Gaussian mechanism improves over the analytic Gaussian mechanism. Across all instances, the mean improvement is $14.96\%$ (sd $25.93$). 

{\color{black}
We note that there is an intuitive explanation for why the improvement offered by the quasi-Gaussian mixture mechanism is less substantial in higher-privacy regimes than the multi-Gaussian mixture mechanism. When $\varepsilon$ is small, DP forces near-indistinguishability, so feasible mechanisms must concentrate around zero. The quasi-Gaussian mixture mechanism has a rigid template, combining a zero-mean Gaussian with side mass around $\pm \Delta$, and therefore cannot adapt to this regime: feasibility requires increasing $\sigma$, which eliminates much of the gain. In contrast, the multi-Gaussian mixture mechanism can vary modality $K$, allowing it to perform strongly across a broader range of privacy regimes. The quasi-Gaussian mixture mechanism should therefore be viewed as a lightweight surrogate that is particularly designed for low-privacy regimes.
}

\begin{table*}[!t]
  \caption{Improvement ($\%$ of $l_1$-loss) of the quasi-Gaussian mixture mechanism (green) over the analytic Gaussian mechanism (red).}\label{tab:quasi_vs_analytic_l1}
  \centering
    \resizebox{0.871\linewidth}{!}{%
  \begin{tabular}{l|cccccccccc}
    \toprule
    $\delta \downarrow \lvert \; \varepsilon \rightarrow$  & $0.1$ & $0.25$ & $0.5$ & $0.75$ & $1$ & $2$ & $3$ & $4$ & $5$ & $10$ \\
    \midrule
$5 \cdot 10^{-7}$ & \textbf{{\cellcolor{BrickRed!10.01} -0.51\%}} & \textbf{{\cellcolor{BrickRed!10.03} -0.77\%}} & \textbf{{\cellcolor{BrickRed!10.1} -1.76\%}} & \textbf{{\cellcolor{BrickRed!10.16} -2.30\%}} & \textbf{{\cellcolor{BrickRed!10.18} -2.47\%}} & \textbf{{\cellcolor{BrickRed!10.05} -1.18\%}} & \textbf{{\cellcolor{SeaGreen!21.82} 1.70\%}} & \textbf{{\cellcolor{SeaGreen!25.09} 4.76\%}} & \textbf{{\cellcolor{SeaGreen!28.1} 7.56\%}} & \textbf{{\cellcolor{SeaGreen!42.07} 20.58\%}}\\
$10^{-6}$ & \textbf{{\cellcolor{BrickRed!10.02} -0.53\%}} & \textbf{{\cellcolor{BrickRed!10.05} -1.12\%}} & \textbf{{\cellcolor{BrickRed!10.11} -1.84\%}} & \textbf{{\cellcolor{BrickRed!10.17} -2.37\%}} & \textbf{{\cellcolor{BrickRed!10.19} -2.54\%}} & \textbf{{\cellcolor{BrickRed!10.05} -1.17\%}} & \textbf{{\cellcolor{SeaGreen!21.98} 1.86\%}} & \textbf{{\cellcolor{SeaGreen!25.43} 5.07\%}} & \textbf{{\cellcolor{SeaGreen!28.61} 8.04\%}} & \textbf{{\cellcolor{SeaGreen!43.73} 22.13\%}}\\
$5 \cdot 10^{-6}$ & \textbf{{\cellcolor{BrickRed!10.02} -0.58\%}} & \textbf{{\cellcolor{BrickRed!10.06} -1.26\%}} & \textbf{{\cellcolor{BrickRed!10.14} -2.07\%}} & \textbf{{\cellcolor{BrickRed!10.19} -2.54\%}} & \textbf{{\cellcolor{BrickRed!10.21} -2.71\%}} & \textbf{{\cellcolor{BrickRed!10.05} -1.14\%}} & \textbf{{\cellcolor{SeaGreen!22.46} 2.30\%}} & \textbf{{\cellcolor{SeaGreen!26.4} 5.98\%}} & \textbf{{\cellcolor{SeaGreen!30.08} 9.41\%}} & \textbf{{\cellcolor{SeaGreen!49.07} 27.11\%}}\\
$10^{-5}$ & \textbf{{\cellcolor{BrickRed!10.02} -0.56\%}} & \textbf{{\cellcolor{BrickRed!10.06} -1.31\%}} & \textbf{{\cellcolor{BrickRed!10.14} -2.15\%}} & \textbf{{\cellcolor{BrickRed!10.2} -2.63\%}} & \textbf{{\cellcolor{BrickRed!10.22} -2.79\%}} & \textbf{{\cellcolor{BrickRed!10.05} -1.12\%}} & \textbf{{\cellcolor{SeaGreen!22.71} 2.54\%}} & \textbf{{\cellcolor{SeaGreen!26.92} 6.46\%}} & \textbf{{\cellcolor{SeaGreen!30.87} 10.15\%}} & \textbf{{\cellcolor{SeaGreen!52.54} 30.34\%}}\\
$5 \cdot 10^{-5}$ & \textbf{{\cellcolor{BrickRed!10.02} -0.67\%}} & \textbf{{\cellcolor{BrickRed!10.08} -1.44\%}} & \textbf{{\cellcolor{BrickRed!10.17} -2.35\%}} & \textbf{{\cellcolor{BrickRed!10.23} -2.86\%}} & \textbf{{\cellcolor{BrickRed!10.25} -3.02\%}} & \textbf{{\cellcolor{BrickRed!10.04} -1.03\%}} & \textbf{{\cellcolor{SeaGreen!23.5} 3.28\%}} & \textbf{{\cellcolor{SeaGreen!28.5} 7.93\%}} & \textbf{{\cellcolor{SeaGreen!33.28} 12.39\%}} & \textbf{{\cellcolor{SeaGreen!85.0} 60.60\%}}\\
$10^{-4}$ & \textbf{{\cellcolor{BrickRed!10.02} -0.70\%}} & \textbf{{\cellcolor{BrickRed!10.08} -1.51\%}} & \textbf{{\cellcolor{BrickRed!10.18} -2.45\%}} & \textbf{{\cellcolor{BrickRed!10.24} -2.98\%}} & \textbf{{\cellcolor{BrickRed!10.26} -3.13\%}} & \textbf{{\cellcolor{BrickRed!10.04} -0.96\%}} & \textbf{{\cellcolor{SeaGreen!23.96} 3.70\%}} & \textbf{{\cellcolor{SeaGreen!29.4} 8.78\%}} & \textbf{{\cellcolor{SeaGreen!34.67} 13.69\%}} & \textbf{{\cellcolor{SeaGreen!83.71} 59.40\%}}\\
$5 \cdot 10^{-4}$ & \textbf{{\cellcolor{BrickRed!10.03} -0.79\%}} & \textbf{{\cellcolor{BrickRed!10.1} -1.70\%}} & \textbf{{\cellcolor{BrickRed!10.21} -2.74\%}} & \textbf{{\cellcolor{BrickRed!10.29} -3.30\%}} & \textbf{{\cellcolor{BrickRed!10.3} -3.44\%}} & \textbf{{\cellcolor{BrickRed!10.02} -0.69\%}} & \textbf{{\cellcolor{SeaGreen!25.47} 5.11\%}} & \textbf{{\cellcolor{SeaGreen!32.39} 11.56\%}} & \textbf{{\cellcolor{SeaGreen!39.4} 18.10\%}} & \textbf{{\cellcolor{SeaGreen!80.22} 56.14\%}}\\
$10^{-3}$ & \textbf{{\cellcolor{BrickRed!10.03} -0.84\%}} & \textbf{{\cellcolor{BrickRed!10.11} -1.79\%}} & \textbf{{\cellcolor{BrickRed!10.23} -2.89\%}} & \textbf{{\cellcolor{BrickRed!10.31} -3.47\%}} & \textbf{{\cellcolor{BrickRed!10.33} -3.59\%}} & \textbf{{\cellcolor{BrickRed!10.01} -0.49\%}} & \textbf{{\cellcolor{SeaGreen!26.43} 6.01\%}} & \textbf{{\cellcolor{SeaGreen!34.31} 13.35\%}} & \textbf{{\cellcolor{SeaGreen!42.6} 21.08\%}} & \textbf{{\cellcolor{SeaGreen!78.44} 54.48\%}}\\
$5 \cdot 10^{-3}$ & \textbf{{\cellcolor{BrickRed!10.04} -0.91\%}} & \textbf{{\cellcolor{BrickRed!10.13} -2.04\%}} & \textbf{{\cellcolor{BrickRed!10.28} -3.30\%}} & \textbf{{\cellcolor{BrickRed!10.38} -3.92\%}} & \textbf{{\cellcolor{BrickRed!10.39} -3.99\%}} & \textbf{{\cellcolor{SeaGreen!20.43} 0.41\%}} & \textbf{{\cellcolor{SeaGreen!30.23} 9.54\%}} & \textbf{{\cellcolor{SeaGreen!42.29} 20.79\%}} & \textbf{{\cellcolor{SeaGreen!60.17} 37.45\%}} & \textbf{{\cellcolor{SeaGreen!73.36} 49.75\%}}\\
$0.01$ & \textbf{{\cellcolor{BrickRed!10.03} -0.88\%}} & \textbf{{\cellcolor{BrickRed!10.14} -2.13\%}} & \textbf{{\cellcolor{BrickRed!10.31} -3.49\%}} & \textbf{{\cellcolor{BrickRed!10.41} -4.14\%}} & \textbf{{\cellcolor{BrickRed!10.41} -4.16\%}} & \textbf{{\cellcolor{SeaGreen!21.28} 1.20\%}} & \textbf{{\cellcolor{SeaGreen!33.3} 12.41\%}} & \textbf{{\cellcolor{SeaGreen!50.0} 27.97\%}} & \textbf{{\cellcolor{SeaGreen!80.69} 56.58\%}} & \textbf{{\cellcolor{SeaGreen!70.63} 47.20\%}}\\
$0.02$ & \textbf{{\cellcolor{BrickRed!10.02} -0.72\%}} & \textbf{{\cellcolor{BrickRed!10.14} -2.15\%}} & \textbf{{\cellcolor{BrickRed!10.34} -3.67\%}} & \textbf{{\cellcolor{BrickRed!10.44} -4.35\%}} & \textbf{{\cellcolor{BrickRed!10.43} -4.29\%}} & \textbf{{\cellcolor{SeaGreen!22.73} 2.56\%}} & \textbf{{\cellcolor{SeaGreen!38.69} 17.43\%}} & \textbf{{\cellcolor{SeaGreen!77.73} 53.82\%}} & \textbf{{\cellcolor{SeaGreen!77.15} 53.28\%}} & \textbf{{\cellcolor{SeaGreen!67.41} 44.20\%}}\\
$0.05$ & \textbf{{\cellcolor{SeaGreen!20.0} 0.01\%}} & \textbf{{\cellcolor{BrickRed!10.12} -1.89\%}} & \textbf{{\cellcolor{BrickRed!10.35} -3.76\%}} & \textbf{{\cellcolor{BrickRed!10.47} -4.49\%}} & \textbf{{\cellcolor{BrickRed!10.42} -4.20\%}} & \textbf{{\cellcolor{SeaGreen!27.21} 6.73\%}} & \textbf{{\cellcolor{SeaGreen!68.58} 45.29\%}} & \textbf{{\cellcolor{SeaGreen!71.14} 47.67\%}} & \textbf{{\cellcolor{SeaGreen!71.09} 47.63\%}} & \textbf{{\cellcolor{SeaGreen!62.12} 39.27\%}}\\
$0.1$ & \textbf{{\cellcolor{SeaGreen!21.56} 1.46\%}} & \textbf{{\cellcolor{BrickRed!10.05} -1.08\%}} & \textbf{{\cellcolor{BrickRed!10.3} -3.42\%}} & \textbf{{\cellcolor{BrickRed!10.4} -4.11\%}} & \textbf{{\cellcolor{BrickRed!10.29} -3.32\%}} & \textbf{{\cellcolor{SeaGreen!42.77} 21.24\%}} & \textbf{{\cellcolor{SeaGreen!60.37} 37.64\%}} & \textbf{{\cellcolor{SeaGreen!64.24} 41.25\%}} & \textbf{{\cellcolor{SeaGreen!64.87} 41.83\%}} & \textbf{{\cellcolor{SeaGreen!56.9} 34.41\%}}\\
$0.15$ & \textbf{{\cellcolor{SeaGreen!23.21} 3.00\%}} & \textbf{{\cellcolor{SeaGreen!5.0} 0.01\%}} & \textbf{{\cellcolor{BrickRed!10.21} -2.69\%}} & \textbf{{\cellcolor{BrickRed!10.26} -3.08\%}} & \textbf{{\cellcolor{BrickRed!10.06} -1.23\%}} & \textbf{{\cellcolor{SeaGreen!45.32} 23.61\%}} & \textbf{{\cellcolor{SeaGreen!54.0} 31.70\%}} & \textbf{{\cellcolor{SeaGreen!58.99} 36.35\%}} & \textbf{{\cellcolor{SeaGreen!60.18} 37.46\%}} & \textbf{{\cellcolor{SeaGreen!53.1} 30.86\%}}\\
$0.25$ & \textbf{{\cellcolor{BrickRed!10.23} -2.86\%}} & \textbf{{\cellcolor{SeaGreen!23.44} 3.21\%}} & \textbf{{\cellcolor{SeaGreen!22.03} 1.90\%}} & \textbf{{\cellcolor{SeaGreen!26.54} 6.11\%}} & \textbf{{\cellcolor{SeaGreen!23.88} 3.63\%}} & \textbf{{\cellcolor{SeaGreen!30.76} 10.04\%}} & \textbf{{\cellcolor{SeaGreen!43.25} 21.68\%}} & \textbf{{\cellcolor{SeaGreen!50.26} 28.22\%}} & \textbf{{\cellcolor{SeaGreen!52.48} 30.29\%}} & \textbf{{\cellcolor{SeaGreen!47.04} 25.22\%}}\\
    \bottomrule
  \end{tabular}
  }
\end{table*}

\begin{table*}[!t]
  \caption{Improvement ($\%$ of $l_2$-loss) of the quasi-Gaussian mixture mechanism (green) over the analytic Gaussian mechanism (red).}\label{tab:quasi_vs_analytic_l2}
  \centering
    \resizebox{0.871\linewidth}{!}{%
  \begin{tabular}{l|cccccccccc}
    \toprule
    $\delta \downarrow \lvert \; \varepsilon \rightarrow$  & $0.1$ & $0.25$ & $0.5$ & $0.75$ & $1$ & $2$ & $3$ & $4$ & $5$ & $10$ \\
    \midrule
$5 \cdot 10^{-7}$ & \textbf{{\cellcolor{BrickRed!10.02} -0.88\%}} & \textbf{{\cellcolor{BrickRed!10.03} -1.23\%}} & \textbf{{\cellcolor{BrickRed!10.14} -2.98\%}} & \textbf{{\cellcolor{BrickRed!10.22} -3.94\%}} & \textbf{{\cellcolor{BrickRed!10.25} -4.22\%}} & \textbf{{\cellcolor{BrickRed!10.07} -1.86\%}} & \textbf{{\cellcolor{SeaGreen!22.33} 3.53\%}} & \textbf{{\cellcolor{SeaGreen!26.77} 9.25\%}} & \textbf{{\cellcolor{SeaGreen!30.81} 14.47\%}} & \textbf{{\cellcolor{SeaGreen!48.19} 36.92\%}}\\
$10^{-6}$ & \textbf{{\cellcolor{BrickRed!10.02} -0.91\%}} & \textbf{{\cellcolor{BrickRed!10.07} -1.92\%}} & \textbf{{\cellcolor{BrickRed!10.15} -3.13\%}} & \textbf{{\cellcolor{BrickRed!10.23} -4.05\%}} & \textbf{{\cellcolor{BrickRed!10.26} -4.33\%}} & \textbf{{\cellcolor{BrickRed!10.07} -1.84\%}} & \textbf{{\cellcolor{SeaGreen!22.56} 3.82\%}} & \textbf{{\cellcolor{SeaGreen!27.22} 9.84\%}} & \textbf{{\cellcolor{SeaGreen!31.48} 15.34\%}} & \textbf{{\cellcolor{SeaGreen!50.07} 39.35\%}}\\
$5 \cdot 10^{-6}$ & \textbf{{\cellcolor{BrickRed!10.02} -0.99\%}} & \textbf{{\cellcolor{BrickRed!10.08} -2.15\%}} & \textbf{{\cellcolor{BrickRed!10.19} -3.54\%}} & \textbf{{\cellcolor{BrickRed!10.26} -4.33\%}} & \textbf{{\cellcolor{BrickRed!10.29} -4.61\%}} & \textbf{{\cellcolor{BrickRed!10.06} -1.76\%}} & \textbf{{\cellcolor{SeaGreen!23.2} 4.65\%}} & \textbf{{\cellcolor{SeaGreen!28.5} 11.50\%}} & \textbf{{\cellcolor{SeaGreen!33.38} 17.79\%}} & \textbf{{\cellcolor{SeaGreen!55.89} 46.86\%}}\\
$10^{-5}$ & \textbf{{\cellcolor{BrickRed!10.02} -0.95\%}} & \textbf{{\cellcolor{BrickRed!10.09} -2.23\%}} & \textbf{{\cellcolor{BrickRed!10.2} -3.66\%}} & \textbf{{\cellcolor{BrickRed!10.27} -4.47\%}} & \textbf{{\cellcolor{BrickRed!10.3} -4.75\%}} & \textbf{{\cellcolor{BrickRed!10.06} -1.71\%}} & \textbf{{\cellcolor{SeaGreen!23.55} 5.10\%}} & \textbf{{\cellcolor{SeaGreen!29.19} 12.38\%}} & \textbf{{\cellcolor{SeaGreen!34.39} 19.10\%}} & \textbf{{\cellcolor{SeaGreen!59.45} 51.46\%}}\\
$5 \cdot 10^{-5}$ & \textbf{{\cellcolor{BrickRed!10.03} -1.13\%}} & \textbf{{\cellcolor{BrickRed!10.1} -2.44\%}} & \textbf{{\cellcolor{BrickRed!10.23} -3.98\%}} & \textbf{{\cellcolor{BrickRed!10.31} -4.84\%}} & \textbf{{\cellcolor{BrickRed!10.34} -5.11\%}} & \textbf{{\cellcolor{BrickRed!10.05} -1.51\%}} & \textbf{{\cellcolor{SeaGreen!24.6} 6.46\%}} & \textbf{{\cellcolor{SeaGreen!31.23} 15.02\%}} & \textbf{{\cellcolor{SeaGreen!37.42} 23.01\%}} & \textbf{{\cellcolor{SeaGreen!85.0} 84.45\%}}\\
$10^{-4}$ & \textbf{{\cellcolor{BrickRed!10.03} -1.18\%}} & \textbf{{\cellcolor{BrickRed!10.11} -2.55\%}} & \textbf{{\cellcolor{BrickRed!10.24} -4.15\%}} & \textbf{{\cellcolor{BrickRed!10.33} -5.03\%}} & \textbf{{\cellcolor{BrickRed!10.36} -5.29\%}} & \textbf{{\cellcolor{BrickRed!10.04} -1.37\%}} & \textbf{{\cellcolor{SeaGreen!25.2} 7.23\%}} & \textbf{{\cellcolor{SeaGreen!32.38} 16.50\%}} & \textbf{{\cellcolor{SeaGreen!39.13} 25.21\%}} & \textbf{{\cellcolor{SeaGreen!84.26} 83.49\%}}\\
$5 \cdot 10^{-4}$ & \textbf{{\cellcolor{BrickRed!10.04} -1.31\%}} & \textbf{{\cellcolor{BrickRed!10.13} -2.83\%}} & \textbf{{\cellcolor{BrickRed!10.28} -4.59\%}} & \textbf{{\cellcolor{BrickRed!10.38} -5.53\%}} & \textbf{{\cellcolor{BrickRed!10.41} -5.76\%}} & \textbf{{\cellcolor{BrickRed!10.02} -0.86\%}} & \textbf{{\cellcolor{SeaGreen!27.15} 9.75\%}} & \textbf{{\cellcolor{SeaGreen!36.09} 21.29\%}} & \textbf{{\cellcolor{SeaGreen!44.74} 32.47\%}} & \textbf{{\cellcolor{SeaGreen!82.12} 80.73\%}}\\
$10^{-3}$ & \textbf{{\cellcolor{BrickRed!10.04} -1.36\%}} & \textbf{{\cellcolor{BrickRed!10.14} -2.97\%}} & \textbf{{\cellcolor{BrickRed!10.31} -4.80\%}} & \textbf{{\cellcolor{BrickRed!10.41} -5.78\%}} & \textbf{{\cellcolor{BrickRed!10.43} -5.98\%}} & \textbf{{\cellcolor{BrickRed!10.01} -0.48\%}} & \textbf{{\cellcolor{SeaGreen!28.36} 11.31\%}} & \textbf{{\cellcolor{SeaGreen!38.4} 24.28\%}} & \textbf{{\cellcolor{SeaGreen!48.37} 37.14\%}} & \textbf{{\cellcolor{SeaGreen!80.97} 79.24\%}}\\
$5 \cdot 10^{-3}$ & \textbf{{\cellcolor{BrickRed!10.04} -1.36\%}} & \textbf{{\cellcolor{BrickRed!10.16} -3.24\%}} & \textbf{{\cellcolor{BrickRed!10.36} -5.35\%}} & \textbf{{\cellcolor{BrickRed!10.48} -6.41\%}} & \textbf{{\cellcolor{BrickRed!10.5} -6.53\%}} & \textbf{{\cellcolor{SeaGreen!20.49} 1.14\%}} & \textbf{{\cellcolor{SeaGreen!32.97} 17.26\%}} & \textbf{{\cellcolor{SeaGreen!47.45} 35.96\%}} & \textbf{{\cellcolor{SeaGreen!65.81} 59.67\%}} & \textbf{{\cellcolor{SeaGreen!77.46} 74.71\%}}\\
$0.01$ & \textbf{{\cellcolor{BrickRed!10.03} -1.18\%}} & \textbf{{\cellcolor{BrickRed!10.16} -3.26\%}} & \textbf{{\cellcolor{BrickRed!10.39} -5.55\%}} & \textbf{{\cellcolor{BrickRed!10.52} -6.67\%}} & \textbf{{\cellcolor{BrickRed!10.52} -6.73\%}} & \textbf{{\cellcolor{SeaGreen!21.54} 2.50\%}} & \textbf{{\cellcolor{SeaGreen!36.51} 21.84\%}} & \textbf{{\cellcolor{SeaGreen!55.36} 46.18\%}} & \textbf{{\cellcolor{SeaGreen!80.82} 79.05\%}} & \textbf{{\cellcolor{SeaGreen!75.42} 72.08\%}}\\
$0.02$ & \textbf{{\cellcolor{BrickRed!10.01} -0.66\%}} & \textbf{{\cellcolor{BrickRed!10.15} -3.08\%}} & \textbf{{\cellcolor{BrickRed!10.4} -5.65\%}} & \textbf{{\cellcolor{BrickRed!10.54} -6.85\%}} & \textbf{{\cellcolor{BrickRed!10.54} -6.82\%}} & \textbf{{\cellcolor{SeaGreen!23.29} 4.77\%}} & \textbf{{\cellcolor{SeaGreen!42.4} 29.44\%}} & \textbf{{\cellcolor{SeaGreen!77.21} 74.39\%}} & \textbf{{\cellcolor{SeaGreen!78.26} 75.75\%}} & \textbf{{\cellcolor{SeaGreen!72.9} 68.82\%}}\\
$0.05$ & \textbf{{\cellcolor{SeaGreen!20.61} 1.30\%}} & \textbf{{\cellcolor{BrickRed!10.07} -1.99\%}} & \textbf{{\cellcolor{BrickRed!10.36} -5.29\%}} & \textbf{{\cellcolor{BrickRed!10.52} -6.66\%}} & \textbf{{\cellcolor{BrickRed!10.48} -6.38\%}} & \textbf{{\cellcolor{SeaGreen!28.4} 11.36\%}} & \textbf{{\cellcolor{SeaGreen!67.67} 62.07\%}} & \textbf{{\cellcolor{SeaGreen!71.57} 67.11\%}} & \textbf{{\cellcolor{SeaGreen!73.45} 69.54\%}} & \textbf{{\cellcolor{SeaGreen!68.43} 63.06\%}}\\
$0.1$ & \textbf{{\cellcolor{SeaGreen!23.46} 4.98\%}} & \textbf{{\cellcolor{SeaGreen!20.0} 0.52\%}} & \textbf{{\cellcolor{BrickRed!10.21} -3.85\%}} & \textbf{{\cellcolor{BrickRed!10.37} -5.38\%}} & \textbf{{\cellcolor{BrickRed!10.27} -4.46\%}} & \textbf{{\cellcolor{SeaGreen!43.96} 31.45\%}} & \textbf{{\cellcolor{SeaGreen!58.88} 50.72\%}} & \textbf{{\cellcolor{SeaGreen!64.94} 58.54\%}} & \textbf{{\cellcolor{SeaGreen!67.93} 62.41\%}} & \textbf{{\cellcolor{SeaGreen!63.67} 56.91\%}}\\
$0.15$ & \textbf{{\cellcolor{SeaGreen!26.46} 8.86\%}} & \textbf{{\cellcolor{SeaGreen!22.35} 3.54\%}} & \textbf{{\cellcolor{BrickRed!10.05} -1.59\%}} & \textbf{{\cellcolor{BrickRed!10.13} -2.87\%}} & \textbf{{\cellcolor{BrickRed!10.01} -0.43\%}} & \textbf{{\cellcolor{SeaGreen!42.83} 30.00\%}} & \textbf{{\cellcolor{SeaGreen!51.27} 40.89\%}} & \textbf{{\cellcolor{SeaGreen!59.36} 51.34\%}} & \textbf{{\cellcolor{SeaGreen!63.4} 56.55\%}} & \textbf{{\cellcolor{SeaGreen!59.97} 52.12\%}}\\
$0.25$ & \textbf{{\cellcolor{BrickRed!10.01} -0.52\%}} & \textbf{{\cellcolor{SeaGreen!29.09} 12.26\%}} & \textbf{{\cellcolor{SeaGreen!26.92} 9.46\%}} & \textbf{{\cellcolor{SeaGreen!30.48} 14.04\%}} & \textbf{{\cellcolor{SeaGreen!23.75} 5.36\%}} & \textbf{{\cellcolor{SeaGreen!21.86} 2.92\%}} & \textbf{{\cellcolor{SeaGreen!36.85} 22.28\%}} & \textbf{{\cellcolor{SeaGreen!49.12} 38.11\%}} & \textbf{{\cellcolor{SeaGreen!55.23} 46.01\%}} & \textbf{{\cellcolor{SeaGreen!53.66} 43.98\%}}\\
    \bottomrule
  \end{tabular}
  }
\end{table*}

\subsection{Runtimes of experiments}
Over the $(\varepsilon,\delta)$ grid of privacy parameters that we consider in this work (\textit{e.g.},~Tables~\ref{tab:multi_vs_analytic_l1} and~\ref{tab:quasi_vs_analytic_l1}), the average runtime of the analytic Gaussian mechanism is $0.16$ seconds with a maximum of $0.28$ seconds across all privacy parameters (which is the fastest method, as expected). The average/maximum runtimes of the quasi-Gaussian mechanism are $0.87/1.27$ seconds, which, as designed, is closer to the analytic Gaussian mechanism. The multi-Gaussian mixture mechanism, however, requires one to tune $K \in \mathbb{N}$ and $\eta > 0$. In our experiments, we fixed $\eta = 0.01$ and chose the best $K$ over the grid $K \in [20]$. The multi-Gaussian mixture mechanism with $K=10$, for instance, has average/maximum runtimes of $6.64/32.15$ seconds, which is the slowest mechanism. 

\subsection{Data entries for Figure~\ref{fig:best}}
We note that, across all benchmark mechanisms, while the order of the methods depend on the exact combination of $(\varepsilon,\delta)$, the \textit{winning} approach is always either the truncated Laplace mechanism or the Tulap mechanism. With this in mind, {\color{black}Table~\ref{tab:new_after_rebuttal}} compared the performance of our best multimodal mechanism (computed similarly to Table~\ref{tab:multi_vs_analytic_l1}) with the best out of the truncated Laplace mechanism and the Tulap mechanisms. The $l_2$-counterpart of the same experiment is shared in Table~\ref{tab:multi_vs_rest_l2}. 

\begin{table*}[!t]
  \caption{Improvement ($\%$ of $l_2$-loss) of multimodal Gaussians (green) over the best approximate DP benchmark (red).}\label{tab:multi_vs_rest_l2}
  \centering
    \resizebox{0.871\linewidth}{!}{%
  \begin{tabular}{l|cccccccccc}
    \toprule
    $\delta \downarrow \lvert \; \varepsilon \rightarrow$  & $0.1$ & $0.25$ & $0.5$ & $0.75$ & $1$ & $2$ & $3$ & $4$ & $5$ & $10$ \\
    \midrule
$5 \cdot 10^{-7}$ & \textbf{{\cellcolor{BrickRed!33.92} -86.10\%}} & \textbf{{\cellcolor{BrickRed!34.31} -86.99\%}} & \textbf{{\cellcolor{BrickRed!32.79} -83.54\%}} & \textbf{{\cellcolor{BrickRed!15.34} -33.75\%}} & \textbf{{\cellcolor{BrickRed!16.49} -38.09\%}} & \textbf{{\cellcolor{BrickRed!14.31} -29.51\%}} & \textbf{{\cellcolor{SeaGreen!36.14} 24.86\%}} & \textbf{{\cellcolor{SeaGreen!42.39} 34.37\%}} & \textbf{{\cellcolor{SeaGreen!60.34} 61.69\%}} & \textbf{{\cellcolor{SeaGreen!54.67} 53.06\%}}\\
$10^{-6}$ & \textbf{{\cellcolor{BrickRed!33.38} -84.88\%}} & \textbf{{\cellcolor{BrickRed!33.76} -85.75\%}} & \textbf{{\cellcolor{BrickRed!31.84} -81.35\%}} & \textbf{{\cellcolor{BrickRed!15.16} -33.00\%}} & \textbf{{\cellcolor{BrickRed!16.29} -37.36\%}} & \textbf{{\cellcolor{BrickRed!14.09} -28.55\%}} & \textbf{{\cellcolor{SeaGreen!36.51} 25.44\%}} & \textbf{{\cellcolor{SeaGreen!42.83} 35.05\%}} & \textbf{{\cellcolor{SeaGreen!44.25} 37.21\%}} & \textbf{{\cellcolor{SeaGreen!71.61} 78.83\%}}\\
$5 \cdot 10^{-6}$ & \textbf{{\cellcolor{BrickRed!31.76} -81.15\%}} & \textbf{{\cellcolor{BrickRed!32.32} -82.46\%}} & \textbf{{\cellcolor{BrickRed!28.56} -73.49\%}} & \textbf{{\cellcolor{SeaGreen!20.41} 0.94\%}} & \textbf{{\cellcolor{SeaGreen!22.1} 3.51\%}} & \textbf{{\cellcolor{SeaGreen!35.14} 23.35\%}} & \textbf{{\cellcolor{SeaGreen!52.01} 49.01\%}} & \textbf{{\cellcolor{SeaGreen!64.5} 68.02\%}} & \textbf{{\cellcolor{SeaGreen!72.93} 80.84\%}} & \textbf{{\cellcolor{SeaGreen!84.96} 99.14\%}}\\
$10^{-5}$ & \textbf{{\cellcolor{BrickRed!30.86} -79.04\%}} & \textbf{{\cellcolor{BrickRed!31.49} -80.53\%}} & \textbf{{\cellcolor{BrickRed!24.05} -61.74\%}} & \textbf{{\cellcolor{SeaGreen!20.55} 1.14\%}} & \textbf{{\cellcolor{SeaGreen!22.15} 3.59\%}} & \textbf{{\cellcolor{SeaGreen!36.81} 25.89\%}} & \textbf{{\cellcolor{SeaGreen!52.09} 49.13\%}} & \textbf{{\cellcolor{SeaGreen!64.54} 68.07\%}} & \textbf{{\cellcolor{SeaGreen!73.51} 81.73\%}} & \textbf{{\cellcolor{SeaGreen!85.0} 99.20\%}}\\
$5 \cdot 10^{-5}$ & \textbf{{\cellcolor{BrickRed!28.09} -72.30\%}} & \textbf{{\cellcolor{BrickRed!28.73} -73.91\%}} & \textbf{{\cellcolor{BrickRed!10.0} -0.13\%}} & \textbf{{\cellcolor{SeaGreen!20.67} 1.33\%}} & \textbf{{\cellcolor{SeaGreen!22.0} 3.35\%}} & \textbf{{\cellcolor{SeaGreen!35.13} 23.32\%}} & \textbf{{\cellcolor{SeaGreen!52.21} 49.31\%}} & \textbf{{\cellcolor{SeaGreen!64.5} 68.01\%}} & \textbf{{\cellcolor{SeaGreen!73.95} 82.39\%}} & \textbf{{\cellcolor{SeaGreen!83.69} 97.21\%}}\\
$10^{-4}$ & \textbf{{\cellcolor{BrickRed!26.51} -68.29\%}} & \textbf{{\cellcolor{BrickRed!27.08} -69.77\%}} & \textbf{{\cellcolor{BrickRed!10.0} -0.09\%}} & \textbf{{\cellcolor{SeaGreen!20.15} 0.54\%}} & \textbf{{\cellcolor{SeaGreen!22.86} 4.66\%}} & \textbf{{\cellcolor{SeaGreen!34.28} 22.04\%}} & \textbf{{\cellcolor{SeaGreen!52.01} 49.01\%}} & \textbf{{\cellcolor{SeaGreen!64.5} 68.01\%}} & \textbf{{\cellcolor{SeaGreen!73.76} 82.10\%}} & \textbf{{\cellcolor{SeaGreen!83.94} 97.59\%}}\\
$5 \cdot 10^{-4}$ & \textbf{{\cellcolor{BrickRed!21.96} -55.82\%}} & \textbf{{\cellcolor{BrickRed!20.58} -51.70\%}} & \textbf{{\cellcolor{BrickRed!10.01} -0.80\%}} & \textbf{{\cellcolor{SeaGreen!20.04} 0.36\%}} & \textbf{{\cellcolor{SeaGreen!21.93} 3.25\%}} & \textbf{{\cellcolor{SeaGreen!34.01} 21.62\%}} & \textbf{{\cellcolor{SeaGreen!51.94} 48.90\%}} & \textbf{{\cellcolor{SeaGreen!65.17} 69.04\%}} & \textbf{{\cellcolor{SeaGreen!73.93} 82.36\%}} & \textbf{{\cellcolor{SeaGreen!83.94} 97.59\%}}\\
$0.001$ & \textbf{{\cellcolor{BrickRed!19.68} -48.93\%}} & \textbf{{\cellcolor{BrickRed!10.04} -1.54\%}} & \textbf{{\cellcolor{BrickRed!10.06} -2.06\%}} & \textbf{{\cellcolor{SeaGreen!20.0} 0.31\%}} & \textbf{{\cellcolor{SeaGreen!21.02} 1.87\%}} & \textbf{{\cellcolor{SeaGreen!33.73} 21.20\%}} & \textbf{{\cellcolor{SeaGreen!51.86} 48.78\%}} & \textbf{{\cellcolor{SeaGreen!65.15} 69.01\%}} & \textbf{{\cellcolor{SeaGreen!73.92} 82.35\%}} & \textbf{{\cellcolor{SeaGreen!83.94} 97.59\%}}\\
$0.005$ & \textbf{{\cellcolor{BrickRed!14.75} -31.35\%}} & \textbf{{\cellcolor{BrickRed!10.0} -0.20\%}} & \textbf{{\cellcolor{BrickRed!10.03} -1.20\%}} & \textbf{{\cellcolor{BrickRed!10.02} -1.16\%}} & \textbf{{\cellcolor{BrickRed!10.02} -1.16\%}} & \textbf{{\cellcolor{SeaGreen!32.1} 18.72\%}} & \textbf{{\cellcolor{SeaGreen!51.39} 48.07\%}} & \textbf{{\cellcolor{SeaGreen!65.03} 68.81\%}} & \textbf{{\cellcolor{SeaGreen!73.89} 82.30\%}} & \textbf{{\cellcolor{SeaGreen!83.94} 97.59\%}}\\
$0.01$ & \textbf{{\cellcolor{BrickRed!10.03} -1.42\%}} & \textbf{{\cellcolor{BrickRed!10.34} -6.08\%}} & \textbf{{\cellcolor{BrickRed!10.29} -5.48\%}} & \textbf{{\cellcolor{BrickRed!10.03} -1.30\%}} & \textbf{{\cellcolor{SeaGreen!20.25} 0.69\%}} & \textbf{{\cellcolor{SeaGreen!30.53} 16.33\%}} & \textbf{{\cellcolor{SeaGreen!51.0} 47.47\%}} & \textbf{{\cellcolor{SeaGreen!64.91} 68.64\%}} & \textbf{{\cellcolor{SeaGreen!73.96} 82.40\%}} & \textbf{{\cellcolor{SeaGreen!83.94} 97.59\%}}\\
$0.02$ & \textbf{{\cellcolor{BrickRed!10.47} -7.39\%}} & \textbf{{\cellcolor{BrickRed!10.09} -2.64\%}} & \textbf{{\cellcolor{BrickRed!11.13} -12.75\%}} & \textbf{{\cellcolor{BrickRed!10.01} -0.63\%}} & \textbf{{\cellcolor{BrickRed!10.69} -9.37\%}} & \textbf{{\cellcolor{SeaGreen!27.9} 12.33\%}} & \textbf{{\cellcolor{SeaGreen!50.23} 46.31\%}} & \textbf{{\cellcolor{SeaGreen!64.71} 68.33\%}} & \textbf{{\cellcolor{SeaGreen!73.9} 82.31\%}} & \textbf{{\cellcolor{SeaGreen!83.94} 97.59\%}}\\
$0.05$ & \textbf{{\cellcolor{BrickRed!10.52} -7.89\%}} & \textbf{{\cellcolor{BrickRed!10.2} -4.38\%}} & \textbf{{\cellcolor{BrickRed!11.37} -14.41\%}} & \textbf{{\cellcolor{BrickRed!10.36} -6.30\%}} & \textbf{{\cellcolor{BrickRed!11.49} -15.18\%}} & \textbf{{\cellcolor{SeaGreen!21.48} 2.56\%}} & \textbf{{\cellcolor{SeaGreen!48.39} 43.49\%}} & \textbf{{\cellcolor{SeaGreen!64.19} 67.53\%}} & \textbf{{\cellcolor{SeaGreen!73.76} 82.10\%}} & \textbf{{\cellcolor{SeaGreen!83.94} 97.59\%}}\\
$0.1$ & \textbf{{\cellcolor{BrickRed!11.03} -12.09\%}} & \textbf{{\cellcolor{BrickRed!10.65} -9.07\%}} & \textbf{{\cellcolor{BrickRed!12.44} -20.68\%}} & \textbf{{\cellcolor{BrickRed!10.02} -1.17\%}} & \textbf{{\cellcolor{BrickRed!10.94} -11.38\%}} & \textbf{{\cellcolor{BrickRed!10.23} -4.78\%}} & \textbf{{\cellcolor{SeaGreen!45.86} 39.65\%}} & \textbf{{\cellcolor{SeaGreen!63.47} 66.45\%}} & \textbf{{\cellcolor{SeaGreen!73.57} 81.81\%}} & \textbf{{\cellcolor{SeaGreen!83.94} 97.59\%}}\\
$0.15$ & \textbf{{\cellcolor{BrickRed!13.56} -26.17\%}} & \textbf{{\cellcolor{BrickRed!11.92} -17.80\%}} & \textbf{{\cellcolor{BrickRed!11.33} -14.17\%}} & \textbf{{\cellcolor{BrickRed!12.26} -19.72\%}} & \textbf{{\cellcolor{BrickRed!12.22} -19.48\%}} & \textbf{{\cellcolor{SeaGreen!27.49} 11.71\%}} & \textbf{{\cellcolor{SeaGreen!43.63} 36.26\%}} & \textbf{{\cellcolor{SeaGreen!62.85} 65.51\%}} & \textbf{{\cellcolor{SeaGreen!73.41} 81.56\%}} & \textbf{{\cellcolor{SeaGreen!83.94} 97.59\%}}\\
$0.25$ & \textbf{{\cellcolor{BrickRed!15.46} -34.19\%}} & \textbf{{\cellcolor{BrickRed!12.4} -20.47\%}} & \textbf{{\cellcolor{BrickRed!11.26} -13.68\%}} & \textbf{{\cellcolor{SeaGreen!22.81} 4.59\%}} & \textbf{{\cellcolor{BrickRed!10.03} -1.24\%}} & \textbf{{\cellcolor{BrickRed!10.01} -0.58\%}} & \textbf{{\cellcolor{SeaGreen!39.6} 30.12\%}} & \textbf{{\cellcolor{SeaGreen!61.75} 63.83\%}} & \textbf{{\cellcolor{SeaGreen!73.11} 81.11\%}} & \textbf{{\cellcolor{SeaGreen!83.94} 97.59\%}}\\
    \bottomrule
  \end{tabular}
  }
\end{table*}

\subsection{Price of privacy in ML (unabridged)}
Thus far, we have compared privacy mechanisms based on their expected losses. We now evaluate the performance of these mechanisms in a machine learning task implemented under $(\varepsilon, \delta)$-DP. Specifically, we compare the in- and out-of-sample performances of privacy-preserving linear classifiers. Each classifier is trained under the same $(\varepsilon, \delta)$-DP guarantee and differs only in the mechanism used to sample the additive noise: \texttt{A-G} (analytic Gaussian), \texttt{Q-G} (quasi-Gaussian mixture), or \texttt{M-G} (multi-Gaussian mixture). This experiment allows us to examine whether privacy mechanisms with smaller noise levels reduce the price of privacy in a machine learning setting, following a similar experiment conducted for the numerical optimization-based algorithm~\citep[\S5.3]{selvi2025differential}.

More specifically, we work with a dataset $\{(\bm{x}^i, y^i)\}_{i=1}^{N}$ where each data-point $i$ has feature vector $\bm{x}^i \in \mathbb{R}^d$ and binary label $y^i \in \{-1, 1\}$. We train an $l_1$-regularized logistic classifier $\bm{h} \in \mathbb{R}^d$ which minimizes the empirical logistic loss $\frac{1}{N} \sum_{i=1}^N \log(1 + \exp(- y^i \cdot \bm{h}^\top\bm{x}^i)) + \lambda \cdot \lVert \bm{h} \rVert_{1}$, where $\lambda > 0$ is the regularization parameter. We then predict the label of a new instance with features $\bm{\chi}$ to be the class that maximizes $[1+ \exp(-y \cdot \bm{h}^\top \bm{\chi})]^{-1}$. We train the logistic classifier via the proximal coordinate descent method \citep{friedman2010regularization,parikh2014proximal,richtarik2014iteration} in $T = 100$ iterations with $P = \lceil d / 4 \rceil$ proximal updates per iteration, using a regularization parameter $\lambda = 10^{-8}$. The DP counterpart of this algorithm is due to the work of~\citep{mangold2022differentially} who inject noise into each proximal update step (which is a scalar update and thus our mechanisms are applicable), and the post processing property~\citep[\S2.1]{dwork2014algorithmic} of differential privacy guarantees that the resulting classifier is also differentially private. The proximal operator has sensitivity $\Delta = 2$ assuming the feature vectors are pre-processed so that $\lVert \bm{x}^i \rVert_{\infty} \leq 1, \ i \in [N]$. We target the global guarantee of $(\varepsilon = 5, \delta = 1/N^2)$-DP and compute the required per-iteration noise either using Theorem~3.1 of \citep{mangold2022differentially} or via numerical composition \citep{gopi2021numerical}, which we implement in Python for our mechanisms since the quasi-Gaussian mixture mechanism does not satisfy the zCDP property verbatim.

Table~\ref{tab:UCI} reports the mean in- and out-of-sample errors across $500$ random training-set/test-set ($80/20\%$) splits of various datasets. We use $6$ of the most popular UCI classification datasets \citep{UCI}; additionally, since the proximal coordinate descent method is commonly used for datasets where $d \gg N$, we also include the \underline{colon-cancer} dataset of LIBSVM \citep{chang2011libsvm}. All datasets are included in our implementation supplements. Note that we \emph{(i)} convert labels with more than two classes into a binary label via binning (if the label is ordinal) or distinguishing the majority class from all others (if the label is nominal); \emph{(ii)} apply dummy encoding for the nominal features. Our earlier experiments suggest that $K = 10$ is a good rule-of-thumb for the multi-Gaussian mixture mechanism. We observe in Table~\ref{tab:UCI} that either the quasi-Gaussian mixture mechanism or the multi-Gaussian mixture mechanism achieves the lowest error across datasets. We also note that the analytic Gaussian mechanism never improves upon the multi-Gaussian mixture mechanism. However, it outperforms the quasi-Gaussian mixture mechanism in some instances, conforming with our earlier findings.

\begin{table}[!t]
  \caption{Mean in- and out-of-sample errors of privacy-preserving classifiers under different mechanisms. Bold: best DP mechanism; Italics: second-best DP mechanism in the row.}\label{tab:UCI}
  \centering
    \resizebox{1\linewidth}{!}{%
     \begin{tabular}{lcc|cccc|cccc}
\toprule
\multicolumn{3}{c}{Dataset description} & \multicolumn{4}{c}{In-sample errors} & \multicolumn{4}{c}{Out-of-sample errors}  \\
\midrule
Dataset &  $N$ & $d$ & \texttt{A-G} &  \texttt{Q-G} & \texttt{M-G}  & \texttt{PCD} & \texttt{A-G} & \texttt{Q-G} & \texttt{M-G}  & \texttt{PCD} \\
\midrule
      \underline{adult}& 45,222 & 57 & 22.53\% & \textbf{22.50\%} & \textit{22.53\%} & 22.52\% & 22.18\% & \textbf{22.16\%} & \textit{22.18\%} & 22.18\% \\
    \underline{annealing} & 898 & 42 & 16.45\% & \textit{16.34\%} & \textbf{16.19\%} & 16.24\% & 18.32\% & \textit{18.18\%} & \textbf{18.07\%} & 18.13\% \\
      \underline{breast-cancer} & 683 & 26  & 5.40\% & \textit{5.35\%} & \textbf{5.23\%} & 4.20\% & 7.28\% & \textit{7.25\%} & \textbf{7.07\%} & 5.59\%\\
    \underline{colon-cancer} & 62 & 2,000 & \textit{12.82\%} & 13.68\% & \textbf{6.09\%} & 0.00\% & \textit{30.67\%} & 30.92\% & \textbf{29.63\%} & 29.00\%\\
      \underline{dermatology} & 366 & 98 & 25.42\% & \textbf{24.84\%} & \textit{25.02\%} & 13.27\% & 26.74\% & \textit{26.46\%} & \textbf{26.16\%} & 15.66\% \\
        \underline{ecoli}& 336 & 8 &  15.45\% & \textbf{14.99\%}& \textit{15.14\%} & 6.75\% & 14.19\% & \textit{13.77\%} & \textbf{13.46\%} & 4.41\% \\
      \underline{spect} & 160 & 23 & \textit{21.85\%} & 21.96\% & \textbf{18.68\%} & 16.55\% & \textit{34.43\%} & 34.59\% & \textbf{31.97\%} & 30.63\%\\
\bottomrule
\end{tabular}}
\end{table}

\end{document}